\RequirePackage{fix-cm}
\documentclass[nospthms]{svjour3}
\usepackage[numbers,sort&compress]{natbib}
\usepackage[usenames,dvipsnames]{xcolor}
\usepackage{tikz}
\usetikzlibrary{calc,decorations.pathreplacing,decorations.pathmorphing}
\usepackage[english]{babel}
\usepackage[utf8]{inputenc}
\usepackage{amsmath}
\usepackage{mathrsfs}
\usepackage{amsthm}
\usepackage{amssymb}
\usepackage{thmtools}
\IfFileExists{bbold.sty}{\usepackage{bbold}}{}
\newif\ifphysicsloaded
\IfFileExists{physics.sty}{\usepackage{physics}\physicsloadedtrue}{\physicsloadedfalse}
\ifphysicsloaded\else
\makeatletter
\DeclareRobustCommand{\qty}{%
  \@ifnextchar({\qty@paren}{%
  \@ifnextchar[{\qty@bracket}{%
  \@ifnextchar\bgroup{\qty@brace}{}%
  }}%
}
\def\qty@paren(#1){\left(#1\right)}
\def\qty@bracket[#1]{\left[#1\right]}
\def\qty@brace#1{\left\{#1\right\}}
\makeatother

\newcommand{\ket}[1]{| #1 \rangle}

\newcommand{\dd}[1]{\mathrm{d}#1}

\newcommand{\abs}[1]{\left| #1 \right|}
\newcommand{\eval}[1]{\left.#1\right|}
\newcommand{\pqty}[1]{\left(#1\right)}
\fi
\usepackage{mathtools}
\usepackage{enumitem}

\newcommand{\vertexsize}{7pt}

\makeatletter
\renewcommand\section{\@startsection{section}{1}{\z@}%
    {-21dd plus-8pt minus-4pt}{10.5dd}%
    {\large\bfseries\boldmath}}
\renewcommand\subsection{\@startsection{subsection}{2}{\z@}%
    {-21dd plus-8pt minus-4pt}{10.5dd}%
    {\normalsize\itshape}}
\makeatother

\newcommand{\hc}{\mathtt{H}}
\newcommand{\pa}{\mathcal{L}}
\newcommand{\pathset}[2]{\mathscr{P}_{#2}({#1})}
\newcommand{\pathsetG}{\pathset{\extendedgraph{G}{\chi}}{\chi}}
\newcommand{\pathsetodd}[2]{\mathscr{P}^{\mathrm{odd}}_{#2}({#1})}
\newcommand{\pathseteven}[2]{\mathscr{P}^{\mathrm{even}}_{#2}({#1})}
\newcommand{\pathsetlength}[2]{\mathscr{P}^{#2}({#1})}
\newcommand{\oddpathsetlengthcomp}[3]{\mathscr{P}^{#2}_{#3,\mathrm{odd}}({#1})}

\newcommand{\pathprod}[1]{h\qty[#1]}
\newcommand{\pathprodH}[1]{\hc\qty[#1]}
\newcommand{\indepset}[1]{\mathcal{S}_{#1}}
\newcommand{\indepcoeff}[2]{I_{#1}^{(#2)}}
\newcommand{\localcharge}[2]{\mathcal{H}_{#1}^{(#2)}}
\newcommand{\nonlocalcharge}[2]{Q_{#1}^{(#2)}}
\newcommand{\packingcharge}[3]{\mathcal{Q}_{#1}^{(#2,#3)}}
\newcommand{\catalancharge}[1]{\mathfrak{H}_{#1}^{\mathrm{Cat}}}
\newcommand{\catalanchargeXXX}[1]{\mathfrak{H}_{#1}^{\mathrm{XXX}}}

\newcommand{\edgeop}[1]{\chi_{\lower0.25ex\hbox{$\scriptscriptstyle #1$}}}

\newcommand{\floor}[1]{\left\lfloor#1\right\rfloor}
\newcommand{\ceil}[1]{\left\lceil#1\right\rceil}

\newcommand{\refst}{\ket{\emptyset}}

\newcommand{\PP}{\mathcal{P}}

\newcommand{\AAA}{\mathcal{A}}

\newcommand{\NN}{\mathcal{N}}

\newtheorem{thm}{Theorem}[section]
\newtheorem{lem}[thm]{Lemma}
\newtheorem{cor}[thm]{Corollary}
\newtheorem{define}[thm]{Definition}
\newtheorem{rmk}[thm]{Remark}

\newcommand{\complementset}[1]{\overline{#1}}
\newcommand{\extendedgraph}[2]{#1_{#2}}
\newcommand{\pathclique}{{\mathcal K}}
\newcommand{\pathsvd}{{\mathcal S}}
\newcommand{\pathloop}{{\mathcal C}}
\newcommand{\indepnum}[1]{\alpha_{#1}}

\newcommand{\jbm}{\boldsymbol{j}}
\newcommand{\ibm}{\boldsymbol{i}}
\newcommand{\ellbm}{\boldsymbol{\ell}}

\newcommand{\md}{\text{-}}
\newcommand{\imi}{\mathrm{i}}

\DeclareMathOperator{\sgn}{sgn}

\newtheorem*{theorem*}{Theorem}

\usepackage{iftex}
\usepackage{ifpdf}

\ifPDFTeX
	\ifpdf
		\usepackage{epstopdf}
		\usepackage[pdftex,colorlinks,urlcolor=NavyBlue,citecolor=NavyBlue,linkcolor=NavyBlue]{hyperref}
	\else
		\usepackage[hypertex,colorlinks,urlcolor=NavyBlue,citecolor=NavyBlue,linkcolor=NavyBlue]{hyperref}
	\fi
\else
	\usepackage[colorlinks,urlcolor=NavyBlue,citecolor=NavyBlue,linkcolor=NavyBlue]{hyperref}
\fi

\usepackage{doi}
\usepackage{url}
\ifPDFTeX
	\ifdefined\pdfadjustspacing
	\fi
\fi
\journalname{Communications in Mathematical Physics}

\begin{document}

\title{Solving models with generalized free fermions II: Path-product expansion and conserved charges}
\titlerunning{Path-product expansion and conserved charges}

\author{Kohei Fukai \and Bal\'azs Pozsgay \and Istv\'an Vona}
\authorrunning{K. Fukai et al.}

\institute{
	K.~Fukai \at RIKEN iTHEMS, Wako, Saitama 351-0198, Japan\\
	\email{kohei.fukai@riken.jp}
	\and
	B.~Pozsgay, I.~Vona \at MTA-ELTE ``Momentum'' Integrable Quantum Dynamics Research Group,
	ELTE E\"otv\"os Lor\'and University, Budapest, Hungary
	\and
	I.~Vona \at Holographic Quantum Field Theory Research Group,
	HUN-REN Wigner Research Centre for Physics, Budapest, Hungary
}

\date{Received: date / Accepted: date}

\maketitle

\begin{abstract}
	Free-fermion solvability in quantum spin systems is increasingly understood to be governed by a graph Clifford algebra defined from the frustration graph of the Hamiltonian.
	When the frustration graph belongs to certain classes, such as the even-hole-free and claw-free (ECF) class, the Hamiltonian is solvable by \emph{hidden} free fermions: it admits a free-fermion solution although it does not reduce to a Majorana bilinear under the Jordan-Wigner transformation.
	However, unlike in the Jordan-Wigner case, where each mode is a linear combination of single Majorana fermions, the explicit operator structure of the hidden free-fermion modes---and that of the local conserved charges---has remained obscure.
	In this work, we derive a path-product expansion that expresses each free-fermion mode as a linear combination of products along induced paths in the extended frustration graph.
	The expansion follows from the Krylov generating function and yields the modes directly, without using the transfer matrix or nonlocal conserved charges; the resulting mode decomposition also computes infinite-temperature dynamical correlation functions.
	We further obtain explicit expressions for local conserved charges as linear combinations of path products along induced paths; these charges apply beyond the free-fermion (ECF) class to more general claw-free frustration graphs.
	We also identify a family of generalized conserved charges containing both the known nonlocal conserved charges and these local charges.
	For the homogeneous periodic Fendley model, the local conserved charges exhibit the same Catalan-tree pattern as those of the spin-$1/2$ XXX chain.
\end{abstract}

\section{Introduction}
Free-fermion systems are among the simplest exactly solvable systems, playing a fundamental role in theoretical physics.
The Jordan-Wigner transformation~\cite{jordan-wigner} maps certain spin chains to free-fermion systems, providing the standard route to their exact solution.
Notable examples include the XY chain~\cite{XX-original} and the transverse-field Ising chain~\cite{pfeuty-transverse-ising}; the latter is equivalent to the two-dimensional classical Ising model first solved by Onsager~\cite{Onsager-Ising,Schulz-Mattis-Lieb}.
There are free-fermion models in two dimensions, for example Kitaev's honeycomb model~\cite{kitaev-honeycomb}.
In these models, the Hamiltonian is mapped to a Majorana fermion bilinear, and the fermionic eigenmodes are linear combinations of single Majorana fermions~\cite{japan-free-fermion-JW,chapman-jw}.
Generalizations of the Jordan-Wigner transformation have also been explored in a variety of settings~\cite{japan-JW-gen-1,japan-JW-gen-2,chapman-jw,japan-JW-gen-3,japan-free-fermion-JW,minami-conserved-charges}.

Recently, Fendley introduced a free-fermion spin chain termed ``free fermions in disguise'' (FFD), hereafter the Fendley model~\cite{fendley-fermions-in-disguise}.
This model is exactly solvable but lies beyond the Jordan-Wigner paradigm~\cite{fermions-behind-the-disguise,unified-graph-th,sajat-FP-model,sajat-ffd-corr,sajat-floquet,sajat-claws,eric-lorenzo-ffd-1,eric-lorenzo-circuits}: the Hamiltonian maps to a four-Majorana interaction under the Jordan-Wigner transformation, so the free-fermion solvability is hidden and the underlying algebraic structure is much less transparent than in the standard case.
Various multispin and parafermionic extensions have also been developed in parallel~\cite{alcaraz-medium-fermion-1,alcaraz-medium-fermion-2,alcaraz-medium-fermion-3,rodrigo-ising-and-ffd,free-parafermionic-graphs,alcaraz-nonhomogeneous-multispin}.
Sufficient conditions for hidden free-fermion solvability have been formulated in terms of the frustration graph, which encodes the anticommutation structure of the Hamiltonian terms~\cite{fermions-behind-the-disguise,unified-graph-th,free-parafermionic-graphs,ruh-elman-twin-collapse}.
One such condition is that the frustration graph be even-hole-free and claw-free~\cite{fermions-behind-the-disguise}.
A more general condition is that the frustration graph be simplicial and claw-free~\cite{unified-graph-th}.

Despite these developments, several aspects of these free-fermion solutions remain open.
The connection between the generalized hidden free fermions and the traditional Jordan-Wigner solvable models is unclear; the two cases appear to require entirely separate methods.
Although Ref.~\cite{unified-graph-th} provides a unified conceptual framework, the concrete construction of the free-fermion modes still proceeds indirectly through an auxiliary edge operator, the transfer matrix, and previously known nonlocal conserved charges.
This stands in stark contrast to the Jordan-Wigner case, where one simply diagonalizes a single-particle Hamiltonian matrix.
The operator content and coefficients of the hidden free-fermion modes, however, have not been made explicit.

A second unresolved issue concerns the structure of conserved charges in these models.
For claw-free frustration graphs, the independent sets of the frustration graph yield a mutually commuting family of \emph{nonlocal} conserved charges~\cite{fermions-behind-the-disguise,unified-graph-th}.
However, the structure of \emph{local} conserved charges has remained largely unresolved.
This question is significant because the appearance of higher local conserved charges beyond the Hamiltonian is a standard hallmark of integrability in spin chains~\cite{integrability-test,GM-higher-conserved-XXZ,hokkyo-integrability-test,dichotomy,resh-cond-proof,yamaguchi-shiraishi-nonhermitian-bosons}, and the presence or absence of local conserved charges has become a useful rigorous diagnostic for integrability and nonintegrability in quantum many-body systems~\cite{xyz-not-int,chiba-mixed-field-ising,shiraishi-nnn-heisenberg,pxp-nonintegrable,futami-tasaki-compass,fan-korepin-fredkin-local-charges,chiba-ising-higher-dim,yamaguchi-chiba-shiraishi-nn-absence,futami-hubbard-higher-dim,shiraishi-tasaki-higher-d-xy-xyz,fan-korepin-minimal-nonintegrable-three-site,ishii-yamaguchi-holstein}.
Even in the free-fermion-solvable ECF case, to our knowledge no closed-form operator expression for local conserved charges was available: for the Fendley model, the logarithmic derivative of the transfer matrix was known to generate them~\cite{fendley-fermions-in-disguise}, but the resulting charges had not been written explicitly.
For general claw-free frustration graphs beyond the free-fermion setting, the situation is more delicate still, and even the existence of local conserved charges has remained entirely open.

This paper is the second part of a series on graph Clifford algebras and hidden free fermions.
In the first part~\cite{sajat-solving-ffd-1}, we studied the algebras behind these models and the defining representations of these algebras.
Furthermore, we constructed few-body eigenstates for specially chosen antisymmetric Hamiltonians.

In this article, we answer the above questions by deriving an explicit path-product expansion for the free-fermion modes in generalized free-fermion spin chains, focusing on the even-hole-free and claw-free case.
We show that each free-fermion mode is a linear combination of path products---ordered products of Hamiltonian terms along induced paths in the frustration graph extended by an auxiliary edge operator---with coefficients given by independence polynomials of suitable residual graphs.
This result follows from the Krylov generating function in operator space, whose path-product expansion yields the free-fermion modes as residues.
The construction gives the modes directly, without using the transfer matrix or the previously known nonlocal conserved charges as input.
The path-product expansion also produces graph-theoretic identities for independence polynomials, revealing a concrete link between hidden free fermions and induced-path combinatorics.

For the inverse problem, we rewrite the edge operator in terms of the free-fermion modes and compute its infinite-temperature autocorrelation.
More generally, this mode-decomposition method computes infinite-temperature dynamical correlation functions of arbitrary Krylov basis elements, extending the edge-operator autocorrelation analysis of the Fendley model~\cite{sajat-ffd-corr} to arbitrary ECF frustration graphs.

As further consequences of the path-product expansion, we obtain an asymptotic estimate for the dominant induced paths contributing to a fermionic mode.
As an application, we study the path-product expansion in the open Fendley model.
We show that as the paths extend from the boundary into the bulk of the chain, their number grows but the coefficients decrease.
The two effects cancel at leading exponential order, and we show that a precisely defined operator weight neither grows nor decays exponentially with distance from the boundary.

Our second main result concerns the structure of \emph{local} conserved charges.
Here and below, ``local'' is used in the graph-Clifford sense introduced in Section~\ref{sec:setup}: the range is measured by induced-path size in the frustration graph, rather than by a background physical geometry unless a concrete spin-chain realization is specified.
We find that local conserved charges are linear combinations of path products along induced paths with an odd number of vertices, with coefficients given by independence polynomials of the associated residual graphs.
The same construction extends beyond the free-fermion (ECF) setting to arbitrary claw-free frustration graphs.
The open Fendley model has frustration graph $P_M^2$, which is ECF and hence satisfies the sufficient conditions for a hidden free-fermion solution.
By contrast, the periodic graph $C_M^2$ is claw-free but not simplicial, and hence is not covered by these sufficient conditions.
Nevertheless, our conserved-charge construction applies to $C_M^2$, the frustration graph of the periodic Fendley model~\cite{fendley-fermions-in-disguise}, for which, to our knowledge, no exact solution is currently known.
The formula yields an extensive family of explicit local conserved charges for arbitrary couplings.
For the homogeneous chain, suitable triangular recombinations of these charges exhibit the same Catalan-tree pattern that organizes the local conserved charges of the spin-$1/2$ XXX chain and its $SU(N)$-invariant generalizations~\cite{anshelevich-heisenberg-first-integrals,GM-higher-conserved-XXX,GM-higher-conserved-XXZ,GM-catalan-tree}.
More generally, for every claw-free graph these local conserved charges are exactly conserved up to the range set by the smallest \emph{even bubble wand}---a combinatorial obstruction we identify---and an extensive family is obtained whenever this smallest wand grows with the system size or is absent altogether.
Our construction thus points to integrability in such claw-free families by exhibiting an explicit hierarchy of higher local conserved charges beyond the Hamiltonian.

Finally, we introduce a unified family of generalized conserved charges that interpolates between the nonlocal independent-set charges and the local induced-path charges.
The family is built from collections of mutually non-adjacent odd induced paths.
The formula reduces to the nonlocal independent-set charges generated by the transfer matrix~\cite{fendley-fermions-in-disguise,fermions-behind-the-disguise,unified-graph-th} when every path in the collection is a singleton, and to the local conserved charges constructed above when the collection contains a single path.
The same range bound follows by extending the local conserved-charge cancellation mechanism to disconnected families of odd induced paths.

The paper is organized as follows.
Section~\ref{sec:setup} introduces the frustration-graph framework and path products.
Section~\ref{sec:proof} derives the path-product expansion of the Krylov generating function for ECF frustration graphs, uses it to prove the free-fermion relations without invoking the transfer matrix, and computes the infinite-temperature autocorrelation of the edge operator.
Section~\ref{sec:transfer-matrix-formalism} gives an alternative transfer-matrix expression for the Krylov generating function and relates the path-product modes to the transfer-matrix construction.
Section~\ref{sec:conserved-charges} collects the conserved-charge consequences: local conserved charges, generalized conserved charges, and the oriented even path operators.
Section~\ref{sec:HA} proves the companion-paper theorem announced in Ref.~\cite{sajat-solving-ffd-1} for the few-body eigenstate construction of the antisymmetric Hamiltonian.
Section~\ref{sec:ising-example} applies the path-product construction to the transverse-field Ising chain and verifies that it reproduces the standard Jordan-Wigner solution.
Section~\ref{sec:ffd-example} applies the path-product expansion to the Fendley model, analyzes the asymptotic distribution of path weights in the modes, and works out explicit conserved charges for the open and homogeneous periodic cases.
Finally, Section~\ref{sec:outlook} summarizes the results and open directions.

\section{Frustration graph formalism}
\label{sec:setup}

This section defines the frustration graph, its graph Clifford algebra, the extended graph, and the path products used to construct the free-fermion modes and conserved charges.

\subsection{Frustration graph and graph Clifford algebra}
\label{subsec:frustration-graph}

Let $G = (V, E)$ be a finite simple graph, where $V$ is the set of vertices and $E$ is the set of edges.
The adjacency matrix $A$ of $G$ is defined by $A_{\jbm \jbm'} = 1$ if $\{\jbm, \jbm'\} \in E$ and $A_{\jbm \jbm'} = 0$ otherwise.
We call $G$ the \emph{frustration graph}.

To each vertex $\jbm \in V$, we associate a generator $h_{\jbm}$ of an algebra.
The generators satisfy
\begin{equation}
	\label{eq:algebra-relations}
	h_{\jbm}^2 = 1, \qquad h_{\jbm} h_{\jbm'} = (-1)^{A_{\jbm \jbm'}} h_{\jbm'} h_{\jbm}.
\end{equation}
Such algebras are known as \emph{graph Clifford algebras}~\cite{graph-clifford}.

We also introduce coupling-dependent generators:
\begin{align}
	\label{eq:coupling-dependent-generators}
	\hc_{\jbm} \equiv b_{\jbm} h_{\jbm},
\end{align}
where $b_{\jbm} \in \mathbb{R} \setminus \{0\}$ are arbitrary coupling constants.

Fix a total order on $V$.
For a subset $S=\{\jbm_1<\cdots<\jbm_r\}\subseteq V$, define
\begin{align}
	h_S
	\equiv
	h_{\jbm_1}\cdots h_{\jbm_r},
	\qquad
	h_{\emptyset}\equiv 1.
\end{align}
These ordered monomials form a vector-space basis of the abstract graph Clifford algebra.
Equivalently, one may use the coupling-dependent ordered monomials $\hc_{\jbm_1}\cdots\hc_{\jbm_r}$.

Given a frustration graph $G$ and coupling constants $\{b_{\jbm}\}_{\jbm \in V}$, we define the Hamiltonian:
\begin{equation}
	\label{eq:Hamiltonian}
	H_G = \sum_{\jbm \in V} \hc_{\jbm}.
\end{equation}
This Hamiltonian is solvable by hidden free fermions when $G$ is even-hole-free and claw-free~\cite{fermions-behind-the-disguise}, and more generally when it is claw-free and contains a simplicial clique~\cite{unified-graph-th}, as reviewed below.

\subsection{Claw-free graphs and simplicial cliques}
\label{subsec:scf}

\begin{figure}[tb]
	\centering
	\begin{tikzpicture}
		\node[circle, draw, fill=black, minimum size=\vertexsize, inner sep=0pt, label=below:{center}] (center) at (0,0) {};
		\node[circle, draw, fill=black, minimum size=\vertexsize, inner sep=0pt] (a) at (-1.2,1) {};
		\node[circle, draw, fill=black, minimum size=\vertexsize, inner sep=0pt] (b) at (0,1.2) {};
		\node[circle, draw, fill=black, minimum size=\vertexsize, inner sep=0pt] (c) at (1.2,1) {};
		\draw (center) -- (a);
		\draw (center) -- (b);
		\draw (center) -- (c);
		\node at (0,1.6) {leaves};
	\end{tikzpicture}
	\caption{The claw graph $K_{1,3}$ consists of one center vertex adjacent to three pairwise non-adjacent leaves.}
	\label{fig:claw}
\end{figure}

A graph is \emph{claw-free} if it contains no claw $K_{1,3}$ (Fig.~\ref{fig:claw}) as an induced subgraph.
A \emph{clique} is a subset of vertices $K \subseteq V$ such that every pair of vertices in $K$ is adjacent.

The construction of free-fermion modes requires selecting a special clique $K_s$ in the frustration graph and introducing an auxiliary vertex connected to it.
Given a clique $K_s \subseteq V$, consider adding a new vertex $v_\chi$ that is adjacent to all vertices in $K_s$ and to no other vertices.
The resulting \emph{extended graph} $\extendedgraph{G}{\chi}$ has vertex set $V \cup \{v_\chi\}$ and edge set $E \cup \{(v_\chi, \jbm) \mid \jbm \in K_s\}$.
When $G$ is claw-free and $\extendedgraph{G}{\chi}$ also remains claw-free, the clique $K_s$ is called \emph{simplicial}.
A claw-free graph that contains a simplicial clique is called \emph{simplicial claw-free} (SCF).

Equivalently, a simplicial clique can be characterized intrinsically, without reference to the extended graph, as follows.
For a vertex $\jbm \in V$, we denote by $\Gamma(\jbm)$ its \emph{open neighborhood}, i.e., the set of vertices adjacent to $\jbm$:
\begin{equation}
	\Gamma(\jbm) = \{ \jbm' \in V \mid \{\jbm, \jbm'\} \in E \}.
\end{equation}
The \emph{closed neighborhood} $\Gamma[\jbm]$ includes the vertex itself: $\Gamma[\jbm] = \Gamma(\jbm) \cup \{\jbm\}$.
A clique $K_s$ is simplicial if and only if, for every vertex $\jbm \in K_s$, the open neighborhood of $\jbm$ outside $K_s$ forms a clique; equivalently, since $\jbm\in K_s$, one may write this condition with the closed neighborhood:
\begin{equation}
	\Gamma[\jbm] \setminus K_s \quad \text{is a clique for all } \jbm \in K_s.
\end{equation}

A key fact is that every even-hole-free and claw-free (ECF) graph is also SCF~\cite{chudnovsky-seymour,evenhole-simplicial}: every connected ECF graph contains a simplicial clique, so the auxiliary edge vertex can be attached without creating a claw.
The ECF condition was introduced in Ref.~\cite{fermions-behind-the-disguise} as a sufficient condition for the existence of free-fermion modes, and was subsequently generalized to the SCF condition in Ref.~\cite{unified-graph-th}.
The present path-product proof of the mode formula uses the ECF condition, because the even-hole-free condition is used in the loop cancellation in Theorem~\ref{thm:Krylov-path-expansion}.
The conserved-charge construction in Section~\ref{sec:conserved-charges}, by contrast, is formulated for arbitrary claw-free graphs and its range is controlled by the even bubble wand obstruction introduced there.

\subsection{Edge operator}
\label{subsec:edge-operator}

The \emph{edge operator} $\chi$ is the additional generator corresponding to the auxiliary vertex $v_\chi$ in the graph Clifford algebra of the extended graph $\extendedgraph{G}{\chi}$.
It satisfies $\chi^2 = 1$ together with the commutation relations
\begin{equation}
	\label{eq:edge-operator}
	\{\hc_{\jbm}, \chi\} = 0 \text{ for } \jbm \in K_s, \quad [\hc_{\jbm}, \chi] = 0 \text{ for } \jbm \notin K_s.
\end{equation}
We use $\chi$ to denote both the edge operator and its corresponding vertex $v_\chi$ in $\extendedgraph{G}{\chi}$.
For example, if $G=P_M$ is a path graph, the endpoint clique $K_s=\{1\}$ is simplicial, and $\chi$ is represented by a new vertex attached only to vertex $1$ in $\extendedgraph{G}{\chi}$.

\subsection{Independent sets and independence polynomial}
\label{subsec:indep}

An \emph{independent set} $S \subseteq V$ is a subset of vertices such that no two vertices in $S$ are adjacent in $G$.
Equivalently, the generators $\{\hc_{\jbm}\}_{\jbm \in S}$ mutually commute.
We denote by $\indepset{G}$ the set of all independent sets in $G$, and by $\alpha_G$ the \emph{independence number}, i.e., the maximum size of an independent set.

The \emph{independence polynomial} of $G$ is defined as
\begin{equation}
	\label{eq:independence-polynomial}
	P_G(x) \equiv \sum_{S \in \indepset{G}} (-x)^{|S|} \prod_{\jbm \in S} b_{\jbm}^2.
\end{equation}
We denote its $k^{\text{th}}$ coefficient by
\begin{align}
	\label{eq:indep-coeff}
	\indepcoeff{G}{k}
	\equiv
	[x^{k}]\,P_G(x)
	=
	\sum_{S \in \indepset{G}^{k}} \prod_{\jbm \in S} (-b_{\jbm}^2),
\end{align}
where $\indepset{G}^{k}$ denotes the set of independent sets of size $k$ in $G$, and $\indepcoeff{G}{0}=1$.
Here and throughout, $[x^{k}]\,f(x)$ denotes the coefficient of $x^{k}$ in the formal power series expansion of $f(x)$ in the variable $x$.
For a polynomial such as $P_G(x)$, this is the ordinary polynomial coefficient; for a rational function regular at the origin, it is the coefficient in the Taylor expansion at $x=0$.
For claw-free graphs, the weighted independence polynomial has only real roots~\cite{chudnovsky-seymour,engstrom-weighted-clawfree,bencs1,leake-ryder-multivariate-independence}\footnote{The cited results are commonly stated for the unsigned multivariate independence polynomial.
	Specializing the vertex variables to the positive weights $b_{\jbm}^2 x$ gives real negative roots in the unsigned convention; our convention replaces $x$ by $-x$, so the roots of $P_G(x)$ are real and positive.}.
Consequently, the number of positive roots of $P_G(u^2)=0$, counted with multiplicity, equals the independence number $\alpha_G$.

There is a recursion relation for the independence polynomial with respect to a clique $K \subset V$:
\begin{equation}
	\label{eq:independence-polynomial-recursion}
	P_G(x) = P_{G \setminus K}(x) - x \sum_{\jbm \in K} b_{\jbm}^2 P_{G \setminus \Gamma[\jbm]}(x).
\end{equation}
Equation~\eqref{eq:independence-polynomial-recursion} suffices for the alternative proof of hidden free-fermion solvability below, without the transfer matrix or nonlocal conserved charges used in Refs.~\cite{fendley-fermions-in-disguise,fermions-behind-the-disguise,unified-graph-th}.

\subsection{Induced paths}
\label{subsec:induced-paths}

An \emph{induced path} in a graph is a sequence of distinct vertices $\pa = \ell_1\md\ell_2\md\cdots\md\ell_m$ such that, for any $i \ne j$, $(\ell_i, \ell_j) \in E$ if and only if $|i - j| = 1$.
We define the \emph{size} of the path as $|\pa| = m$, the number of vertices it contains.
Thus, throughout the paper, locality measured by $|\pa|$ refers to this vertex size, not to the number of edges.
Unless a root or orientation is explicitly specified, a path is unrooted and identified with its reverse ordering.
We write
\begin{equation}
	\pathset{G}{}
	\equiv
	\bigl\{ \pa \mid \pa \text{ is an induced path in } G \bigr\}.
\end{equation}
Every path $\pa \in \pathset{G}{}$ has $|\pa| \geq 1$, with the minimal path being a single vertex.

For a path $\pa = \ell_1\md\ell_2\md\cdots\md\ell_m$, we write $\pa_{[a;b]} \equiv \ell_a\md\ell_{a+1}\md\cdots\md\ell_b$ ($1\le a\le b\le m$) for the subpath.
We also use the boundary shortcuts $\pa_{[a;]} = \pa_{[a;m]}$, $\pa_{[;b]} = \pa_{[1;b]}$, $\pa_{[;-1]} = \pa_{[1;m-1]}$, and $\pa_{[a;-1]} = \pa_{[a;m-1]}$.

For $v \in V(G)$, the subscripted variant $\pathset{G}{v}$ denotes paths rooted at $v$, represented by the ordering whose first vertex is $v$.
For $l \geq 1$, the superscripted variant $\pathsetlength{G}{l}$ denotes the subset of paths in $\pathset{G}{}$ with $|\pa|=l$.
For parity restrictions, we write $\pathsetodd{G}{}$ and $\pathseteven{G}{}$ for the subsets of paths with odd and even size, respectively; the same parity superscripts may be combined with a starting vertex, as in $\pathsetodd{G}{v}$.
Unrooted path sums, including $\pathsetlength{G}{l}$, are therefore sums over reversal classes.
Even path operators require an additional orientation choice, which is introduced in Subsection~\ref{subsec:even-path-operators}.
The starting-vertex notation is used in Section~\ref{sec:proof}, while the fixed-size and parity variants are used in Section~\ref{sec:conserved-charges}.

\subsection{Path products}
\label{subsec:path-products}

For each path $\pa = \ell_1\md\ell_2\md\cdots\md\ell_m \in \pathset{G}{}$, we define the \emph{path product}:
\begin{align}
	\label{eq:path-product}
	\pathprodH{\pa} & \equiv \hc_{\ell_1} \hc_{\ell_2} \cdots \hc_{\ell_m}\,.
\end{align}
The path product satisfies $\pathprodH{\pa} = (-1)^{\abs{\pa}-1} \pathprodH{\pa^{-1}}$, where $\pa^{-1} = \ell_m\md \ell_{m-1}\md\cdots\md\ell_{1}$ is the reverse ordering of the path $\pa$.
For odd $|\pa|$, the path product is therefore well-defined on reversal classes.
For even paths, reversing the path changes the sign of the path product, so an operator built from even paths must choose an orientation.

\paragraph{Locality convention.}\label{par:locality-convention}
Unless a concrete spin-chain realization is specified, the graph-Clifford locality of a path product $\pathprodH{\pa}$ is measured by its vertex size $|\pa|$, and a local conserved charge is a finite linear combination of path products of uniformly bounded size.
This graph-local notion is intrinsic to the frustration graph and reduces to ordinary spatial locality in the one-dimensional examples discussed later.

\subsection{Residual graph}
\label{subsec:residual-graph}
We define the \emph{residual graph} $\complementset{\pa}$ for a path $\pa \in \pathset{G}{}$ as the induced subgraph of $G$ obtained by removing the closed neighborhood of $\pa$:
\begin{equation}
	\label{eq:residual-graph}
	\complementset{\pa} \equiv G \setminus \Gamma[\pa].
\end{equation}
Here $\Gamma[\pa]=\bigcup_{\ell\in\pa}\Gamma[\ell]$ is computed in the graph in which the path lives.
For paths in the extended graph $\extendedgraph{G}{\chi}$, the resulting residual graph is always regarded as an induced subgraph of the original graph $G$; in particular, $\Gamma[\chi]\cap V(G)=K_s$ and $\complementset{(\chi)}=G\setminus K_s$.
The notation $\complementset{\pa}$ is used throughout only for this residual graph, not for the path itself.

\section{Free-fermion modes from the path-product expansion}
\label{sec:proof}

We construct the generating function for the Krylov basis, expand it as a sum over induced paths, extract the modes as residues at its poles, prove the free-fermion relations, and record the mode decomposition together with the finite linear relations in the Krylov basis.

\subsection{Path-product expansion of free-fermion modes}
\label{sec:main-result}
\label{subsec:main-result}

Let $G$ be a connected ECF frustration graph, and let $\chi$ be the edge operator associated with a simplicial clique $K_s$.
Let $u_k>0$, $k=1,\ldots,\indepnum{G}$, denote the positive roots of $P_G(u^2)=0$, and set $u_{-k}=-u_k$.
Whenever normalized modes and residues are used, we assume generic nonzero couplings, so that these positive roots are simple and $P'_G(u_k^2)\neq0$, $P_{G\setminus K_s}(u_k^2)\neq0$.

In the free-fermion-mode expansion, paths in $\extendedgraph{G}{\chi}$ are rooted at the auxiliary vertex $\chi$ and written $\pa=\chi\md\ell_1\md\cdots\md\ell_n \in \pathset{G_\chi}{\chi}$, so that $|\pa|=n+1$.
In formulas that use endpoint indices we set $\ell_0=\chi$, so that $\pa=\ell_0\md\ell_1\md\cdots\md\ell_n$ and $\pa_{[;-1]}=\chi\md\ell_1\md\cdots\md\ell_{n-1}$.
For such a rooted path, $\pathprodH{\pa}=\chi \hc_{\ell_1}\hc_{\ell_2}\cdots\hc_{\ell_n}$, and $\pathprodH{\chi}\equiv\chi$ for the minimal rooted path $(\chi)$.

\begin{define}
	\label{define:path-product-operator}
	We define a path-product operator $\Psi_k$ by
	\begin{equation}
		\label{eq:free-fermion-mode-path-expansion}
		\Psi_k \equiv \frac{1}{\NN_k} \sum_{\pa \in \pathset{G_\chi}{\chi}} (-u_k)^{|\pa|-1} P_{\complementset{\pa}}(u_k^2) \, \pathprodH{\pa},
	\end{equation}
	where $\NN_k$ is the normalization factor defined by
	\begin{align}\label{eq:normalization-factor-Psi}
		\NN_k
		 & =
		2 \sqrt{- u_k^2 P_{G \setminus K_s}(u_k^2) P'_{G}(u_k^2)}.
	\end{align}
	Here $P'_G(x) \equiv \frac{\partial P_G(x)}{\partial x}$.
\end{define}
The radicand in Eq.~\eqref{eq:normalization-factor-Psi} is positive under these assumptions, as shown in Lemma~\ref{lem:normalization-positivity}.
With this convention, $\NN_{-k}=\NN_k$.
Compared with the convention used in \cite{sajat-solving-ffd-1}, the present normalization differs by a factor of $2$.

\begin{thm}
	\label{thm:path-product-expansion}
	Let $G$ be a connected ECF frustration graph, let $K_s$ be a simplicial clique, and let $\chi$ be the associated edge operator.
	For each $k=\pm1,\ldots,\pm\indepnum{G}$, the operator $\Psi_k$ in Eq.~\eqref{eq:free-fermion-mode-path-expansion} is a free-fermion mode of $H_G$ with single-particle energy $\epsilon_k = 1/u_k$.
	It satisfies the eigenoperator equation
	\begin{equation}
		\label{eq:eigenoperator}
		[H_G, \Psi_k] = 2\epsilon_k \Psi_k,
	\end{equation}
	and the fermionic anticommutation relation
	\begin{equation}
		\label{eq:anticommutation}
		\{\Psi_k, \Psi_l\} = \delta_{k+l, 0}\,.
	\end{equation}
\end{thm}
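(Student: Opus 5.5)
The approach is to work with the Krylov generating function of the edge operator,
\[
G(u) \equiv \sum_{n\ge 0} u^n \,\mathrm{ad}_{H_G}^n(\chi)\,/\,2^n = \Bigl(1-\tfrac{u}{2}\,\mathrm{ad}_{H_G}\Bigr)^{-1}\chi ,
\]
and first establish a path-product expansion of $G(u)$. The target identity is
\[
P_G(u^2)\,G(u)=\sum_{\pa\in\pathset{G_\chi}{\chi}}(-u)^{|\pa|-1}P_{\complementset{\pa}}(u^2)\,\pathprodH{\pa},
\]
up to the precise sign and normalization conventions. The key computation is the action of $\tfrac12\mathrm{ad}_{H_G}$ on a rooted path product $\pathprodH{\pa}$.

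Since $\hc_{\jbm}$ commutes or anticommutes with each factor, $\tfrac12[\hc_{\jbm},\pathprodH{\pa}]$ is either $0$ or $\pm\hc_{\jbm}\pathprodH{\pa}$. Which case occurs depends on the parity of the number of neighbours of $\jbm$ in $\pa$. Claw-freeness of $G_\chi$ (simpliciality of $K_s$) limits $\jbm$ to at most two neighbours on the path, and these must be consecutive or adjacent to the endpoint. The case analysis is then:
\begin{itemize}[leftmargin=*]
	\item if $\jbm$ has one neighbour $\ell_n$ (the endpoint), the path extends to $\pa\md\jbm$;
	\item if $\jbm=\ell_{n}$ itself, the path shrinks by one, using $\hc_{\ell_n}^2=b_{\ell_n}^2$;
	\item if $\jbm$ has exactly one neighbour strictly inside the path, or is $\ell_{n-1}$, the product is a non-induced "lollipop" word;
	\item if $\jbm$ has two neighbours, the commutator vanishes.
\end{itemize}
Multiplying by $P_G(u^2)$, I would expand the independence polynomials using the clique recursion Eq.~\eqref{eq:independence-polynomial-recursion}. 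Applied at the clique $\Gamma[\ell_n]\setminus\Gamma[\pa_{[;-1]}]$ (a clique by claw-freeness), it matches the extension and shrinking terms: the residual graph $\complementset{\pa}$ for a path versus its extensions reproduces exactly the recursion. The statement to prove is the operator identity $(1-\tfrac u2\mathrm{ad}_{H_G})\,\Phi(u)=P_G(u^2)\chi$, where $\Phi(u)$ is the right-hand side above, and this is verified term by term.

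\textbf{Main obstacle.} The hardest step is cancelling the non-induced terms. A vertex $\jbm$ adjacent to a single interior vertex $\ell_i$ would close a cycle with the path, and these contributions must cancel in pairs or vanish. Here even-hole-freeness is essential. For such a $\jbm$, together with the segment $\ell_i\md\cdots\md\ell_n$ and the relevant neighbour relations, one obtains a hole. I would argue that odd holes give pairs of terms from two different paths (going around either side of the cycle) with equal operator up to reordering but opposite sign $(-1)^{\text{length}}$. Even holes would give equal signs, and these are excluded. I would first check this carefully on $C_5$-type configurations and then generalize by a sign-reversing involution on (path, vertex) pairs. The coefficient residual graphs coincide for paired terms because both have the same closed neighbourhood union.

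\textbf{Extracting the modes.} Once the expansion holds, $G(u)=\Phi(u)/P_G(u^2)$. Since $P_G(u^2)$ has simple roots $\pm u_k$, the residue of $G(u)$ at $u=u_k$ gives an operator $\propto\Phi(u_k)\propto\Psi_k$. Applying $(1-\tfrac u2\mathrm{ad}_{H_G})$ and taking the residue at $u_k$ shows $(1-\tfrac{u_k}{2}\mathrm{ad}_{H_G})\Psi_k=0$, which is Eq.~\eqref{eq:eigenoperator} with $\epsilon_k=1/u_k$.

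\textbf{Anticommutation relations.} For $k+l\neq0$, $\{\Psi_k,\Psi_l\}$ commutes with $H_G$ and has eigenvalue $2(\epsilon_k+\epsilon_l)\neq0$ under $\mathrm{ad}_{H_G}$, so it vanishes provided it is a scalar. To show this, I would use the trace form: consider $\tau(G(u)G(v))$, with $\tau$ the normalized trace. The $\mathrm{ad}$-antisymmetry of $\tau$ yields
\[
\tau(G(u)G(v))\propto \frac{u\,f(u)+v\,f(v)}{u+v},
\]
with $f$ built from $\tau(\chi G(u))$. The path expansion gives $\tau(\chi G(u))=P_{G\setminus K_s}(u^2)/P_G(u^2)$, since only the $(\chi)$ path has nonzero trace against $\chi$. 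Double residues then give $\tau(\Psi_k\Psi_l)=\tfrac12\delta_{k+l,0}$, with $\NN_k$ matching the residue product $-u_k^2P_{G\setminus K_s}P'_G$. The operator-level statement that the anticommutator is scalar should follow because the $\Psi_k$ lie in the linear span of odd-degree Krylov elements whose anticommutators reduce via the same Krylov structure. If that fails, I would instead compute $\{\pathprodH{\pa},\pathprodH{\pa'}\}$ directly.
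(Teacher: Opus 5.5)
Your overall architecture is the paper's: the Krylov generating function, a check that $\Xi_G(u)$ solves $\frac u2[H_G,\Xi_G(u)]=\Xi_G(u)-P_G(u^2)\chi$ via the endpoint-clique recursion, and then residues at $\pm u_k$. However, the case analysis of $\frac12[H_G,\pathprodH{\pa}]$, on which the whole cancellation step rests, is wrong.

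\textbf{The classification of off-path vertices.} Claw-freeness does not limit an off-path vertex to two path-neighbours. In the open Fendley graph $P_M^2$, vertex $3$ is adjacent to $1,2,4$ on $\chi\md1\md2\md4$. Conversely, the configurations you single out produce nothing. A vertex whose only path-neighbour is an interior $\ell_i$ would be the third leaf of a claw centred at $\ell_i$. The vertex $\jbm=\ell_{n-1}$ anticommutes with exactly two factors, so its commutator vanishes. Read literally, your list leaves no non-path terms at all, yet they occur, e.g.\ $\pm\chi\hc_1\hc_2\hc_3\hc_4$ from $\chi\md1\md2\md4$ and $\chi\md1\md3\md4$.

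What actually happens is that an off-path vertex with an odd number of path-neighbours is adjacent to one of:
\begin{enumerate}
\item only $\ell_n$, which gives the extension term;
\item exactly three consecutive vertices $\ell_{k-1},\ell_k,\ell_{k+1}$;
\item a consecutive pair $\ell_{k-1},\ell_k$ together with $\ell_n$, with $k<n-1$.
\end{enumerate}
The last two types need \emph{different} involutions.

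\textbf{Three-consecutive type.} Exchanging $\ell_k$ and $\jbm$ yields an induced path $\pa'$. The terms $\hc_{\jbm}\pathprodH{\pa}$ and $\hc_{\ell_k}\pathprodH{\pa'}$ cancel because $\hc_{\jbm}$ and $\hc_{\ell_k}$ anticommute. No cycle appears and even-hole-freeness is not used. In the chordal graph $P_M^2$ this is the only mechanism, so no hole-based argument can account for it.

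\textbf{Pair-plus-endpoint type.} Here $\ell_k\md\cdots\md\ell_n\md\jbm$ is an induced cycle. With $\ell_0=\chi$, the partner term is $\hc_{\ell_k}\pathprodH{\pa'}$, where $\pa'=\ell_0\md\cdots\md\ell_{k-1}\md\jbm\md\ell_n\md\ell_{n-1}\md\cdots\md\ell_{k+1}$. Reversing the segment costs $(-1)^{n-k-1}$, so cancellation requires the cycle to be odd; this is where even-hole-freeness enters. Your ``both sides of the cycle'' picture is this involution, but with the wrong trigger vertex, and the three-consecutive involution is missing entirely.

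\textbf{Equal coefficients.} The identity $\Gamma[\pa]=\Gamma[\pa']$ is not automatic, since the vertex sets differ by $\ell_k\leftrightarrow\jbm$. It follows from claw-freeness at the swapped vertex, whose two path-neighbours are non-adjacent.

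\textbf{Anticommutation relations.} The trace form only yields $\tau(\Psi_k\Psi_l)$. To get $\{\Psi_k,\Psi_l\}=\delta_{k+l,0}$ you need the anticommutator to be a multiple of the identity, which you leave as ``should follow''. Also, $\Psi_k$ is not in the span of odd Krylov elements, since $\Xi_G(u_k)$ mixes both parities.

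The missing step is short:
\begin{enumerate}
\item $\chi$ anticommutes with $\hc_{\ell_1}$ and commutes with all later factors of a rooted path. Hence $\{\chi,\pathprodH{\pa}\}=0$ for $|\pa|\ge2$ as an operator identity.
\item The path expansion then gives $\{\chi,\Phi_G(u)\}=2P_{G\setminus K_s}(u^2)/P_G(u^2)$, so every $\{\phi_0,\phi_n\}$ is a scalar.
\item Induction on $i+j$ with $\{\phi_i,\phi_j\}=\frac12[H_G,\{\phi_{i-1},\phi_j\}]-\{\phi_{i-1},\phi_{j+1}\}$ makes all $\{\phi_i,\phi_j\}$ scalars.
\end{enumerate}
With that in place, your trace computation (or the paper's trace-free Jacobi-identity derivation of $\{\Phi_G(u),\Phi_G(v)\}$) and the residue extraction go through.
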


The free-fermion modes $\Psi_k$ defined in Eq.~\eqref{eq:free-fermion-mode-path-expansion} coincide with those obtained from the transfer-matrix construction in Refs.~\cite{fendley-fermions-in-disguise,fermions-behind-the-disguise,unified-graph-th}; this connection is worked out in Section~\ref{sec:transfer-matrix-formalism}.
The rest of this section develops the Krylov generating function $\Phi_G(u)$, and Subsection~\ref{sec:proof-path-product-expansion} proves Eqs.~\eqref{eq:eigenoperator} and~\eqref{eq:anticommutation}.

\subsection{Generating function for the Krylov basis}
\label{subsec:krylov-basis}
Theorem~\ref{thm:Krylov-path-expansion} identifies $\Phi_G(u)$ with $\Xi_G(u;\chi)/P_G(u^2)$, and Corollary~\ref{cor:Psi-Phi-relation} extracts $\Psi_k$ as residues of $\Phi_G(u)$ at $u=u_k$.

The Krylov basis (in operator space) for the frustration graph $G$ is defined by the following recursive relation:
\begin{align}
	\label{eq:Krylov-def}
	\phi_{j+1} = \frac{1}{2}\qty[H_G, \phi_j],
\end{align}
where $\phi_0 \equiv \chi$ is the edge operator.

We define the generating function of the Krylov basis as
\begin{align}
	\label{eq:Phi-def}
	\Phi_G(u; \chi) \equiv \sum_{j=0}^{\infty} u^j \phi_j.
\end{align}
In the following, we omit the explicit dependence on the seed edge operator whenever it is clear from the context.

\begin{lem}\label{lem:Krylov-uniqueness}
	The generating function $\Phi_G(u;\chi)$ defined by~\eqref{eq:Phi-def} is the unique formal power series satisfying
	\begin{align}\label{eq:Krylov-diff-eq}
		\frac{u}{2} \qty[H_G, \Phi_G(u; \chi)] = \Phi_G(u; \chi) - \chi
	\end{align}
	with initial condition $\Phi_G(0; \chi) = \chi$.
\end{lem}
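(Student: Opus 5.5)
The plan is to compare coefficients of $u^j$ on both sides of Eq.~\eqref{eq:Krylov-diff-eq}, working in the ring of formal power series in $u$ with coefficients in the graph Clifford algebra of $\extendedgraph{G}{\chi}$. Since $\Phi_G(u;\chi)$ is a formal power series, no convergence questions arise. For each $j$, $\phi_j$ is a finite element of the algebra, so all manipulations are coefficientwise.

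\emph{Existence.} First I will check that the series defined in Eq.~\eqref{eq:Phi-def} satisfies the equation. The constant term of $\Phi_G$ is $\phi_0=\chi$, which gives the initial condition $\Phi_G(0;\chi)=\chi$. For the equation itself, I compute
\begin{align*}
\frac{u}{2}\qty[H_G,\Phi_G(u;\chi)]=\sum_{j\ge 0}u^{j+1}\,\frac12\qty[H_G,\phi_j]=\sum_{j\ge0}u^{j+1}\phi_{j+1},
\end{align*}
using the recursion~\eqref{eq:Krylov-def}. The right-hand side equals $\Phi_G(u;\chi)-\phi_0=\Phi_G(u;\chi)-\chi$.

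\emph{Uniqueness.} Let $F(u)=\sum_j u^j f_j$ be any formal power series with $F(0)=\chi$ that satisfies $\frac{u}{2}[H_G,F]=F-\chi$. The coefficient of $u^0$ gives $0=f_0-\chi$, which agrees with the initial condition. For each $j\ge0$, the coefficient of $u^{j+1}$ gives $f_{j+1}=\frac12[H_G,f_j]$. By induction on $j$, $f_j=\phi_j$ for all $j$, so $F=\Phi_G$.

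An alternative formulation, equivalent to the above, is to take the difference $D=F-\Phi_G$. It satisfies $D=\frac u2[H_G,D]$ with $D(0)=0$. If $D\neq0$, let $u^m$ be its lowest-order term with nonzero coefficient. The right-hand side of $D=\frac u2[H_G,D]$ has no terms below order $m+1$, which contradicts the nonzero coefficient of $u^m$ on the left, so $D=0$.

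No step is a genuine obstacle; the argument only uses the linearity of the commutator and coefficient matching. The one point to state explicitly is that the initial condition is in fact implied by the equation itself: its $u^0$ coefficient already forces $f_0=\chi$.
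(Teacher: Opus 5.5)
Your proposal is correct and follows essentially the same route as the paper: existence by the recursion $\phi_{j+1}=\tfrac12[H_G,\phi_j]$, and uniqueness by matching coefficients of $u^{n}$, using the fact that the difference of two solutions vanishes inductively from its zero constant term. Your observation that the $u^0$ coefficient of Eq.~\eqref{eq:Krylov-diff-eq} already forces $\Phi_G(0;\chi)=\chi$ is a valid small addition.
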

\begin{proof}
	The definition~\eqref{eq:Krylov-def} implies that $\Phi_G(u;\chi)$ satisfies Eq.~\eqref{eq:Krylov-diff-eq}.
	For uniqueness, the difference $\Delta\Phi(u) \equiv \Phi_1(u) - \Phi_2(u)$ of two formal power series solutions satisfies $\tfrac{u}{2} [H_G, \Delta \Phi(u)] = \Delta \Phi(u)$.
	Expanding $\Delta\Phi(u) = \sum_{n=0}^{\infty} u^n c_n$ and comparing coefficients gives the recursion $c_{n+1} = \tfrac{1}{2} [H_G, c_n]$ for $n \geq 0$.
	Since $c_0 = \Delta\Phi(0) = 0$, it follows inductively that $c_n = 0$ for all $n \geq 0$.
\end{proof}

\begin{define}\label{def:gen-path-product-operator}
	We define the path-product operator depending on a complex parameter $u$ by
	\begin{align}\label{eq:gen-path-product-operator}
		\Xi_G(u; \chi)
		\equiv
		\sum_{\pa \in \pathset{\extendedgraph{G}{\chi}}{{\chi}}}
		(-u)^{|\pa|-1}
		P_{\complementset{\pa}}(u^2)\,
		\pathprodH{\pa}
		\,.
	\end{align}
	The sum uses the rooted convention of Subsection~\ref{sec:main-result}.
\end{define}

The path-product operator~\eqref{eq:free-fermion-mode-path-expansion} is then
\begin{align}\label{eq:Xi-Psi-relation}
	\Psi_k = \frac{1}{\NN_k}\, \Xi_G(u_k; \chi)\,.
\end{align}

\begin{lem}\label{lem:odd-neighbor-classification}
	Let $G$ be claw-free and let $\pa=\ell_1\md\ell_2\md\cdots\md\ell_n$ be an induced path in $G$ with $n\geq2$.
	For an off-path vertex $\jbm$, let $T_\pa(\jbm)=\{\ell_i\in V(\pa): \jbm\sim\ell_i\}$ and assume that $|T_\pa(\jbm)|$ is odd.
	After reversing the listing of $\pa$ if necessary, exactly one of the following holds:
	\begin{enumerate}[label=\textup{(\roman*)},nosep]
		\item $T_\pa(\jbm)=\{\ell_n\}$; see Fig.~\ref{fig:non-path}(a).
		\item $T_\pa(\jbm)=\{\ell_{k-1},\ell_k,\ell_{k+1}\}$ for some $2\leq k\leq n-1$; see Fig.~\ref{fig:non-path}(b).
		\item $T_\pa(\jbm)=\{\ell_{k-1},\ell_k,\ell_n\}$ for some $2\leq k<n-1$; see Fig.~\ref{fig:non-path}(c).
	\end{enumerate}
	Moreover, the off-path vertices satisfying \textup{(i)}, together with $\ell_n$, form a clique.
\end{lem}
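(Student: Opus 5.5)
The plan is to reduce everything to one observation: an induced claw in $G$ can be centred either at $\jbm$ or at a path vertex, and each choice of centre constrains $T_\pa(\jbm)$. I identify $T_\pa(\jbm)$ with its index set $I\subseteq\{1,\dots,n\}$. Because $\pa$ is induced, $\ell_i\sim\ell_{i'}$ holds if and only if $|i-i'|=1$. I decompose $I$ into maximal runs of consecutive indices and derive two constraints from claw-freeness.

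\emph{Claws centred at $\jbm$.} $I$ cannot contain three indices that are pairwise at distance $\ge2$; otherwise $\jbm$ together with those three path vertices is an induced $K_{1,3}$. Hence:
\begin{itemize}[nosep]
	\item there are at most two runs, since one index from each of three runs gives a forbidden triple;
	\item every run has length at most $4$, since a run of length $\ge5$ contains $i,i+2,i+4$;
	\item if there are two runs, each has length at most $2$, since a run of length $\ge3$ contains $i,i+2$, which together with any index of the other run gives a forbidden triple.
\end{itemize}

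\emph{Claws centred on the path.} Suppose $i\in I$ with $2\le i\le n-1$, but $\jbm$ is adjacent to neither $\ell_{i-1}$ nor $\ell_{i+1}$. Then $\ell_i$ with leaves $\ell_{i-1},\ell_{i+1},\jbm$ is an induced claw, because $\ell_{i-1}\not\sim\ell_{i+1}$ along an induced path. Hence a run of length $1$ must sit at an endpoint, $i\in\{1,n\}$.

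\emph{Odd sizes.} Combining these constraints, the only shapes with $|I|$ odd are the following.
\begin{itemize}[nosep]
	\item A single run of length $1$, which must be an endpoint. After reversal this is $I=\{n\}$, which is case (i).
	\item A single run of length $3$, $\{k-1,k,k+1\}$ with $2\le k\le n-1$, which is case (ii).
	\item Two runs of lengths $1$ and $2$. The singleton must be an endpoint, so after reversal it is $\ell_n$, and the other run is $\{k-1,k\}$. Since the runs are separated by a gap, $k\le n-2$, i.e.\ $2\le k<n-1$, which is case (iii).
\end{itemize}
Runs of length $2$ or $4$ alone, and two runs of lengths $2+2$ or $1+1$, give even $|I|$ and are discarded. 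Exclusivity follows from the shapes: the cases differ in $|I|$ (case (i) versus the others), or in whether $I$ is consecutive (case (ii) versus case (iii)). The phrase ``after reversing if necessary'' absorbs the only ambiguity.

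\emph{Clique statement.} Take two distinct off-path vertices $\jbm,\jbm'$ with $T_\pa(\jbm)=T_\pa(\jbm')=\{\ell_n\}$. If $\jbm\not\sim\jbm'$, then $\ell_n$ with leaves $\ell_{n-1},\jbm,\jbm'$ is an induced claw. This uses $n\ge2$ and that neither $\jbm$ nor $\jbm'$ is adjacent to $\ell_{n-1}$. So $\jbm\sim\jbm'$, and since all such vertices are adjacent to $\ell_n$, they form a clique together with $\ell_n$.

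The main obstacle is bookkeeping rather than depth. I must make sure the run analysis is exhaustive, in particular that a length-$1$ run at an interior position is excluded, and that the reversal convention is applied consistently. In case (iii) the isolated neighbour should be normalised to $\ell_n$, not to $\ell_1$.
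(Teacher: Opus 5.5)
Your proof is correct and follows the paper's argument essentially step for step. The paper also excludes two configurations: an isolated interior touched vertex, which gives a claw at $\ell_i$, and three pairwise non-consecutive touched vertices, which give a claw at $\jbm$. It then decomposes $T_\pa(\jbm)$ into maximal blocks to leave the same three odd shapes, and proves the clique statement with the claw centred at $\ell_n$ with leaves $\ell_{n-1},\jbm,\jbm'$.
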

\begin{proof}
	Claw-freeness excludes two configurations, both drawn in Fig.~\ref{fig:odd-neighbor-forbidden}.
	An interior vertex $\ell_i\in T_\pa(\jbm)$ with $\ell_{i-1},\ell_{i+1}\notin T_\pa(\jbm)$ gives a claw centered at $\ell_i$ with leaves $\ell_{i-1},\ell_{i+1},\jbm$.
	Three vertices of $T_\pa(\jbm)$ no two of which are consecutive along $\pa$ are pairwise non-adjacent because $\pa$ is induced, and hence give a claw centered at $\jbm$.
	Decompose $T_\pa(\jbm)$ into maximal blocks of consecutive path vertices.
	The first exclusion forces every singleton block to be an endpoint of $\pa$.
	The second rules out three or more blocks, a block of length $\geq5$ (its first, third, and fifth vertices are pairwise non-consecutive), and, in a two-block configuration, a block of length $\geq3$ (its two extreme vertices together with any vertex of the other block).
	Since $|T_\pa(\jbm)|$ is odd, the only survivors are a single endpoint of $\pa$, a single block of three consecutive vertices, and an endpoint together with a disjoint pair of consecutive vertices.
	Reversing the listing if necessary places the isolated endpoint at $\ell_n$, which gives \textup{(i)}, \textup{(ii)}, and \textup{(iii)}.
	Finally, two vertices $\jbm,\jbm'$ satisfying \textup{(i)} are adjacent, since otherwise $\ell_{n-1},\jbm,\jbm'$ would be the leaves of a claw centered at $\ell_n$.
\end{proof}

\begin{figure}[t]
	\centering
	\begin{tikzpicture}[
			scale=0.98,
			line cap=round,
			line join=round,
			vertex/.style={circle, draw, thick, fill=white, minimum size=\vertexsize, inner sep=0pt},
			offpath/.style={circle, draw, thick, fill=black, minimum size=\vertexsize, inner sep=0pt},
			center/.style={circle, draw, double, double distance=1.2pt, thick, fill=white, minimum size=\vertexsize, inner sep=0pt},
			hamcenter/.style={circle, draw, double, double distance=1.2pt, thick, fill=black, minimum size=\vertexsize, inner sep=0pt},
			path/.style={thick},
			cont/.style={thick, dotted},
			clawedge/.style={thick},
			paneltitle/.style={font=\scriptsize, align=center},
			note/.style={font=\scriptsize, align=center}
		]
		\begin{scope}
			\node[paneltitle] at (1.9,2.35) {(a) Isolated touched interior vertex};
			\node[vertex] (ll) at (0,0) {};
			\node[vertex] (lm) at (0.95,0) {};
			\node[center] (li) at (1.9,0) {};
			\node[vertex] (lp) at (2.85,0) {};
			\node[vertex] (rr) at (3.8,0) {};
			\node[offpath] (j) at (1.9,1.35) {};
			\draw[path] (-0.25,0) -- (ll);
			\draw[cont] (-0.70,0) -- (-0.25,0);
			\draw[path] (ll) -- (lm) -- (li) -- (lp) -- (rr);
			\draw[path] (rr) -- (4.05,0);
			\draw[cont] (4.05,0) -- (4.50,0);
			\draw[clawedge] (li) -- (j);
			\node[center] at (li.center) {};
			\node[below=2pt] at (lm) {\small $\ell_{i-1}$};
			\node[below=2pt] at (li) {\small $\ell_i$};
			\node[below=2pt] at (lp) {\small $\ell_{i+1}$};
			\node[above=2pt] at (j) {\small $\jbm$};
			\node[above=2pt] at (4.25,0) {\small $\pa$};
			\node[note] at (1.9,-0.85) {claw centered at $\ell_i$};
		\end{scope}
		\begin{scope}[xshift=6.25cm]
			\node[paneltitle] at (1.85,2.35) {(b) Three separated touched vertices};
			\node[vertex] (la) at (0,0) {};
			\node[vertex] (xa) at (0.60,0) {};
			\node[vertex] (xb) at (1.25,0) {};
			\node[vertex] (lb) at (1.85,0) {};
			\node[vertex] (ya) at (2.45,0) {};
			\node[vertex] (yb) at (3.10,0) {};
			\node[vertex] (lc) at (3.70,0) {};
			\node[hamcenter] (jj) at (1.85,1.45) {};
			\draw[path] (-0.25,0) -- (la);
			\draw[cont] (-0.70,0) -- (-0.25,0);
			\draw[path] (la) -- (xa);
			\draw[cont] (xa) -- (xb);
			\draw[path] (xb) -- (lb) -- (ya);
			\draw[cont] (ya) -- (yb);
			\draw[path] (yb) -- (lc);
			\draw[path] (lc) -- (3.95,0);
			\draw[cont] (3.95,0) -- (4.40,0);
			\draw[clawedge] (jj) -- (la);
			\draw[clawedge] (jj) -- (lb);
			\draw[clawedge] (jj) -- (lc);
			\node[hamcenter] at (jj.center) {};
			\node[below=2pt] at (la) {\small $\ell_a$};
			\node[below=2pt] at (lb) {\small $\ell_b$};
			\node[below=2pt] at (lc) {\small $\ell_c$};
			\node[above=2pt] at (jj) {\small $\jbm$};
			\node[above=2pt] at (4.15,0) {\small $\pa$};
			\node[note] at (1.85,-0.85) {claw centered at $\jbm$};
		\end{scope}
	\end{tikzpicture}
	\caption{Claw obstructions excluded in Lemma~\ref{lem:odd-neighbor-classification}.
		Panel (a) has a claw centered at $\ell_i$, and panel (b) a claw centered at $\jbm$.
		Double circles mark the centers, and dotted segments denote omitted parts of the induced path.}
	\label{fig:odd-neighbor-forbidden}
\end{figure}

\begin{lem}\label{lem:single-path-commutator}
	Let $G$ be a claw-free frustration graph, let $K_s$ be a simplicial clique, and let $\chi$ be the associated edge operator.
	Let $\pa = \chi\md\ell_1 \md \cdots \md \ell_n$ be an induced path in $\extendedgraph{G}{\chi}$ with $|\pa| \geq 2$, and set $\ell_0=\chi$.
	Then the single-path commutator is
	\begin{align}
		\label{eq:commutator-single-path}
		\frac{1}{2}[H_G, \pathprodH{\pa}]
		 & =
		- b_{\ell_n}^2 \hc[\pa_{[;-1]}]
		- \sum_{\jbm \in \pathclique_{\pa}} \hc[\pa \md \jbm]
		+ \sum_{\jbm \in \pathsvd_{\pa}} \hc_{\jbm} \pathprodH{\pa}
		+ \sum_{\jbm \in \pathloop_{\pa}} \hc_{\jbm} \pathprodH{\pa}\,.
	\end{align}
	Here $\pathclique_{\pa} \coloneqq \Gamma[\ell_n] \setminus \Gamma[\pa_{[;-1]}]$ is the clique of vertices adjacent only to the terminal vertex $\ell_n$.
	The set $\pathsvd_{\pa}$ consists of off-path vertices adjacent to three consecutive path vertices $\ell_{k-1}, \ell_k, \ell_{k+1}$.
	The set $\pathloop_{\pa}$ consists of off-path vertices adjacent to $\ell_{k-1}, \ell_k$, and the terminal vertex $\ell_n$ for some $k < n{-}1$.
	These three classes partition the off-path vertices of $G$ with an odd number of path-neighbors; see Fig.~\ref{fig:non-path}.
	The first two terms are path products (shortening and extending $\pa$), while the last two are non-path operators.
	The minimal rooted path $(\chi)$ is excluded because the shortening term and the terminal Hamiltonian vertex are absent.
\end{lem}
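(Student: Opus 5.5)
We compute $\tfrac12[H_G,\pathprodH{\pa}]$ term by term, using that $\tfrac12[\hc_{\jbm},X]$ equals $\hc_{\jbm}X$ when $\hc_{\jbm}$ anticommutes with $X$ and vanishes when they commute. For a monomial $X=\pathprodH{\pa}=\chi\hc_{\ell_1}\cdots\hc_{\ell_n}$, the vertex $\hc_{\jbm}$ anticommutes with $X$ exactly when $\jbm$ is adjacent (in $\extendedgraph{G}{\chi}$) to an odd number of vertices of $\pa$. Here the root $\chi$ counts as a path vertex, and we add a self-contribution when $\jbm\in V(\pa)$. So the first step is to split the sum $H_G=\sum_{\jbm}\hc_{\jbm}$ into on-path vertices $\jbm=\ell_i$ ($1\le i\le n$) and off-path vertices $\jbm\in V(G)\setminus V(\pa)$.

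For the on-path vertices, $\hc_{\ell_i}$ anticommutes with its path-neighbours $\ell_{i\pm1}$ and commutes with itself and with all other path vertices. An interior $\ell_i$ ($1\le i\le n-1$) has two neighbours $\ell_{i-1},\ell_{i+1}$ on the path (with $\ell_0=\chi$ for $i=1$), so it commutes with $X$ and drops out. The terminal vertex $\ell_n$ has the single neighbour $\ell_{n-1}$, so it contributes $\hc_{\ell_n}\chi\hc_{\ell_1}\cdots\hc_{\ell_n}$. Moving $\hc_{\ell_n}$ to the right through $\hc_{\ell_{n-1}}$ gives one sign, and it commutes with the remaining factors and with $\chi$ (when $n\ge2$, $\ell_n\notin K_s$ because the path is induced). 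Using $\hc_{\ell_n}^2=b_{\ell_n}^2$, this yields $-b_{\ell_n}^2\hc[\pa_{[;-1]}]$. The case $n=1$ is identical, with $\chi$ playing the role of $\ell_{n-1}$.

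For the off-path vertices, we apply Lemma~\ref{lem:odd-neighbor-classification} to the induced path $\pa$ in $\extendedgraph{G}{\chi}$, which has at least two vertices. This is legitimate because $\extendedgraph{G}{\chi}$ is claw-free, $K_s$ being simplicial. Since $\chi$ is only an endpoint and the root, "after reversing the listing" must be handled carefully. If the isolated endpoint in case (i) or (iii) were $\chi$ itself, then $\jbm\in K_s$. In case (i) this means $\jbm\sim\chi$ only; but $\ell_1\in K_s$ as well, so $\jbm\sim\ell_1$, a contradiction. In case (iii) we get $T=\{\chi,\ell_k,\ell_{k+1}\}$ with $k\ge2$, yet again $\jbm\in K_s$ forces $\jbm\sim\ell_1$, a contradiction. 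Hence the isolated endpoint is always $\ell_n$, and the three classes are exactly $\pathclique_{\pa}$, $\pathsvd_{\pa}$ and $\pathloop_{\pa}$. Each contributes $\hc_{\jbm}\pathprodH{\pa}$. For $\jbm\in\pathclique_{\pa}$ we commute $\hc_{\jbm}$ to the right end: it anticommutes only with $\hc_{\ell_n}$, one sign, giving $-\pathprodH{\pa\md\jbm}$. The sequence $\pa\md\jbm$ is again an induced path because $\jbm$ touches only $\ell_n$. The clique statement for $\pathclique_{\pa}$ is the last clause of Lemma~\ref{lem:odd-neighbor-classification}, and $\pathclique_{\pa}=\Gamma[\ell_n]\setminus\Gamma[\pa_{[;-1]}]$ restricted to off-path vertices is exactly case (i). The vertices in $\pathsvd_{\pa}$ and $\pathloop_{\pa}$ stay in the form $\hc_{\jbm}\pathprodH{\pa}$; these are not path products, since $\jbm$ has three path-neighbours.

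Finally, the minimal path $(\chi)$ is excluded because there is no terminal Hamiltonian vertex $\ell_n$ and no shortening term. The main obstacle is the endpoint bookkeeping at the root just described: the reversal freedom in Lemma~\ref{lem:odd-neighbor-classification} has to be pinned down so that the isolated endpoint is $\ell_n$ rather than $\chi$. This uses the fact that $\chi$'s neighbourhood $K_s$ is a clique containing $\ell_1$.
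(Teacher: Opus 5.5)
Your proof is correct and follows essentially the same route as the paper. On-path vertices are handled by counting anticommuting neighbours, so interior ones cancel and the terminal one gives the shortening term. Off-path vertices are classified by Lemma~\ref{lem:odd-neighbor-classification} in the claw-free graph $\extendedgraph{G}{\chi}$, with the root $\chi$ ruled out as the isolated endpoint because $K_s$ is a clique containing $\ell_1$.
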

\begin{proof}
	Since $H_G = \sum_{\jbm \in V} \hc_{\jbm}$, we evaluate $\frac{1}{2}[H_G, \pathprodH{\pa}] = \frac{1}{2}\sum_{\jbm}[\hc_{\jbm}, \pathprodH{\pa}]$ by commuting each $\hc_{\jbm}$ past $\pathprodH{\pa} = \chi\hc_{\ell_1}\cdots\hc_{\ell_n}$.

	\emph{On-path vertices} ($\jbm = \ell_k$, $1 \le k \le n$).
	Since $\pa$ is induced, the path vertex $\ell_k$ is adjacent within $\pa$ only to $\ell_{k-1}$ and, when $k<n$, to $\ell_{k+1}$.
	Pushing $\hc_{\ell_k}$ through $\pathprodH{\pa}$ from the left, it anticommutes only with the adjacent factor $\hc_{\ell_{k-1}}$, then squares to $b_{\ell_k}^2$; from the right, it anticommutes only with $\hc_{\ell_{k+1}}$.
	For an interior vertex ($k<n$), the two commutator terms give the same product with $\hc_{\ell_k}$ removed but opposite signs, so $[\hc_{\ell_k},\pathprodH{\pa}]=0$.
	For the terminal vertex ($k = n$), there is no right neighbor, so $\frac{1}{2}[\hc_{\ell_n}, \pathprodH{\pa}] = -b_{\ell_n}^2\, \hc[\pa_{[;-1]}]$.

	\emph{Off-path vertices} ($\jbm \notin \pa$).
	Since $\hc_{\jbm}$ anticommutes with each adjacent factor and commutes with each non-adjacent one, $[\hc_{\jbm}, \pathprodH{\pa}] = (1 - (-1)^{N_{\jbm}})\,\hc_{\jbm}\pathprodH{\pa}$, where $N_{\jbm} = \abs{\{\ell_i \in \pa : \jbm \sim \ell_i\}}$.
	This vanishes when $N_{\jbm}$ is even and equals $2\hc_{\jbm}\pathprodH{\pa}$ when $N_{\jbm}$ is odd.
	Since $K_s$ is simplicial, the extended graph $G_\chi$ is claw-free, so Lemma~\ref{lem:odd-neighbor-classification} applies to the induced path $\chi\md\ell_1\md\cdots\md\ell_n$ in $G_\chi$.
	The endpoint singled out in Lemma~\ref{lem:odd-neighbor-classification} cannot be the initial edge vertex $\chi$.
	Indeed, if $\jbm\sim\chi$, then $\jbm\in K_s$, while $\ell_1\in K_s$ and $K_s$ is a clique, so $\jbm\sim\ell_1$ as well.
	Thus no off-path vertex is adjacent only to the initial edge vertex, and the endpoint separated in a loop pattern cannot be $\chi$.
	The only odd-$N_{\jbm}$ possibilities are therefore the three classes in the statement.
	In the notation of Fig.~\ref{fig:non-path}, case~(a) gives $\pathclique_{\pa}$, case~(b) gives $\pathsvd_{\pa}$, and case~(c) gives $\pathloop_{\pa}$.
	For $\jbm \in \pathclique_{\pa}$, pushing $\hc_{\jbm}$ to the right end gives $\hc_{\jbm}\pathprodH{\pa} = -\hc[\pa \md \jbm]$.
	The vertices in $\pathsvd_{\pa}$ and $\pathloop_{\pa}$ do not extend the path; they remain as the two non-path sums in Eq.~\eqref{eq:commutator-single-path}.
\end{proof}

\begin{figure}[t]
	\centering
	\begin{tikzpicture}[
			vertex/.style={circle, draw, thick, fill=white, minimum size=\vertexsize, inner sep=0pt},
			scale=1.0
		]

		\begin{scope}[xshift=4.0cm, yshift=0cm]
			\fill[gray!15] (4.55, 1.15) -- (5.25, -0.05) -- (4.15, -1.10) -- cycle;

			\node[vertex] (chia) at (-0.7, 0) {};
			\node[vertex] (l1a) at (0.3, 0) {};
			\node[vertex] (lnm1a) at (2.3, 0) {};
			\node[vertex] (lna) at (3.3, 0) {};

			\draw[thick] (chia) -- (l1a);
			\draw[thick] (l1a) -- (1.0, 0);
			\draw[thick, dotted] (1.0, 0) -- (1.6, 0);
			\draw[thick] (1.6, 0) -- (lnm1a) -- (lna);

			\draw[decorate,decoration={brace,amplitude=4pt}] ([xshift=-4pt,yshift=8pt]chia.north west) -- ([xshift=4pt,yshift=8pt]lna.north east) node[midway,yshift=10pt] {\scriptsize $\pa$};

			\node[vertex, fill=black] (ja) at (4.55, 1.15) {};
			\node[vertex, draw=gray!40] (j1a) at (5.25, -0.05) {};
			\node[vertex, draw=gray!40] (j2a) at (4.15, -1.10) {};

			\draw[thick] (lna) -- (ja);

			\draw[thin, gray] (lna) -- (j1a);
			\draw[thin, gray] (lna) -- (j2a);

			\draw[thin, gray] (ja) -- (j1a);
			\draw[thin, gray] (ja) -- (j2a);
			\draw[thin, gray] (j1a) -- (j2a);

			\node[below=2pt] at (chia) {\small $\chi$};
			\node[below=2pt] at (l1a) {\small $\ellbm_1$};
			\node[below=2pt] at (lnm1a) {\small $\ellbm_{n-1}$};
			\node[below left=1pt] at (lna) {\small $\ellbm_n$};
			\node[above=2pt] at (ja) {\small $\jbm$};
			\node at (4.65, -0.05) {\small $\pathclique_{\pa}$};

			\node at (2.2, -2.0) {(a) Clique extension ($\jbm \in \pathclique_{\pa}$)};
		\end{scope}

		\draw[gray, dashed, thin] (-0.8, -2.7) -- (13.6, -2.7);

		\begin{scope}[xshift=0cm,yshift=-5.0cm]
			\node[vertex] (l1) at (0, 0) {};
			\node[vertex] (l2) at (1.2, 0) {};
			\node[vertex] (l3) at (2.4, -0.8) {};
			\node[vertex] (l4) at (3.6, 0) {};
			\node[vertex] (l5) at (4.8, 0) {};

			\node[vertex, fill=black] (j) at (2.4, 0.8) {};

			\draw[thick] (l1) -- ++(-0.3, 0);
			\draw[thick, dotted] (-0.3, 0) -- (-0.8, 0);
			\draw[thick] (l1) -- (l2) -- (l3) -- (l4) -- (l5);
			\draw[thick] (l5) -- ++(0.3, 0);
			\draw[thick, dotted] (5.1, 0) -- (5.6, 0);

			\draw[thick] (l2) -- (j);
			\draw[thick] (l3) -- (j);
			\draw[thick] (l4) -- (j);

			\node[below=2pt] at (l1) {\small $\ell_{k-2}$};
			\node[below=2pt] at (l2) {\small $\ell_{k-1}$};
			\node[below=2pt] at (l3) {\small $\ell_k$};
			\node[below=2pt] at (l4) {\small $\ell_{k+1}$};
			\node[below=2pt] at (l5) {\small $\ell_{k+2}$};
			\node[above=2pt] at (j) {\small $\jbm$};

			\node at (2.4, -2.0) {(b) Three-neighbor case ($\jbm \in \pathsvd_{\pa}$)};
		\end{scope}

		\draw[gray, dashed, thin] (6.2, -3.2) -- (6.2, -7.5);

		\begin{scope}[xshift=8.2cm,yshift=-5.0cm]
			\node[vertex] (lkm2) at (-0.7, 0) {};
			\node[vertex] (lkm1) at (0.27, 0) {};

			\node[vertex, fill=black] (j) at (1.21, 0.54) {};
			\node[vertex] (ln) at (1.96, 1.29) {};
			\node[vertex] (lnm1) at (3.04, 1.29) {};
			\node[vertex] (lkp2) at (3.04, -1.29) {};
			\node[vertex] (lkp1) at (1.96, -1.29) {};
			\node[vertex] (lk) at (1.21, -0.54) {};

			\draw[thick] (lkm2) -- ++(-0.3, 0);
			\draw[thick, dotted] (-1.0, 0) -- (-1.4, 0);
			\draw[thick] (lkm2) -- (lkm1) -- (lk);

			\draw[thick] (lk) -- (lkp1) -- (lkp2);

			\draw[thick] (lkp2) -- (3.40, -1.07);
			\draw[thick, dotted] (2.5, 0) ++(310:1.4) arc (310:410:1.4);
			\draw[thick] (3.40, 1.07) -- (lnm1);

			\draw[thick] (lnm1) -- (ln) -- (j);

			\draw[thick] (lkm1) -- (j);
			\draw[thick] (lk) -- (j);

			\node[below=2pt] at (lkm2) {\small $\ell_{k-2}$};
			\node[below=2pt] at (lkm1) {\small $\ell_{k-1}$};
			\node[above left=1pt] at (j) {\small $\jbm$};
			\node[above=2pt] at (ln) {\small $\ell_n$};
			\node[above=2pt] at (lnm1) {\small $\ell_{n-1}$};
			\node[below=2pt] at (lkp2) {\small $\ell_{k+2}$};
			\node[below=2pt] at (lkp1) {\small $\ell_{k+1}$};
			\node[below left=1pt] at (lk) {\small $\ell_k$};

			\node at (2.4, -2.0) {(c) Loop case ($\jbm \in \pathloop_{\pa}$)};
		\end{scope}

	\end{tikzpicture}
	\caption{The three off-path configurations in Lemma~\ref{lem:single-path-commutator}.
		Panel (a) extends $\pa$ through $\jbm\in\pathclique_{\pa}$; the shaded face and gray edges show the clique $\{\ell_n\}\cup\pathclique_{\pa}$.
		Panels (b) and (c) give the pairwise-canceling non-path terms, and the induced cycle in panel (c) is odd by even-hole-freeness.
		Dotted segments denote omitted parts of $\pa$.}
	\label{fig:non-path}
\end{figure}

\begin{lem}\label{lem:path-cancellation-pairings}
	Let $G$ be a claw-free frustration graph, and let $\pa=\ell_0\md\ell_1\md\cdots\md\ell_n$ be either an induced path in $G$ or, for a simplicial clique with edge operator $\chi$, a rooted induced path in $\extendedgraph{G}{\chi}$ with $\ell_0=\chi$.
	The following two configurations admit pairings.
	\begin{enumerate}[label=\textup{(\roman*)},nosep]
		\item \emph{Three-neighbor replacement.}
		      If an off-path vertex $\jbm$ is adjacent exactly to $\ell_{k-1},\ell_k,\ell_{k+1}$ among the path vertices, where $1\le k\le n-1$, let
		      \begin{equation}
			      \pa'=\ell_0\md\cdots\md\ell_{k-1}\md\jbm\md\ell_{k+1}\md\cdots\md\ell_n.
		      \end{equation}
		      Then $\pa'$ is induced, $\ell_k\in\pathsvd_{\pa'}$, the partner of $(\pa',\ell_k)$ is $(\pa,\jbm)$, and $\Gamma[\pa]=\Gamma[\pa']$.
		\item \emph{Loop pairing.}
		      If an off-path vertex $\jbm$ is adjacent exactly to $\ell_{k-1},\ell_k,\ell_n$ among the path vertices, where $1\le k<n-1$, and the induced cycle $(\ell_k,\ell_{k+1},\ldots,\ell_n,\jbm)$ is odd, let
		      \begin{equation}
			      \pa'=\ell_0\md\cdots\md\ell_{k-1}\md\jbm\md\ell_n\md\ell_{n-1}\md\cdots\md\ell_{k+1}.
		      \end{equation}
		      Then $\pa'$ is induced, $\ell_k\in\pathloop_{\pa'}$, the partner of $(\pa',\ell_k)$ is $(\pa,\jbm)$, and $\Gamma[\pa]=\Gamma[\pa']$.
	\end{enumerate}
	If $G$ is even-hole-free, the cycle in~\textup{(ii)} is automatically odd.
	In both cases $|\pa|=|\pa'|$ and $P_{\complementset{\pa}}=P_{\complementset{\pa'}}$.
	Moreover,
	\begin{equation}
		\hc_{\jbm}\pathprodH{\pa}+\hc_{\ell_k}\pathprodH{\pa'}=0.
	\end{equation}
	Thus the pair cancels whenever its scalar weight depends only on the path size and residual graph.
\end{lem}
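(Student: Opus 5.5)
The plan is to verify four things directly from the adjacency data in each case: (a) $\pa'$ is induced; (b) $\ell_k$ falls into the stated class for $\pa'$, and the partner map sends $(\pa',\ell_k)$ back to $(\pa,\jbm)$; (c) $\Gamma[\pa]=\Gamma[\pa']$; (d) the operator identity holds. The facts $|\pa|=|\pa'|$ and $P_{\complementset{\pa}}=P_{\complementset{\pa'}}$ then follow immediately from (c), since the residual graph depends only on $\Gamma[\pa]$.

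\emph{Case (i).} $\jbm$ is adjacent among the path vertices only to $\ell_{k-1},\ell_{k+1}$ (besides $\ell_k$), so $\ell_0\md\cdots\md\ell_{k-1}\md\jbm\md\ell_{k+1}\md\cdots\md\ell_n$ is induced. When $k=1$ and $\ell_0=\chi$, the relation $\jbm\sim\chi$ means $\jbm\in K_s$, which is consistent. On $\pa'$, the vertex $\ell_k$ is adjacent exactly to $\ell_{k-1},\jbm,\ell_{k+1}$, so $\ell_k\in\pathsvd_{\pa'}$, and applying the replacement again returns $\pa$.

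For $\Gamma[\pa]=\Gamma[\pa']$ I need $\Gamma[\ell_k]\subseteq\Gamma[\pa']$ and $\Gamma[\jbm]\subseteq\Gamma[\pa]$. Take $v\sim\ell_k$ with $v$ not in $\Gamma[\ell_{k-1}]\cup\Gamma[\ell_{k+1}]$. If $v\not\sim\jbm$, then $\{v,\ell_{k-1},\ell_{k+1}\}$ is a claw centred at $\ell_k$, which is excluded. The symmetric argument handles $\Gamma[\jbm]$, using a claw centred at $\jbm$. The main subtlety is neighbours of $\chi$. Since $\Gamma[\chi]$ is not part of the residual graph except through $K_s$, one should check that the claw argument holds in $G_\chi$; it does, because $G_\chi$ is claw-free by simpliciality.

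\emph{Case (ii).} Here $\pa'$ goes from $\ell_0,\dots,\ell_{k-1}$ to $\jbm$, and then backward $\ell_n,\ell_{n-1},\dots,\ell_{k+1}$.

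Inducedness: the new edges are $\ell_{k-1}\md\jbm$ and $\jbm\md\ell_n$. The vertex $\jbm$ has no other path neighbours except $\ell_k$, which is removed. The vertex $\ell_n$ is not adjacent to any $\ell_i$ with $i<n-1$, and the segment $\ell_{k+1},\dots,\ell_n$ is induced.

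On $\pa'$, the vertex $\ell_k$ is adjacent to $\ell_{k-1},\jbm$ (consecutive) and to the new terminal $\ell_{k+1}$, and to nothing else since $\pa$ is induced. So $\ell_k\in\pathloop_{\pa'}$ with the required index condition ($k<n-1$ makes the cycle have at least four vertices; oddness gives at least five). The reverse operation reconstructs $\pa$.

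Equality of closed neighbourhoods again comes from claw-freeness. Take $v\sim\ell_k$ avoiding $\Gamma[\ell_{k-1}]\cup\Gamma[\ell_{k+1}]$. If $v\not\sim\jbm$, then $\{v,\ell_{k-1},\ell_{k+1}\}$ is a claw centred at $\ell_k$. The same holds for $v\sim\jbm$, using the pairwise non-adjacent $\ell_{k-1},\ell_n$ and $v$. Even-hole-freeness forces the hole $\ell_k\md\cdots\md\ell_n\md\jbm$ to be odd, since it is an induced cycle of length at least $4$.

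\emph{Operator identity.} This is the step I expect to be the main obstacle, because the signs must be tracked carefully. In case (i), write $\pathprodH{\pa}=A\,\hc_{\ell_k}B$ and $\pathprodH{\pa'}=A\,\hc_{\jbm}B$, where $A=\hc[\pa_{[;k-1]}]$ and $B=\hc[\pa_{[k+1;]}]$.

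Moving $\hc_{\jbm}$ through $A$ produces a sign $-1$, since only $\ell_{k-1}\in A$ is adjacent to it. The same holds for $\hc_{\ell_k}$. So
\[
\hc_{\jbm}\pathprodH{\pa}=-A\hc_{\jbm}\hc_{\ell_k}B,\qquad \hc_{\ell_k}\pathprodH{\pa'}=-A\hc_{\ell_k}\hc_{\jbm}B.
\]
Since $\jbm\sim\ell_k$, these two terms sum to zero.

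In case (ii), write $\pathprodH{\pa'}=A\,\hc_{\jbm}\,C$, where $C=\hc_{\ell_n}\cdots\hc_{\ell_{k+1}}$ is the reversed tail. Reversing an induced path of $r=n-k$ vertices gives $C=(-1)^{r(r-1)/2}\hc[\pa_{[k+1;]}]$. I would then commute $\hc_{\jbm}$ past the tail: it anticommutes only with $\hc_{\ell_n}$. Collecting signs, the total comes out to $(-1)^{r(r-1)/2+\dots}$. The length parity from the oddness of the cycle (so $r=n-k$ is even... the cycle has $r+1$ vertices, hence $r$ is even) should make the sign exactly $-1$.

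If the bookkeeping does not close, I would instead compute both sides by bringing everything to the ordered form $A\hc_{\ell_k}\hc_{\jbm}\hc[\pa_{[k+1;]}]$ with explicit signs, and use $r$ even as the decisive input.
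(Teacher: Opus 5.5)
The parts of your proposal covering inducedness of $\pa'$, the membership $\ell_k\in\pathsvd_{\pa'}$ or $\ell_k\in\pathloop_{\pa'}$ together with the involution, the claw arguments for $\Gamma[\pa]=\Gamma[\pa']$ (run in the claw-free graph $G_\chi$), and the operator identity in case (i) are correct and match the paper. The gap is the operator identity in case (ii). That identity is exactly where the oddness hypothesis is used, and both inputs to your sign count are wrong.

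First, reversing the tail of an induced path does not cost $(-1)^{r(r-1)/2}$. Only consecutive generators anticommute, so each of the $r-1$ edges of the segment contributes one sign: $\hc_{\ell_n}\cdots\hc_{\ell_{k+1}}=(-1)^{r-1}\hc_{\ell_{k+1}}\cdots\hc_{\ell_n}$ with $r=n-k$. This is the rule $\pathprodH{\pa}=(-1)^{|\pa|-1}\pathprodH{\pa^{-1}}$ from the setup. Second, the cycle $\ell_k\md\ell_{k+1}\md\cdots\md\ell_n\md\jbm$ has $r+2$ vertices, not $r+1$. Your own remark that $k<n-1$ gives at least four vertices already uses this count. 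Hence oddness of the cycle means $r$ is odd. With your inputs the sign would depend on $r \bmod 4$, so the bookkeeping cannot close. Moreover, ``$r$ even'' is precisely the even-cycle case in which the pair does \emph{not} cancel; that failure is the origin of the even-bubble-wand obstruction (Lemma~\ref{lem:endpoint-loop-ebw}).

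The fix needs no commutation of $\hc_{\jbm}$ through the tail. Let $A$ be the prefix product through $\ell_{k-1}$, beginning with $\chi$ in the rooted case. Both $\hc_{\jbm}$ and $\hc_{\ell_k}$ anticommute with exactly the last factor of $A$. Then
\begin{equation*}
\hc_{\jbm}\pathprodH{\pa}=-A\,\hc_{\jbm}\hc_{\ell_k}\,\hc_{\ell_{k+1}}\cdots\hc_{\ell_n},
\qquad
\hc_{\ell_k}\pathprodH{\pa'}=-A\,\hc_{\ell_k}\hc_{\jbm}\,\hc_{\ell_n}\cdots\hc_{\ell_{k+1}}
=-(-1)^{r-1}A\,\hc_{\ell_k}\hc_{\jbm}\,\hc_{\ell_{k+1}}\cdots\hc_{\ell_n}.
\end{equation*}
Since $r$ is odd, $(-1)^{r-1}=1$. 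The sum is then $-A\{\hc_{\jbm},\hc_{\ell_k}\}\hc_{\ell_{k+1}}\cdots\hc_{\ell_n}=0$, because $\jbm\sim\ell_k$. For an even cycle the same computation gives $-2A\,\hc_{\jbm}\hc_{\ell_k}\hc_{\ell_{k+1}}\cdots\hc_{\ell_n}\neq0$. This is the paper's argument, and with it your proof is complete.
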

\begin{proof}
	Set $X_0=\chi$ in the rooted case and $X_0=\hc_{\ell_0}$ otherwise.
	In case~\textup{(i)}, all path edges of $\pa'$ are present by construction, a chord of $\pa'$ avoiding $\jbm$ would be a chord of $\pa$, and the hypothesis on $\jbm$ excludes every other chord.
	Hence $\pa'$ is induced.
	If $x$ is adjacent to $\ell_k$ and is not already in the closed neighborhood of $\pa\setminus\{\ell_k\}$, then $x$ is non-adjacent to $\ell_{k-1}$ and $\ell_{k+1}$.
	If $x$ were also non-adjacent to $\jbm$, then $x,\ell_{k-1},\ell_{k+1}$ would be three pairwise non-adjacent neighbors of $\ell_k$, a claw.
	Thus $x\in\Gamma[\jbm]$, and interchanging $\ell_k$ and $\jbm$ gives the reverse inclusion, so $\Gamma[\pa]=\Gamma[\pa']$.
	The same construction applied to $(\pa',\ell_k)$ returns $(\pa,\jbm)$, so the map is its own inverse.
	The paired Clifford contributions therefore cancel:
	\begin{equation}
		\hc_{\jbm}\pathprodH{\pa}+\hc_{\ell_k}\pathprodH{\pa'}
		=
		-X_0\hc_{\ell_1}\cdots\hc_{\ell_{k-1}}
		\bigl(\hc_{\jbm}\hc_{\ell_k}+\hc_{\ell_k}\hc_{\jbm}\bigr)
		\hc_{\ell_{k+1}}\cdots\hc_{\ell_n}
		=0.
	\end{equation}
	Here the product $\hc_{\ell_1}\cdots\hc_{\ell_{k-1}}$ is omitted when $k=1$.
	The minus sign comes from the anticommutation with the factor at $\ell_{k-1}$.
	The last equality uses $\jbm\sim\ell_k$.

	In case~\textup{(ii)}, the cycle is odd by hypothesis, which is equivalent to $n-k$ being odd.
	The vertex sequence defining $\pa'$ is a path: the old path supplies the prefix and reversed-segment edges, and the two bridge edges are $\jbm\sim\ell_{k-1}$ and $\jbm\sim\ell_n$.
	Indeed, any chord of $\pa'$ whose endpoints both lie on the original path would be a chord of $\pa$.
	A chord incident to $\jbm$ is also impossible, since $\ell_{k-1}$ and $\ell_n$ are consecutive to $\jbm$ in $\pa'$ and $\ell_k$ is not a vertex of $\pa'$.
	Thus $\pa'$ is induced.
	The same claw argument, applied at the replacement point, gives $\Gamma[\pa]=\Gamma[\pa']$.
	Applying the loop pairing to $(\pa',\ell_k)$ returns $(\pa,\jbm)$, so the map is again its own inverse.
	First,
	\begin{equation}
		\hc_{\jbm}\pathprodH{\pa}
		=
		-X_0\hc_{\ell_1}\cdots\hc_{\ell_{k-1}}
		\hc_{\jbm}\hc_{\ell_k}
		\hc_{\ell_{k+1}}\cdots\hc_{\ell_n}.
	\end{equation}
	For the partner path,
	\begin{align}
		\hc_{\ell_k}\pathprodH{\pa'}
		 & =
		\hc_{\ell_k}X_0\hc_{\ell_1}\cdots\hc_{\ell_{k-1}}
		\hc_{\jbm}\hc_{\ell_n}\hc_{\ell_{n-1}}\cdots\hc_{\ell_{k+1}}
		\nonumber \\
		 & =
		-X_0\hc_{\ell_1}\cdots\hc_{\ell_{k-1}}
		\hc_{\ell_k}\hc_{\jbm}
		\hc_{\ell_n}\hc_{\ell_{n-1}}\cdots\hc_{\ell_{k+1}}
		\nonumber \\
		 & =
		-X_0\hc_{\ell_1}\cdots\hc_{\ell_{k-1}}
		\hc_{\ell_k}\hc_{\jbm}
		\hc_{\ell_{k+1}}\cdots\hc_{\ell_n}.
	\end{align}
	In the last equality, reversing the segment gives no sign, since only adjacent path vertices anticommute and $(-1)^{n-k-1}=+1$ because $n-k$ is odd.
	Therefore
	\begin{equation}
		\hc_{\jbm}\pathprodH{\pa}+\hc_{\ell_k}\pathprodH{\pa'}
		=
		-X_0\hc_{\ell_1}\cdots\hc_{\ell_{k-1}}
		\bigl(\hc_{\jbm}\hc_{\ell_k}+\hc_{\ell_k}\hc_{\jbm}\bigr)
		\hc_{\ell_{k+1}}\cdots\hc_{\ell_n}
		=0,
	\end{equation}
	again because $\jbm\sim\ell_k$.
\end{proof}

\begin{thm}\label{thm:Krylov-path-expansion}
	Let $G$ be a connected ECF graph, let $K_s$ be a simplicial clique, and let $\chi$ be the associated edge operator.
	The generating function of the Krylov basis~\eqref{eq:Krylov-def} is given by the generalized path-product operator~\eqref{eq:gen-path-product-operator}:
	\begin{align}
		\label{eq:Krylov-path-expansion}
		\Phi_G(u)
		 & =
		\frac{\Xi_G(u)}{P_G(u^2)}
		=
		\sum_{\pa \in \pathsetG}
		(-u)^{|\pa|-1}
		\frac{P_{\complementset{\pa}}(u^2)}{P_G(u^2)}
		\pathprodH{\pa}
		\,.
	\end{align}
\end{thm}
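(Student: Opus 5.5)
The plan is to invoke the uniqueness Lemma~\ref{lem:Krylov-uniqueness}. Since $P_G(0)=1$, the quotient $\Xi_G(u)/P_G(u^2)$ is a well-defined formal power series in $u$. At $u=0$ only the minimal rooted path $(\chi)$ contributes to $\Xi_G$, with weight $P_{G\setminus K_s}(0)=1$, so the initial condition $\Xi_G(0)/P_G(0)=\chi$ holds. Multiplying Eq.~\eqref{eq:Krylov-diff-eq} by the scalar $P_G(u^2)$, it therefore suffices to prove
\begin{equation*}
\frac{u}{2}\,[H_G,\Xi_G(u)] = \Xi_G(u) - P_G(u^2)\,\chi .
\end{equation*}

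I would expand the left side path by path. For $|\pa|\ge2$ I use Lemma~\ref{lem:single-path-commutator}. For the minimal path, a direct computation gives $\tfrac12[H_G,\chi]=\sum_{\jbm\in K_s}\hc_{\jbm}\chi=-\sum_{\jbm\in K_s}\hc[\chi\md\jbm]$, which is the clique-extension term with $\pathclique_{(\chi)}=K_s$ and no shortening term.

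First I dispose of the non-path terms. Each pair $(\pa,\jbm)$ with $\jbm\in\pathsvd_\pa$ or $\jbm\in\pathloop_\pa$ is matched by Lemma~\ref{lem:path-cancellation-pairings} with a partner $(\pa',\ell_k)$. In the loop case the required odd cycle is guaranteed by even-hole-freeness; this is the only place where ECF rather than mere claw-freeness is used. The partners satisfy $|\pa|=|\pa'|$ and $P_{\complementset{\pa}}=P_{\complementset{\pa'}}$, so both carry the same scalar weight $u(-u)^{|\pa|-1}P_{\complementset{\pa}}(u^2)$. The operator identity $\hc_{\jbm}\pathprodH{\pa}+\hc_{\ell_k}\pathprodH{\pa'}=0$ then kills them. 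The pairing is an involution without fixed points, so all non-path terms cancel.

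Next I collect the coefficient of each surviving path product $\pathprodH{\pa}$. For $|\pa|=m+1\ge2$ there are two sources.
\begin{itemize}
\item Extension of $\pa_{[;-1]}$ (note that $\ell_m\in\pathclique_{\pa_{[;-1]}}$ precisely because $\pa$ is induced) contributes $(-u)^m P_{\complementset{\pa_{[;-1]}}}(u^2)$.
\item Shortening of $\pa\md\jbm$ for each $\jbm\in\pathclique_\pa$ contributes $(-u)^m\,u^2 b_{\jbm}^2 P_{\complementset{\pa\md\jbm}}(u^2)$.
\end{itemize}
The required identity is therefore
\begin{equation*}
P_{\complementset{\pa}}(x)=P_{\complementset{\pa_{[;-1]}}}(x)+x\sum_{\jbm\in\pathclique_\pa} b_{\jbm}^2\,P_{\complementset{\pa\md\jbm}}(x),\qquad x=u^2 .
\end{equation*}
To obtain it, let $X=G\setminus\Gamma[\pa_{[;-1]}]$. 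This graph contains $\pathclique_\pa$, which is a clique by the last assertion of Lemma~\ref{lem:odd-neighbor-classification}. Moreover $X\setminus\pathclique_\pa=\complementset{\pa}$ and $X\setminus\Gamma[\jbm]=\complementset{\pa\md\jbm}$. The recursion~\eqref{eq:independence-polynomial-recursion} applied to $X$ with clique $\pathclique_\pa$ then gives exactly this identity.

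For $\pa=(\chi)$ only shortening terms occur, and the target coefficient is $P_{G\setminus K_s}-P_G$. This is again Eq.~\eqref{eq:independence-polynomial-recursion}, now with $K=K_s$ and $G\setminus\Gamma[\jbm]=\complementset{\chi\md\jbm}$.

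I expect the main obstacle to be bookkeeping rather than any new idea. One point is to check that the graphs match exactly, in particular that the closed neighborhoods computed in $G_\chi$ restrict correctly to $G$. The other is to track the signs from $\hc_{\jbm}\pathprodH{\pa}=-\hc[\pa\md\jbm]$ so that the two contributions combine with a relative plus sign.
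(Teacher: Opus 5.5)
Your proposal is correct and follows the paper's proof essentially step for step. You reduce via the uniqueness of Lemma~\ref{lem:Krylov-uniqueness} to $\frac{u}{2}[H_G,\Xi_G(u)]=\Xi_G(u)-P_G(u^2)\chi$, cancel the $\pathsvd_{\pa}$ and $\pathloop_{\pa}$ terms with the size- and neighborhood-preserving pairings of Lemma~\ref{lem:path-cancellation-pairings}, and match the extension and shortening coefficients through the clique recursion~\eqref{eq:independence-polynomial-recursion} on $\complementset{\pa_{[;-1]}}$ with clique $\pathclique_{\pa}$ (and on $G$ with $K_s$ for the minimal path), with your explicit check of $\Xi_G(0)/P_G(0)=\chi$ being a small detail the paper leaves implicit.
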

\begin{proof}
	With $\Xi_G(u)$ as in Definition~\ref{def:gen-path-product-operator}, the theorem is equivalent to $\Xi_G(u) = P_G(u^2)\Phi_G(u)$.
	The Krylov equation~\eqref{eq:Krylov-diff-eq} implies that $P_G(u^2)\Phi_G(u)$ satisfies
	\begin{align}
		\label{eq:Xi-G-eq}
		\frac{u}{2} \qty[H_G, \Xi_G(u)] = \Xi_G(u) - P_G(u^2) \chi\,,
	\end{align}
	and the solution analytic at $u=0$ is unique (by the same argument as in Lemma~\ref{lem:Krylov-uniqueness}).
	It therefore suffices to verify that $\Xi_G(u)$ satisfies Eq.~\eqref{eq:Xi-G-eq}.

	Step~1 shows that all non-path contributions cancel pairwise, reducing the commutator to
	\begin{align}
		\label{eq:commutator-path-expansion}
		\tfrac{u}{2}[H_G, \Xi_G(u)] = \sum_{\pa \in \pathsetG} C_{\pa}\, \pathprodH{\pa}\,.
	\end{align}

	\paragraph{Step 1: Cancellation of non-path terms.}
	For the minimal rooted path $(\chi)$, Eq.~\eqref{eq:edge-operator} gives
	\begin{align}
		\frac{1}{2}[H_G, \chi]
		=
		-\sum_{\jbm \in K_s}\hc[\chi\md\jbm]\,,
	\end{align}
	so it contributes only path-extension terms.
	For $|\pa| \geq 2$, the non-path terms of Eq.~\eqref{eq:commutator-single-path} are the $\pathsvd_{\pa}$ and $\pathloop_{\pa}$ summands of the sum over induced paths in Eq.~\eqref{eq:gen-path-product-operator}.
	By Lemma~\ref{lem:path-cancellation-pairings}, the three-neighbor and loop-type non-path summands are paired so that the two Clifford contributions in each pair sum to zero.
	Both pairings preserve the path size and the closed neighborhood, hence preserve the scalar coefficient $(-u)^{|\pa|-1}P_{\complementset{\pa}}(u^2)$.
	Therefore every non-path Clifford monomial has zero total coefficient in $\tfrac{u}{2}[H_G,\Xi_G(u)]$, and Eq.~\eqref{eq:commutator-path-expansion} follows.

	\paragraph{Step 2: Computation of coefficients.}

	We compute $C_{\pa}$ in Eq.~\eqref{eq:commutator-path-expansion} from the two path-product terms of Lemma~\ref{lem:single-path-commutator}.
	After multiplying Eq.~\eqref{eq:commutator-single-path} by the prefactor $u(-u)^{|\pa'|-1}P_{\complementset{\pa'}}(u^2)$ of a summand path $\pa'$ in $\frac{u}{2}[H_G,\Xi_G(u)]$, the only ways to generate $\pathprodH{\pa}$ for $|\pa|\geq2$ are:
	\begin{enumerate}[nosep]
		\item the shortening term applied to the longer paths $\pa \md \jbm$ with $\jbm \in \pathclique_{\pa}$, contributing
		      \begin{equation*}
			      (-u)^{|\pa|+1} \sum_{\jbm \in \pathclique_{\pa}} b_{\jbm}^2 P_{\complementset{\pa \md \jbm}}(u^2);
		      \end{equation*}
		\item the extension term applied to the shorter path $\pa_{[;-1]}$, contributing
		      \begin{equation*}
			      (-u)^{|\pa|-1} P_{\complementset{\pa_{[;-1]}}}(u^2).
		      \end{equation*}
	\end{enumerate}

	By claw-freeness $\pathclique_{\pa}$ is a clique, and the closed neighborhoods satisfy
	\begin{equation}
		\Gamma[\pa]=\Gamma[\pa_{[;-1]}]\cup \pathclique_{\pa},
		\qquad
		\Gamma[\pa\md\jbm]=\Gamma[\pa_{[;-1]}]\cup \Gamma[\jbm]
		\quad (\jbm\in\pathclique_{\pa}).
	\end{equation}
	These identities imply $\complementset{\pa}=\complementset{\pa_{[;-1]}}\setminus\pathclique_{\pa}$ and $\complementset{\pa\md\jbm}=\complementset{\pa_{[;-1]}}\setminus\Gamma[\jbm]$.

	The total coefficient is therefore
	\begin{align*}
		C_{\pa}
		 & =
		(-u)^{|\pa|-1} \qty[u^2 \sum_{\jbm \in \pathclique_{\pa}} b_{\jbm}^2 P_{\complementset{\pa \md \jbm}}(u^2) + P_{\complementset{\pa_{[;-1]}}}(u^2)]
		\\
		 & = (-u)^{|\pa|-1} P_{\complementset{\pa}}(u^2)
		\,,
	\end{align*}
	where in the last equality, we used the recursion in Eq.~\eqref{eq:independence-polynomial-recursion}: $P_{\complementset{\pa_{[;-1]}}}(x) = P_{\complementset{\pa}}(x) - x \sum_{\jbm \in \pathclique_{\pa}} b_{\jbm}^2 P_{\complementset{\pa \md \jbm}}(x)$.

	For $|\pa| = 1$ (i.e., $\pa = (\chi)$), only source~1 contributes, since $\pa_{[;-1]}$ does not exist.
	We have $\pathclique_{(\chi)} = K_s$ and $\complementset{(\chi)} = G \setminus K_s$.
	The independence-polynomial recursion gives
	\begin{align*}
		C_{(\chi)}
		 & =
		u^2 \sum_{\jbm \in K_s} b_{\jbm}^2 P_{G \setminus \Gamma[\jbm]}(u^2)
		\\
		 & = -P_G(u^2) + P_{G \setminus K_s}(u^2)\,.
	\end{align*}
	The term $-P_G(u^2) \chi$ matches the right-hand side of Eq.~\eqref{eq:Xi-G-eq}, while $P_{G \setminus K_s}(u^2) \chi = P_{\complementset{(\chi)}}(u^2) \chi$ is the $|\pa| = 1$ contribution to $\Xi_G(u)$.

	Combining both cases,
	\begin{align*}
		\sum_{\pa} C_{\pa} \pathprodH{\pa}
		 & =
		\sum_{\pa} (-u)^{|\pa|-1} P_{\complementset{\pa}}(u^2)\, \pathprodH{\pa} - P_G(u^2) \chi
		\\
		 & =
		\Xi_G(u) - P_G(u^2) \chi\,.
	\end{align*}
	The uniqueness of the solution analytic at $u = 0$ follows from Lemma~\ref{lem:Krylov-uniqueness}.
\end{proof}

From Eq.~\eqref{eq:Xi-Psi-relation} and Theorem~\ref{thm:Krylov-path-expansion}, the fermionic operators defined by Eq.~\eqref{eq:free-fermion-mode-path-expansion} are obtained from the generating function at its poles.
\begin{cor}\label{cor:Psi-Phi-relation}
	The free-fermion mode $\Psi_k$ is given by
	\begin{align}
		\Psi_k = \frac{1}{\NN_k} \lim_{u \rightarrow u_k} P_G(u^2) \Phi_G(u)
		\,.
	\end{align}
\end{cor}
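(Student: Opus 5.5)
The corollary follows from combining Theorem~\ref{thm:Krylov-path-expansion} with Eq.~\eqref{eq:Xi-Psi-relation}; the main point is to make the limit meaningful. First, by Theorem~\ref{thm:Krylov-path-expansion}, $P_G(u^2)\Phi_G(u)=\Xi_G(u)$ holds as an identity of formal power series in $u$. To pass from formal series to values at $u=u_k$, note that $\Xi_G(u)$ is a finite sum. The graph $\extendedgraph{G}{\chi}$ is finite, so $\pathsetG$ is finite, and each coefficient $(-u)^{|\pa|-1}P_{\complementset{\pa}}(u^2)$ is a polynomial in $u$. Hence $\Xi_G(u)$ is an operator-valued polynomial, entire in $u$, with coefficients in the finite-dimensional graph Clifford algebra.

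It then follows that $\Phi_G(u)$, which a priori is only a formal power series, coincides near $u=0$ with the rational function $\Xi_G(u)/P_G(u^2)$. Since $P_G(0)=1\neq0$, this function is analytic in a neighbourhood of the origin. Its Taylor series there must equal $\sum_j u^j\phi_j$, by the uniqueness in Lemma~\ref{lem:Krylov-uniqueness} (or simply by comparing coefficients in $P_G(u^2)\Phi_G(u)=\Xi_G(u)$). We therefore interpret $\Phi_G(u)$ as this meromorphic continuation, whose possible poles lie only at the zeros $u=\pm u_k$ of $P_G(u^2)$.

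With this interpretation, $P_G(u^2)\Phi_G(u)=\Xi_G(u)$ holds identically on the domain of the continuation, that is, away from the zeros of $P_G(u^2)$. Because $\Xi_G$ is continuous (indeed polynomial), the limit $u\to u_k$ exists and equals $\Xi_G(u_k)$. Dividing by the nonzero constant $\NN_k$ and applying Eq.~\eqref{eq:Xi-Psi-relation} gives the claim. The same argument covers $k<0$, where $u_{-k}=-u_k$.

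The only genuinely delicate point is the analytic-continuation step, namely justifying that a formal series can be evaluated near $u_k$, which may lie outside the radius of convergence. Finiteness of the path set settles this. Under the genericity assumption, the limit can equivalently be written as a residue: since $P_G(u^2)$ has a simple zero at $u_k$, $\Psi_k$ is proportional to $\operatorname{Res}_{u=u_k}\Phi_G(u)$, with proportionality factor $2u_kP_G'(u_k^2)/\NN_k$. This residue form is the one used later.
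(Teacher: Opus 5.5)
Your proposal is correct and follows the paper's route: the paper obtains the corollary immediately from Eq.~\eqref{eq:Xi-Psi-relation} and Theorem~\ref{thm:Krylov-path-expansion}, since $P_G(u^2)\Phi_G(u)=\Xi_G(u)$ is a polynomial and so the limit is just $\Xi_G(u_k)$. Your added justification, that the formal series $\Phi_G(u)$ must be read as the rational function $\Xi_G(u)/P_G(u^2)$ for the limit to make sense, is a step the paper leaves implicit, and your residue factor $2u_kP'_G(u_k^2)/\NN_k$ is also correct.
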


\begin{lem}\label{lem:krylov-anticommutator}
	The Krylov basis elements satisfy
	\begin{align}
		\qty{\phi_i, \phi_{j}}
		=
		2 (-1)^{j}\,[u^{i+j}]\,\frac{P_{G \setminus K_s}(u^2)}{P_G(u^2)}\,.
	\end{align}
\end{lem}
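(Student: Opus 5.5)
The plan is to first reduce to the case $i=0$ and then evaluate the remaining anticommutators using the path-product expansion of Theorem~\ref{thm:Krylov-path-expansion}.

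\textbf{Step 1: reduction to $\{\chi,\phi_n\}$.} Since $H_G$ is a sum of Hermitian-coefficient generators, the map $X\mapsto \tfrac12[H_G,X]$ satisfies the derivation rule
\begin{equation*}
	\tfrac12[H_G,\{A,B\}]=\{\tfrac12[H_G,A],B\}+\{A,\tfrac12[H_G,B]\}.
\end{equation*}
The anticommutator $\{\phi_i,\phi_j\}$ will be shown to be a scalar multiple of the identity; this can be checked a posteriori from the $i=0$ computation. Granting this, the left-hand side above vanishes, and so
\begin{equation*}
	\{\phi_{i+1},\phi_j\}=-\{\phi_i,\phi_{j+1}\}.
\end{equation*}
Iterating gives $\{\phi_i,\phi_j\}=(-1)^j\{\phi_{i+j},\phi_0\}$. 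To make the argument non-circular, I would instead proceed by induction on $i$. The base case $\{\chi,\phi_n\}$ is scalar, as computed in Step 2. The induction step uses the derivation identity together with the fact that scalars commute with $H_G$.

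\textbf{Step 2: computing $\{\chi,\phi_n\}$.} It therefore suffices to show that $\{\chi,\Phi_G(u)\}=2P_{G\setminus K_s}(u^2)/P_G(u^2)$ as a formal power series. By Theorem~\ref{thm:Krylov-path-expansion}, $\{\chi,\Phi_G(u)\}$ is the sum over rooted paths $\pa$ of $(-u)^{|\pa|-1}P_{\complementset{\pa}}/P_G$ times $\{\chi,\pathprodH{\pa}\}$.

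\emph{Rooted paths with $|\pa|\ge2$.} Write $\pathprodH{\pa}=\chi X$ with $X=\hc_{\ell_1}\cdots\hc_{\ell_n}$. Among the factors of $X$, only $\hc_{\ell_1}$ is adjacent to $\chi$, since the path is induced and $\chi$ is adjacent only to $K_s$. Hence $\chi X=-X\chi$, and $\{\chi,\chi X\}=X+\chi X\chi=X-X=0$.

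\emph{The minimal path $(\chi)$.} Here $\{\chi,\chi\}=2$, and the coefficient is $P_{G\setminus K_s}(u^2)/P_G(u^2)$.

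This yields $\{\chi,\phi_n\}=2[u^n]\,P_{G\setminus K_s}/P_G$, which is a scalar.

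\textbf{Step 3: extension to all $(i,j)$.} The induction from Step 1 gives $\{\phi_i,\phi_j\}=(-1)^j\{\phi_{i+j},\chi\}$, which is the stated formula. The ratio $P_{G\setminus K_s}/P_G$ is even in $u$, so its odd coefficients vanish. This is consistent with the symmetry in $i$ and $j$, because $(-1)^j=(-1)^i$ whenever $i+j$ is even.

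The main obstacle I expect is keeping the reduction in Step 1 non-circular, that is, establishing scalarity of every $\{\phi_i,\phi_j\}$ before the derivation identity is applied. Doing the induction on $i$, with the scalar base case supplied by Step 2, handles this. A secondary point is checking that the sign convention $(-1)^j$ matches: shifting one index from $i$ into $j$ flips the sign once per step, which is exactly what the derivation identity with vanishing left-hand side produces.
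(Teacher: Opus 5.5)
Your proposal is correct and follows essentially the paper's route: the $i=0$ case comes from the path-product expansion of $\Phi_G(u)$, where only the minimal rooted path $(\chi)$ has a nonzero anticommutator with $\chi$, and the general case follows from $[H_G,\{A,B\}]=\{[H_G,A],B\}+\{A,[H_G,B]\}$ together with scalarity of the previously established anticommutators. The differences are cosmetic: you induct on $i$ for all $j$ at once rather than on $i+j$ as the paper does (both are non-circular), and your appeal to Hermiticity is unnecessary because that identity holds in any associative algebra.
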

\begin{proof}
	First consider the case in which one index is zero.
	Theorem~\ref{thm:Krylov-path-expansion} gives
	\begin{align}
		\qty{\Phi_G(u),\chi}
		=
		2\frac{P_{G\setminus K_s}(u^2)}{P_G(u^2)},
	\end{align}
	because only the minimal rooted path $(\chi)$ has a nonzero anticommutator with $\chi$.
	Comparing coefficients proves the claimed formula whenever $i=0$ or $j=0$.
	We now prove the remaining cases by induction on $i+j$.
	Assume the formula holds for all $(i,j)$ with $i+j\le n$, and let $i,j\ge1$ with $i+j=n+1$.
	From the definition of the Krylov basis~\eqref{eq:Krylov-def}, we have
	\begin{align}
		\qty{\phi_i, \phi_j}
		 & =
		\frac{1}{2}\qty{\qty[H_G, \phi_{i-1}], \phi_j}
		\nonumber \\
		 & =
		\frac{1}{2} \qty[H_G, \qty{\phi_{i-1}, \phi_j}]
		-
		\frac{1}{2} \qty{\phi_{i-1}, \qty[H_G, \phi_{j}]}
		\nonumber \\
		 & =
		-
		\qty{\phi_{i-1}, \phi_{j+1}}
		\nonumber \\
		 & =
		(-1)^{i}
		\qty{\phi_{0}, \phi_{i+j}}
		\nonumber \\
		 & =
		(-1)^{i}
		\qty{\phi_{n+1}, \chi}
		\nonumber \\
		 & =
		2 (-1)^{j}\,[u^{n+1}]\,\frac{ P_{G \setminus K_s}(u^2)}{P_G(u^2)}.
	\end{align}
	In the second line we used $\qty[H_G, \qty{\phi_{i-1}, \phi_j}] = 0$ from the induction hypothesis (since $\{\phi_{i-1},\phi_j\}$ is a scalar for $i-1+j = n$).
	Repeating the same argument $i-1$ more times gives $(-1)^i\{\phi_0,\phi_{i+j}\}$.
	Since $P_{G \setminus K_s}(u^2)/P_G(u^2)$ is an even power series in $u$, the coefficient vanishes for odd $i+j$; for even $i+j$ we have $i \equiv j \pmod{2}$, so $(-1)^i = (-1)^j$ in the last line.
\end{proof}

\begin{thm}\label{thm:Phi-anti-commu}
	The anticommutator of the generating function at different parameters is given by
	\begin{align}
		 &
		\qty{
			\Phi_G(u), \Phi_G(v)
		}
		=
		\frac{2}{u+v}
		\qty[
			u\frac{P_{G \setminus K_s}(u^2)}{P_G(u^2)}
			+
			v
			\frac{P_{G \setminus K_s}(v^2)}{P_G(v^2)}
		].
		\label{eq:Phi-anti-commu}
	\end{align}
\end{thm}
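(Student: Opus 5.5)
The plan is to expand both sides as double power series in $u$ and $v$ and compare coefficients, using Lemma~\ref{lem:krylov-anticommutator} for the left-hand side. Write $F(u)\equiv P_{G\setminus K_s}(u^2)/P_G(u^2)=\sum_{m\ge0} f_m u^m$, an even formal power series, so $f_m=0$ for odd $m$. By the definition~\eqref{eq:Phi-def} and Lemma~\ref{lem:krylov-anticommutator},
\begin{align}
	\qty{\Phi_G(u),\Phi_G(v)}
	=\sum_{i,j\ge0}u^iv^j\qty{\phi_i,\phi_j}
	=2\sum_{i,j\ge0}(-1)^j f_{i+j}\,u^iv^j
	=2\sum_{m\ge0} f_m \sum_{j=0}^{m} u^{m-j}(-v)^{j}.
\end{align}
This equality holds as formal power series in $(u,v)$, and hence as convergent series for small $|u|,|v|$.

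Next, I resum the inner finite geometric sum. For $u+v\neq0$ we have $\sum_{j=0}^{m}u^{m-j}(-v)^j=\bigl(u^{m+1}-(-v)^{m+1}\bigr)/(u+v)$. Only even $m$ contribute, and for those $(-v)^{m+1}=-v^{m+1}$, so the inner sum equals $(u^{m+1}+v^{m+1})/(u+v)$. Summing over $m$ gives
\begin{equation}
	\frac{2}{u+v}\qty[u\sum_m f_m u^m+v\sum_m f_m v^m]=\frac{2}{u+v}\qty[uF(u)+vF(v)],
\end{equation}
which is Eq.~\eqref{eq:Phi-anti-commu}.

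The main point of care is that the right-hand side must be read correctly at $u=-v$. Because $F$ is even, $uF(u)+vF(v)$ is an odd function that vanishes when $v=-u$. It is therefore divisible by $u+v$ in the ring of power series, and the quotient is exactly the polynomial identity above, applied degree by degree.

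For the identity away from the small-$u,v$ region, both sides are rational functions in $u$ and $v$ once $\Phi_G=\Xi_G/P_G$ is substituted via Theorem~\ref{thm:Krylov-path-expansion}. They agree as power series near the origin, so they agree identically by analytic continuation. No real obstacle is expected beyond this bookkeeping of parity, since Lemma~\ref{lem:krylov-anticommutator} already contains all the algebraic content.
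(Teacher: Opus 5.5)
Your argument is correct, but it finishes differently from the paper. You use the full content of Lemma~\ref{lem:krylov-anticommutator}, i.e.\ the explicit values $\{\phi_i,\phi_j\}=2(-1)^j f_{i+j}$, and resum the double series by a finite geometric sum.

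The paper uses that lemma only to conclude that each $\{\phi_i,\phi_j\}$ is a scalar, so that $[H_G,\{\Phi_G(u),\Phi_G(v)\}]=0$. It then expands this commutator with the derivation rule and inserts the Krylov equation $[H_G,\Phi_G(u)]=\frac{2}{u}(\Phi_G(u)-\chi)$. Together with the seed value $\{\Phi_G(u),\chi\}=2P_{G\setminus K_s}(u^2)/P_G(u^2)$, this gives a linear equation that it solves for the anticommutator.

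Your route shows that the theorem is simply the two-variable generating-function form of the lemma, so the two statements are equivalent. Your parity remark adds something the paper's route does not give directly. Since $uF(u)$ is odd, $uF(u)+vF(v)$ is divisible by $u+v$, so the right-hand side is a genuine power series with no singularity on $v=-u$. The paper instead divides by $\frac{2}{u}+\frac{2}{v}$ and recovers the $v\to-u$ case later by a limit.

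The paper's route, for its part, needs only that the Krylov anticommutators are central, plus the single input $\{\Phi_G(u),\chi\}$; it does not need their detailed sign pattern. The same functional-equation argument therefore applies whenever scalar-ness can be established by other means.
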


\begin{proof}
	We first prove
	\begin{align}
		\label{eq:Krylov-basis-anticommutator-commutator}
		\qty[H_G, \qty{\Phi_G(u), \Phi_G(v)}] = 0
	\end{align}
	for any $u, v$.
	Expanding both generating functions by Eq.~\eqref{eq:Phi-def} gives $\qty{\Phi_G(u), \Phi_G(v)}=\sum_{i,j\ge 0} u^i v^j \qty{\phi_i,\phi_j}$.
	By Lemma~\ref{lem:krylov-anticommutator}, each $\qty{\phi_i,\phi_j}$ is a scalar and hence commutes with $H_G$, which gives Eq.~\eqref{eq:Krylov-basis-anticommutator-commutator}.
	Using this, we have
	\begin{align}
		0 = & \qty[H_G, \qty{\Phi_G(u), \Phi_G(v)}]
		\nonumber                                   \\
		=   &
		\qty{\qty[H_G, \Phi_G(u)], \Phi_G(v)}
		+
		\qty{\Phi_G(u), \qty[H_G, \Phi_G(v)]}
		\nonumber                                   \\
		=   &
		\frac{2}{u}\qty{\Phi_G(u) - \chi, \Phi_G(v)}
		+
		\frac{2}{v}\qty{\Phi_G(u), \Phi_G(v) - \chi}
		\nonumber                                   \\
		=   &
		\qty[\frac{2}{u} + \frac{2}{v}] \qty{\Phi_G(u), \Phi_G(v)}
		-
		\frac{4}{u} \frac{P_{G \setminus K_s}(v^2)}{P_G(v^2)}
		-
		\frac{4}{v} \frac{P_{G \setminus K_s}(u^2)}{P_G(u^2)},
	\end{align}
	where we used $\qty{\Phi_G(u),\chi}=2P_{G \setminus K_s}(u^2)/P_G(u^2)$, which follows from Eq.~\eqref{eq:Krylov-path-expansion}.
	Then we have
	\begin{align}
		\label{eq:Krylov-basis-anticommutator}
		\qty{\Phi_G(u), \Phi_G(v)}
		 & =
		\frac{1}{\frac{2}{u} + \frac{2}{v}}
		\qty[
			\frac{4}{u} \frac{P_{G \setminus K_s}(v^2)}{P_G(v^2)}
			+
			\frac{4}{v} \frac{P_{G \setminus K_s}(u^2)}{P_G(u^2)}
		]
		\nonumber \\
		 & =
		\frac{2}{u + v}
		\qty[
			v \frac{P_{G \setminus K_s}(v^2)}{P_G(v^2)}
			+
			u \frac{P_{G \setminus K_s}(u^2)}{P_G(u^2)}
		].
	\end{align}
\end{proof}

Substituting the path-product expansion of $\Phi_G(u)$ into Theorem~\ref{thm:Phi-anti-commu} makes all non-scalar Clifford components of the anticommutator cancel, leaving the following scalar identity for independence polynomials.
\begin{cor}\label{cor:path-product-identity}
	Let $G$ be a connected ECF graph, let $K_s$ be a simplicial clique, and let $\chi$ be the associated edge operator.
	Then the following identity holds:
	\begin{align}
		\sum_{\pa \in \pathset{\extendedgraph{G}{\chi}}{\chi}}
		(-uv)^{|\pa|-1}
		P_{\complementset{\pa}}(u^2)
		P_{\complementset{\pa}}(v^2)
		\qty[\prod_{\jbm \in \pa} b_{\jbm}^2]
		=
		\frac{P_G(u^2) P_G(v^2)}{u + v}
		\qty[
			v \frac{P_{G \setminus K_s}(v^2)}{P_G(v^2)}
			+
			u \frac{P_{G \setminus K_s}(u^2)}{P_G(u^2)}
		],
	\end{align}
	where $b_{\chi}^2 = 1$.
	Setting $u = v$ yields
	\begin{align}\label{eq:path-square-identity}
		\sum_{\pa \in \pathset{\extendedgraph{G}{\chi}}{{\chi}}}
		(-u^2)^{|\pa|-1}
		\qty[P_{\complementset{\pa}}(u^2)]^2
		\qty[\prod_{\jbm \in \pa} b_{\jbm}^2]
		=
		P_G(u^2) P_{G \setminus K_s}(u^2),
	\end{align}
	while taking the limit $v \rightarrow -u$ gives
	\begin{align}\label{eq:path-square-identity-vminus}
		\sum_{\pa \in \pathset{\extendedgraph{G}{\chi}}{{\chi}}}
		u^{2(|\pa|-1)}
		\qty[P_{\complementset{\pa}}(u^2)]^2
		\qty[\prod_{\jbm \in \pa} b_{\jbm}^2]
		=
		\qty[P_G(u^2)]^2
		\frac{\partial}{\partial u}
		\qty[
			u \frac{P_{G \setminus K_s}(u^2)}{P_G(u^2)}
		].
	\end{align}

\end{cor}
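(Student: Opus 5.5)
We plan to obtain the identity by extracting the scalar (identity) component of both sides of Theorem~\ref{thm:Phi-anti-commu}. Substituting the expansion of Theorem~\ref{thm:Krylov-path-expansion} into the left-hand side and multiplying through by $P_G(u^2)P_G(v^2)$ gives
\begin{align*}
P_G(u^2)P_G(v^2)\qty{\Phi_G(u),\Phi_G(v)}
=\sum_{\pa,\pa'}(-u)^{|\pa|-1}(-v)^{|\pa'|-1}P_{\complementset{\pa}}(u^2)P_{\complementset{\pa'}}(v^2)\qty{\pathprodH{\pa},\pathprodH{\pa'}},
\end{align*}
with both sums over rooted induced paths in $\pathset{\extendedgraph{G}{\chi}}{\chi}$. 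The right-hand side of Theorem~\ref{thm:Phi-anti-commu} is a scalar. We therefore only need the coefficient of $1=h_\emptyset$ in each anticommutator $\qty{\pathprodH{\pa},\pathprodH{\pa'}}$, taken in the ordered-monomial basis of the extended graph Clifford algebra. The non-scalar components must then cancel automatically, which the statement anticipates and which we do not need to verify separately.

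The key step is this scalar-part computation. Up to sign and coupling factors, $\pathprodH{\pa}$ is the monomial $h_{V(\pa)}$. The product of two such monomials is proportional to $h_{V(\pa)\triangle V(\pa')}$, so it has a scalar component only when $V(\pa)=V(\pa')$. We then argue that a rooted induced path is determined by its vertex set. An induced path on a given vertex set is that set's induced subgraph, which is a path graph, so the ordering is fixed up to reversal. Rooting at $\chi$ removes the reversal ambiguity. Hence $\pa=\pa'$.

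For $\pa=\pa'$ we compute $\pathprodH{\pa}^2$. Reversing the word $\chi\hc_{\ell_1}\cdots\hc_{\ell_n}$ requires swapping exactly the $n$ adjacent (anticommuting) consecutive pairs, so
\[
\pathprodH{\pa}\,\pathprodH{\pa^{-1}}=\prod_{\jbm\in\pa}b_{\jbm}^2
\quad\Longrightarrow\quad
\pathprodH{\pa}^2=(-1)^{|\pa|-1}\prod_{\jbm\in\pa}b_{\jbm}^2,
\]
using $b_\chi=1$. The anticommutator therefore contributes $2(-1)^{|\pa|-1}\prod b_{\jbm}^2$. Combined with $(-u)^{|\pa|-1}(-v)^{|\pa|-1}=(uv)^{|\pa|-1}$, this gives the factor $(-uv)^{|\pa|-1}$. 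After dividing by $2$, this is exactly the stated left-hand side, and the right-hand side follows from Theorem~\ref{thm:Phi-anti-commu} times $P_G(u^2)P_G(v^2)/2$. The main thing to state carefully is that identifying the scalar part is legitimate: the ordered monomials form a basis, and the identity holds as formal power series or rational functions in $u,v$. Both sides are in fact polynomials once the right-hand side is simplified.

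The specializations are then routine. For $u=v$ the right-hand side becomes
\[
\frac{P_G(u^2)^2}{2u}\cdot\frac{2uP_{G\setminus K_s}(u^2)}{P_G(u^2)}=P_G(u^2)P_{G\setminus K_s}(u^2).
\]
For $v\to-u$ we write $f(u)=uP_{G\setminus K_s}(u^2)/P_G(u^2)$, which is odd. The bracket is then $f(u)+f(v)=f(u)-f(-v)$, so $(f(u)+f(v))/(u+v)\to f'(u)$ by the definition of the derivative. Since $P_G(v^2)\to P_G(u^2)$, the right-hand side tends to $P_G(u^2)^2 f'(u)$. On the left-hand side, $(-uv)^{|\pa|-1}\to u^{2(|\pa|-1)}$, which gives Eq.~\eqref{eq:path-square-identity-vminus}.
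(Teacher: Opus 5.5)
Your proposal is correct and follows the paper's own route: substitute the path-product expansion of Theorem~\ref{thm:Krylov-path-expansion} into Theorem~\ref{thm:Phi-anti-commu}, compare scalar components, and then specialize to $u=v$ and $v\to-u$. You also make explicit two facts the paper leaves implicit: a rooted induced path is determined by its vertex set, so only the diagonal pairs $\pa=\pa'$ contribute a scalar, and $\pathprodH{\pa}^2=(-1)^{|\pa|-1}\prod_{\jbm\in\pa}b_{\jbm}^2$.
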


\begin{lem}\label{lem:normalization-positivity}
	Assume that $u_k>0$ is a simple root of $P_G(u^2)=0$ and that $P_{G\setminus K_s}(u_k^2)\ne0$.
	Then
	\begin{align}
		-u_k^2 P_{G\setminus K_s}(u_k^2)P'_G(u_k^2)>0.
	\end{align}
	Consequently $\NN_k$ in Eq.~\eqref{eq:normalization-factor-Psi} and $-P_{G\setminus K_s}(u_k^2)/P'_G(u_k^2)$ in Theorem~\ref{thm:Phi-Psi-decomposition} are real and positive.
\end{lem}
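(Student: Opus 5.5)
The plan is to read off the sign from identity~\eqref{eq:path-square-identity-vminus} of Corollary~\ref{cor:path-product-identity} at $u=u_k$. Its left-hand side is a sum of squares with nonnegative weights: $u^{2(|\pa|-1)}\ge0$, $\prod_{\jbm\in\pa}b_{\jbm}^2>0$ because the couplings are real and nonzero (with $b_\chi^2=1$), and $P_{\complementset{\pa}}(u^2)$ is real for real $u$. The minimal rooted path $(\chi)$ contributes $[P_{G\setminus K_s}(u_k^2)]^2$, which is strictly positive by hypothesis. The left-hand side at $u=u_k$ is therefore strictly positive.

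On the right-hand side, $[P_G(u^2)]^2\,\partial_u\bigl[uP_{G\setminus K_s}(u^2)/P_G(u^2)\bigr]$ has a removable singularity at $u=u_k$, because $u_k$ is a simple zero of $P_G(u^2)$. I will differentiate the quotient as $\partial_u(N/D)=(N'D-ND')/D^2$ with $N=uP_{G\setminus K_s}(u^2)$ and $D=P_G(u^2)$. Multiplying by $D^2$ gives the polynomial $N'D-ND'$, and at $u=u_k$ the first term drops because $D(u_k)=0$. What remains is $-u_kP_{G\setminus K_s}(u_k^2)\cdot 2u_kP'_G(u_k^2)$, that is, $-2u_k^2P_{G\setminus K_s}(u_k^2)P'_G(u_k^2)$. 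Since the identity holds as polynomials in $u$ after clearing denominators, evaluating at $u_k$ is legitimate. Equating the two sides gives $-u_k^2P_{G\setminus K_s}(u_k^2)P'_G(u_k^2)>0$.

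The consequences follow directly. The radicand in $\NN_k$ is positive, so $\NN_k$ is real and positive. Dividing the inequality by $u_k^2[P'_G(u_k^2)]^2>0$, where $P'_G(u_k^2)\ne0$ by simplicity, shows that $-P_{G\setminus K_s}(u_k^2)/P'_G(u_k^2)>0$.

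The only delicate step is justifying evaluation at the pole. I will handle it by noting that the identity in Corollary~\ref{cor:path-product-identity} was derived as an identity of rational functions, so after multiplying through by $P_G(u^2)^2$ both sides are polynomials. It also has to be confirmed that the $v\to-u$ limit has already been taken correctly in that corollary, which I take as given.
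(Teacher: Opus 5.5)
Your proposal is correct and follows the paper's proof: evaluate Eq.~\eqref{eq:path-square-identity-vminus} at $u=u_k$, observe that the left-hand side is a sum of nonnegative terms in which the minimal rooted path $(\chi)$ contributes the strictly positive term $P_{G\setminus K_s}(u_k^2)^2$, and identify the right-hand side as $-2u_k^2P_{G\setminus K_s}(u_k^2)P'_G(u_k^2)$. Your quotient-rule argument, which treats the cleared identity as a polynomial identity, is just another way of writing the paper's continuity step via $P_G(u^2)=2u_kP'_G(u_k^2)(u-u_k)+O((u-u_k)^2)$.
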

\begin{proof}
	We evaluate Eq.~\eqref{eq:path-square-identity-vminus} at $u=u_k$ by continuity.
	Since the root is simple,
	\begin{align*}
		P_G(u^2)=2u_kP'_G(u_k^2)(u-u_k)+O((u-u_k)^2).
	\end{align*}
	Therefore the right-hand side tends to
	\begin{align*}
		-2u_k^2P_{G\setminus K_s}(u_k^2)P'_G(u_k^2).
	\end{align*}
	The left-hand side tends to
	\begin{align*}
		\sum_{\pa\in\pathset{\extendedgraph{G}{\chi}}{\chi}}
		u_k^{2(|\pa|-1)}
		\qty[P_{\complementset{\pa}}(u_k^2)]^2
		\prod_{\jbm\in\pa}b_{\jbm}^2.
	\end{align*}
	This is a sum of nonnegative real terms and is not zero, because the minimal rooted path $(\chi)$ contributes $P_{G\setminus K_s}(u_k^2)^2>0$.
	Thus the right-hand limit is strictly positive.
\end{proof}

\subsection{Mode decomposition of the Krylov generating function}

Corollary~\ref{cor:Psi-Phi-relation} also leads to the decomposition of the generating function:
\begin{thm}\label{thm:Phi-Psi-decomposition}
	Let $G$ be a connected ECF graph, let $K_s$ be a simplicial clique, and fix the associated edge operator $\chi$.
	Then
	\begin{align}
		\label{eq:Phi-Psi-decomposition}
		\Phi_G(u) & =
		\sum_{\substack{k = - \indepnum{G} \\ k \neq 0}}^{\indepnum{G}}
		\frac{C_k}{u_k - u} \Psi_k
		+
		C_0 \Psi_{0}\,,
	\end{align}
	where the roots satisfy $u_{-k} = -u_k$, and the coefficients are given by
	\begin{align}
		C_k^2
		=
		-\frac{P_{G \setminus K_s}(u_k^2)}{P_G'(u_k^2)}\,,
		\qquad
		C_{-k} = -C_k
		\quad \text{for } k=1,\ldots,\indepnum{G}\,,
	\end{align}
	where the sign of $C_k$ is determined by $C_k\, P_G'(u_k^2) < 0$.
	The zero-mode coefficient $C_0\ge0$ is defined by
	\begin{align}
		\label{eq:zero-mode-weight}
		C_0^2
		\equiv
		\lim_{x\rightarrow \infty}
		\frac{P_{G \setminus K_s}(x)}{P_G(x)}
		=
		\frac{\indepcoeff{G\setminus K_s}{\indepnum{G}}}{\indepcoeff{G}{\indepnum{G}}}\,.
	\end{align}
	Here the numerator is zero when $\indepnum{G \setminus K_s}<\indepnum{G}$.
	When $C_0\neq0$, equivalently when $\indepnum{G \setminus K_s} = \indepnum{G}$, the last term is present and the Majorana zero mode is defined by
	\begin{align}
		\label{eq:zero-mode-limit}
		\lim_{u \rightarrow \infty} \Phi_G(u)
		=
		C_0 \Psi_0\,.
	\end{align}
	In this case, $\Psi_0^2 = \mathbb{1}$ and $\qty{\Psi_0, \Psi_k} = 0$ for $k \neq 0$.
	If $C_0=0$, the last term is absent.
\end{thm}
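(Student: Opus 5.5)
The plan is a partial-fraction decomposition of the rational, operator-valued function $\Phi_G(u)=\Xi_G(u)/P_G(u^2)$ from Theorem~\ref{thm:Krylov-path-expansion}. Each coefficient of $\Xi_G(u)$ in the finite basis of path products is a polynomial in $u$. Hence $\Phi_G(u)$ is a rational function of $u$ with operator coefficients, whose only possible poles are the roots $u_{\pm k}$ of $P_G(u^2)$; these are simple under the genericity assumption. I will first show that $\Phi_G(u)$ stays bounded as $u\to\infty$. The numerator has degree $(|\pa|-1)+2\alpha_{\complementset{\pa}}$ in $u$, while the denominator has degree $2\alpha_G$. Since $\complementset{\pa}\cup\{$every other vertex of the Hamiltonian part of $\pa\}$ is independent, we have $\alpha_{\complementset{\pa}}+\lceil(|\pa|-1)/2\rceil\le\alpha_G$. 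For odd $|\pa|-1$ the inequality sharpens by parity, so each term is $O(1)$. A cleaner route to the same conclusion is Theorem~\ref{thm:Phi-anti-commu}: the anticommutator $\{\Phi_G(u),\Phi_G(u)\}=2P_{G\setminus K_s}(u^2)/P_G(u^2)$ is bounded at infinity because $\alpha_{G\setminus K_s}\le\alpha_G$. Since $\{\Phi,\Phi\}=2\Phi^2$, this controls the Hilbert--Schmidt norm of the (anti-Hermitian up to phase) operator coefficients. Either way, Liouville-type reasoning on the rational function gives
\begin{equation*}
\Phi_G(u)=\sum_{k\neq0}\frac{R_k}{u_k-u}+\lim_{u\to\infty}\Phi_G(u),
\end{equation*}
where $R_k=-\lim_{u\to u_k}(u-u_k)\Phi_G(u)$.

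Next I identify the residues. Write $P_G(u^2)=2u_kP'_G(u_k^2)(u-u_k)+\dots$. Then Corollary~\ref{cor:Psi-Phi-relation} gives
\begin{equation*}
R_k=-\frac{\NN_k\Psi_k}{2u_kP'_G(u_k^2)}.
\end{equation*}
Inserting $\NN_k=2\sqrt{-u_k^2P_{G\setminus K_s}(u_k^2)P'_G(u_k^2)}$ shows $R_k=C_k\Psi_k$ with $C_k^2=-P_{G\setminus K_s}(u_k^2)/P'_G(u_k^2)$, which is positive by Lemma~\ref{lem:normalization-positivity}. The sign of $C_k$ is opposite to that of $P'_G(u_k^2)$, as claimed. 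For $C_{-k}=-C_k$: the normalization satisfies $\NN_{-k}=\NN_k$ and $P'_G$ is evaluated at the same $u_k^2$, while the factor $1/u_k$ flips sign. This matches the stated convention.

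Finally I treat the constant term $\Phi_\infty\equiv\lim_{u\to\infty}\Phi_G(u)$. Sending $u,v\to\infty$ in Theorem~\ref{thm:Phi-anti-commu} gives $\{\Phi_\infty,\Phi_\infty\}=2\lim_{x\to\infty}P_{G\setminus K_s}(x)/P_G(x)=2C_0^2$. The leading coefficients are the top independence coefficients, which yields the displayed ratio; it is nonnegative by real-rootedness, since both polynomials have positive roots and leading signs $(-1)^{\alpha}$. If $C_0=0$, then $\Phi_\infty^2=0$. Since $\Phi_\infty$ is (up to a phase) Hermitian, being a real combination of products with definite hermiticity, this forces $\Phi_\infty=0$; making that step rigorous is the main obstacle. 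If $C_0\neq0$, set $\Psi_0=\Phi_\infty/C_0$, so that $\Psi_0^2=1$. To get $\{\Psi_0,\Psi_k\}=0$, take the residue at $u=u_k$ and the limit $v\to\infty$ in Theorem~\ref{thm:Phi-anti-commu}. The right side becomes $2\bigl[u\,P_{G\setminus K_s}(u^2)/P_G(u^2)+v\,C_0^2+o(v)\bigr]/(u+v)$, whose residue at $u_k$ vanishes as $v\to\infty$. The left side becomes $-C_kC_0\{\Psi_k,\Psi_0\}$, and since $C_k\neq0$ the anticommutator vanishes.
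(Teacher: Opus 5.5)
Most of your argument is the paper's own proof.
\begin{itemize}
\item The residues at $u_{\pm k}$ come from Corollary~\ref{cor:Psi-Phi-relation} and $P_G(u^2)=2u_kP'_G(u_k^2)(u-u_k)+\cdots$. This gives $C_k=-\NN_k/(2u_kP'_G(u_k^2))$, hence $C_k^2$, the sign rule, and $C_{-k}=-C_k$.
\item Your alternate-vertex count $\alpha_{\complementset{\pa}}+\lceil n/2\rceil\le\alpha_G$ (with $n=|\pa|-1$) is the paper's first bound $\deg\Xi_G(u)\le2\alpha_G$.
\item Both proofs set $v=u$ in Theorem~\ref{thm:Phi-anti-commu} to get $\Phi_G(u)^2=P_{G\setminus K_s}(u^2)/P_G(u^2)$.
\item Your residue-then-$v\to\infty$ argument for $\{\Psi_0,\Psi_k\}=0$ is the same idea as the paper's.
\end{itemize}
Two side remarks. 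First, you never check $[H_G,\Psi_0]=0$; the paper gets it by letting $u\to\infty$ in $[H_G,\Phi_G(u)]=2(\Phi_G(u)-\chi)/u$. Second, your ``cleaner route'' to boundedness fails as stated. For real $u$ one has $\Phi_G(u)^\dagger=\Phi_G(-u)$, so $\Phi_G(u)$ is not normal, and a bounded square does not bound it. Keep the degree count.

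The genuine gap is the case $C_0=0$, where you need $\Phi_\infty=0$ and leave this unproved. Your stated reason is false. A real combination of path products of mixed Hermiticity is not Hermitian up to a phase, and can be a nonzero nilpotent. For instance, for $\jbm\in K_s$, $X=\chi+b_{\jbm}^{-1}\chi\hc_{\jbm}$ has real coefficients and satisfies $X^2=0$.

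The missing observation is that the anti-Hermitian products do not survive at infinity, and your own parity remark supplies it. For odd $n$, the coefficient $(-u)^nP_{\complementset{\pa}}(u^2)/P_G(u^2)$ has numerator degree at most $2\alpha_G-1$, so it vanishes as $u\to\infty$. Hence $\Phi_\infty$ is a real combination of odd-size path products, each Hermitian because $\pathprodH{\pa}^\dagger=(-1)^{|\pa|-1}\pathprodH{\pa}$. Then $\Phi_\infty^2=C_0^2=0$ gives $\operatorname{Tr}(\Phi_\infty^\dagger\Phi_\infty)=0$ in a faithful Hermitian representation, so $\Phi_\infty=0$. Equivalently, take the limit along $u=\imi t$, where $\phi_j^\dagger=(-1)^j\phi_j$ makes $\Phi_G(\imi t)$ Hermitian.

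The paper avoids operator positivity altogether. The even-indexed vertices $\ell_2,\ell_4,\ldots$ avoid $K_s$, since only $\ell_1\in K_s$. This gives a second bound $\lfloor n/2\rfloor+\alpha_{\complementset{\pa}}\le\alpha_{G\setminus K_s}$, hence $\deg\Xi_G(u)\le2\alpha_{G\setminus K_s}+1<2\alpha_G$ whenever $\alpha_{G\setminus K_s}<\alpha_G$, and $\Phi_G(u)\to0$ directly. That argument is purely combinatorial and works in the abstract algebra. Your repaired route needs the $*$-structure and trace positivity, but it explains conceptually why $C_0^2=\Phi_\infty^2$ controls the constant term.
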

\begin{proof}
	Since the roots $\pm u_j$ are simple, $\Phi_G(u)$ has only simple poles at these points, and Corollary~\ref{cor:Psi-Phi-relation} shows that the principal part at $u=u_j$ is proportional to $\Psi_j$.
	Fix $k>0$ and define $C_k$ by the principal part at $u=u_k$ in the convention of Eq.~\eqref{eq:Phi-Psi-decomposition}, so that $\Phi_G(u)+C_k\Psi_k/(u-u_k)$ is holomorphic at $u=u_k$.
	Since $P_G(u^2)=2u_kP_G'(u_k^2)(u-u_k)+O((u-u_k)^2)$, only this principal part contributes to
	\begin{align}
		\lim_{u \to u_k} P_G(u^2) \Phi_G(u)
		=
		-2u_k P_G'(u_k^2) C_k \Psi_k\,.
	\end{align}
	Comparing with Corollary~\ref{cor:Psi-Phi-relation} gives
	\begin{align}
		\Psi_k
		=
		\frac{1}{\NN_k}
		\qty[-2u_k P_G'(u_k^2) C_k]\Psi_k,
	\end{align}
	and hence $C_k = -\NN_k/(2u_k P_G'(u_k^2))$.
	Since $u_k>0$ and $\NN_k>0$, this fixes the sign of $C_k$ as $-\sgn(P_G'(u_k^2))$.
	Squaring this expression and using Eq.~\eqref{eq:normalization-factor-Psi} gives
	\begin{align}
		C_k^2 = -\frac{P_{G\setminus K_s}(u_k^2)}{P_G'(u_k^2)}\,.
	\end{align}
	The relation $C_{-k}=-C_k$ follows from $u_{-k}=-u_k$ and $\NN_{-k}=\NN_k$.

	Subtracting the principal parts at all the poles $\pm u_j$ leaves a rational operator with no finite poles, that is, a polynomial in $u$; we must show that it is constant, and that it vanishes when $C_0=0$.
	Both follow from the degree of $\Xi_G(u)=P_G(u^2)\Phi_G(u)$, to be compared with $\deg P_G(u^2)=2\indepnum{G}$.

	By Theorem~\ref{thm:Krylov-path-expansion}, the term of $\Xi_G(u)$ associated with a rooted path $\pa=\chi\md\ell_1\md\cdots\md\ell_n$ has degree at most $n+2\indepnum{\complementset{\pa}}$.
	Alternate path vertices are pairwise nonadjacent, and no path vertex is adjacent to $\complementset{\pa}$, so a maximum independent set in $\complementset{\pa}$ remains independent after adding either $\ell_1,\ell_3,\ldots$ or $\ell_2,\ell_4,\ldots$ to it.
	The first choice gives $\ceil{n/2}+\indepnum{\complementset{\pa}}\le\indepnum{G}$.
	The second one stays inside $G\setminus K_s$, because $\ell_1$ is the only path vertex in $K_s$ and $\complementset{\pa}\subseteq\complementset{(\chi)}=G\setminus K_s$, and gives $\floor{n/2}+\indepnum{\complementset{\pa}}\le\indepnum{G\setminus K_s}$.
	With $n\le2\ceil{n/2}$ and $n\le2\floor{n/2}+1$, the two independent sets bound the degree of every path term, so that
	\begin{align}
		\deg\Xi_G(u)\le 2\indepnum{G}
		\qquad\text{and}\qquad
		\deg\Xi_G(u)\le 2\indepnum{G\setminus K_s}+1.
	\end{align}

	By the first bound $\Phi_G(u)$ is bounded at infinity, so the polynomial is the constant $\lim_{u\to\infty}\Phi_G(u)$.
	If $C_0=0$, that is $\indepnum{G\setminus K_s}<\indepnum{G}$, the second bound gives $\deg\Xi_G(u)<\deg P_G(u^2)$, and this constant vanishes.

	When $C_0>0$, setting $v=u$ in Eq.~\eqref{eq:Phi-anti-commu} gives
	\begin{align}
		\Phi_G(u)^2 = \frac{P_{G \setminus K_s}(u^2)}{P_{G}(u^2)}\,.
	\end{align}
	Taking $u\to\infty$ and using Eq.~\eqref{eq:zero-mode-weight}, the operator $\Psi_0$ defined by Eq.~\eqref{eq:zero-mode-limit} satisfies $\Psi_0^2=\mathbb{1}$.
	Rewriting Eq.~\eqref{eq:Krylov-diff-eq} as $[H_G,\Phi_G(u)]=2(\Phi_G(u)-\chi)/u$ and taking $u\to\infty$ gives $[H_G,\Psi_0]=0$.
	Finally, Eq.~\eqref{eq:Phi-anti-commu} gives
	\begin{align}
		\qty{\Psi_k, \Phi_G(u)}
		=
		\frac{2}{\NN_k} \frac{u_k P_{G\setminus K_s}(u_k^2)}{u_k + u}\,,
	\end{align}
	which vanishes as $u\to\infty$, and hence $\qty{\Psi_0,\Psi_k}=0$ for $k\neq0$.
\end{proof}

Setting $u=0$ in Theorem~\ref{thm:Phi-Psi-decomposition} and using $\Phi_G(0)=\chi$ gives the following generalization of the Fendley-model result~\cite{sajat-ffd-corr} to arbitrary ECF frustration graphs:
\begin{cor}\label{cor:edge-decomposition}
	The edge operator decomposes as
	\begin{align}
		\label{eq:edge-decomposition}
		\chi & =
		\sum_{\substack{k = - \indepnum{G} \\ k \neq 0}}^{\indepnum{G}}
		\frac{C_k}{u_k} \Psi_k
		+
		C_0 \Psi_{0}\,.
	\end{align}
	The last term is omitted when $C_0=0$.
\end{cor}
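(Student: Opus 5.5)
The plan is to evaluate the mode decomposition of Theorem~\ref{thm:Phi-Psi-decomposition} at $u=0$. Eq.~\eqref{eq:Phi-Psi-decomposition} is an identity between rational operator-valued functions of $u$. Its left-hand side $\Phi_G(u)=\Xi_G(u)/P_G(u^2)$ is regular at $u=0$, because $P_G(0)=\indepcoeff{G}{0}=1$. By Lemma~\ref{lem:Krylov-uniqueness} and the definition~\eqref{eq:Phi-def}, its value there is $\Phi_G(0)=\phi_0=\chi$.

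On the right-hand side, every pole $u_k$ with $k\neq0$ is nonzero. This holds because $u_k>0$ is a positive root of $P_G(u^2)$ and $u_{-k}=-u_k$, so each term $C_k\Psi_k/(u_k-u)$ is regular at $u=0$ and evaluates to $C_k\Psi_k/u_k$. The constant term $C_0\Psi_0$ does not depend on $u$. It is present exactly when $C_0\neq0$, and it is absent when $C_0=0$, as stated in Theorem~\ref{thm:Phi-Psi-decomposition}.

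Substituting $u=0$ on both sides therefore gives Eq.~\eqref{eq:edge-decomposition} directly. The only point needing care is that the decomposition holds as an identity of rational functions, not merely near the poles or at infinity. This is exactly what the proof of Theorem~\ref{thm:Phi-Psi-decomposition} establishes. There, the principal parts at all $\pm u_j$ are subtracted, and the remainder is shown to be the constant $\lim_{u\to\infty}\Phi_G(u)=C_0\Psi_0$, or zero when $C_0=0$. That argument includes the degree bound on $\Xi_G(u)$.

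I expect no substantive obstacle. The only thing to check is that $u=0$ is not among the poles, which follows from $P_G(0)=1$.
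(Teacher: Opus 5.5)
Your proposal is correct and takes the same approach as the paper, which obtains the corollary by setting $u=0$ in Eq.~\eqref{eq:Phi-Psi-decomposition} and using $\Phi_G(0)=\chi$. Your extra checks make explicit two points the paper uses without stating them: $u=0$ is not a pole, since $P_G(0)=1$ and every $u_k\neq0$, and the decomposition holds as an identity of rational functions.
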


\begin{cor}\label{cor:edge-autocorrelation}
	Let $G$ be a connected ECF graph, let $K_s$ be a simplicial clique, and let $\chi$ be the associated edge operator.
	Let $\langle A\rangle_{\infty} \equiv \operatorname{Tr}(A)/\operatorname{Tr}(\mathbb{1})$ denote the normalized trace in a fixed faithful finite-dimensional representation of the graph Clifford algebra, or in the physical spin representation when one is specified, and let $\chi(t) \equiv \mathrm{e}^{\imi H_G t}\chi \mathrm{e}^{-\imi H_G t}$.
	Then
	\begin{align}
		\label{eq:edge-autocorrelation}
		\langle \chi(t)\chi\rangle_{\infty}
		=
		C_0^2
		+
		\sum_{k=1}^{\indepnum{G}}
		\pqty{\frac{C_k}{u_k}}^2
		\cos\pqty{\frac{2t}{u_k}}.
	\end{align}
\end{cor}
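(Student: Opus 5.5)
We will compute the autocorrelation directly from the edge decomposition of Corollary~\ref{cor:edge-decomposition}. The modes evolve simply under the Heisenberg dynamics. From the eigenoperator equation~\eqref{eq:eigenoperator}, $[H_G,\Psi_k]=2\epsilon_k\Psi_k$ with $\epsilon_k=1/u_k$, so $\mathrm{e}^{\imi H_G t}\Psi_k\mathrm{e}^{-\imi H_G t}=\mathrm{e}^{2\imi t/u_k}\Psi_k$. For the zero mode, Theorem~\ref{thm:Phi-Psi-decomposition} gives $[H_G,\Psi_0]=0$, so $\Psi_0(t)=\Psi_0$. Substituting into Eq.~\eqref{eq:edge-decomposition} yields
\begin{align*}
\chi(t)=\sum_{k\neq0}\frac{C_k}{u_k}\mathrm{e}^{2\imi t/u_k}\Psi_k+C_0\Psi_0 .
\end{align*}
We then multiply by the undecomposed-in-time expansion of $\chi$ and take the normalized trace, which leaves the double sum $\sum_{k,l}\frac{C_kC_l}{u_ku_l}\mathrm{e}^{2\imi t/u_k}\langle\Psi_k\Psi_l\rangle_\infty$, together with the zero-mode terms.

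The key step is to evaluate the two-point traces $\langle\Psi_k\Psi_l\rangle_\infty$.

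First, for $k+l\neq0$ the trace vanishes. Cyclicity gives $\operatorname{Tr}(H_G\Psi_k\Psi_l)=\operatorname{Tr}(\Psi_k\Psi_lH_G)$. On the other hand, the eigenoperator relations give $[H_G,\Psi_k\Psi_l]=2(\epsilon_k+\epsilon_l)\Psi_k\Psi_l$, so the trace of this commutator is zero. Since the roots are simple and distinct, $\epsilon_k+\epsilon_l\neq0$ unless $l=-k$, and hence $\operatorname{Tr}(\Psi_k\Psi_l)=0$. The same argument with $\epsilon_0=0$ shows $\langle\Psi_0\Psi_k\rangle_\infty=0$ for $k\neq0$.

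Second, for $l=-k$ we use cyclicity together with Eq.~\eqref{eq:anticommutation}: $2\langle\Psi_k\Psi_{-k}\rangle_\infty=\langle\{\Psi_k,\Psi_{-k}\}\rangle_\infty=1$, so $\langle\Psi_k\Psi_{-k}\rangle_\infty=1/2$. For the zero mode, $\Psi_0^2=\mathbb{1}$ gives $\langle\Psi_0^2\rangle_\infty=1$.

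Collecting terms gives $C_0^2+\sum_{k\neq0}\frac{C_kC_{-k}}{u_ku_{-k}}\frac12\mathrm{e}^{2\imi t/u_k}$. By Theorem~\ref{thm:Phi-Psi-decomposition}, $C_{-k}=-C_k$, and also $u_{-k}=-u_k$, so $\frac{C_kC_{-k}}{u_ku_{-k}}=(C_k/u_k)^2$. Pairing the terms $k$ and $-k$ combines the two exponentials into $\frac12(\mathrm{e}^{2\imi t/u_k}+\mathrm{e}^{-2\imi t/u_k})=\cos(2t/u_k)$, which gives Eq.~\eqref{eq:edge-autocorrelation}.

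The main obstacle is justifying these trace manipulations in the chosen representation. The relations $\{\Psi_k,\Psi_l\}=\delta_{k+l,0}$, the eigenoperator equations, $[H_G,\Psi_0]=0$ and $\Psi_0^2=\mathbb{1}$ are identities in the algebra, so they hold in any representation. Cyclicity of the trace holds in any finite-dimensional representation. The orthogonality argument needs only that the single-particle energies are distinct, which follows from the assumption of simple positive roots, together with $u_k\neq-u_l$ for $k\neq -l$.

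As a consistency check, at $t=0$ the formula should reduce to $\langle\chi^2\rangle_\infty=1$, which is exactly Eq.~\eqref{eq:Phi-anti-commu} evaluated at $u=v=0$.
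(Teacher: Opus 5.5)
Your proof is correct and follows the paper's argument: you evolve the modes in the edge decomposition of Corollary~\ref{cor:edge-decomposition}, use $\langle\Psi_k\Psi_l\rangle_\infty=\tfrac12\delta_{k+l,0}$ and $\langle\Psi_0^2\rangle_\infty=1$, and pair $k$ with $-k$ using $C_{-k}=-C_k$ and $u_{-k}=-u_k$. The only deviation is that you kill the $k+l\neq0$ traces and the zero-mode cross traces with a trace-of-commutator energy argument, whereas the paper gets them directly from cyclicity together with $\{\Psi_k,\Psi_l\}=0$ (resp.\ $\{\Psi_0,\Psi_k\}=0$), which needs no appeal to distinctness of the energies.
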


\begin{proof}
	The eigenoperator equation gives $\Psi_k(t)=\mathrm{e}^{2\imi t/u_k}\Psi_k$ for $k\neq0$, while the zero-mode term, when present, is time independent.
	Using Corollary~\ref{cor:edge-decomposition}, this gives
	\begin{align}
		\chi(t)
		=
		\sum_{\substack{k = - \indepnum{G} \\ k \neq 0}}^{\indepnum{G}}
		\frac{C_k}{u_k}
		\mathrm{e}^{2\imi t/u_k}
		\Psi_k
		+
		C_0 \Psi_0.
	\end{align}
	For $k,l\neq0$, cyclicity of the normalized trace and $\{\Psi_k,\Psi_l\}=\delta_{k+l,0}$ imply
	\begin{align}
		\langle \Psi_k\Psi_l\rangle_{\infty}
		=
		\frac{1}{2}\delta_{k+l,0}.
	\end{align}
	The zero-mode mixed terms vanish by the same argument, and $\langle\Psi_0^2\rangle_{\infty}=1$ when the zero mode is present.
	Substitution therefore yields
	\begin{align}
		\langle \chi(t)\chi\rangle_{\infty}
		=
		C_0^2
		+
		\frac{1}{2}
		\sum_{\substack{k = - \indepnum{G} \\ k \neq 0}}^{\indepnum{G}}
		\pqty{\frac{C_k}{u_k}}^2
		\mathrm{e}^{2\imi t/u_k}.
	\end{align}
	Since $u_{-k}=-u_k$ and $C_{-k}=-C_k$, the terms $k$ and $-k$ combine into Eq.~\eqref{eq:edge-autocorrelation}.
\end{proof}

\subsection{Proof of Theorem~\ref{thm:path-product-expansion}}
\label{sec:proof-path-product-expansion}

\begin{proof}[Proof of Theorem~\ref{thm:path-product-expansion}]
	From Eq.~\eqref{eq:Krylov-diff-eq},
	\begin{align}
		\frac{u}{2}\qty[H_G, P_G(u^2)\Phi_G(u)] = P_G(u^2)\Phi_G(u) - P_G(u^2)\chi.
	\end{align}
	Taking the limit $u \to u_k$ and using Corollary~\ref{cor:Psi-Phi-relation}, we obtain
	\begin{align}
		[H_G, \Psi_k]
		=
		\frac{2}{u_k}\Psi_k
		=
		2\epsilon_k \Psi_k\,.
	\end{align}
	Since $u_{-k} = -u_k$, this is Eq.~\eqref{eq:eigenoperator} for $\Psi_{\pm k}$.

	For the anticommutation relations, Corollary~\ref{cor:Psi-Phi-relation} and Theorem~\ref{thm:Phi-anti-commu} give
	\begin{align}
		\qty{\Psi_k, \Psi_l}
		 & =
		\frac{2}{\NN_k \NN_l}
		\lim_{\substack{u \to u_k \\ v \to u_l}}
		\frac{
			u P_{G \setminus K_s}(u^2) P_G(v^2)
			+
			v P_{G \setminus K_s}(v^2) P_G(u^2)
		}{u+v}.
	\end{align}
	If $u_k + u_l \neq 0$, then the denominator is regular in the limit, while each term in the numerator contains a factor that vanishes at the corresponding root of $P_G(u^2)$.
	Hence $\qty{\Psi_k, \Psi_l} = 0$ for $k + l \neq 0$.

	For $l = -k$, taking the limit $v \to -u$ in Eq.~\eqref{eq:Phi-anti-commu} gives
	\begin{align}
		\qty{\Phi_G(u), \Phi_G(-u)}
		=
		2\dv{u}\qty[u \frac{P_{G \setminus K_s}(u^2)}{P_G(u^2)}]\,.
	\end{align}
	Hence
	\begin{align}
		\qty{\Psi_k, \Psi_{-k}}
		 & =
		\frac{2}{\NN_k \NN_{-k}}
		\lim_{u \to u_k}
		P_G(u^2)^2
		\dv{u}\qty[u \frac{P_{G \setminus K_s}(u^2)}{P_G(u^2)}]
		=
		-\frac{4u_k^2 P_{G \setminus K_s}(u_k^2) P'_G(u_k^2)}{\NN_k \NN_{-k}}
		=
		1\,.
	\end{align}
	Here we used $P_G(u_k^2)=0$ and $\dv{u} P_G(u^2)\eval_{u=u_k} = 2u_k P'_G(u_k^2)$.
	Combining the two cases,
	\begin{align}
		\{\Psi_k,\Psi_l\}=\delta_{k+l,0}.
	\end{align}
\end{proof}

\subsection{Linear relations among Krylov basis elements}
\label{subsec:krylov-linear-dependence}

The path-product expansion gives a recurrence that expresses all sufficiently high $\phi_j$ in terms of lower ones.

Let $A_G$ be the linear operator on operator space defined by
\begin{align}
	A_G(X)
	 & \equiv
	\frac{1}{2}\qty[H_G,X].
\end{align}
Then the Krylov basis elements in Eq.~\eqref{eq:Krylov-def} are written as
\begin{align}
	\phi_j
	=
	A_G^j\chi
	\,.
\end{align}
The Krylov equation~\eqref{eq:Krylov-diff-eq} is equivalently
\begin{align}
	\label{eq:Krylov-resolvent-equation}
	(1-uA_G)\Phi_G(u)
	 & =
	\chi,
	 &
	\Phi_G(u)
	 & =
	(1-uA_G)^{-1}\chi
\end{align}
as a formal power series.

By Theorem~\ref{thm:Krylov-path-expansion}, $P_G(u^2)\Phi_G(u)=\Xi_G(u)$, which by the degree estimate in the proof of Theorem~\ref{thm:Phi-Psi-decomposition} is a polynomial in $u$ of degree at most $2\alpha_G$.
We introduce the reciprocal polynomial of $P_G$
\begin{align}
	\widehat P_G(z)
	 & \equiv
	z^{\alpha_G}P_G(z^{-1})
	=
	\sum_{r=0}^{\alpha_G}\indepcoeff{G}{r}z^{\alpha_G-r}.
\end{align}
Substituting $z=A_G^2$ gives the operator polynomial
\begin{align}
	\label{eq:Krylov-formal-independence-polynomial}
	\widehat P_G(A_G^2)\chi
	=
	\sum_{r=0}^{\alpha_G}\indepcoeff{G}{r}\,\phi_{2\alpha_G-2r},
\end{align}
which involves only non-negative powers of $A_G$.
In terms of the original independence polynomial, the same operator may be written formally as
\begin{align}
	\widehat P_G(A_G^2)\chi
	=
	A_G^{2\alpha_G}P_G(A_G^{-2})\chi,
\end{align}
where the right-hand side is understood through Eq.~\eqref{eq:Krylov-formal-independence-polynomial}, without assuming that $A_G$ is invertible.
Taking the coefficient of $u^{m+2\alpha_G}$ gives, for $m\geq0$,
\begin{align}
	\label{eq:Krylov-coefficient-identity}
	[u^{m+2\alpha_G}]\Xi_G(u)
	=
	[u^{m+2\alpha_G}]P_G(u^2)\Phi_G(u)
	=
	A_G^m\widehat P_G(A_G^2)\chi.
\end{align}
For $m\geq1$ the left-hand side vanishes because $\Xi_G(u)$ has degree at most $2\alpha_G$, and therefore
\begin{align}
	\label{eq:Krylov-polynomial-kernel}
	A_G^m\widehat P_G(A_G^2)\chi
	=
	0,
	\qquad
	m\geq1.
\end{align}
In expanded form, this relation is
\begin{align}
	\label{eq:Krylov-recurrence-independence-polynomial}
	\sum_{r=0}^{\alpha_G}\indepcoeff{G}{r}\,\phi_{m+2\alpha_G-2r}
	=
	0,
	\qquad
	m\geq1.
\end{align}
Since $\indepcoeff{G}{0}=1$, this recurrence expresses each sufficiently high $\phi_j$ as a linear combination of earlier Krylov elements of the same parity.
Separating the two parities, Eq.~\eqref{eq:Krylov-recurrence-independence-polynomial} gives
\begin{align}
	\label{eq:odd-krylov-recurrence}
	\phi_{2j+1}
	 & =
	-\sum_{r=1}^{\alpha_G}\indepcoeff{G}{r}\,\phi_{2j+1-2r},
	\qquad
	j\geq\alpha_G,
	\\
	\label{eq:even-krylov-recurrence}
	\phi_{2j}
	 & =
	-\sum_{r=1}^{\alpha_G}\indepcoeff{G}{r}\,\phi_{2j-2r},
	\qquad
	j\geq\alpha_G+1.
\end{align}
In particular,
\begin{align}
	\operatorname{span}\{\phi_{2j+1}:j\geq0\}
	 & =
	\operatorname{span}\{\phi_1,\phi_3,\ldots,\phi_{2\alpha_G-1}\},
	\nonumber \\
	\operatorname{span}\{\phi_{2j}:j\geq1\}
	 & =
	\operatorname{span}\{\phi_2,\phi_4,\ldots,\phi_{2\alpha_G}\}.
\end{align}
Together with $\phi_0=\chi$, this gives
\begin{align}
	\operatorname{span}\{\phi_j:j\geq0\}
	\subseteq
	\operatorname{span}\{\phi_0,\phi_1,\ldots,\phi_{2\alpha_G}\}.
\end{align}

The coefficient with $m=0$ must be treated separately.
Using Eq.~\eqref{eq:Phi-Psi-decomposition} and $\Xi_G(u)=P_G(u^2)\Phi_G(u)$, we have
\begin{align}
	\Xi_G(u)
	=
	\sum_{\substack{k=-\alpha_G \\ k\neq0}}^{\alpha_G}
	C_k\frac{P_G(u^2)}{u_k-u}\Psi_k
	+
	C_0P_G(u^2)\Psi_0.
\end{align}
For $k\neq0$, the quotient $P_G(u^2)/(u_k-u)$ has degree at most $2\alpha_G-1$.
Therefore the coefficient of $u^{2\alpha_G}$ comes only from the zero-mode term in Eq.~\eqref{eq:zero-mode-limit}:
\begin{align}
	\label{eq:zero-mode-krylov-coefficient}
	\widehat P_G(A_G^2)\chi
	=
	\sum_{r=0}^{\alpha_G}\indepcoeff{G}{r}\,\phi_{2\alpha_G-2r}
	=
	\indepcoeff{G}{\alpha_G}C_0\Psi_0.
\end{align}
Since the zero-mode term is annihilated by $A_G$, applying $A_G^\ell$ with $\ell\geq1$ to Eq.~\eqref{eq:zero-mode-krylov-coefficient} reproduces Eq.~\eqref{eq:Krylov-recurrence-independence-polynomial} with $m=\ell$.
For $C_0\neq0$, the zero mode is
\begin{align}
	\label{eq:zero-mode-independence-polynomial-form}
	\Psi_0
	=
	\frac{1}{C_0\indepcoeff{G}{\alpha_G}}\widehat P_G(A_G^2)\chi
	=
	\frac{A_G^{2\alpha_G}}{C_0\indepcoeff{G}{\alpha_G}}P_G(A_G^{-2})\chi\,,
\end{align}
where the last expression is understood in the sense of Eq.~\eqref{eq:Krylov-formal-independence-polynomial}.
If $C_0=0$, the right-hand side of Eq.~\eqref{eq:zero-mode-krylov-coefficient} vanishes, and therefore
\begin{align}
	\operatorname{span}\{\phi_{2j}:j\geq0\}
	=
	\operatorname{span}\{\phi_0,\phi_2,\ldots,\phi_{2\alpha_G-2}\}.
\end{align}
If $C_0\neq0$, $\phi_{2\alpha_G}$ is not eliminated by this argument.

We next check when these finite spanning sets are actually bases.
Comparing coefficients of powers of $u$ in Eq.~\eqref{eq:Phi-Psi-decomposition} gives
\begin{align}
	\phi_j
	=
	\sum_{\substack{k=-\alpha_G \\ k\neq0}}^{\alpha_G}
	C_k u_k^{-j-1}\Psi_k
	+
	\delta_{j0}C_0\Psi_0.
\end{align}
For $k>0$, set $E_k=\Psi_k+\Psi_{-k}$ and $O_k=\Psi_k-\Psi_{-k}$.
Since $u_{-k}=-u_k$ and $C_{-k}=-C_k$, this becomes
\begin{align}
	\phi_{2q}
	 & =
	\sum_{k=1}^{\alpha_G}
	C_k u_k^{-2q-1}E_k
	+
	\delta_{q0}C_0\Psi_0,
	\nonumber \\
	\phi_{2q+1}
	 & =
	\sum_{k=1}^{\alpha_G}
	C_k u_k^{-2q-2}O_k.
\end{align}
Thus, when the positive numbers $u_k^2$ are distinct and all residues $C_k$ are nonzero, the odd sector and the nonzero-mode part of the even sector are ordinary Vandermonde systems in $u_k^{-2}$.
Under this nondegeneracy assumption, $\phi_1,\phi_3,\ldots,\phi_{2\alpha_G-1}$ are linearly independent.
If also $C_0=0$, then $\phi_0,\phi_2,\ldots,\phi_{2\alpha_G-2}$ are linearly independent and $\phi_{2\alpha_G}$ is determined by the preceding even-sector relation.
If instead $C_0\neq0$, then $\phi_0,\phi_2,\ldots,\phi_{2\alpha_G}$ are linearly independent, because only $\phi_0$ has a $\Psi_0$ component and the remaining even vectors form a Vandermonde system on the nonzero modes.

\section{Relation to the transfer-matrix construction}
\label{sec:transfer-matrix-formalism}

In this section we connect the Krylov generating function $\Phi_G(u)$ constructed in Section~\ref{sec:proof} to the transfer-matrix construction of Refs.~\cite{fendley-fermions-in-disguise,chapman-jw,fermions-behind-the-disguise,unified-graph-th,sajat-FP-model}, thereby identifying the path-product modes $\Psi_k$ with those of the transfer-matrix construction.

Throughout this section we assume that $G$ is a connected ECF frustration graph and that the same simplicial clique $K_s$ and auxiliary edge operator $\chi$ as in Section~\ref{sec:proof} have been fixed.
We define the \emph{transfer matrix} as
\begin{equation}\label{Tudef}
	T_G(u) \equiv \sum_{S \in \indepset{G}} (-u)^{|S|} \prod_{\jbm \in S} \hc_{\jbm}\,,
\end{equation}
where $\indepset{G}$ is the set of independent sets of $G$ (Section~\ref{subsec:indep}).
Grouping the terms by the size of the independent set gives the generating function for the nonlocal conserved charges:
\begin{align}
	T_G(u)
	=
	\sum_{k=0}^{\indepnum{G}}
	(-u)^k \nonlocalcharge{G}{k},
	\label{eq:transfer-matrix-nonlocal-charge-expansion}
\end{align}
where
\begin{align}
	\nonlocalcharge{G}{k}
	\equiv
	\sum_{S\in\indepset{G}^{k}}
	\prod_{\jbm\in S}\hc_{\jbm}.
	\label{eq:nonlocal-charge-independent-set}
\end{align}
The product is order-independent because the vertices in $S$ are pairwise non-adjacent.
These charges are nonlocal because their monomials may be supported on disconnected, widely separated vertices of the frustration graph.
For a claw-free graph, these nonlocal conserved charges form a commuting family~\cite{fendley-fermions-in-disguise,fermions-behind-the-disguise,unified-graph-th}:
\begin{align}
	\qty[\nonlocalcharge{G}{k}, \nonlocalcharge{G}{l}] = 0\,.
\end{align}
Consequently, the transfer matrices commute for arbitrary spectral parameters,
\begin{align}
	\qty[T_G(u), T_G(v)] = 0\,.
\end{align}
The transfer matrix also satisfies the inversion relation
\begin{equation}
	\label{eq:inversion}
	T_G(u)T_G(-u)=P_G(u^2)\,.
\end{equation}

\begin{thm}
	\label{thm:Krylov-generating-function}
	The generating function for the Krylov basis~\eqref{eq:Krylov-def} is written in terms of the transfer matrix as
	\begin{align}
		\label{eq:Krylov-generating-function}
		\Phi_G(u)
		 & =
		\frac{1}{2}
		\pqty{ \chi + \frac{T_G(-u) \chi T_G(u)}{P_G(u^2)} }
		=
		T_G(u)^{-1} T_{G \setminus K_s}(u) \chi
		=
		\chi T_{G \setminus K_s}(-u) T_G(-u)^{-1}
		.
	\end{align}
\end{thm}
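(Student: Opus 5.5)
We will rely on the uniqueness statement of Lemma~\ref{lem:Krylov-uniqueness}. It suffices to show that the candidate
\[
\widetilde\Phi(u)\equiv \tfrac12\bigl(\chi+T_G(-u)\chi T_G(u)/P_G(u^2)\bigr)
\]
is analytic at $u=0$, equals $\chi$ there, and satisfies Eq.~\eqref{eq:Krylov-diff-eq}. After that, the two remaining expressions will be identified with $\widetilde\Phi$ by algebra.

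The value at $u=0$ is immediate, since $T_G(0)=1$ and $P_G(0)=1$. For the differential equation we need a commutation relation between $H_G$ and the transfer matrix twisted by $\chi$. Write $T_G(u)=T_{G\setminus K_s}(u)-u\sum_{\jbm\in K_s}\hc_{\jbm}T_{G\setminus\Gamma[\jbm]}(u)$, which is the operator analogue of Eq.~\eqref{eq:independence-polynomial-recursion}. The key ingredient is the standard derivative identity from Refs.~\cite{fendley-fermions-in-disguise,unified-graph-th}, which in the present normalization should read
\[
[H_G,T_G(u)]=0,\qquad \chi T_G(u)\chi=T_G^{\chi}(u),
\]
where $T_G^{\chi}(u)=T_{G\setminus K_s}(u)+u\sum_{\jbm\in K_s}\hc_{\jbm}T_{G\setminus\Gamma[\jbm]}(u)$ is the transfer matrix with the signs of the $K_s$ terms flipped. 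Since $T_G(u)$ commutes with $H_G$, we get $[H_G,T_G(-u)\chi T_G(u)]=T_G(-u)[H_G,\chi]T_G(u)$. The term $[H_G,\chi]=2\sum_{\jbm\in K_s}\hc_{\jbm}\chi$ then has to be related to $T_G(-u)\chi T_G(u)$ itself.

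The main step, and the main obstacle, is the identity
\[
\tfrac{u}{2}\,T_G(-u)[H_G,\chi]T_G(u)=T_G(-u)\chi T_G(u)-P_G(u^2)\chi .
\]
We plan to prove it by writing $\chi T_G(u)=T^{\chi}_G(u)\chi$, so the left side becomes $u\sum_{\jbm\in K_s}\hc_{\jbm}\,T_G(-u)T^\chi_G(u)\chi$ after moving $\hc_{\jbm}$ through $T_G(-u)$ carefully, which will require the commuting-transfer-matrix structure. We will then use $T_G(-u)T^\chi_G(u)-T_G(-u)T_G(u)=2u\,T_G(-u)\sum_{\jbm\in K_s}\hc_{\jbm}T_{G\setminus\Gamma[\jbm]}(u)$ together with the inversion relation~\eqref{eq:inversion}. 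If direct manipulation becomes messy, the fallback is to compare both sides against Theorem~\ref{thm:Krylov-path-expansion}. Expanding $T_G(-u)\chi T_G(u)$ in ordered monomials, each pair of independent sets $S,S'$ contributes $\pm$ a monomial; terms cancel unless the symmetric difference together with $\chi$ forms a rooted induced path, and the surviving common part reconstructs $P_{\complementset{\pa}}(u^2)$. This would show directly that $T_G(-u)\chi T_G(u)=2\Xi_G(u)-P_G(u^2)\chi$, which is exactly the first equality. In this route the claw-free structure, via Lemma~\ref{lem:odd-neighbor-classification} and Lemma~\ref{lem:path-cancellation-pairings}, controls the cancellations.

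For the remaining equalities, multiply $T_G(u)^{-1}T_{G\setminus K_s}(u)\chi$ by $T_G(u)$ on the left and use $T_G(u)^{-1}=T_G(-u)/P_G(u^2)$. We must then check that $2T_{G\setminus K_s}(u)\chi=T_G(u)\chi+T_G(-u)\chi T_G(u)T_G(u)^{-1}\cdots$, which reduces to $\chi T_G(u)\chi=T^\chi_G(u)$ and $T_G(u)+T^\chi_G(u)=2T_{G\setminus K_s}(u)$. The last expression follows from the same argument after taking the formal adjoint (reversal of products) and using $u\mapsto -u$.
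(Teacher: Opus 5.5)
Your overall frame matches the paper's: verify that $\widetilde\Phi$ solves Eq.~\eqref{eq:Krylov-diff-eq} with $\widetilde\Phi(0)=\chi$, then invoke Lemma~\ref{lem:Krylov-uniqueness}. But the step you single out as the main one is wrong.

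\textbf{The main-route identity is false.} The claim $\tfrac{u}{2}T_G(-u)[H_G,\chi]T_G(u)=T_G(-u)\chi T_G(u)-P_G(u^2)\chi$ fails already for $G$ a single vertex with $K_s=\{1\}$. There $T_G(u)=1-u\hc_1$, $[H_G,\chi]=2\hc_1\chi$ and $T_G(-u)\chi T_G(u)=(1+u\hc_1)^2\chi$. The left side minus the right side is $-u(1-u^2b_1^2)\hc_1\chi=-\tfrac u2P_G(u^2)[H_G,\chi]\neq0$.

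\textbf{It is also not the identity you need.} The summand $\tfrac12\chi$ of $\widetilde\Phi$ contributes $\tfrac u4[H_G,\chi]$ to $\tfrac u2[H_G,\widetilde\Phi]$, and your bookkeeping drops it. So even your identity would leave that term uncancelled. The relation actually required is
\[
T_G(-u)\chi T_G(u)-P_G(u^2)\chi=\tfrac u2\bigl(T_G(-u)[H_G,\chi]T_G(u)+P_G(u^2)[H_G,\chi]\bigr).
\]
Multiplying on the left by $T_G(u)$ and using the inversion relation shows this is equivalent to $[\chi,T_G(u)]=\tfrac u2\{T_G(u),[H_G,\chi]\}$. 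That is the identity the paper imports from Refs.~\cite{fendley-fermions-in-disguise,fermions-behind-the-disguise}, and it is the missing input: it is where simpliciality of $K_s$ enters. It does not follow from $[H_G,T_G(u)]=0$ and $\chi T_G(u)\chi=T_G^\chi(u)$ alone. Nor can you ``move $\hc_{\jbm}$ through $T_G(-u)$'', since $\hc_{\jbm}$ anticommutes with every monomial of $T_G(-u)$ containing one of its neighbours.

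\textbf{Your fallback is a viable, genuinely different proof.} It works only once you supply the actual mechanism. That mechanism is not Lemmas~\ref{lem:odd-neighbor-classification} and~\ref{lem:path-cancellation-pairings}, which concern commutators $[\hc_{\jbm},\pathprodH{\pa}]$. The argument runs as follows.
\begin{enumerate}
\item Write $T_G(-u)\chi T_G(u)=\sum_{S,S'}u^{|S|}(-u)^{|S'|}\hc_S\chi\hc_{S'}$ over pairs of independent sets.
\item Claw-freeness gives $S\triangle S'$ maximum degree at most two, since a vertex with three neighbours in the other set centres a claw. Even-hole-freeness excludes cycle components, so every component is an induced path.
\item Exchanging $S$ and $S'$ on a component $C$ containing zero or two vertices of $K_s$ is a sign-reversing involution: the weight changes by $(-1)^{|C|}$ and the reordering by $(-1)^{|C|-1}$.
\item At most one component meets $K_s$. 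If it meets $K_s$ in exactly one vertex, that vertex is an endpoint, because $G_\chi$ is claw-free. Hence $\chi$ followed by the component is a rooted induced path $\pa$.
\item The two colourings of this component add, because $\chi$ anticommutes with exactly one of its vertices. Summing over $S\cap S'$, which ranges over the independent sets of $\complementset{\pa}$, gives $2(-u)^{|\pa|-1}P_{\complementset{\pa}}(u^2)\pathprodH{\pa}$.
\item The case $S=S'$ gives $\bigl(2P_{G\setminus K_s}(u^2)-P_G(u^2)\bigr)\chi$.
\end{enumerate}
Together these give $T_G(-u)\chi T_G(u)=2\Xi_G(u)-P_G(u^2)\chi$, and Theorem~\ref{thm:Krylov-path-expansion} then yields the first equality directly. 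This route needs no Krylov equation, no commuting transfer matrices, and no cited transfer-matrix identity, and it makes the combinatorial bridge between the two formalisms explicit. The paper's route is shorter but rests on the external identity above.

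Your treatment of the remaining two equalities is fine, modulo the garbled intermediate expression. Either use $\{\chi,T_G(\pm u)\}=2T_{G\setminus K_s}(\pm u)\chi$ together with $T_G(u)T_G(-u)=P_G(u^2)$, or use the reversal anti-automorphism followed by $u\mapsto-u$.
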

\begin{proof}
	References~\cite{fendley-fermions-in-disguise,fermions-behind-the-disguise} give the identity
	\begin{align}
		\label{eq:T-chi-commutator}
		\qty[\chi, T_G(u)] = \frac{u}{2} \qty{T_G(u), \qty[H_G, \chi]}\,.
	\end{align}
	Multiplying from the left by $T_G(-u)$ and using Eq.~\eqref{eq:inversion}, we obtain
	\begin{align}
		\label{eq:T-chi-expanded}
		 & T_G(-u) \qty[\chi, T_G(u)]
		=
		\frac{u}{2} \pqty{T_G(-u) \qty[H_G, \chi] T_G(u) + \qty[H_G, \chi] P_G(u^2)}\,.
	\end{align}
	Rearranging this expression yields
	\begin{align}
		\label{eq:key-identity}
		 & \frac{T_G(-u) \chi T_G(u)}{P_G(u^2)} - \chi
		= \frac{u}{2} \pqty{ \frac{T_G(-u) \qty[H_G, \chi] T_G(u)}{P_G(u^2)} + \qty[H_G, \chi] }\,.
	\end{align}

	Let $\widetilde{\Phi}_G(u)\equiv\frac{1}{2} \pqty{ \chi + \frac{T_G(-u) \chi T_G(u)}{P_G(u^2)} }$ denote the first expression in Eq.~\eqref{eq:Krylov-generating-function}.
	We verify that $\frac{u}{2}\qty[H_G,\widetilde{\Phi}_G(u)]=\widetilde{\Phi}_G(u)-\chi$, i.e.\ Eq.~\eqref{eq:Krylov-diff-eq} with $\Phi_G$ replaced by $\widetilde{\Phi}_G$.
	Since $[H_G, T_G(\pm u)] = 0$, the left-hand side becomes
	\begin{align}
		\frac{u}{2} \qty[H_G, \widetilde{\Phi}_G(u)]
		 & =
		\frac{u}{4} \pqty{\qty[H_G, \chi] + \frac{T_G(-u) \qty[H_G, \chi] T_G(u)}{P_G(u^2)}}
		\nonumber \\
		 & =
		\frac{1}{2}
		\pqty{\frac{T_G(-u) \chi T_G(u)}{P_G(u^2)} - \chi}
		\nonumber \\
		 & =
		\widetilde{\Phi}_G(u) - \chi\,,
	\end{align}
	where we used Eq.~\eqref{eq:key-identity} in the second equality.
	Since also $\widetilde{\Phi}_G(0)=\chi$, Lemma~\ref{lem:Krylov-uniqueness} gives $\widetilde{\Phi}_G(u)=\Phi_G(u)$.
	The formula also has the equivalent forms
	\begin{align}
		\Phi_G(u)
		 & =
		\widetilde{\Phi}_G(u)
		\nonumber \\
		 & =
		\frac{1}{2}
		\pqty{\frac{T_G(-u) \chi T_G(u)}{P_G(u^2)} + \chi}
		\nonumber \\
		 & =
		\frac{T_G(-u)}{2P_G(u^2)}
		\qty{\chi, T_G(u)}
		\nonumber \\
		 & =
		\frac{T_G(-u)}{P_G(u^2)}
		T_{G\setminus K_s}(u)\chi
		\nonumber \\
		 & =
		T_G(u)^{-1}
		T_{G\setminus K_s}(u)\chi
		\nonumber \\
		 & =
		\chi T_{G\setminus K_s}(-u) T_G(-u)^{-1}
		\,,
	\end{align}
	where we used $\qty{\chi, T_G(u)} = 2 T_{G\setminus K_s}(u)\chi$ and $\qty{\chi, T_G(-u)} = 2\chi T_{G\setminus K_s}(-u)$, which hold because an independent set contains at most one vertex of the clique $K_s$, so the anticommutator doubles the terms avoiding $K_s$.
\end{proof}

\begin{cor}[Transfer-matrix form of the fermion modes~\cite{fendley-fermions-in-disguise,fermions-behind-the-disguise,unified-graph-th}]\label{cor:Psi-transfer-matrix-form}
	For each $k=\pm1,\ldots,\pm\indepnum{G}$,
	\begin{align}
		\label{eq:Psi-transfer-matrix-form}
		\Psi_k
		=
		\frac{1}{2\NN_k}
		T_G(-u_k)\chi T_G(u_k)\,.
	\end{align}
	The normalization used in those references is twice the present one.
\end{cor}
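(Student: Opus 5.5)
The plan is to combine Corollary~\ref{cor:Psi-Phi-relation} with the first form of $\Phi_G(u)$ in Theorem~\ref{thm:Krylov-generating-function}. By Corollary~\ref{cor:Psi-Phi-relation},
\begin{align*}
	\Psi_k=\frac{1}{\NN_k}\lim_{u\to u_k}P_G(u^2)\Phi_G(u).
\end{align*}
Multiplying Eq.~\eqref{eq:Krylov-generating-function} by $P_G(u^2)$ gives
\begin{align*}
	P_G(u^2)\Phi_G(u)=\frac{1}{2}\pqty{P_G(u^2)\chi+T_G(-u)\chi T_G(u)}.
\end{align*}
The right-hand side is a polynomial in $u$, since $T_G(\pm u)$ are polynomials in $u$ with operator coefficients. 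Its limit at $u=u_k$ is therefore obtained by direct evaluation. Since $P_G(u_k^2)=0$, the first term drops out, and what remains is $\tfrac12 T_G(-u_k)\chi T_G(u_k)$. Dividing by $\NN_k$ yields Eq.~\eqref{eq:Psi-transfer-matrix-form}.

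The steps, in order, are these. First, check that the identity $P_G(u^2)\Phi_G(u)=\tfrac12(\ldots)$ holds as polynomials and not merely as formal power series. Theorem~\ref{thm:Krylov-generating-function} identifies $\Phi_G(u)$ with $\widetilde\Phi_G(u)$ as formal power series at $u=0$, where $P_G(0)=1$ is invertible. Multiplying both sides by the polynomial $P_G(u^2)$ gives an identity of formal power series. Both sides are then polynomials: the left side equals $\Xi_G(u)$ by Theorem~\ref{thm:Krylov-path-expansion}, and the right side is manifestly polynomial. So the identity holds for every complex $u$, in particular at $u=u_k$. Second, evaluate at $u=u_k$ for $k=\pm1,\ldots,\pm\indepnum{G}$. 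Here $u_{-k}=-u_k$ is also a root of $P_G(u^2)$, so the negative indices are covered by the same computation. Third, divide by $\NN_k$.

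The only subtle point is the first step: the limit in Corollary~\ref{cor:Psi-Phi-relation} must be taken through an identity valid away from $u=0$. This is handled by the polynomiality of $\Xi_G(u)$ noted after Theorem~\ref{thm:Krylov-path-expansion}, so I do not expect a real obstacle.

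For the final remark about normalization: the references define the mode as $T_G(-u_k)\chi T_G(u_k)$ divided by a factor I will call $\mathcal M_k$. Our $\Psi_k$ is the same operator divided by $2\NN_k$. Matching the two expressions means that their normalizing factor $\mathcal M_k$ equals $2\NN_k$, i.e.\ it is twice ours, consistent with the earlier remark after Definition~\ref{define:path-product-operator}.
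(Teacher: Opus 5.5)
Your proof is correct and matches the paper's argument: multiply the first form of $\Phi_G(u)$ in Theorem~\ref{thm:Krylov-generating-function} by $P_G(u^2)$, let $u\to u_k$ so that the $P_G(u^2)\chi$ term vanishes at the root, and combine the result with Corollary~\ref{cor:Psi-Phi-relation}. Your extra check that the identity holds as an identity of polynomials, because $P_G(u^2)\Phi_G(u)=\Xi_G(u)$ is polynomial, is a valid refinement of a step the paper takes for granted.
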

\begin{proof}
	Multiplying Eq.~\eqref{eq:Krylov-generating-function} by $P_G(u^2)$ and taking $u\to u_k$ gives
	\begin{align}
		\lim_{u\to u_k}P_G(u^2)\Phi_G(u)
		=
		\frac{1}{2}T_G(-u_k)\chi T_G(u_k)\,,
	\end{align}
	because the term $P_G(u^2)\chi$ vanishes at the root.
	Combining this identity with Corollary~\ref{cor:Psi-Phi-relation} gives Eq.~\eqref{eq:Psi-transfer-matrix-form}.
\end{proof}

The following exchange relation between $\Phi_G(u)$ and $T_G(v)$ specializes at $u=v$ to the transfer-matrix formula of Theorem~\ref{thm:Krylov-generating-function}.
\begin{lem}\label{lem:Phi-T-relation}
	The generating function and transfer matrix satisfy
	\begin{align}
		(u - v) \Phi_G(u) T_G(v)
		 & =
		(u + v)  T_G(v) \Phi_G(u)
		-
		v \pqty{ \chi T_G(v) + T_G(v) \chi}\,.
		\label{eq:Phi-T-relation}
	\end{align}
\end{lem}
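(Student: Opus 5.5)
The plan is to verify Eq.~\eqref{eq:Phi-T-relation} with the same uniqueness mechanism as in Lemma~\ref{lem:Krylov-uniqueness}. We show that the difference of the two sides solves the homogeneous Krylov equation and vanishes at $u=0$. Define
\begin{align*}
	D(u,v) \equiv (u-v)\Phi_G(u)T_G(v) - (u+v)T_G(v)\Phi_G(u) + v\qty{\chi,T_G(v)}.
\end{align*}
We regard $D$ as a formal power series in $u$ whose coefficients are operators depending polynomially on $v$. The claim is $D\equiv0$.

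The first step is the initial condition. Since $\Phi_G(0)=\chi$, we get $D(0,v) = -v\chi T_G(v) - vT_G(v)\chi + v\qty{\chi,T_G(v)} = 0$.

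The second step is to compute $\tfrac{u}{2}[H_G,D]$. The transfer matrix commutes with $H_G$ (it is a generating function of the conserved charges $\nonlocalcharge{G}{k}$). Using Eq.~\eqref{eq:Krylov-diff-eq}, the first two terms give
\begin{align*}
	(u-v)\bigl(\Phi_G(u)-\chi\bigr)T_G(v) - (u+v)T_G(v)\bigl(\Phi_G(u)-\chi\bigr).
\end{align*}
For the last term, the Jacobi identity gives $[H_G,\qty{\chi,T_G(v)}] = \qty{[H_G,\chi],T_G(v)}$. Eq.~\eqref{eq:T-chi-commutator} (at spectral parameter $v$) rewrites this as $\tfrac{2}{v}[\chi,T_G(v)]$. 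Hence
\begin{align*}
	v\,\tfrac{u}{2}[H_G,\qty{\chi,T_G(v)}] = u\bigl(\chi T_G(v) - T_G(v)\chi\bigr).
\end{align*}
Collecting terms, we get
\begin{align*}
	\tfrac{u}{2}[H_G,D] = D - v\qty{\chi,T_G(v)} - (u-v)\chi T_G(v) + (u+v)T_G(v)\chi + u\chi T_G(v) - uT_G(v)\chi .
\end{align*}
The extra terms cancel identically, so $\tfrac{u}{2}[H_G,D]=D$.

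The final step is the uniqueness argument. Expanding $D=\sum_n u^n d_n(v)$ gives $d_{n+1}=\tfrac12[H_G,d_n]$ with $d_0=0$, so all $d_n$ vanish and Eq.~\eqref{eq:Phi-T-relation} holds as formal power series. Since $\Phi_G(u)=\Xi_G(u)/P_G(u^2)$ is rational, the identity extends to all $u$ away from the poles.

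The only nontrivial input is Eq.~\eqref{eq:T-chi-commutator}, quoted from the references. The main point to check carefully is that it is applied at parameter $v$, and that the factor $v$ in front of the anticommutator exactly converts $\tfrac{2}{v}$ into the coefficient $u$ needed for the cancellation. At $u=v$ the relation reduces to $2vT_G(v)\Phi_G(v)=v\qty{\chi,T_G(v)}$. This is consistent with Theorem~\ref{thm:Krylov-generating-function}, which gives a sanity check on the signs.
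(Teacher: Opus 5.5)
Your proof is correct, but it takes a different route from the paper's.

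\emph{The paper's route.} It verifies Eq.~\eqref{eq:Phi-T-relation} by direct computation. It substitutes the closed form $\Phi_G(u)=\frac12\bigl(\chi+T_G(-u)\chi T_G(u)/P_G(u^2)\bigr)$ from Theorem~\ref{thm:Krylov-generating-function} into $T_G(-v)\Phi_G(u)T_G(v)$ and commutes the transfer matrices $T_G(-u)$ and $T_G(-v)$ past each other. It then applies the expanded identity \eqref{eq:T-chi-commutator-expand} at parameter $v$ and removes the commutators using the Krylov equation at both $u$ and $v$. Finally it solves for $T_G(-v)\Phi_G(u)T_G(v)$, multiplies by $T_G(v)$, and uses the inversion relation \eqref{eq:inversion}, which implicitly cancels a factor $P_G(v^2)$.

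\emph{Your route.} You show that the defect $D(u,v)$ solves the homogeneous Krylov equation $\frac{u}{2}[H_G,D]=D$, and the uniqueness mechanism of Lemma~\ref{lem:Krylov-uniqueness} then gives $D=0$. The cancellation you exhibit is right. The factor $v$ in front of $\{\chi,T_G(v)\}$ turns $v\{[H_G,\chi],T_G(v)\}=2[\chi,T_G(v)]$ into $u[\chi,T_G(v)]$ with no division by $v$.

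\emph{What your route buys.} It uses only $[H_G,T_G(v)]=0$ and Eq.~\eqref{eq:T-chi-commutator}. It needs neither the inversion relation nor the mutual commutativity of transfer matrices, nor $P_G(v^2)\neq0$, nor the transfer-matrix formula for $\Phi_G$. So the logical order can be reversed. Your $u=v$ specialization $2T_G(v)\Phi_G(v)=\{\chi,T_G(v)\}=2T_{G\setminus K_s}(v)\chi$, together with the inversion relation to invert $T_G(v)$ where $P_G(v^2)\neq0$, derives the form $\Phi_G=T_G^{-1}T_{G\setminus K_s}\chi$ of Theorem~\ref{thm:Krylov-generating-function}. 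In your approach this is a derivation, not just a sanity check. This is the same uniqueness strategy the paper itself uses to prove that theorem.

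\emph{What the paper's route buys.} It stays entirely within transfer-matrix algebra. It shows explicitly how the exchange relation follows from the conjugation formula $T_G(-u)\chi T_G(u)=P_G(u^2)\bigl(2\Phi_G(u)-\chi\bigr)$.
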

\begin{proof}
	From Eqs.~\eqref{eq:T-chi-commutator} and~\eqref{eq:inversion}, we have
	\begin{align}\label{eq:T-chi-commutator-expand}
		T_G(-u) \chi T_G(u)
		=
		P_G(u^2) \chi
		+
		\frac{u}{2} \pqty{ T_G(-u) \qty[H_G, \chi] T_G(u) + P_G(u^2) \qty[H_G, \chi] }\,.
	\end{align}
	From Eq.~\eqref{eq:Krylov-generating-function}, we also have
	\begin{align}\label{eq:T-chi-Phi}
		T_G(-u) \chi T_G(u)
		=
		P_G(u^2) \pqty{2\Phi_G(u) - \chi}\,.
	\end{align}

	Substituting Eq.~\eqref{eq:Krylov-generating-function} into $T_G(-v) \Phi_G(u) T_G(v)$ gives
	\begin{align}
		T_G(-v) \Phi_G(u) T_G(v)
		 & =
		\frac{1}{2} T_G(-v) \chi T_G(v)
		+
		\frac{1}{2P_G(u^2)} T_G(-u) T_G(-v) \chi T_G(v) T_G(u)\,.
	\end{align}
	Applying Eq.~\eqref{eq:T-chi-commutator-expand} to the second term, we obtain
	\begin{align}
		T_G(-v) \Phi_G(u) T_G(v)
		 & =
		\frac{1}{2} T_G(-v) \chi T_G(v)
		+
		\frac{P_G(v^2)}{2P_G(u^2)} T_G(-u) \chi T_G(u)
		\nonumber \\
		 & \quad
		+
		\frac{v}{4P_G(u^2)} T_G(-u) \pqty{ T_G(-v) [H_G, \chi] T_G(v) + P_G(v^2) [H_G, \chi] } T_G(u)\,.
	\end{align}
	Using Eq.~\eqref{eq:T-chi-Phi} and noting that $[H_G, T_G(\pm u)] = 0$, this becomes
	\begin{align}
		T_G(-v) \Phi_G(u) T_G(v)
		 & =
		P_G(v^2) \pqty{\Phi_G(v) - \frac{1}{2}\chi}
		+
		P_G(v^2) \pqty{\Phi_G(u) - \frac{1}{2}\chi}
		\nonumber \\
		 & \quad
		+
		\frac{v}{2} T_G(-v) [H_G, \Phi_G(u) - \frac{1}{2}\chi] T_G(v)
		+
		\frac{P_G(v^2) v}{2} [H_G, \Phi_G(u) - \frac{1}{2}\chi]\,.
	\end{align}
	Using Eq.~\eqref{eq:Krylov-diff-eq} to replace $\frac{u}{2}[H_G, \Phi_G(u)] = \Phi_G(u) - \chi$ and $\frac{v}{2}[H_G, \Phi_G(v)] = \Phi_G(v) - \chi$, we find
	\begin{align}
		T_G(-v) \Phi_G(u) T_G(v)
		 & =
		P_G(v^2) \pqty{\Phi_G(v) + \Phi_G(u) - \chi}
		+
		\frac{v}{u} T_G(-v) \pqty{\Phi_G(u) - \chi} T_G(v)
		\nonumber \\
		 & \qquad
		+
		\frac{P_G(v^2) v}{u} \pqty{\Phi_G(u) - \chi}
		-
		P_G(v^2) \pqty{\Phi_G(v) - \chi}
		\nonumber \\
		 & =
		P_G(v^2) \Phi_G(u)
		+
		\frac{v}{u} T_G(-v) \Phi_G(u) T_G(v)
		-
		\frac{v}{u} T_G(-v) \chi T_G(v)
		\nonumber \\
		 & \qquad
		+
		\frac{P_G(v^2) v}{u} \Phi_G(u)
		-
		\frac{P_G(v^2) v}{u} \chi\,.
	\end{align}
	Collecting the terms involving $T_G(-v) \Phi_G(u) T_G(v)$ on the left-hand side and multiplying by $u$, we obtain
	\begin{align}
		 & (u - v) T_G(-v) \Phi_G(u) T_G(v)
		=
		P_G(v^2) (u + v) \Phi_G(u)
		-
		v T_G(-v) \chi T_G(v)
		-
		P_G(v^2) v \chi\,.
	\end{align}
	Multiplying from the left by $T_G(v)$ and using Eq.~\eqref{eq:inversion}, we arrive at
	\begin{align}
		 & (u - v) \Phi_G(u) T_G(v)
		=
		(u + v) T_G(v) \Phi_G(u)
		-
		v \pqty{ \chi T_G(v) + T_G(v) \chi }\,.
	\end{align}
\end{proof}

\section{Conserved charges}
\label{sec:conserved-charges}

The constructions below require neither a simplicial clique nor an edge operator, and they are not tied to the even-hole-free setting: they are formulated for arbitrary claw-free frustration graphs, including those that do not admit a hidden free-fermion solution.
The adjective ``local'' in this section is used in the graph-Clifford sense stated in Section~\ref{sec:setup}.

Earlier transfer-matrix constructions produced the nonlocal independent-set charges $\nonlocalcharge{G}{k}$ in Eq.~\eqref{eq:nonlocal-charge-independent-set}~\cite{fendley-fermions-in-disguise,fermions-behind-the-disguise,unified-graph-th}.

Here we first construct explicit local conserved charges and prove their conservation using the path-product commutator identities.
We then introduce a unified family of generalized conserved charges that contains both the known nonlocal conserved charges and the new local conserved charges as special cases.
Finally, we discuss oriented even path operators, whose conservation requires an induced-path orientation and an additional balance condition on the residual graphs at each vertex.

\subsection{Local conserved charges}
\label{sec:local-charges}

\begin{define}[Odd local charge]
	We define the odd local charge $\localcharge{G}{2k+1}$ by
	\begin{align}
		\label{eq:local-charge}
		\localcharge{G}{2k+1}
		\equiv
		\sum_{n = 0}^{k}
		\sum_{\pa \in \pathsetlength{G}{2k+1-2n}}
		\indepcoeff{\complementset{\pa}}{n}
		\pathprodH{\pa} \,,
	\end{align}
	where $\pathsetlength{G}{l}$ is defined in Section~\ref{sec:setup} and the residual coefficients $\indepcoeff{G}{k}$ in Eq.~\eqref{eq:indep-coeff}.
\end{define}

Every path-product summand in $\localcharge{G}{2k+1}$ has path size at most $2k+1$, so the charge is local in the graph-Clifford sense.
The lowest-order case recovers the Hamiltonian: $H_G = \localcharge{G}{1}$.

The obstruction to the non-path cancellation in the local conserved-charge proof is the following.

\begin{define}[Even bubble wand]
	\label{def:even-bubble-wand}
	An \emph{even bubble wand} consists of an induced even hole $\jbm_1\md\cdots\md\jbm_{2m}\md\jbm_1$ and an induced path with at least two vertices.
	We write the last edge of the path as $\ellbm\md\ellbm'$, where $\ellbm'$ is an endpoint of the path.
	The endpoint $\ellbm'$ is adjacent to exactly the two consecutive hole vertices $\jbm_1$ and $\jbm_{2m}$.
	The predecessor $\ellbm$, as well as every earlier path vertex, is non-adjacent to every vertex of the hole (see Figure~\ref{fig:even-bubble-wand}).
	Its \emph{size} is $2m$.
\end{define}
A graph is \emph{even-bubble-wand-free} if it contains no induced even bubble wand.

\begin{figure}[t]
	\centering
	\begin{tikzpicture}[
			vertex/.style={circle, draw, thick, fill=black, minimum size=\vertexsize, inner sep=0pt},
			scale=0.95, line cap=round
		]
		\node[vertex] (ell) at (1.0,0) {};
		\node[vertex] (ellp) at (2.2,0) {};

		\draw[thick,dotted] (-0.1,0)--(0.5,0);
		\draw[thick] (0.5,0)--(ell);
		\draw[thick] (ell)--(ellp);

		\node[vertex] (j1) at (3.41,0.54) {};
		\node[vertex] (j2) at (4.16,1.29) {};
		\node[vertex] (ju) at (5.24,1.29) {};
		\node[vertex] (jd) at (5.24,-1.29) {};
		\node[vertex] (j2m1) at (4.16,-1.29) {};
		\node[vertex] (j2m) at (3.41,-0.54) {};

		\draw[thick] (j1)--(j2)--(ju)--(5.60,1.07);
		\draw[thick,dotted] (4.70,0) ++(310:1.4) arc (310:410:1.4);
		\draw[thick] (5.60,-1.07)--(jd)--(j2m1)--(j2m)--(j1);

		\draw[thick] (ellp)--(j1);
		\draw[thick] (ellp)--(j2m);

		\node[below=2pt] at (ell) {\small $\ellbm$};
		\node[below=2pt] at (ellp) {\small $\ellbm'$};
		\node[above left=1pt] at (j1) {\small $\jbm_1$};
		\node[above=2pt] at (j2) {\small $\jbm_2$};
		\node[below left=1pt] at (j2m1) {\small $\jbm_{2m-1}$};
		\node[below left=1pt] at (j2m) {\small $\jbm_{2m}$};
		\node at (0.3,0.40) {\small path};
		\node at (4.70,0.10) {\small even hole};
	\end{tikzpicture}
	\caption{An even bubble wand, with only the terminal path edge $\ellbm\md\ellbm'$ shown.
		The endpoint $\ellbm'$ is adjacent to the consecutive vertices $\jbm_1$ and $\jbm_{2m}$ of the induced even hole.}
	\label{fig:even-bubble-wand}
\end{figure}

\begin{lem}\label{lem:endpoint-loop-ebw}
	Let $G$ be a graph and let $\rho=\ell_1\md\cdots\md\ell_m$ be an induced path with $m$ odd.
	Let $\jbm\notin V(\rho)$, and suppose that the only vertices of $\rho$ adjacent to $\jbm$ are $\ell_{q-1},\ell_q,\ell_m$ for some $q$ with $2\le q<m-1$.
	Then $C=\ell_q\md\ell_{q+1}\md\cdots\md\ell_m\md\jbm\md\ell_q$ is an induced cycle.
	If $C$ is even, then $C$ and the remaining path segment $\ell_1\md\cdots\md\ell_{q-1}$ form an even bubble wand.
	In particular, the hole size is $|C|=m-q+2\le m-1$.
\end{lem}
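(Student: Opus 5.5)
This is a direct verification from the definitions of induced path, induced cycle and even bubble wand (Definition~\ref{def:even-bubble-wand}). No earlier lemma is needed beyond the inducedness of $\rho$.

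\emph{Step 1: $C$ is a cycle.} The consecutive pairs $\ell_i\md\ell_{i+1}$ for $q\le i<m$ are edges of $\rho$, and $\ell_m\sim\jbm$, $\jbm\sim\ell_q$ hold by hypothesis. Since $q<m-1$, the cycle has $m-q+2\ge 4$ vertices, all distinct.

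\emph{Step 2: $C$ is induced.} A chord between two path vertices $\ell_i,\ell_j$ with $q\le i<j\le m$ and $j-i\ge2$ would be a chord of $\rho$, which is impossible because $\rho$ is induced. A chord at $\jbm$ would join $\jbm$ to some $\ell_i$ with $q<i<m$, but the only path neighbors of $\jbm$ are $\ell_{q-1},\ell_q,\ell_m$. Hence $C$ is an induced cycle of size $m-q+2$. Since $q\ge2$, this gives $|C|=m-q+2\le m$.

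\emph{Step 3: the improved bound.} Because $m$ is odd, $|C|$ even forces $q$ to be odd. Then $q\ge 3$, so $|C|\le m-1$. This is the only place where the parity of $m$ enters.

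\emph{Step 4: the wand.} Take the hole to be $C$, labeled $\jbm_1=\jbm$, then $\ell_m,\ell_{m-1},\ldots$, ending with $\jbm_{2m'}=\ell_q$, so that $\jbm$ and $\ell_q$ are consecutive on the hole. The handle is the path $\ell_1\md\cdots\md\ell_{q-1}$, with endpoint $\ellbm'=\ell_{q-1}$ and predecessor $\ellbm=\ell_{q-2}$. This handle has at least two vertices precisely because $q\ge3$, which is where Step~3 is used.

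We check the adjacency conditions of Definition~\ref{def:even-bubble-wand}.
\begin{itemize}
\item $\ell_{q-1}$ is adjacent to $\ell_q$ along $\rho$ and to $\jbm$ by hypothesis.
\item $\ell_{q-1}$ is not adjacent to any $\ell_i$ with $i>q$, since $\rho$ is induced.
\item No $\ell_i$ with $i\le q-2$ is adjacent to $\ell_q,\ldots,\ell_m$, again by inducedness of $\rho$.
\item No $\ell_i$ with $i\le q-2$ is adjacent to $\jbm$, since $\jbm$'s only path neighbors are $\ell_{q-1},\ell_q,\ell_m$.
\end{itemize}
Finally, the whole configuration is an induced subgraph, because every potential extra edge has been excluded above.

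The main subtlety is the parity observation in Step~3: $C$ even together with $m$ odd forces $q\ge3$. This is what guarantees that the handle has at least two vertices and that the hole size satisfies $|C|\le m-1$, not just $|C|\le m$.
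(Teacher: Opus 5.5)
Your proof is correct and follows essentially the same route as the paper. Both arguments get inducedness of $C$ from inducedness of $\rho$ together with the hypothesis on the path-neighbors of $\jbm$. Both then use the parity of $m$ to force $q\ge 3$, and finish with the same adjacency checks on the handle $\ell_1\md\cdots\md\ell_{q-1}$.
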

\begin{proof}
	$C$ is a cycle since $\jbm$ is adjacent to both endpoints of the segment $\ell_q\md\cdots\md\ell_m$ of $\rho$, and it is induced since $\rho$ is induced and $\jbm$ has no path-neighbor other than $\ell_{q-1},\ell_q,\ell_m$.
	Assume that $C$ is even.
	Because $m$ is odd, $q=2$ would give the odd value $|C|=m$; hence $q\ge3$ and the segment $\ell_1\md\cdots\md\ell_{q-1}$ has at least two vertices.
	Its last vertex $\ell_{q-1}$ is adjacent to exactly the two consecutive vertices $\ell_q$ and $\jbm$ of $C$.
	The preceding vertices $\ell_1,\ldots,\ell_{q-2}$ are non-adjacent to $C$, by inducedness of $\rho$ and by the assumption on the path-neighbors of $\jbm$.
	Finally $|C|=m-q+2\le m-1$, since $q\ge3$.
\end{proof}

\begin{lem}\label{lem:odd-local-nonpath-cancellation}
	Let $G$ be claw-free, and let $2K$ be the size of its smallest even bubble wand, with $K=\infty$ if none exists.
	Fix $k<K$.
	In the termwise expansion of $\frac{1}{2}[H_G,\localcharge{G}{2k+1}]$ over the paths in Eq.~\eqref{eq:local-charge}, all non-path summands sum to zero.
\end{lem}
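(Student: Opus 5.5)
We expand $\frac{1}{2}[H_G,\localcharge{G}{2k+1}]$ termwise. For each path $\pa=\ell_1\md\cdots\md\ell_m$ with $m=2k+1-2n$ odd and $m\ge3$, we use the unrooted analogue of Lemma~\ref{lem:single-path-commutator}. Interior on-path vertices again give zero. The two endpoints now give shortening terms. Off-path vertices with an odd number of path-neighbours are classified by Lemma~\ref{lem:odd-neighbor-classification}, which applies directly since $G$ is claw-free and needs no simplicial clique. The classes are:
\begin{itemize}
\item[(i)] vertices adjacent only to an endpoint, which give extension terms;
\item[(ii)] three-neighbor vertices adjacent to $\ell_{q-1},\ell_q,\ell_{q+1}$;
\item[(iii)] loop vertices adjacent to $\ell_{q-1},\ell_q$ and one endpoint, which after reversing the listing we may take to be $\ell_m$.
\end{itemize}
Singleton paths $m=1$ produce only extension terms, since $[\hc_{\jbm},\hc_{\ell}]=2\hc_{\jbm}\hc_\ell$ for adjacent vertices is a path product. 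The non-path summands are thus exactly the class (ii) and class (iii) terms $\indepcoeff{\complementset{\pa}}{n}\hc_{\jbm}\pathprodH{\pa}$.

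For class (ii) we apply Lemma~\ref{lem:path-cancellation-pairings}(i) in its unrooted form, with $X_0=\hc_{\ell_1}$ and $1\le k\le n-1$ interpreted along $\pa$. It yields a partner $(\pa',\ell_q)$ with $|\pa'|=|\pa|$ and $\Gamma[\pa']=\Gamma[\pa]$. Hence $\pa'$ appears in the same summand of Eq.~\eqref{eq:local-charge} with the same coefficient $\indepcoeff{\complementset{\pa}}{n}$, and the two Clifford contributions cancel. The pairing is an involution, so the terms cancel in pairs without double counting. One subtlety needs care: $\pa'$ must be counted as an unrooted path. Because both paths are odd, reversal does not change $\pathprodH{\cdot}$, so the orientation chosen when listing $\pa$ is harmless.

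For class (iii) we want Lemma~\ref{lem:path-cancellation-pairings}(ii), which requires the induced cycle $C=\ell_q\md\cdots\md\ell_m\md\jbm\md\ell_q$ to be odd. The case $q=1$ is not a loop configuration, since it would mean $\jbm$ is adjacent to both endpoints and $\ell_1$ only; the lemma's range $2\le q<m-1$ applies. Suppose $C$ were even. Lemma~\ref{lem:endpoint-loop-ebw}, applied with $m$ odd, shows that $C$ together with $\ell_1\md\cdots\md\ell_{q-1}$ forms an even bubble wand with hole size $|C|\le m-1\le 2k$. Since the smallest even bubble wand has size $2K>2k$, this is impossible. So $C$ is odd, the loop pairing applies, and again $|\pa'|=|\pa|$ and the residual graph is preserved. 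The paired terms cancel with equal coefficients.

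The main obstacle is the class (iii) bookkeeping. We must check that the loop partner $\pa'$, taken with the reversed segment, is itself a summand of $\localcharge{G}{2k+1}$ of the same size, hence of size $\le 2k+1$ so that the wand bound applies to it too. We must also check that its non-path term lands in class (iii) rather than class (ii), which Lemma~\ref{lem:path-cancellation-pairings}(ii) asserts. Finally, a given pair $(\pa,\jbm)$ must not be counted in both classes or at both endpoints. This holds because Lemma~\ref{lem:odd-neighbor-classification} says exactly one case occurs up to reversal, and for class (iii) the distinguished endpoint is determined by $\jbm$. With these checks, all non-path monomials have zero total coefficient.
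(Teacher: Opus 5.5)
Your proposal is correct and follows the paper's proof essentially step for step. Singletons give only extension terms. Three-neighbor summands cancel under Lemma~\ref{lem:path-cancellation-pairings}(i), since size and closed neighborhood (hence the weight $\indepcoeff{\complementset{\pa}}{n}$) are preserved. An endpoint-loop summand with an even cycle would, by Lemma~\ref{lem:endpoint-loop-ebw}, yield an even bubble wand of hole size at most $|\pa|-1\le 2k<2K$, so every loop summand is cancelled by Lemma~\ref{lem:path-cancellation-pairings}(ii), with initial-endpoint loops handled by reversal. Your aside about $q=1$ is harmless but unnecessary: that configuration gives $\jbm$ an even number of path-neighbors and so produces no commutator term at all.
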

\begin{proof}
	In Eq.~\eqref{eq:local-charge} the path $\rho$ has odd size $|\rho|\le 2k+1$ and weight $\indepcoeff{\complementset{\rho}}{(2k+1-|\rho|)/2}$.

	For a singleton path $\rho=(\ell)$, the commutator is
	\begin{equation*}
		\tfrac{1}{2}[H_G,\hc_{\ell}]=\sum_{\jbm\sim\ell}\pathprodH{\jbm\md\ell},
	\end{equation*}
	which contains no non-path term.

	For an odd induced path $\rho$ with $|\rho|\ge 3$, commuting each Hamiltonian term through the path product as in the proof of Lemma~\ref{lem:single-path-commutator} produces, besides path shortening and extension, only the non-path families $\pathsvd_{\rho}$ and $\pathloop_{\rho}$ at the two endpoints.
	The classification of off-path vertices with odd $|T_\rho(\jbm)|$ used there is Lemma~\ref{lem:odd-neighbor-classification}, which needs only claw-freeness.

	Lemma~\ref{lem:path-cancellation-pairings}(i) maps each $(\rho,\jbm)$ with $\jbm\in\pathsvd_{\rho}$ to a partner $(\rho',\ell_k)$ with $|\rho'|=|\rho|$ and $\Gamma[\rho']=\Gamma[\rho]$.
	Since the residual coefficient in Eq.~\eqref{eq:local-charge} depends only on $\complementset{\rho}$ and $|\rho|$, the two paired summands enter with equal weight and opposite Clifford signs, so they cancel.
	For an endpoint-loop summand at the terminal endpoint of $\rho$, with $\jbm\in\pathloop_{\rho}$ and associated cycle $(\ell_k,\ldots,\ell_n,\jbm)$, the loop pairing of Lemma~\ref{lem:path-cancellation-pairings}(ii) applies whenever this cycle is odd, producing a partner with the same size and closed neighborhood.
	Applying the same pairing to the reversed path handles endpoint-loop summands at the initial endpoint.
	The pair cancels for the same reason as above.

	The only uncanceled possibility is an endpoint-loop summand whose associated cycle is even.
	By Lemma~\ref{lem:endpoint-loop-ebw}, such a summand carries an induced even bubble wand of hole size at most $|\rho|-1\le 2k$.
	The assumption $k<K$ excludes this, since $2K$ is the smallest even-bubble-wand size in $G$.

\end{proof}

\begin{thm}
	\label{thm:local-charge-conservation-pbc}
	Let $G$ be a claw-free graph and let $2K$ denote the size of the smallest even bubble wand in $G$ (set $K=\infty$ if none exists).
	Then
	\begin{align}
		\qty[H_G,\localcharge{G}{2k+1}] = 0,
		\qquad 0\le k<K.
	\end{align}
\end{thm}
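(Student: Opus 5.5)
The plan is to expand $\frac12[H_G,\localcharge{G}{2k+1}]$ termwise over the odd induced paths $\rho$ appearing in Eq.~\eqref{eq:local-charge}, with weight $w_\rho=\indepcoeff{\complementset{\rho}}{(2k+1-|\rho|)/2}$. By Lemma~\ref{lem:odd-local-nonpath-cancellation}, valid since $k<K$, all non-path summands cancel. What remains is a linear combination of path products produced by the shortening and extension terms of the single-path commutator. For unrooted paths these are computed exactly as in Lemma~\ref{lem:single-path-commutator}, but now at \emph{both} endpoints.

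Concretely, an endpoint $\ell$ of $\rho$ (with $|\rho|\ge3$) contributes a shortening term $\mp b_\ell^2\pathprodH{\rho\setminus\ell}$ and extension terms $\pm\pathprodH{\rho\md\jbm}$ for $\jbm$ in the endpoint clique. The two endpoints carry opposite signs after orienting by reversal, since $\pathprodH{\rho^{-1}}=(-1)^{|\rho|-1}\pathprodH{\rho}$. For singletons, $\frac12[H_G,\hc_\ell]=\sum_{\jbm\sim\ell}\pathprodH{\jbm\md\ell}$.

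Every surviving monomial is therefore $\pathprodH{\pa}$ with $|\pa|$ even, $2\le|\pa|\le 2k+2$. I then fix an oriented even path $\pa=\ell_1\md\cdots\md\ell_{2p}$ and collect its coefficient. There are four sources:
\begin{enumerate}[label=(\alph*),nosep]
\item extension of $\pa_{[;-1]}$ at its right end, with weight $w_{\pa_{[;-1]}}$;
\item extension of $\pa_{[2;]}$ at its left end, with weight $w_{\pa_{[2;]}}$;
\item shortening of $\pa\md\jbm$ for $\jbm\in\pathclique$ at the right end, with weight $-b_\jbm^2\,w_{\pa\md\jbm}$;
\item shortening of $\jbm\md\pa$ at the left end, analogously.
\end{enumerate}
Sources (c) and (d) are present only when $2p+1\le 2k+1$.

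With $n=k+1-p$, the right-end pair (a)+(c) gives, by the clique recursion \eqref{eq:independence-polynomial-recursion} applied coefficientwise to the clique $\Gamma[\ell_{2p}]\setminus\Gamma[\pa_{[;-1]}]$ inside $\complementset{\pa_{[;-1]}}$, exactly $\indepcoeff{\complementset{\pa}}{n}$, up to sign. This is the same computation as Step~2 of Theorem~\ref{thm:Krylov-path-expansion}, read at order $x^{n}$. The left-end pair (b)+(d) gives the same quantity $\indepcoeff{\complementset{\pa}}{n}$ by symmetry, and the main point is that its sign is opposite. The two contributions therefore cancel, so the total coefficient of every even path product vanishes.

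Boundary cases need separate checks. For $p=k+1$ no shortening terms exist, and $\indepcoeff{\complementset{\pa_{[;-1]}}}{0}=1=\indepcoeff{\complementset{\pa}}{0}$, so the recursion is trivially consistent. For $p=1$ one side comes from a singleton, where the singleton formula must be matched against the general endpoint formula.

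The main obstacle is the sign bookkeeping: the right-end and left-end contributions to a given oriented $\pathprodH{\pa}$ must carry opposite signs, with the singleton case also covered. I would fix conventions by writing $\frac12[\hc_\jbm,\pathprodH{\rho}]$ with $\jbm$ pushed to the right end or the left end. Pushing to the left yields $\hc_\jbm\pathprodH{\rho}=\pathprodH{\jbm\md\rho}$ with a plus sign, while pushing to the right yields $-\pathprodH{\rho\md\jbm}$, using the anticommutation only with the adjacent endpoint factor. The analogous computation for shortening gives $\mp b^2$ on the two sides. Then check that (a) and (b) give $-w$ and $+w$ respectively, which is consistent with $\frac12[H_G,\hc_\ell]$ producing $\pathprodH{\jbm\md\ell}$ from $\ell$ and $-\pathprodH{\ell\md\jbm}=\pathprodH{\jbm\md\ell}$ from the singleton $\jbm$. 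For $p=1$ these two cancel because they are weighted by $\indepcoeff{\complementset{(\ell)}}{k}$ and $\indepcoeff{\complementset{(\jbm)}}{k}$, each completed by its respective shortening term to $\indepcoeff{\complementset{\jbm\md\ell}}{k}$.
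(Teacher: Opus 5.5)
Your proposal is correct and follows essentially the paper's own proof. In both, Lemma~\ref{lem:odd-local-nonpath-cancellation} removes the non-path terms, the coefficient of each even path collects the same four endpoint sources, and the recursions at the two endpoint cliques give $\indepcoeff{\complementset{\pa}}{s}$ with opposite signs, with the $p=k+1$ and $|\pa|=2$ cases treated as you do. One small sign slip in your last paragraph: the singleton $\jbm$ produces $\pathprodH{\ell\md\jbm}=-\pathprodH{\jbm\md\ell}$, not $+\pathprodH{\jbm\md\ell}$, and this is exactly what gives source (a) its minus sign.
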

\begin{proof}
	Fix $k$ with $0\le k<K$.
	Throughout this proof, we use the convention $\indepcoeff{F}{s}=0$ for $s<0$.

	For an induced path $\pa=\ell_1\md\cdots\md\ell_m$ in $G$ with $m\ge 2$, define the endpoint extension cliques
	\begin{align}
		\mathcal{K}_{\pa}^{-}
		 & \equiv
		\Gamma[\ell_1]\setminus\Gamma[\pa_{[2;]}],
		 &
		\mathcal{K}_{\pa}^{+}
		 & \equiv
		\Gamma[\ell_m]\setminus\Gamma[\pa_{[;-1]}].
	\end{align}
	Applying the same factor-by-factor commutator calculation at the two endpoints and using Lemma~\ref{lem:odd-neighbor-classification} gives
	\begin{align}
		\label{eq:local-charge-single-path-comm}
		\frac{1}{2}[H_G,\pathprodH{\pa}]
		 & =
		+b_{\ell_1}^2\,\pathprodH{\pa_{[2;]}}
		-b_{\ell_m}^2\,\pathprodH{\pa_{[;-1]}}
		+\sum_{\jbm\in \mathcal{K}_{\pa}^{-}}\pathprodH{\jbm\md\pa}
		-\sum_{\jbm\in \mathcal{K}_{\pa}^{+}}\pathprodH{\pa\md\jbm}
		+\NN_{\pa},
	\end{align}
	where $\NN_{\pa}$ collects the three-neighbor terms $\pathsvd_{\pa}$ and the endpoint-loop terms $\pathloop_{\pa}^{\pm}$, with $\pathloop_{\pa}^{-}$ defined by reversing $\pa$.
	For a singleton path $(\ell)$, we instead use directly
	\begin{align}
		\label{eq:local-charge-singleton-comm}
		\frac{1}{2}[H_G,\hc_{\ell}]
		=
		\sum_{\jbm\sim\ell}\pathprodH{\jbm\md\ell}.
	\end{align}

	By Lemma~\ref{lem:odd-local-nonpath-cancellation}, summing Eq.~\eqref{eq:local-charge-single-path-comm} over the odd paths in Eq.~\eqref{eq:local-charge} and using Eq.~\eqref{eq:local-charge-singleton-comm} for singleton paths, all non-path terms cancel.

	After the non-path terms have been removed, we write the remaining even-path contribution as
	\begin{align}
		\label{eq:local-charge-even-path-expansion}
		\frac{1}{2}\qty[H_G,\localcharge{G}{2k+1}]
		=
		\sum_{m=1}^{k+1}
		\sum_{\pa\in\pathsetlength{G}{2m}}
		D_{\pa}^{(k)}\,\pathprodH{\pa}.
	\end{align}
	Fix an even induced path $\pa=\ell_1\md\cdots\md\ell_{2m}$ with $1\le m\le k+1$ and set $s=k-m+1$.
	The coefficient $D_{\pa}^{(k)}$ receives four endpoint contributions: shortening from $\pa\md\jbm$ and $\jbm\md\pa$, extension from $\pa_{[;-1]}$ and $\pa_{[2;]}$.
	When $|\pa|=2$, the two extension contributions are read from the singleton formula~\eqref{eq:local-charge-singleton-comm}; in particular, the contribution from $\pa_{[;-1]}=(\ell_1)$ is $\pathprodH{\ell_2\md\ell_1}=-\pathprodH{\pa}$.
	When $m=k+1$ only extensions of the longest odd paths contribute: $s=0$, so the shortening sums below vanish by the convention $\indepcoeff{F}{-1}=0$ and the two extension coefficients cancel, both equal to $\indepcoeff{F}{0}=1$.
	Explicitly,
	\begin{align}
		\label{eq:local-charge-even-coeff}
		D_{\pa}^{(k)}
		 & =
		\indepcoeff{\complementset{\pa_{[2;]}}}{s}
		-
		\indepcoeff{\complementset{\pa_{[;-1]}}}{s}
		+
		\sum_{\jbm\in\mathcal{K}_{\pa}^-}
		b_{\jbm}^2\,
		\indepcoeff{\complementset{\jbm\md\pa}}{s-1}
		-
		\sum_{\jbm\in\mathcal{K}_{\pa}^+}
		b_{\jbm}^2\,
		\indepcoeff{\complementset{\pa\md\jbm}}{s-1}.
	\end{align}

	Applying the independence-polynomial recursion~\eqref{eq:independence-polynomial-recursion} at the left endpoint clique $\mathcal{K}_{\pa}^-$ and taking the coefficient of $x^s$ gives
	\begin{align}
		\indepcoeff{\complementset{\pa_{[2;]}}}{s}
		=
		\indepcoeff{\complementset{\pa}}{s}
		-
		\sum_{\jbm\in\mathcal{K}_{\pa}^-}
		b_{\jbm}^2\,
		\indepcoeff{\complementset{\jbm\md\pa}}{s-1}.
	\end{align}
	The analogous coefficient recursion at the right endpoint clique $\mathcal{K}_{\pa}^+$ yields
	\begin{align}
		\indepcoeff{\complementset{\pa_{[;-1]}}}{s}
		=
		\indepcoeff{\complementset{\pa}}{s}
		-
		\sum_{\jbm\in\mathcal{K}_{\pa}^+}
		b_{\jbm}^2\,
		\indepcoeff{\complementset{\pa\md\jbm}}{s-1}.
	\end{align}
	Subtracting the two identities gives
	\begin{align}
		\indepcoeff{\complementset{\pa_{[2;]}}}{s}
		-
		\indepcoeff{\complementset{\pa_{[;-1]}}}{s}
		 & =
		\sum_{\jbm\in\mathcal{K}_{\pa}^+}
		b_{\jbm}^2\,
		\indepcoeff{\complementset{\pa\md\jbm}}{s-1}
		-
		\sum_{\jbm\in\mathcal{K}_{\pa}^-}
		b_{\jbm}^2\,
		\indepcoeff{\complementset{\jbm\md\pa}}{s-1}.
	\end{align}
	Substituting this into Eq.~\eqref{eq:local-charge-even-coeff} gives $D_{\pa}^{(k)}=0$ for every even induced path $\pa$.
\end{proof}

The restriction $k<K$ concerns only this construction and does not exclude other conserved charges or modified formulas at larger ranges.

Setting $K=\infty$ gives the following.

\begin{cor}
	\label{cor:local-charge-ebwf}
	If $G$ is claw-free and even-bubble-wand-free, then $\qty[H_G,\localcharge{G}{2q+1}]=0$ for all $q\ge 0$.
\end{cor}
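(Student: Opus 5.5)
This follows from Theorem~\ref{thm:local-charge-conservation-pbc} once we set $K=\infty$. If $G$ is claw-free and even-bubble-wand-free, it has no even bubble wand. By the convention in the theorem, the smallest-wand parameter is then $K=\infty$, and the hypothesis $0\le k<K$ holds for every integer $q\ge0$. The plan is to check that nothing in the proof of the theorem used finiteness of $K$ beyond the single inequality $k<K$, and then to invoke the theorem for each $q$.

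Concretely, I would go through the two ingredients of the theorem's proof. First, the non-path cancellation of Lemma~\ref{lem:odd-local-nonpath-cancellation}. The only term that could survive there is an endpoint-loop summand whose associated cycle is even. Lemma~\ref{lem:endpoint-loop-ebw} turns such a summand into an induced even bubble wand, here of hole size at most $2q$. Since $G$ contains no even bubble wand at all, this case cannot occur for any path size. All non-path summands in $\tfrac12[H_G,\localcharge{G}{2q+1}]$ therefore cancel through the pairings of Lemma~\ref{lem:path-cancellation-pairings}. Second, the even-path coefficients $D_{\pa}^{(q)}$ vanish by the two endpoint applications of the recursion~\eqref{eq:independence-polynomial-recursion}. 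That step uses only claw-freeness, which makes $\mathcal K_{\pa}^{\pm}$ cliques, and never uses the wand hypothesis.

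There is essentially no obstacle. The only point to watch is that the bound on the hole size in Lemma~\ref{lem:endpoint-loop-ebw} was needed only to compare with a finite $K$. With $K=\infty$ the comparison is vacuous, and the conclusion $[H_G,\localcharge{G}{2q+1}]=0$ holds for all $q\ge0$.
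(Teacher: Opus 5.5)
Your proposal is correct and matches the paper, which obtains the corollary simply by setting $K=\infty$ in Theorem~\ref{thm:local-charge-conservation-pbc}. Re-checking the non-path cancellation and the even-path coefficients $D_{\pa}^{(q)}$ is accurate but not needed, since the theorem's statement already covers the case $K=\infty$ when no even bubble wand exists.
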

The periodic frustration graphs of the Fendley model (i.e., the cycle squares $C_M^2$) are claw-free and even-bubble-wand-free, so Corollary~\ref{cor:local-charge-ebwf} yields an extensive family of local conserved charges.

Every ECF graph is claw-free and even-bubble-wand-free.
\begin{cor}
	\label{cor:local-charge-conservation}
	If $G$ is an ECF graph, then $\qty[H_G,\localcharge{G}{2q+1}]=0$ for all $q\ge 0$.
\end{cor}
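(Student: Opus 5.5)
The plan is to reduce the statement to Corollary~\ref{cor:local-charge-ebwf}. Since an ECF graph is claw-free by definition, the only thing to check is that an ECF graph contains no induced even bubble wand. Then $K=\infty$ in Theorem~\ref{thm:local-charge-conservation-pbc}, and conservation holds for every $q\ge0$.

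The key step is immediate from Definition~\ref{def:even-bubble-wand}. An even bubble wand contains, as an induced subgraph, an induced even hole $\jbm_1\md\cdots\md\jbm_{2m}\md\jbm_1$. Here a hole means an induced cycle of length at least four, and since the size is $2m$ this is an even hole. An even-hole-free graph has no such induced cycle, so it cannot contain an induced even bubble wand. Hence every ECF graph is even-bubble-wand-free, the smallest even bubble wand is absent, and we set $K=\infty$.

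For a self-contained argument that avoids the bubble-wand formalism, one could instead retrace the proof of Lemma~\ref{lem:odd-local-nonpath-cancellation}. The only possibly uncancelled non-path summand is an endpoint-loop term whose associated induced cycle $(\ell_k,\ldots,\ell_n,\jbm)$ is even. That cycle has length at least four because $k<n-1$, so it is an even hole, and even-hole-freeness excludes it. Equivalently, the last clause of Lemma~\ref{lem:path-cancellation-pairings} already says the loop cycle is automatically odd in the even-hole-free case. All non-path terms then cancel for paths of any size, and the even-path coefficients $D_{\pa}^{(k)}$ vanish by the same independence-polynomial recursion as in Theorem~\ref{thm:local-charge-conservation-pbc}.

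The only point needing care is terminology: one must confirm that the hole in Definition~\ref{def:even-bubble-wand} has length at least four, which holds because it is even and a cycle. Beyond that there is no real obstacle, so the corollary follows directly.
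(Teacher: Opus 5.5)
Your proposal is correct and follows the paper's route: the paper also notes that every ECF graph is claw-free and even-bubble-wand-free, since an even bubble wand contains an induced even hole, and then invokes Corollary~\ref{cor:local-charge-ebwf}, which is Theorem~\ref{thm:local-charge-conservation-pbc} with $K=\infty$. Your alternative route, using the automatic oddness of the loop cycle in Lemma~\ref{lem:path-cancellation-pairings}, is also valid but not needed.
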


Beyond the range allowed by the smallest even bubble wand, the scalar independence polynomial in Eq.~\eqref{eq:indep-coeff} may need to be replaced by a suitable extension.
One possible extension is to incorporate the generalized cycle symmetries appearing in the unified graph-theoretic construction of Ref.~\cite{unified-graph-th}.
We expect that such an enlarged, possibly operator-valued, independence polynomial should produce local conserved charges for general claw-free graphs.

\subsection{Generalized conserved charges}
\label{sec:path-packing-charges}

The generalized charges below unify the nonlocal independent-set charges in Eq.~\eqref{eq:nonlocal-charge-independent-set} and the odd local charges in Eq.~\eqref{eq:local-charge}.
They are built from packings of pairwise non-adjacent odd induced paths.
Their guaranteed range is governed by the same even-bubble-wand obstruction as in Theorem~\ref{thm:local-charge-conservation-pbc}.

\begin{define}[Odd path packing]
	Fix integers $c\ge 1$ and $m\ge c$ with $m\equiv c\pmod 2$.
	Define $\oddpathsetlengthcomp{G}{m}{c}$ by
	\begin{align}
		\oddpathsetlengthcomp{G}{m}{c}
		 & \equiv
		\left\{
		\PP=\pa_1\sqcup\cdots\sqcup\pa_c
		\,\middle|\,
		\begin{array}{l}
			\pa_a\in\pathsetodd{G}{}\quad (1\le a\le c),    \\
			\Gamma[\pa_a]\cap\pa_b=\emptyset\quad (a\ne b), \\
			|\PP|=\sum_{a=1}^{c}|\pa_a|=m
		\end{array}
		\right\}.
		\label{eq:odd-path-packing-set}
	\end{align}
	Since distinct components are therefore non-adjacent, generators in different components commute and the product below is independent of the ordering of the components.
	For such a packing, define
	\begin{align}
		\pathprodH{\PP}
		 & \equiv
		\prod_{a=1}^{c}\pathprodH{\pa_a},
		 &
		\complementset{\PP}
		 & \equiv
		G\setminus\Gamma[\PP],
		\label{eq:path-packing-operator}
	\end{align}
	where $\Gamma[\PP]$ is the union of the closed neighborhoods of all vertices in the packing.
\end{define}

In the packing notation, the nonlocal conserved charges in Eq.~\eqref{eq:nonlocal-charge-independent-set} are the singleton-packing case:
\begin{align}
	\nonlocalcharge{G}{k}
	 & =
	\sum_{\PP\in \oddpathsetlengthcomp{G}{k}{k}}
	\pathprodH{\PP}
	=
	\sum_{S\in\indepset{G}^{k}}
	\prod_{\jbm\in S}\hc_{\jbm}.
	\label{eq:nonlocal-charge-packing-special-case}
\end{align}
Indeed, each component of $\PP\in\oddpathsetlengthcomp{G}{k}{k}$ is a singleton and the packing condition makes the $k$ vertices pairwise non-adjacent, identifying $\oddpathsetlengthcomp{G}{k}{k}$ with $\indepset{G}^{k}$.

\begin{define}[Generalized charge]
	For integers $c\ge 1$ and $m\ge c$ with $m\equiv c\pmod 2$, define
	\begin{align}
		\packingcharge{G}{m}{c}
		\equiv
		\sum_{n=0}^{(m-c)/2}
		\sum_{\PP\in \oddpathsetlengthcomp{G}{m-2n}{c}}
		\indepcoeff{\complementset{\PP}}{n}\,
		\pathprodH{\PP}.
		\label{eq:path-packing-charge}
	\end{align}
\end{define}

\paragraph{Specializations.}
\begin{align}
	\packingcharge{G}{k}{k}
	 & =
	\nonlocalcharge{G}{k},
	 &
	\packingcharge{G}{m}{1}
	 & =
	\localcharge{G}{m}
	\qquad (m\ \mathrm{odd}).
	\label{eq:path-packing-special-cases}
\end{align}
Indeed, when $c=m=k$ only the $n=0$ layer contributes and Eq.~\eqref{eq:nonlocal-charge-packing-special-case} applies, while for $c=1$ with $m$ odd Eq.~\eqref{eq:path-packing-charge} reduces to Eq.~\eqref{eq:local-charge}.

\paragraph{Touch and spectator terminology.}
Say that $\jbm$ \emph{touches} a packed component if $\jbm$ is adjacent to at least one vertex of that component.
The vertices of that component adjacent to $\jbm$ are then said to be \emph{touched} by $\jbm$.
If $\jbm$ itself lies on a non-singleton packed component, it is counted as touching that component through its path-neighbors.
We call a packed component not touched by $\jbm$ a \emph{spectator component}; together the spectator components form the \emph{spectator packing}.

\begin{lem}\label{lem:mixed-touch-involution}
	Let $G$ be claw-free and let $\PP$ be an odd path packing.
	Consider a nonzero commutator term $[\hc_{\jbm},\pathprodH{\PP}]$ in which $\jbm$ touches more than one packed component.
	Then the following hold.
	\begin{enumerate}[label=\textup{(\roman*)},nosep]
		\item The vertex $\jbm$ touches exactly two packed components.
		      It touches two consecutive vertices on one component, the \emph{doubly touched} component, and one endpoint on the other, the \emph{singly touched} component; this is the $(2,1)$ profile.
		\item Write the local configuration as in Fig.~\ref{fig:mixed-touch-cancellation}:
		      \begin{equation*}
			      \PP=\mathcal R\sqcup\pa\sqcup\pa',
			      \qquad
			      \pa=\pa_E\md\pa_O,
			      \qquad
			      \pa_E=\pa'_E\md\boldsymbol{j}'.
		      \end{equation*}
		      Here $\mathcal R$ denotes the spectator packing.
		      The partner summand has Hamiltonian vertex $\boldsymbol{j}'$ and packing
		      \begin{equation*}
			      \PP'
			      =
			      \mathcal R\sqcup\widetilde\pa\sqcup\pa'_E,
			      \qquad
			      \widetilde\pa=\pa'\md\jbm\md\pa_O.
		      \end{equation*}
		      This correspondence is an involution, preserves the number of components, and satisfies
		      \begin{equation*}
			      |\PP|=|\PP'|,
			      \qquad
			      \Gamma[\PP]=\Gamma[\PP'].
		      \end{equation*}
		      In particular $\complementset{\PP}=\complementset{\PP'}$, so the two summands carry the same scalar coefficient $\indepcoeff{\complementset{\PP}}{n}=\indepcoeff{\complementset{\PP'}}{n}$ in each fixed $n$ layer of Eq.~\eqref{eq:path-packing-charge}.
		\item The paired Clifford monomials satisfy
		      \begin{equation*}
			      \pathprodH{\PP}\hc_{\jbm}
			      +
			      \pathprodH{\PP'}\hc_{\boldsymbol{j}'}
			      =
			      0.
		      \end{equation*}
		      Hence the two commutator terms cancel.
	\end{enumerate}
\end{lem}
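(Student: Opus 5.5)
The plan is a direct local analysis around $\jbm$, modelled on the proofs of Lemma~\ref{lem:odd-neighbor-classification} and Lemma~\ref{lem:path-cancellation-pairings}. First, $\jbm$ cannot lie on a packed component. If it lay on one, the packing condition $\Gamma[\pa_a]\cap\pa_b=\emptyset$ would forbid it from touching any other component. So $\jbm$ is off-packing.

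Then $[\hc_{\jbm},\pathprodH{\PP}]\neq0$ forces the total number of touched vertices to be odd, exactly as in the proof of Lemma~\ref{lem:single-path-commutator}. The following claw exclusions then apply.
\begin{enumerate}[label=\textup{(\alph*)},nosep]
	\item Vertices on distinct components are pairwise non-adjacent. If $\jbm$ touched three components, one touched vertex from each would give a claw centered at $\jbm$.
	\item Suppose $\jbm$ touches two components. Two non-consecutive touched vertices on one component, together with any touched vertex on the other, again give a claw centered at $\jbm$. Hence each touched set is a single block of length at most $2$; length $3$ is excluded because its two extreme vertices are non-consecutive.
	\item A singleton block at an interior vertex $\ell_i$ gives a claw centered at $\ell_i$ with leaves $\ell_{i-1},\ell_{i+1},\jbm$.
\end{enumerate}
With two components the block sizes are $(1,1)$, $(2,1)$ or $(2,2)$, and only $(2,1)$ has odd total. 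By (c) the single touch is at an endpoint. This proves (i).

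For (ii), let the doubly touched component be $\pa$, with consecutive touched vertices $\boldsymbol{j}'$ and its successor, the first vertex of $\pa_O$. Reversing $\pa$ if needed, and using that $|\pa|$ is odd, I split $\pa=\pa_E\md\pa_O$ with $|\pa_E|$ even and $|\pa_O|$ odd. I then orient the singly touched $\pa'$ so that its touched endpoint is last. Writing $\pa_E=\pa'_E\md\boldsymbol{j}'$, the new components $\widetilde\pa=\pa'\md\jbm\md\pa_O$ and $\pa'_E$ both have odd size, so the number of components and $|\PP'|=|\PP|$ are preserved.

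Next I check the structure of $\PP'$.
\begin{enumerate}[label=\textup{(\alph*)},nosep]
	\item $\widetilde\pa$ is induced. Its only possible chords would involve $\jbm$, which touches only the endpoint of $\pa'$ and the first vertex of $\pa_O$, or would join $\pa'$ to $\pa_O$, which is excluded by the packing condition.
	\item $\pa'_E$ is non-adjacent to $\widetilde\pa$. The vertices of $\pa'_E$ reach $\pa_O$ only through the removed $\boldsymbol{j}'$, and they are untouched by $\jbm$ because $\jbm$'s block on $\pa$ consists only of $\boldsymbol{j}'$ and the first vertex of $\pa_O$.
	\item $\mathcal R$ is unaffected, since $\mathcal R$ consists of spectator components.
\end{enumerate}

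The equality $\Gamma[\PP]=\Gamma[\PP']$ uses the same claw trick as Lemma~\ref{lem:path-cancellation-pairings}(i). Take a neighbor $x$ of $\boldsymbol{j}'$ not covered by the remaining vertices. Then $x$ is non-adjacent to the predecessor of $\boldsymbol{j}'$ and to the first vertex of $\pa_O$, so it must be adjacent to $\jbm$, or else there is a claw at $\boldsymbol{j}'$. Conversely, take a neighbor $x$ of $\jbm$ not otherwise covered. Then $x$, the endpoint of $\pa'$ and $\boldsymbol{j}'$ are pairwise non-adjacent, which gives a claw at $\jbm$ unless $x\sim\boldsymbol{j}'$. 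If $\pa'_E$ has size $1$ the same argument goes through. Finally, in $\PP'$ the vertex $\boldsymbol{j}'$ touches $\jbm$ and the first vertex of $\pa_O$ on $\widetilde\pa$, and touches the endpoint of $\pa'_E$. So the same recipe applied to $(\PP',\boldsymbol{j}')$ returns $(\PP,\jbm)$, and the correspondence is an involution. Equality of the residual graphs, and hence of the coefficients, follows.

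For (iii), components commute, so I factor out $\pathprodH{\mathcal R}$ and compare only the two or three components involved. In $\pathprodH{\PP'}\hc_{\boldsymbol{j}'}$ I will reverse nothing, since odd paths are reversal-invariant as in Section~\ref{subsec:path-products}. Instead I move $\hc_{\boldsymbol{j}'}$ leftward to sit between $\pathprodH{\pa'_E}$ and the $\hc_{\jbm}$ factor. I do the same for $\hc_{\jbm}$ in $\pathprodH{\PP}\hc_{\jbm}$, moving it between $\pathprodH{\pa'}$ and $\hc_{\boldsymbol{j}'}$. Both monomials then become $\pm$ the common word
\[
	\pathprodH{\pa'}\,\pathprodH{\pa'_E}\,(\hc_{\jbm}\hc_{\boldsymbol{j}'}\text{ or }\hc_{\boldsymbol{j}'}\hc_{\jbm})\,\pathprodH{\pa_O}.
\]
Their sum is then proportional to $\{\hc_{\jbm},\hc_{\boldsymbol{j}'}\}=0$, provided the accumulated signs agree. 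The main obstacle is this sign bookkeeping. The parities of $|\pa_O|$, $|\pa'_E|$ and $|\pa'|$ (all odd) must conspire to make the two prefactors equal. I will verify this by counting anticommuting neighbors passed in each move, which is exactly one per adjacent factor as in Lemma~\ref{lem:path-cancellation-pairings}.
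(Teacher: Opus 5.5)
Your argument follows the paper's proof step for step. It uses the same claw exclusions to get the $(2,1)$ profile, the same parity split $\pa=\pa'_E\md\boldsymbol{j}'\md\pa_O$, and the same inducedness check for $\widetilde\pa$. It also proves equality of closed neighborhoods via claws centered at $\jbm$ and $\boldsymbol{j}'$, and checks the involution in the same way. Your extra observations, that $\jbm$ cannot lie on a packed component and that $\pa'_E$ is non-adjacent to $\widetilde\pa$, are correct and make explicit points the paper leaves implicit.

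The one thing you do not actually prove is (iii). You state it conditionally (``provided the accumulated signs agree''), and since that identity is the whole content of (iii), the computation has to be done. It is short. Factor out $\pathprodH{\mathcal R}$, which commutes with $\hc_{\jbm}$ and $\hc_{\boldsymbol{j}'}$. Write $E=\pathprodH{\pa'_E}$, $O=\pathprodH{\pa_O}$, $M=\pathprodH{\pa'}$, $J=\hc_{\jbm}$, $J'=\hc_{\boldsymbol{j}'}$. List $\pa'_E$ ending at the predecessor of $\boldsymbol{j}'$, $\pa_O$ starting at its touched vertex $\ellbm$, and $\pa'$ ending at its touched endpoint, so that $\pathprodH{\pa}=EJ'O$ and $\pathprodH{\widetilde\pa}=MJO$. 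The relevant commutation facts are:
\begin{itemize}
\item $E$, $M$, $O$ mutually commute;
\item $J$ anticommutes with exactly one factor of $M$ and one of $O$, and commutes with $E$;
\item $J'$ anticommutes with exactly one factor of $E$ and one of $O$, and commutes with $M$.
\end{itemize}
Hence
\begin{align*}
EJ'O\,M\,J &= EJ'J\,O\,M = -EM\,J'J\,O, &
MJO\,E\,J' &= MJJ'\,O\,E = -EM\,JJ'\,O,
\end{align*}
and the sum is $-EM\{J',J\}O=0$ because $\jbm\sim\boldsymbol{j}'$.

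The agreement of signs is not a parity effect in the count. Each insertion crosses two anticommuting factors, and the single cross-component exchange ($M$ past $J$, respectively $E$ past $J'$) gives the same $-1$ in both terms, by the symmetric roles of $(\jbm,\pa')$ and $(\boldsymbol{j}',\pa'_E)$. The odd sizes are needed only so that $E$, $O$, $M$ are the orientation-independent component path products, which makes both factorizations legitimate; this is exactly the orientation choice in the paper's proof.

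A minor remark on (ii): in your claw at $\boldsymbol{j}'$, the leaves $x$, the predecessor of $\boldsymbol{j}'$, and $\ellbm$ are already pairwise non-adjacent. So the argument shows that no such $x$ exists, i.e.\ $\Gamma[\boldsymbol{j}']\subseteq\Gamma[X]$ with $X=V(\PP)\setminus\{\boldsymbol{j}'\}$, which is stronger than $x\sim\jbm$. This is how the paper phrases it, and your conclusion $\Gamma[\PP]=\Gamma[\PP']$ is unaffected.
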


\begin{figure}[t]
	\centering
	\begin{tikzpicture}[
			scale=0.92,
			line cap=round,
			vertex/.style={circle, draw, thick, fill=white, minimum size=\vertexsize, inner sep=0pt},
			ham/.style={circle, draw, thick, fill=black, minimum size=\vertexsize, inner sep=0pt},
			path/.style={thick},
			twohalo/.style={line width=10pt, draw=Goldenrod!70!white, opacity=0.24},
			onehalo/.style={line width=10pt, draw=RoyalBlue!55!white, opacity=0.22},
			onepath/.style={thick},
			onepathcont/.style={thick, dotted},
			piecebrace/.style={decorate,decoration={brace,amplitude=4pt}},
			piecebracemirror/.style={decorate,decoration={brace,mirror,amplitude=4pt}},
			paneltitle/.style={font=\scriptsize, align=center},
			touchingedge/.style={thick, gray, dashed}
		]
		\def\side{1.08}
		\def\rtthree{0.8660254}
		\def\osolid{0.92}
		\def\ofar{1.34}
		\def\evertex{1.08}
		\def\esolid{1.42}
		\def\efar{1.78}
		\def\obracesep{0.24}
		\def\ebracesep{0.30}
		\def\mbracesep{0.30}
		\def\labelsep{0.34}
		\begin{scope}[xshift=0cm,yshift=0cm]
			\node[paneltitle] at (0.70,2.82) {(a) Packing $\PP=\mathcal{R}\sqcup\pa\sqcup\pa'$\\Hamiltonian vertex $\boldsymbol{j}$};
			\coordinate (elltop) at ({\rtthree*\side},0);
			\coordinate (etop) at (0,{0.5*\side});
			\coordinate (jtop) at (0,{-0.5*\side});
			\coordinate (mtop) at ({-0.5*\evertex},{-0.5*\side-\rtthree*\evertex});
			\coordinate (otlo) at ({\rtthree*\side+\osolid},0);
			\coordinate (otfar) at ({\rtthree*\side+\ofar},0);
			\coordinate (eptop) at ({-0.5*\evertex},{0.5*\side+\rtthree*\evertex});
			\coordinate (epsolidtop) at ({-0.5*\esolid},{0.5*\side+\rtthree*\esolid});
			\coordinate (epfartop) at ({-0.5*\efar},{0.5*\side+\rtthree*\efar});
			\coordinate (msolidtop) at ({-0.5*\esolid},{-0.5*\side-\rtthree*\esolid});
			\coordinate (mfartop) at ({-0.5*\efar},{-0.5*\side-\rtthree*\efar});

			\draw[twohalo] (otfar) -- (elltop) -- (etop) -- (epfartop);
			\draw[onehalo] (mtop) -- (mfartop);
			\draw[path,dotted] (otfar) -- (otlo);
			\draw[path] (otlo) -- (elltop) -- (etop) -- (eptop) -- (epsolidtop);
			\draw[path,dotted] (epsolidtop) -- (epfartop);
			\draw[onepath] (mtop) -- (msolidtop);
			\draw[onepathcont] (msolidtop) -- (mfartop);
			\draw[touchingedge] (jtop) -- (elltop);
			\draw[touchingedge] (jtop) -- (etop);
			\draw[touchingedge] (jtop) -- (mtop);

			\node[ham] (jt) at (jtop) {};
			\node[vertex] (ot1) at (elltop) {};
			\node[vertex] (et) at (etop) {};
			\node[vertex] (ept) at (eptop) {};
			\node[vertex] (mt) at (mtop) {};

			\node[left=2pt] at (jt) {\scriptsize $\boldsymbol{j}$};
			\node[left=2pt] at (et) {\scriptsize $\boldsymbol{j}'$};
			\node[below=2pt] at (ot1) {\scriptsize $\ellbm$};
			\node[right=2pt] at (mt) {\scriptsize $\boldsymbol{m}$};
			\coordinate (obstarttop) at ($(elltop)!-0.12!(otfar)+(0,\obracesep)$);
			\coordinate (obendtop) at ($(elltop)!1.08!(otfar)+(0,\obracesep)$);
			\coordinate (ebstarttop) at ($(etop)!-0.12!(epfartop)+({\ebracesep*\rtthree},{0.5*\ebracesep})$);
			\coordinate (ebendtop) at ($(etop)!1.08!(epfartop)+({\ebracesep*\rtthree},{0.5*\ebracesep})$);
			\coordinate (mbstarttop) at ($(mtop)!-0.3!(mfartop)+({-\mbracesep*\rtthree},{0.5*\mbracesep})$);
			\coordinate (mbendtop) at ($(mtop)!1.3!(mfartop)+({-\mbracesep*\rtthree},{0.5*\mbracesep})$);
			\draw[piecebrace] (obstarttop) -- (obendtop);
			\node at ($(obstarttop)!0.55!(obendtop)+(0,\labelsep)$) {\scriptsize $\pa_O$};
			\draw[piecebracemirror] (ebstarttop) -- (ebendtop);
			\node at ($(ebstarttop)!0.55!(ebendtop)+({1.25*\labelsep*\rtthree},{0.45*\labelsep})$) {\scriptsize $\pa_E$};
			\draw[piecebracemirror] (mbstarttop) -- (mbendtop);
			\node at ($(mbstarttop)!0.55!(mbendtop)+({-1.25*\labelsep*\rtthree},{0.75*\labelsep})$) {\scriptsize $\pa'$};
			\node[anchor=west] at ($(etop)!0.45!(otfar)+(0.,1.10)$) {\large $\pa=\pa_E\md\pa_O$};
		\end{scope}

		\begin{scope}[xshift=5.45cm,yshift=0cm]
			\node[paneltitle] at (0.70,2.82) {(b) Packing $\PP'=\mathcal{R}\sqcup\widetilde{\pa}\sqcup\pa'_E$\\Hamiltonian vertex $\boldsymbol{j}'$};
			\coordinate (ellbot) at ({\rtthree*\side},0);
			\coordinate (ebot) at (0,{0.5*\side});
			\coordinate (jbot) at (0,{-0.5*\side});
			\coordinate (mbot) at ({-0.5*\evertex},{-0.5*\side-\rtthree*\evertex});
			\coordinate (oblo) at ({\rtthree*\side+\osolid},0);
			\coordinate (obfar) at ({\rtthree*\side+\ofar},0);
			\coordinate (epstart) at ({-0.5*\evertex},{0.5*\side+\rtthree*\evertex});
			\coordinate (epsolidbot) at ({-0.5*\esolid},{0.5*\side+\rtthree*\esolid});
			\coordinate (epfarbot) at ({-0.5*\efar},{0.5*\side+\rtthree*\efar});
			\coordinate (msolidbot) at ({-0.5*\esolid},{-0.5*\side-\rtthree*\esolid});
			\coordinate (mfarbot) at ({-0.5*\efar},{-0.5*\side-\rtthree*\efar});

			\draw[twohalo] (obfar) -- (ellbot) -- (jbot) -- (mbot) -- (mfarbot);
			\draw[onehalo] (epstart) -- (epfarbot);
			\draw[path,dotted] (obfar) -- (oblo);
			\draw[path] (oblo) -- (ellbot) -- (jbot) -- (mbot) -- (msolidbot);
			\draw[path,dotted] (msolidbot) -- (mfarbot);
			\draw[onepath] (epstart) -- (epsolidbot);
			\draw[onepathcont] (epsolidbot) -- (epfarbot);
			\draw[touchingedge] (ebot) -- (ellbot);
			\draw[touchingedge] (ebot) -- (jbot);
			\draw[touchingedge] (ebot) -- (epstart);

			\node[vertex] (ob1) at (ellbot) {};
			\node[ham] (eb) at (ebot) {};
			\node[vertex] (epb) at (epstart) {};
			\node[vertex] (jb) at (jbot) {};
			\node[vertex] (mb) at (mbot) {};

			\node[left=2pt] at (eb) {\scriptsize $\boldsymbol{j}'$};
			\node[left=2pt] at (jb) {\scriptsize $\boldsymbol{j}$};
			\node[below=2pt] at (ob1) {\scriptsize $\ellbm$};
			\node[right=2pt] at (mb) {\scriptsize $\boldsymbol{m}$};
			\coordinate (obstartbot) at ($(ellbot)!-0.12!(obfar)+(0,\obracesep)$);
			\coordinate (obendbot) at ($(ellbot)!1.08!(obfar)+(0,\obracesep)$);
			\coordinate (ebstartbot) at ($(ebot)!0.45!(epfarbot)+({\ebracesep*\rtthree},{0.5*\ebracesep})$);
			\coordinate (ebendbot) at ($(ebot)!1.15!(epfarbot)+({\ebracesep*\rtthree},{0.5*\ebracesep})$);
			\coordinate (mbstartbot) at ($(mbot)!-0.3!(mfarbot)+({-\mbracesep*\rtthree},{0.5*\mbracesep})$);
			\coordinate (mbendbot) at ($(mbot)!1.3!(mfarbot)+({-\mbracesep*\rtthree},{0.5*\mbracesep})$);
			\draw[piecebrace] (obstartbot) -- (obendbot);
			\node at ($(obstartbot)!0.55!(obendbot)+(0,\labelsep)$) {\scriptsize $\pa_O$};
			\draw[piecebracemirror] (ebstartbot) -- (ebendbot);
			\node at ($(ebstartbot)!0.55!(ebendbot)+({1.25*\labelsep*\rtthree},{0.45*\labelsep})$) {\scriptsize $\pa'_E$};
			\draw[piecebracemirror] (mbstartbot) -- (mbendbot);
			\node at ($(mbstartbot)!0.55!(mbendbot)+({-1.25*\labelsep*\rtthree},{0.75*\labelsep})$) {\scriptsize $\pa'$};
			\node[anchor=west] at ($(oblo)+(-1.1,-1.2)$) {\large $\pa'\md\boldsymbol{j}\md\pa_O=\widetilde{\pa}$};
		\end{scope}
	\end{tikzpicture}
	\caption{The paired $(2,1)$ mixed-touch configurations in Lemma~\ref{lem:mixed-touch-involution}.
		The filled Hamiltonian vertex is $\boldsymbol{j}$ in panel (a) and $\boldsymbol{j}'$ in panel (b), while the braces label the path pieces before and after the correspondence.
		Dotted segments omit path continuations, dashed edges are touching edges, and the yellow and blue halos mark the doubly and singly touched components, respectively.}
	\label{fig:mixed-touch-cancellation}
\end{figure}

\begin{proof}
	Since distinct packed components are mutually non-adjacent, a vertex touching three components would have three pairwise non-adjacent neighbors, one in each component, and would center a claw; hence at most two components are touched.
	On one component, two non-consecutive touched vertices are non-adjacent and, with any touched vertex on the other component, form a claw at $\jbm$; hence each touched component contributes at most two touched vertices, necessarily consecutive.
	A nonzero commutator requires an odd total number of touched vertices, leaving only the $(2,1)$ profile displayed in Fig.~\ref{fig:mixed-touch-cancellation}.
	If the singly touched vertex were not an endpoint of its component, its two path-neighbors and $\jbm$ would be three pairwise non-adjacent neighbors of that vertex, a claw.

	We use the notation of Fig.~\ref{fig:mixed-touch-cancellation}(a), so that the doubly touched component is $\pa=\pa_E\md\pa_O$ with $\pa_E=\pa'_E\md\boldsymbol{j}'$ and the singly touched component is $\pa'$.
	The two vertices of $\pa$ adjacent to $\jbm$ are $\boldsymbol{j}'$ and the endpoint $\ellbm$ of $\pa_O$ next to $\boldsymbol{j}'$, and $\boldsymbol{m}$ is the endpoint of $\pa'$ adjacent to $\jbm$.
	The split $\pa=\pa_E\md\pa_O$ is chosen so that $|\pa_E|$ is even; hence $\pa_O$, $\pa'_E$, and $\pa'$ all have odd size.
	The move replaces the summand with packing $\PP=\mathcal R\sqcup\pa\sqcup\pa'$ and Hamiltonian vertex $\jbm$ by the summand with packing $\PP'=\mathcal R\sqcup\widetilde\pa\sqcup\pa'_E$, $\widetilde\pa=\pa'\md\jbm\md\pa_O$, and Hamiltonian vertex $\boldsymbol{j}'$.
	Then $\PP'$ is again an odd path packing: $|\widetilde\pa|=|\pa'|+1+|\pa_O|$ is odd because $|\pa'|$ and $|\pa_O|$ are, and $|\pa'_E|$ is odd.
	The path $\pa'_E$ is induced because it is a subpath of $\pa$.
	The path $\widetilde\pa$ is induced because the original packing forbids edges between $\pa$ and $\pa'$, while the $(2,1)$ profile makes $\boldsymbol{m}$ and $\ellbm$ the only vertices of $\widetilde\pa$ adjacent to $\jbm$.

	Since neither $\jbm$ nor $\boldsymbol{j}'$ touches $\mathcal{R}$, the move leaves $\mathcal{R}$ unchanged and only exchanges $\jbm$ and $\boldsymbol{j}'$ between the Hamiltonian insertion and the packed support.
	Writing
	\begin{equation}
		X\equiv V(\mathcal{R})\cup V(\pa_O)\cup V(\pa'_E)\cup V(\pa')
	\end{equation}
	for the common packed support, we have
	\begin{equation}
		V(\PP)=X\sqcup\{\boldsymbol{j}'\},
		\qquad
		V(\PP')=X\sqcup\{\jbm\},
	\end{equation}
	and hence $|\PP|=|\PP'|$.
	Claw-freeness moreover gives
	\begin{equation}
		\Gamma[\jbm]\subseteq\Gamma[X],
		\qquad
		\Gamma[\boldsymbol{j}']\subseteq\Gamma[X].
	\end{equation}
	Indeed, $\jbm\in\Gamma[X]$ because it is adjacent to $\boldsymbol{m}\in V(\pa')\subseteq X$.
	If a neighbor $w$ of $\jbm$ satisfied $w\notin\Gamma[X]$, then $w$ would be adjacent to neither $\boldsymbol{m}$ nor $\ellbm$, both of which lie in $X$, while $\boldsymbol{m}\not\sim\ellbm$ because they lie in distinct packed components of $\PP$.
	Then $w$, $\boldsymbol{m}$, and $\ellbm$ would be three pairwise non-adjacent neighbors of $\jbm$, a claw.
	Symmetrically, $\boldsymbol{j}'\in\Gamma[X]$ via $\ellbm$, and any other neighbor $w\notin\Gamma[X]$ would form a claw at $\boldsymbol{j}'$ with leaves $w$, $\ellbm$, and the endpoint of $\pa'_E$ adjacent to $\boldsymbol{j}'$, the latter pair being non-adjacent because $\pa=\pa'_E\md\boldsymbol{j}'\md\pa_O$ is induced.
	Combined with the two decompositions of the packed supports, these inclusions yield
	\begin{equation}
		\Gamma[\PP]=\Gamma[X]=\Gamma[\PP'],
	\end{equation}
	and in particular $\complementset{\PP}=\complementset{\PP'}$.
	The paired summands therefore lie in the same $n$ layer of Eq.~\eqref{eq:path-packing-charge},
	\begin{equation}
		\indepcoeff{\complementset{\PP}}{n}
		=
		\indepcoeff{\complementset{\PP'}}{n}.
	\end{equation}
	Applying the construction again returns the original summand, so it is an involution.

	Choose the orientations of the odd path products so that
	\begin{equation}
		\pathprodH{\pa}=\pathprodH{\pa_O}\hc_{\boldsymbol{j}'}\pathprodH{\pa'_E},
		\qquad
		\pathprodH{\widetilde\pa}=\pathprodH{\pa_O}\hc_{\jbm}\pathprodH{\pa'}.
	\end{equation}
	Since $\jbm$ and $\boldsymbol{j}'$ do not touch any component of $\mathcal R$, the left-hand side of the identity in \textup{(iii)} is $\pathprodH{\mathcal R}$ times
	\begin{equation}
		\pathprodH{\pa_O}\hc_{\boldsymbol{j}'}\pathprodH{\pa'_E}\pathprodH{\pa'}\hc_{\jbm}
		+
		\pathprodH{\pa_O}\hc_{\jbm}\pathprodH{\pa'}\pathprodH{\pa'_E}\hc_{\boldsymbol{j}'}.
	\end{equation}
	The factors from $\mathcal R$ and the two odd pieces $\pa'$ and $\pa'_E$ commute past each other.
	The vertex $\jbm$ touches exactly the endpoint of $\pa'$ and commutes with $\pa'_E$, while $\boldsymbol{j}'$ touches exactly the endpoint of $\pa'_E$ and commutes with $\pa'$.
	The expression therefore reduces to
	\begin{equation}
		-\pathprodH{\pa_O}(\hc_{\boldsymbol{j}'}\hc_{\jbm}+\hc_{\jbm}\hc_{\boldsymbol{j}'})\pathprodH{\pa'}\pathprodH{\pa'_E}=0,
	\end{equation}
	because $\jbm\sim\boldsymbol{j}'$.
\end{proof}

\begin{thm}
	\label{thm:generalized-conserved-charges}
	Let $G$ be a claw-free graph and let $2K$ denote the size of the smallest even bubble wand in $G$ (set $K=\infty$ if none exists).
	Then the charges $\packingcharge{G}{m}{c}$ of Eq.~\eqref{eq:path-packing-charge} satisfy
	\begin{align}
		[H_G,\packingcharge{G}{m}{c}]=0
	\end{align}
	for all integers $c\ge 1$ and $m\ge c$ with $m\equiv c\pmod 2$ and $m-c<2K$.
\end{thm}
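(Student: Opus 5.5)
We follow the proof of Theorem~\ref{thm:local-charge-conservation-pbc}, now for packings. Expand $\frac12[H_G,\packingcharge{G}{m}{c}]$ termwise over the summands $\indepcoeff{\complementset{\PP}}{n}\pathprodH{\PP}$. For a fixed packing $\PP$ and Hamiltonian vertex $\jbm$, the commutator $[\hc_{\jbm},\pathprodH{\PP}]$ is nonzero only when $\jbm$ has an odd total number of touched vertices. We sort these terms by how many components $\jbm$ touches.

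\emph{Vertex touching several components.} By Lemma~\ref{lem:mixed-touch-involution}, only the $(2,1)$ profile survives. The involution $(\PP,\jbm)\mapsto(\PP',\boldsymbol{j}')$ keeps the $n$ layer, the component number $c$ and the residual graph fixed, and the two Clifford monomials cancel. So all mixed-touch terms vanish.

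\emph{Vertex touching exactly one component $\pa$.} Because the spectator packing $\mathcal R$ commutes with $\hc_{\jbm}$, the term equals $\pathprodH{\mathcal R}$ times the single-path commutator of Eq.~\eqref{eq:local-charge-single-path-comm} applied to $\pa$. Lemma~\ref{lem:odd-neighbor-classification} uses only the local structure, so it still classifies $\jbm$.
\begin{itemize}[nosep]
\item \textbf{Three-neighbor terms.} We pair them using Lemma~\ref{lem:path-cancellation-pairings}(i) inside the component. The new vertex $\jbm$ satisfies $\Gamma[\jbm]\subseteq\Gamma[\pa]$ by the claw argument, and $\jbm$ does not touch $\mathcal R$. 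Hence the replaced packing is still a valid packing with the same $\Gamma[\PP]$.
\item \textbf{Endpoint-loop terms with odd cycle.} These cancel the same way via Lemma~\ref{lem:path-cancellation-pairings}(ii).
\item \textbf{Endpoint-loop terms with even cycle.} Lemma~\ref{lem:endpoint-loop-ebw} gives an even bubble wand of size at most $|\pa|-1$.
\end{itemize}

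The main obstacle is to bound this size by $m-c<2K$. In a packing from layer $n$ with $c$ components, each component has odd size and the total is $m-2n$. So $|\pa|\le m-2n-(c-1)$, and the wand size is at most $m-c-2n\le m-c$. This is excluded because $m-c<2K$. I would also check that a vertex $\jbm$ lying on a packed component and touching another component falls under Lemma~\ref{lem:mixed-touch-involution} according to the touch terminology. Such an on-path vertex that is not an endpoint produces no odd count on a foreign component, by the packing condition.

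\emph{Remaining terms: extension and shortening.} What is left are packings in which exactly one component $\pa$ has been extended or shortened at one endpoint by a single vertex. Such a term is $\pathprodH{\mathcal R}$ times an even path product $\pathprodH{\pa_e}$, where $\pa_e$ is non-adjacent to $\mathcal R$. There is one point to check: an extension vertex $\jbm\in\mathcal K^{\pm}_{\pa}$ does not touch $\mathcal R$. This holds because otherwise $\jbm$ would be a mixed touch.

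Fix such a target $\mathcal R\sqcup\pa_e$ with $|\pa_e|=2\mu$, and set $s$ so that the layers match. Its coefficient is exactly the four-term expression of Eq.~\eqref{eq:local-charge-even-coeff}, with every residual graph intersected with $G\setminus\Gamma[\mathcal R]$. This residual graph contains the endpoint cliques $\mathcal K^{\pm}_{\pa_e}$ minus $\Gamma[\mathcal R]$. The extension cliques appearing in the shortening contributions are exactly those vertices of $\mathcal K^{\pm}$ outside $\Gamma[\mathcal R]$, since we need valid packings. The recursion~\eqref{eq:independence-polynomial-recursion} applies at the clique $\mathcal K^{\pm}_{\pa_e}\setminus\Gamma[\mathcal R]$ inside $\complementset{\mathcal R\sqcup\pa_e}$. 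The same subtraction as in Theorem~\ref{thm:local-charge-conservation-pbc} then gives a vanishing coefficient. The boundary layers ($s=0$, and singleton components extending to two-vertex paths with the sign $\pathprodH{\ell_2\md\ell_1}=-\pathprodH{\ell_1\md\ell_2}$) are handled exactly as there.
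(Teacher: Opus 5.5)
Your proposal is correct and follows the paper's proof: it uses the same split into zero-, one-, and mixed-touch summands, with Lemma~\ref{lem:mixed-touch-involution} disposing of the $(2,1)$ terms. The only difference is packaging. For the one-component-touch terms, the paper fixes the spectator $\mathcal R$ and recognizes their sum as $\pathprodH{\mathcal R}\,[H_{G_{\mathcal R}},\localcharge{G_{\mathcal R}}{2k_{\mathcal R}+1}]$ with $G_{\mathcal R}=G\setminus\Gamma[\mathcal R]$, then cites Theorem~\ref{thm:local-charge-conservation-pbc} using $K_{\mathcal R}\ge K$ and $k_{\mathcal R}\le (m-c)/2<K$; you instead rerun that theorem's non-path cancellation, wand bound and endpoint-clique recursion by hand inside $G\setminus\Gamma[\mathcal R]$.
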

\begin{proof}
	Fix $c$ and $m$ with $m-c<2K$.
	Using Eq.~\eqref{eq:path-packing-charge} and $H_G=\sum_{\jbm\in V(G)}\hc_{\jbm}$, we obtain
	\begin{align}
		\label{eq:generalized-charge-commutator-expansion}
		\qty[H_G,\packingcharge{G}{m}{c}]
		=
		\sum_{n=0}^{(m-c)/2}
		\sum_{\PP\in\oddpathsetlengthcomp{G}{m-2n}{c}}
		\sum_{\jbm\in V(G)}
		\indepcoeff{\complementset{\PP}}{n}\,
		\qty[\hc_{\jbm},\pathprodH{\PP}].
	\end{align}
	We classify a fixed $(n,\PP,\jbm)$ term according to which packed components are touched by $\jbm$, in the sense fixed before Lemma~\ref{lem:mixed-touch-involution}.
	When $\jbm$ touches exactly one component, we call the summand a \emph{one-component-touch summand}; the unique touched component is the \emph{active} component, and the remaining $c-1$ components form the spectator packing.
	Figure~\ref{fig:one-component-touch-schematic} schematizes such a summand.

	\begin{figure}[t]
		\centering
		\begin{tikzpicture}[
				scale=1.0,
				line cap=round,
				line join=round,
				vertex/.style={circle, draw, thick, fill=white, minimum size=\vertexsize, inner sep=0pt},
				spectatorv/.style={circle, draw, thick, fill=white, minimum size=\vertexsize, inner sep=0pt},
				activev/.style={circle, draw, thick, fill=white, minimum size=\vertexsize, inner sep=0pt},
				ham/.style={circle, draw, thick, fill=black, minimum size=\vertexsize, inner sep=0pt},
				spectatorhood/.style={line width=20pt, draw=gray!30, opacity=0.62},
				activehood/.style={line width=22pt, draw=Red!18, opacity=0.56},
				resultedgehood/.style={line width=5pt, draw=RoyalBlue},
				resultvertexhood/.style={fill=RoyalBlue},
				spectatoredge/.style={line width=1.2pt, draw=black},
				activeedge/.style={line width=1.2pt, draw=black},
				touchingedge/.style={thick, gray, dashed},
				paneltitle/.style={font=\scriptsize, align=left}
			]

			\begin{scope}[yshift=0cm]
				\node[paneltitle, anchor=west] at (-0.15,1.28) {(a) Endpoint shortening, $c=4$};
				\coordinate (s1a) at (0.00,0.18);
				\coordinate (s1b) at (0.66,-0.22);
				\coordinate (s1c) at (1.25,0.20);
				\coordinate (s2a) at (2.05,-0.25);
				\coordinate (s2b) at (2.62,0.34);
				\coordinate (s2c) at (3.15,-0.02);
				\coordinate (s2d) at (3.72,0.42);
				\coordinate (s2e) at (4.25,-0.14);
				\coordinate (s3) at (5.35,0.04);
				\coordinate (a1) at (6.95,-0.12);
				\coordinate (a2) at (7.48,0.32);
				\coordinate (a3) at (8.05,0.04);
				\coordinate (a4) at (8.62,0.40);
				\coordinate (h) at (9.18,0.06);
				\draw[spectatorhood] (s1a) -- (s1b) -- (s1c);
				\draw[spectatorhood] (s2a) -- (s2b) -- (s2c) -- (s2d) -- (s2e);
				\fill[gray!30, opacity=0.62] (s3) circle [radius=0.33];
				\draw[activehood] (a1) -- (a2) -- (a3) -- (a4) -- (h);
				\begin{scope}[transparency group, opacity=0.46]
					\draw[resultedgehood] (a1) -- (a2) -- (a3) -- (a4);
					\foreach \v in {a1,a2,a3,a4} {
							\fill[resultvertexhood] (\v) circle [radius=0.19];
						}
				\end{scope}
				\draw[spectatoredge] (s1a) -- (s1b) -- (s1c);
				\draw[spectatoredge] (s2a) -- (s2b) -- (s2c) -- (s2d) -- (s2e);
				\draw[activeedge] (a1) -- (a2) -- (a3) -- (a4) -- (h);
				\node[spectatorv] at (s1a) {};
				\node[spectatorv] at (s1b) {};
				\node[spectatorv] at (s1c) {};
				\node[spectatorv] at (s2a) {};
				\node[spectatorv] at (s2b) {};
				\node[spectatorv] at (s2c) {};
				\node[spectatorv] at (s2d) {};
				\node[spectatorv] at (s2e) {};
				\node[spectatorv] at (s3) {};
				\node[activev] at (a1) {};
				\node[activev] at (a2) {};
				\node[activev] at (a3) {};
				\node[activev] at (a4) {};
				\node[ham] at (h) {};
			\end{scope}

			\begin{scope}[yshift=-2.35cm]
				\node[paneltitle, anchor=west] at (-0.15,1.28) {(b) Endpoint extension, $c=4$};
				\coordinate (s1a) at (0.00,0.18);
				\coordinate (s1b) at (0.66,-0.22);
				\coordinate (s1c) at (1.25,0.20);
				\coordinate (s2a) at (2.05,-0.25);
				\coordinate (s2b) at (2.62,0.34);
				\coordinate (s2c) at (3.15,-0.02);
				\coordinate (s2d) at (3.72,0.42);
				\coordinate (s2e) at (4.25,-0.14);
				\coordinate (s3) at (5.35,0.04);
				\coordinate (a1) at (6.95,0.28);
				\coordinate (a2) at (7.55,-0.14);
				\coordinate (a3) at (8.18,0.18);
				\coordinate (h) at (8.88,0.18);
				\draw[spectatorhood] (s1a) -- (s1b) -- (s1c);
				\draw[spectatorhood] (s2a) -- (s2b) -- (s2c) -- (s2d) -- (s2e);
				\fill[gray!30, opacity=0.62] (s3) circle [radius=0.33];
				\draw[activehood] (a1) -- (a2) -- (a3) -- (h);
				\begin{scope}[transparency group, opacity=0.46]
					\draw[resultedgehood] (a1) -- (a2) -- (a3) -- (h);
					\foreach \v in {a1,a2,a3,h} {
							\fill[resultvertexhood] (\v) circle [radius=0.19];
						}
				\end{scope}
				\draw[spectatoredge] (s1a) -- (s1b) -- (s1c);
				\draw[spectatoredge] (s2a) -- (s2b) -- (s2c) -- (s2d) -- (s2e);
				\draw[activeedge] (a1) -- (a2) -- (a3);
				\draw[touchingedge] (a3) -- (h);
				\node[spectatorv] at (s1a) {};
				\node[spectatorv] at (s1b) {};
				\node[spectatorv] at (s1c) {};
				\node[spectatorv] at (s2a) {};
				\node[spectatorv] at (s2b) {};
				\node[spectatorv] at (s2c) {};
				\node[spectatorv] at (s2d) {};
				\node[spectatorv] at (s2e) {};
				\node[spectatorv] at (s3) {};
				\node[activev] at (a1) {};
				\node[activev] at (a2) {};
				\node[activev] at (a3) {};
				\node[ham] at (h) {};
			\end{scope}
		\end{tikzpicture}
		\caption{One-component-touch summands with three spectators ($c=4$); only packing vertices and the filled Hamiltonian vertex $\jbm$ are shown.
			The gray and red halos indicate the closed neighborhoods of the spectator and active components, respectively; they may overlap provided that no vertex of one packed component lies in the halo of another.
			Panels (a) and (b) show endpoint shortening and extension.
			Blue marks the active path after the move, and the dashed edge in panel (b) marks the touching edge for the extension.}
		\label{fig:one-component-touch-schematic}
	\end{figure}

	A summand in which $\jbm$ touches no packed component contributes zero, since $\hc_{\jbm}$ then commutes with $\pathprodH{\PP}$; this includes the case where $\jbm$ is itself a singleton packed component.
	For a one-component-touch summand, write $\PP=\mathcal{R}\sqcup\pa$ with active component $\pa$ and spectator packing $\mathcal{R}$, and fix $\mathcal{R}$ first.
	Set $G_{\mathcal{R}}=G\setminus\Gamma[\mathcal{R}]$, an induced subgraph of $G$ and hence again claw-free.
	Let $2K_{\mathcal{R}}$ denote the size of the smallest even bubble wand in $G_{\mathcal{R}}$ (set $K_{\mathcal{R}}=\infty$ if none exists).
	Every even bubble wand in $G_{\mathcal{R}}$ is also an induced even bubble wand in $G$, so $K_{\mathcal{R}}\ge K$.

	The active components compatible with $\mathcal{R}$ are exactly the odd induced paths of $G_{\mathcal{R}}$, since the packing condition is precisely that $\pa$ avoids $\Gamma[\mathcal{R}]$; for such a $\pa$,
	\begin{align*}
		G\setminus\Gamma_G[\mathcal{R}\sqcup\pa]
		=
		G_{\mathcal{R}}\setminus\Gamma_{G_{\mathcal{R}}}[\pa]\,.
	\end{align*}
	Thus, after fixing $\mathcal{R}$, the residual graph entering the coefficient in Eq.~\eqref{eq:path-packing-charge} is exactly the residual graph used in the local conserved-charge formula~\eqref{eq:local-charge} for the graph $G_{\mathcal{R}}$.
	Moreover $m-|\mathcal{R}|$ is a positive odd integer; write $m-|\mathcal{R}|=2k_{\mathcal{R}}+1$.
	Indeed, in the $n$th layer of Eq.~\eqref{eq:path-packing-charge}, the active component has size $|\pa|=2k_{\mathcal{R}}+1-2n$, which is precisely the path size appearing in the $n$th layer of $\localcharge{G_{\mathcal{R}}}{2k_{\mathcal{R}}+1}$ in Eq.~\eqref{eq:local-charge}.
	Hence the scalar coefficient $\indepcoeff{\complementset{\mathcal{R}\sqcup\pa}}{n}$ is the coefficient of $\pathprodH{\pa}$ in $\localcharge{G_{\mathcal{R}}}{2k_{\mathcal{R}}+1}$.
	The inequality $|\mathcal{R}|\ge c-1$ gives $k_{\mathcal{R}}\le (m-c)/2<K\le K_{\mathcal{R}}$.

	In any one-component-touch summand $\jbm$ does not touch the spectator and so lies in $V(G_{\mathcal{R}})$; extending the inner sum to all $\jbm\in V(G_{\mathcal{R}})$ adds only zero summands, namely those indexed by vertices that do not touch $\pa$ in the above sense, including the term with $\jbm$ equal to the unique vertex of $\pa$ when $\pa$ is a singleton.
	The two local moves on the active path---endpoint shortening, with $\jbm$ an endpoint of $\pa$, and endpoint extension, with $\jbm$ outside $\pa$---are the two path-product terms in Eq.~\eqref{eq:local-charge-single-path-comm}, shown in the two panels of Fig.~\ref{fig:one-component-touch-schematic}.
	Let $\mathcal{C}_{\mathcal{R}}^{(1)}$ denote the sum of the one-component-touch contribution with this fixed spectator $\mathcal{R}$, after adding these zero summands.
	Then Eq.~\eqref{eq:local-charge} gives
	\begin{align*}
		\mathcal{C}_{\mathcal{R}}^{(1)}
		 & =
		\sum_{n=0}^{k_{\mathcal{R}}}
		\sum_{\pa\in\pathsetlength{G_{\mathcal{R}}}{2k_{\mathcal{R}}+1-2n}}
		\indepcoeff{G_{\mathcal{R}}\setminus\Gamma_{G_{\mathcal{R}}}[\pa]}{n}
		\sum_{\jbm\in V(G_{\mathcal{R}})}
		\qty[\hc_{\jbm},\pathprodH{\mathcal{R}}\pathprodH{\pa}].
	\end{align*}
	For every $\jbm\in V(G_{\mathcal{R}})$ we have $\qty[\hc_{\jbm},\pathprodH{\mathcal{R}}]=0$, because $\jbm$ lies outside the closed neighborhood of the spectator packing.
	Therefore $\qty[\hc_{\jbm},\pathprodH{\mathcal{R}}\pathprodH{\pa}]=\pathprodH{\mathcal{R}}\qty[\hc_{\jbm},\pathprodH{\pa}]$, and hence
	\begin{align*}
		\mathcal{C}_{\mathcal{R}}^{(1)}
		 & =
		\pathprodH{\mathcal{R}}
		\sum_{n=0}^{k_{\mathcal{R}}}
		\sum_{\pa\in\pathsetlength{G_{\mathcal{R}}}{2k_{\mathcal{R}}+1-2n}}
		\indepcoeff{G_{\mathcal{R}}\setminus\Gamma_{G_{\mathcal{R}}}[\pa]}{n}
		\sum_{\jbm\in V(G_{\mathcal{R}})}
		\qty[\hc_{\jbm},\pathprodH{\pa}]
		\\
		 & =
		\pathprodH{\mathcal{R}}\,
		\qty[H_{G_{\mathcal{R}}},\localcharge{G_{\mathcal{R}}}{2k_{\mathcal{R}}+1}],
	\end{align*}
	which vanishes for every $\mathcal{R}$ by Theorem~\ref{thm:local-charge-conservation-pbc} applied inside $G_{\mathcal{R}}$.

	Mixed-touch summands cancel by Lemma~\ref{lem:mixed-touch-involution}.
\end{proof}

Inside the abstract graph Clifford algebra, and in any faithful finite-dimensional representation of it, the generalized charges $\packingcharge{G}{m}{c}$ with $\oddpathsetlengthcomp{G}{m}{c}\ne\emptyset$ are linearly independent.
Indeed, the $n=0$ layer of Eq.~\eqref{eq:path-packing-charge} is supported on packings of size $m$ with exactly $c$ connected components, while every $n>0$ term has packed support size at most $m-2$.
Since the ordered monomials form a basis, the $n=0$ contributions with different $c$ cannot cancel at the largest value of $m$ appearing in a linear relation; descending in $m$ proves linear independence.

These linear independence statements should not be confused with algebraic independence.
In fact, the products close into a larger algebra: products of the nonlocal independent-set charges in Eq.~\eqref{eq:nonlocal-charge-independent-set}, for example, can be expanded in terms of the generalized charges $\packingcharge{G}{m}{c}$ together with the generalized cycle symmetries of Ref.~\cite{unified-graph-th}.
We do not develop this algebra here.

\subsection{Oriented even path operators}
\label{subsec:even-path-operators}

Equation~\eqref{eq:path-product} gives $\pathprodH{\pa^{-1}}=-\pathprodH{\pa}$ for every even path $\pa$, so defining an even-path operator requires a choice of orientation for each even induced path.

For each ordered adjacent pair $i\sim j$, set
\begin{align}
	\eta_{ij}
	=
	\begin{cases}
		+1, & i\to j, \\
		-1, & j\to i,
	\end{cases}
	\qquad
	\eta_{ji}=-\eta_{ij}.
	\label{eq:edge-orientation-sign}
\end{align}
We call the orientation an \emph{induced-path orientation} if every induced three-vertex path $\ibm\md\jbm\md\ellbm$ obeys
\begin{align}
	\eta_{\ibm,\jbm}\eta_{\jbm,\ellbm}=1.
	\label{eq:induced-path-orientation-sign}
\end{align}
Equivalently, every induced three-vertex path must be oriented as $\ibm\to\jbm\to\ellbm$ or as $\ibm\leftarrow\jbm\leftarrow\ellbm$; applying this to consecutive triples forces all edges of an induced path to point consistently along it.

Once an induced-path orientation is fixed, every unoriented induced path $\pa$ has a preferred directed ordering, which we denote by $\omega(\pa)$.
Concretely, given any listing $\pa=\ell_1\md\cdots\md\ell_m$ with $m\ge 2$, we have $\omega(\pa)=\pa$ if $\eta_{\ell_1,\ell_2}=+1$ and $\omega(\pa)=\pa^{-1}$ if $\eta_{\ell_1,\ell_2}=-1$.
For an even path this convention fixes the otherwise ambiguous sign by
\begin{align}
	\pathprodH{\omega(\pa)}
	=
	\eta_{\ell_1,\ell_2}\,\pathprodH{\pa},
	\qquad |\pa| \text{ even}.
	\label{eq:even-path-orientation-sign}
\end{align}
Induced-path orientations need not exist for general claw-free graphs; see Remark~\ref{rmk:induced-path-orientation-existence}.

\begin{define}[Oriented even path operator]
	\label{def:even-path-operator}
	Fix an induced-path orientation, with the directed ordering $\omega(\pa)$ fixed above.
	For each integer $k\ge 1$, the oriented even path operator is
	\begin{align}
		\mathcal{E}_{G,\omega}^{(2k)}
		\equiv
		\sum_{n=0}^{k-1}
		\sum_{\pa\in\pathsetlength{G}{2k-2n}}
		\indepcoeff{\complementset{\pa}}{n}\,
		\pathprodH{\omega(\pa)}.
		\label{eq:even-path-charge}
	\end{align}
	The inner sum runs over one representative of each reversal class of induced paths; the value is independent of the choice because $\omega(\pa)=\omega(\pa^{-1})$.
\end{define}

This is the even-parity, oriented counterpart of the odd local conserved charge $\localcharge{G}{2k+1}$ in Eq.~\eqref{eq:local-charge}.

The range condition below is set by the smallest induced even hole, and is therefore stronger than the even-bubble-wand condition used for the odd local conserved charges.
The reason is that an endpoint loop can attach at the first edge of a path of even size, leaving a single vertex outside the induced cycle; such a configuration is not an even bubble wand, whose path part has at least two vertices.

\begin{thm}
	\label{thm:even-path-commutator}
	Let $G$ be a claw-free graph that admits an induced-path orientation $\omega$, and let $2K$ denote the size of the smallest induced even hole in $G$ (set $K=\infty$ if none exists).
	Then for every integer $k$ with $1\le k<K$,
	\begin{align}
		\frac{1}{2}\qty[H_G,\mathcal{E}_{G,\omega}^{(2k)}]
		=
		\sum_{\jbm\in V(G)}
		B_{\jbm,\omega}^{(k-1)}\,\hc_{\jbm},
		\label{eq:even-path-commutator}
	\end{align}
	where, for every integer $s\ge 0$,
	\begin{align}
		B_{\jbm,\omega}^{(s)}
		\equiv
		\sum_{\ibm\to\jbm}
		b_{\ibm}^2\,\indepcoeff{\complementset{\ibm\md\jbm}}{s}
		-
		\sum_{\jbm\to\ibm}
		b_{\ibm}^2\,\indepcoeff{\complementset{\jbm\md\ibm}}{s}.
		\label{eq:even-path-singleton-coeff}
	\end{align}
	Here $\complementset{\ibm\md\jbm}=\complementset{\jbm\md\ibm}$; the listing marks only the edge orientation.
\end{thm}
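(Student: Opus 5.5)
The plan is to mirror the proof of Theorem~\ref{thm:local-charge-conservation-pbc}, but with even paths and oriented signs. We expand $\frac12[H_G,\mathcal{E}^{(2k)}_{G,\omega}]$ termwise, using the two-endpoint single-path commutator~\eqref{eq:local-charge-single-path-comm} for each oriented even path $\omega(\pa)=\ell_1\md\cdots\md\ell_{2m}$, $1\le m\le k$. This produces four kinds of terms: shortening at either end, which gives odd paths of size $2m-1$; extension at either end, which gives odd paths of size $2m+1$; and the non-path terms $\NN_{\pa}$. Because the path has even size, the commutator of a product of an even number of generators with a vertex $\jbm$ is nonzero exactly when $|T_\pa(\jbm)|$ is odd. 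So Lemma~\ref{lem:odd-neighbor-classification} still classifies the non-path terms, and they are the three-neighbor family $\pathsvd_{\pa}$ and the endpoint-loop families $\pathloop^{\pm}_{\pa}$.

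\textbf{Non-path cancellation.} The three-neighbor terms are paired by Lemma~\ref{lem:path-cancellation-pairings}(i). The replacement $\ell_k\leftrightarrow\jbm$ keeps the size and $\Gamma[\pa]$, so the scalar weight $\indepcoeff{\complementset{\pa}}{n}$ is preserved. I must also check that the orientation sign is preserved. The induced-path orientation forces $\ell_{k-1}\to\jbm\to\ell_{k+1}$ along $\pa'$ consistently with $\pa$, so $\omega(\pa')$ is read in the same direction and the signs match; the pair then cancels. For a loop term at the terminal end with cycle $(\ell_q,\ldots,\ell_{2m},\jbm)$, Lemma~\ref{lem:path-cancellation-pairings}(ii) applies when the cycle is odd. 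The partner path reverses the segment $\ell_{q+1}\cdots\ell_{2m}$, and the orientation of $\pa'$ is again fixed by its prefix $\ell_1\to\cdots\to\ell_{q-1}\to\jbm$. If $q-1\ge1$ this prefix agrees with that of $\pa$, so the signs match. For even paths the loop may attach at $q=2$, with prefix just $\ell_1$; this case is handled the same way via the edge $\ell_1\to\jbm$. If the cycle is even, it is an induced even hole of size at most $|\pa|\le 2k<2K$, which is excluded. This is why the hypothesis uses the smallest even hole rather than the smallest even bubble wand. I expect the main obstacle to be the sign bookkeeping in these pairings, namely showing that $\omega(\pa')$ and $\omega(\pa)$ induce the same relative sign so that the Clifford identity $\hc_{\jbm}\pathprodH{\pa}+\hc_{\ell_k}\pathprodH{\pa'}=0$ really yields cancellation rather than doubling. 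I would first verify this using the fact that consecutive induced triples $\ell_{k-1}\md\jbm\md\ell_{k+1}$, and $\ell_{q-1}\md\jbm\md\ell_{2m}$, are induced three-vertex paths, so \eqref{eq:induced-path-orientation-sign} applies to them.

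\textbf{Odd-path coefficients.} Fix an odd induced path $\pa$ with $3\le|\pa|=2m+1\le 2k-1$ and set $s=k-m-1$. Its coefficient collects two extension contributions, from $\pa_{[;-1]}$ and $\pa_{[2;]}$, and two shortening contributions, from $\pa\md\jbm$ and $\jbm\md\pa$. All four carry the orientation sign of the even path they come from. Since the orientation is consistent along $\pa$, the extension and shortening terms at the same end arrive with matching relative signs, and the total has the form of $D_{\pa}$ in~\eqref{eq:local-charge-even-coeff}. The same recursion~\eqref{eq:independence-polynomial-recursion} at $\mathcal K^\pm_{\pa}$ then makes it vanish. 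Here odd paths are reversal-symmetric, so the two ends contribute with opposite signs exactly as before.

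\textbf{Singletons.} What survives comes from paths of size one. These arise only from shortening the two-vertex paths $\ibm\md\jbm$ (the $n=k-1$ layer, with weight $\indepcoeff{\complementset{\ibm\md\jbm}}{k-1}$), since there is no extension contribution to a singleton. If $\ibm\to\jbm$, removing the tail $\ibm$ contributes $+b_{\ibm}^2\hc_{\jbm}$. If $\jbm\to\ibm$, removing the head $\ibm$ contributes $-b_{\ibm}^2\hc_{\jbm}$. Summing over neighbours gives $B^{(k-1)}_{\jbm,\omega}\hc_{\jbm}$, which is~\eqref{eq:even-path-commutator}.
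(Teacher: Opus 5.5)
Your route is the paper's. You expand termwise with Eq.~\eqref{eq:local-charge-single-path-comm}, cancel the $\pathsvd$ and $\pathloop^{\pm}$ terms with Lemma~\ref{lem:path-cancellation-pairings} after matching orientation signs, and exclude even-cycle loops by $|\pa|\le 2k<2K$. You then kill the odd paths with the endpoint-clique recursions and read $B^{(k-1)}_{\jbm,\omega}$ off the two-vertex paths; the singleton bookkeeping is correct. Three details need repair, and one of them is a false statement.

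\emph{The $q=2$ endpoint loop.} For $\pa=\ell_1\md\cdots\md\ell_{2m}$, a loop at $q=2$ has cycle $(\ell_2,\ldots,\ell_{2m},\jbm)$ with $2m$ vertices, so this cycle is always even. Lemma~\ref{lem:path-cancellation-pairings}(ii) therefore never applies at $q=2$. This configuration is exactly the even hole that the hypothesis removes, and it is the reason the theorem uses even holes rather than even bubble wands: the leftover path part is the single vertex $\ell_1$. Your claim that this case is ``handled the same way via the edge $\ell_1\to\jbm$'' is also false as a sign statement. The induced triples $\ell_1\md\jbm\md\ell_{2m}$ and $\jbm\md\ell_{2m}\md\ell_{2m-1}$ force $\eta_{\ell_1,\jbm}=-\eta_{\ell_1,\ell_2}$. (The segment-reversal sign $(-1)^{2m-3}=-1$ also flips, so such pairs would in fact still cancel, but by a double sign flip, not by your argument; under the hypothesis the case does not occur.) Delete that sentence. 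The correct split is: every loop with an odd cycle has $q\ge3$, where $\pa$ and its partner share the directed edge $\ell_1\md\ell_2$; every even-cycle loop, including all $q=2$ loops, is excluded.

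\emph{The three-neighbor sign check.} Let $\ell_q$ be the replaced vertex. Applying Eq.~\eqref{eq:induced-path-orientation-sign} to $\ell_{q-1}\md\jbm\md\ell_{q+1}$ only shows that the two new edges of $\pa'$ agree with each other. The triangle $\ell_{q-1},\ell_q,\jbm$ imposes no constraint linking them to the direction of $\pa$. You need a directed edge common to both listings, as in the paper: $\ell_1\md\ell_2$ if $q\ge3$, and $\ell_{2m-1}\md\ell_{2m}$ if $q=2$. The latter exists because a three-neighbor term forces $2m\ge4$.

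\emph{The odd-path range.} Your range $3\le|\pa|\le 2k-1$ misses $|\pa|=2k+1$. These paths are produced by extending the size-$2k$ paths of the $n=0$ layer, and for $k=1$ they are the only non-singleton targets. Only the two extension sources contribute there, each with weight $\indepcoeff{F}{0}=1$ and opposite endpoint signs, so they cancel. This is the $s=0$ case of $D^{(k)}_{\pa}$ with $\indepcoeff{F}{-1}=0$. Also, for $|\pa|=2m+1$ the extension sources carry weights $\indepcoeff{\complementset{\pa_{[;-1]}}}{k-m}$ and $\indepcoeff{\complementset{\pa_{[2;]}}}{k-m}$, while the shortening sources carry degree $k-m-1$. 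The parameter matching Eq.~\eqref{eq:local-charge-even-coeff} is therefore $s=k-m$, not $k-m-1$.
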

\begin{proof}
	Fix $k$ with $1\le k<K$ and use the convention $\indepcoeff{F}{t}=0$ for $t<0$.
	By Eq.~\eqref{eq:induced-path-orientation-sign}, all edges of an induced path carry the same $\eta$ relative to a given listing.
	Two listings that share an edge in the same order therefore have the same $\eta_{\ell_1,\ell_2}$, which for an even listing is the orientation sign in Eq.~\eqref{eq:even-path-orientation-sign}.

	Let $\rho=\ell_1\md\cdots\md\ell_M$ be a path occurring in $\mathcal{E}_{G,\omega}^{(2k)}$, so that $M$ is even and $M\le 2k$, and consider its non-path terms $\pathsvd_{\rho}$ and $\pathloop_{\rho}^{\pm}$ in Eq.~\eqref{eq:local-charge-single-path-comm}.
	No endpoint-loop term of $\rho$ has an even associated cycle.
	Indeed, after reversing $\rho$ if necessary, such a term comes from an off-path vertex $\jbm$ adjacent to exactly $\ell_{q-1},\ell_q,\ell_M$ with $2\le q\le M-2$ (Lemma~\ref{lem:odd-neighbor-classification}), and its cycle $(\ell_q,\ell_{q+1},\ldots,\ell_M,\jbm)$ is induced of length $M-q+2\le M\le 2k<2K$.
	An even cycle of that length would contradict the definition of $K$.

	The remaining non-path terms cancel in pairs, as in the proof of Lemma~\ref{lem:odd-local-nonpath-cancellation}.
	The pairings of Lemma~\ref{lem:path-cancellation-pairings} replace a single vertex $\ell_q$ of $\rho$, with $2\le q\le M-1$, by the off-path vertex $\jbm$, and the partner $\rho'$ has the same size and closed neighborhood as $\rho$, hence the same coefficient in $\mathcal{E}_{G,\omega}^{(2k)}$.
	For $q\ge 3$ the listings of $\rho$ and $\rho'$ share the edge $\ell_1\md\ell_2$.
	The value $q=2$ occurs only for the three-neighbor pairing, because the loop pairing requires an odd cycle whereas $M-q+2=M$ is even, and there the two listings share the edge $\ell_{M-1}\md\ell_M$, which exists since $M\ge 4$.
	In both cases the orientation multiplies $\pathprodH{\rho}$ and $\pathprodH{\rho'}$ by the same sign, so the unoriented cancellation applies unchanged.
	The commutator is therefore supported on odd induced paths, including singletons.

	Fix an odd induced path $\pa=\ell_1\md\cdots\md\ell_{2m+1}$ with $1\le m\le k$, listed in the direction inherited from the orientation, and set $s=k-m$; longer odd paths are not produced.
	The listing leaves $\pathprodH{\pa}$ unchanged because $|\pa|$ is odd.
	Every path contributing to $\pathprodH{\pa}$ is $\pa$ with one endpoint deleted or added, so it shares an edge with $\pa$ and enters $\mathcal{E}_{G,\omega}^{(2k)}$ with orientation sign $+1$.
	With the endpoint cliques $\mathcal{K}_{\pa}^{\pm}$ of the proof of Theorem~\ref{thm:local-charge-conservation-pbc} and the endpoint signs in Eq.~\eqref{eq:local-charge-single-path-comm}, the coefficient of $\pathprodH{\pa}$ in $\frac{1}{2}\qty[H_G,\mathcal{E}_{G,\omega}^{(2k)}]$ is
	\begin{align}
		D_{\pa}^{(k)}
		=
		\indepcoeff{\complementset{\pa_{[2;]}}}{s}
		-
		\indepcoeff{\complementset{\pa_{[;-1]}}}{s}
		+
		\sum_{\jbm\in\mathcal{K}_{\pa}^-}
		b_{\jbm}^2\,
		\indepcoeff{\complementset{\jbm\md\pa}}{s-1}
		-
		\sum_{\jbm\in\mathcal{K}_{\pa}^+}
		b_{\jbm}^2\,
		\indepcoeff{\complementset{\pa\md\jbm}}{s-1}.
		\label{eq:even-path-odd-coeff}
	\end{align}

	This is Eq.~\eqref{eq:local-charge-even-coeff} with the parities of the active and the produced path interchanged, and the two endpoint-clique recursions used there apply verbatim for every $s\ge 0$, both reducing to $\indepcoeff{F}{0}=1$ at $s=0$.
	Hence $D_{\pa}^{(k)}=0$ for every odd induced path with at least one edge.

	A singleton path $\pa=(\jbm)$ has no edge, so the orientation sign survives.
	Only the two-vertex paths at $\jbm$ contribute, each with coefficient $\indepcoeff{\complementset{\ibm\md\jbm}}{k-1}$ and orientation sign $\eta_{\ibm,\jbm}$, so the incident edge $\{\ibm,\jbm\}$ contributes $+b_{\ibm}^2\,\indepcoeff{\complementset{\ibm\md\jbm}}{k-1}\,\hc_{\jbm}$ for $\ibm\to\jbm$ and the same expression with the opposite sign for $\jbm\to\ibm$.
	Summing over the edges at $\jbm$ gives Eq.~\eqref{eq:even-path-singleton-coeff} at $s=k-1$.
\end{proof}

\begin{rmk}
	\label{rmk:induced-path-orientation-existence}
	Existence reduces to a linear system over $\mathbb{F}_2$.
	Fix any reference orientation, let $x_e\in\mathbb{F}_2$ indicate whether edge $e$ is flipped from the reference, and let $\rho_{ij}=\pm 1$ record the reference direction of $\{i,j\}$.
	Each induced three-vertex path $\ibm\md\jbm\md\ellbm$ imposes
	\begin{align}
		x_{\{\ibm,\jbm\}}+x_{\{\jbm,\ellbm\}}
		=
		\frac{1-\rho_{\ibm,\jbm}\rho_{\jbm,\ellbm}}{2}
		\pmod{2},
		\label{eq:induced-path-orientation-F2}
	\end{align}
	and an induced-path orientation exists if and only if this system has a solution.

	Such an orientation can fail to exist for general claw-free graphs.
	For the cone over a five-cycle with apex $s$, each non-adjacent rim pair $i,k$ gives an induced path $i\md s\md k$, which forces $s\md i$ and $s\md k$ to have opposite orientations at $s$.
	The five constraints form an odd cycle and are inconsistent.
\end{rmk}

\begin{rmk}
	In the range $1\le k<K$, Theorem~\ref{thm:even-path-commutator} gives the commutator formula for the oriented even path operators.
	The operator $\mathcal{E}_{G,\omega}^{(2k)}$ is conserved when the singleton coefficients $B_{\jbm,\omega}^{(k-1)}$ vanish for every $\jbm\in V(G)$.
	When $K=\infty$, this gives a full family provided $B_{\jbm,\omega}^{(s)}=0$ for every $\jbm\in V(G)$ and every $s\ge 0$.

	For homogeneous couplings $b_{\jbm}=1$, this condition reduces to the identities
	\begin{align}
		\sum_{\ibm\to\jbm}
		\indepcoeff{\complementset{\ibm\md\jbm}}{s}
		=
		\sum_{\jbm\to\ibm}
		\indepcoeff{\complementset{\jbm\md\ibm}}{s}
		\qquad
		(\jbm\in V(G),\ s\ge 0).
		\label{eq:uniform-edge-residual-balance}
	\end{align}
	A sufficient condition for Eq.~\eqref{eq:uniform-edge-residual-balance} is that, for each vertex $\jbm$, the incoming edges can be paired with the outgoing edges so that paired edges have isomorphic residual graphs.
	This is precisely what happens in the homogeneous periodic graph $C_M^2$ of the Fendley model, as used in Subsection~\ref{subsec:ffd-uniform-periodic}.

	Homogeneous couplings alone do not imply Eq.~\eqref{eq:uniform-edge-residual-balance}.
	For example, the oriented path graph $1\to2\to3$ with $b_{\jbm}=1$ has $B_{1,\omega}^{(0)}=-1$.
\end{rmk}

\section{Eigenstates of the antisymmetric Hamiltonian $H_A$}
\label{sec:eigenstates-antisymmetric}

\label{sec:HA}

We construct the eigenstates of the companion paper~\cite{sajat-solving-ffd-1} from the path-product modes of Theorem~\ref{thm:path-product-expansion}.

Consider a general connected ECF graph $G=(V,E)$ and the tensor-product graph Clifford algebra generated by the left and right actions reviewed in the companion paper~\cite{sajat-solving-ffd-1}.
In this abstract algebra, write
\begin{equation}
	H_G = \sum_{\jbm \in V} b_{\jbm} h_{\jbm},
	\qquad
	\tilde H_G = \sum_{\jbm \in V} b_{\jbm} \tilde h_{\jbm},
	\qquad
	H_A = H_G - \tilde H_G.
\end{equation}
The left and right graph-Clifford actions commute, i.e. $[h_{\jbm},\tilde h_{\ellbm}]=0$ for all $\jbm,\ellbm\in V$.
Fix a total ordering $v_1 < v_2 < \cdots < v_{|V|}$ on the vertices, which labels the tensor factors of the auxiliary spin Hilbert space $\mathcal{H}_{\mathrm{def}}\simeq(\mathbb{C}^2)^{\otimes |V|}$.
Here $\ket{0}$ denotes an up spin and $\ket{1}$ a down spin.
On this auxiliary spin Hilbert space, $X_{\jbm}$ and $Z_{\jbm}$ denote the Pauli matrices acting on the tensor factor labeled by $\jbm$.
The reference vector is the all-up state $\refst \equiv \ket{0\cdots 0}$.
In the defining representation, left- and right-multiplication by the generators are represented by
\begin{align}
	h_{\jbm}
	 & =
	X_{\jbm} \prod_{\substack{\ibm > \jbm \\ (\ibm,\jbm)\in E}} Z_{\ibm},
	 &
	\tilde h_{\jbm}
	 & =
	X_{\jbm} \prod_{\substack{\ibm < \jbm \\ (\ibm,\jbm)\in E}} Z_{\ibm}.
\end{align}
For every word $h_{\ell_1}\cdots h_{\ell_q}$ in the abstract algebra, with no ordering assumed on the sequence $\ell_1,\ldots,\ell_q$, the operator-state correspondence gives
\begin{equation}
	\label{eq:word-on-ref}
	h_{\ell_1}\cdots h_{\ell_q}\refst
	=
	\tilde h_{\ell_q}\cdots \tilde h_{\ell_1}\refst.
\end{equation}
In particular, $h_{\jbm}\refst = \tilde h_{\jbm}\refst$ for every $\jbm\in V$, and therefore $H_A\refst=0$.

Assume now that the auxiliary edge operator $\chi$ is embedded in the original graph Clifford algebra $\AAA$, rather than merely adjoined in the extended algebra.
Whether this is possible can be determined from the adjacency matrix of the frustration graph: the commutation pattern of the edge operator must lie in the $\mathbb{Z}_2$-row span of $A$ (see the companion paper for details and explicit constructions).
An embedded $\chi$ is a basis element of $\AAA$, so we can write
\begin{equation}
	\label{eq:chi-word}
	\chi = c \, h_{\iota_1}\cdots h_{\iota_r},
	\qquad
	\iota_1<\cdots<\iota_r
\end{equation}
for some subset of generators, written in the fixed total order, and a phase $c$ with $|c|=1$.
Define $\tilde\chi \in \tilde\AAA$ by reversing the word and replacing every generator by its right-copy counterpart:
\begin{equation}
	\label{eq:tildechi-def}
	\tilde\chi
	\equiv
	c \, \tilde h_{\iota_r}\cdots \tilde h_{\iota_1}.
\end{equation}
This serves as the edge operator for $\tilde H_G$: it satisfies $\tilde\chi^2=1$, anticommutes with each $\tilde h_{\jbm}$ in the simplicial clique, and commutes with every other $\tilde h_{\jbm}$.
Applying Eq.~\eqref{eq:word-on-ref} to the word in Eq.~\eqref{eq:chi-word}, we obtain
\begin{equation}
	\label{eq:chi-tildechi-ref}
	\chi \refst = \tilde\chi \refst.
\end{equation}
Theorem~\ref{thm:path-product-expansion} applies verbatim to the right copy, with $h_{\jbm}$ and $\chi$ replaced by $\tilde h_{\jbm}$ and $\tilde\chi$.
For $k>0$, we fix the phase of $\tilde\Psi_{-k}$ by
\begin{equation}
	\label{eq:tildepsi-minus-definition}
	\tilde\Psi_{-k}
	\equiv
	\frac{1}{\NN_k}
	\sum_{\pa = \chi \md \ell_1 \md \cdots \md \ell_n \in \pathset{G_\chi}{\chi}}
	u_k^{|\pa|-1}
	P_{\complementset{\pa}}(u_k^2)
	\qty(\prod_{m=1}^{n} b_{\ell_m})
	\tilde\chi \tilde h_{\ell_1}\cdots \tilde h_{\ell_n}.
\end{equation}

\begin{thm}
	\label{thm:Psi-tildePsi-ket}
	In the defining representation of a connected ECF graph, assume that the edge operator $\chi$ is embedded in the graph Clifford algebra.
	Then
	\begin{equation}
		\label{eq:fewbody-claim}
		\Psi_k \refst = \tilde\Psi_{-k} \refst.
	\end{equation}
	In particular, $\Psi_k \refst$ is an eigenstate of $H_A$ with eigenvalue $2\epsilon_k$.
\end{thm}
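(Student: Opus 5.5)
The plan is to compare $\Psi_k\refst$ and $\tilde\Psi_{-k}\refst$ path by path, using the operator-state correspondence~\eqref{eq:word-on-ref}. Expanding Definition~\ref{define:path-product-operator}, each rooted path $\pa=\chi\md\ell_1\md\cdots\md\ell_n$ contributes
\begin{equation*}
	\frac{1}{\NN_k}(-u_k)^{n}P_{\complementset{\pa}}(u_k^2)\Bigl(\prod_{m=1}^{n}b_{\ell_m}\Bigr)\chi h_{\ell_1}\cdots h_{\ell_n}\refst .
\end{equation*}
The corresponding term of $\tilde\Psi_{-k}$ in Eq.~\eqref{eq:tildepsi-minus-definition} carries the same scalar factor, except that $(-u_k)^n$ is replaced by $u_k^n$. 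Both normalizations are $\NN_k$, since $\NN_{-k}=\NN_k$. Hence it suffices to prove, for every rooted induced path,
\begin{equation*}
	\chi h_{\ell_1}\cdots h_{\ell_n}\refst=(-1)^n\,\tilde\chi\tilde h_{\ell_1}\cdots\tilde h_{\ell_n}\refst .
\end{equation*}

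First, substitute the embedded word~\eqref{eq:chi-word} for $\chi$, so that the left side becomes a single word $c\,h_{\iota_1}\cdots h_{\iota_r}h_{\ell_1}\cdots h_{\ell_n}\refst$ in the abstract algebra. Applying Eq.~\eqref{eq:word-on-ref} to this word reverses it, giving $c\,\tilde h_{\ell_n}\cdots\tilde h_{\ell_1}\tilde h_{\iota_r}\cdots\tilde h_{\iota_1}\refst$. By Eq.~\eqref{eq:tildechi-def}, this equals $\tilde h_{\ell_n}\cdots\tilde h_{\ell_1}\tilde\chi\refst$. It is legitimate to use the word form here, because Eq.~\eqref{eq:word-on-ref} is stated for arbitrary unordered words.

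Second, reorder $\tilde h_{\ell_n}\cdots\tilde h_{\ell_1}\tilde\chi$ into $\tilde\chi\tilde h_{\ell_1}\cdots\tilde h_{\ell_n}$ using only the graph-Clifford relations, which hold for the right copy with $\tilde\chi$ as its edge operator. Reversing the whole sequence $\chi\md\ell_1\md\cdots\md\ell_n$ of $n+1$ factors along an induced path costs one sign per pair of adjacent factors. Since the path is induced, the only such pairs are the consecutive ones, so the sign is $(-1)^{n}$. This matches the path-product reversal identity $\pathprodH{\pa}=(-1)^{|\pa|-1}\pathprodH{\pa^{-1}}$. The signs from the couplings and the phase $c$ pass through unchanged. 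Summing over paths then yields Eq.~\eqref{eq:fewbody-claim}.

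For the eigenstate claim, use that left and right actions commute. Then $H_A\Psi_k\refst=[H_G,\Psi_k]\refst+\Psi_kH_G\refst-\tilde H_G\Psi_k\refst$. Rewrite the last term as $\tilde H_G\tilde\Psi_{-k}\refst=[\tilde H_G,\tilde\Psi_{-k}]\refst+\tilde\Psi_{-k}\tilde H_G\refst$. Since $H_G\refst=\tilde H_G\refst$ and $\Psi_k$ commutes with $\tilde H_G$, the terms $\Psi_kH_G\refst$ and $\tilde\Psi_{-k}\tilde H_G\refst$ should cancel. More precisely, $\Psi_k\tilde H_G\refst=\tilde H_G\Psi_k\refst$, and one needs $\tilde\Psi_{-k}\tilde H_G\refst=\Psi_k H_G\refst$; this again follows from the word identity applied to paths times one extra generator.

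What remains is $[H_G,\Psi_k]\refst-[\tilde H_G,\tilde\Psi_{-k}]\refst=2\epsilon_k\Psi_k\refst-2\tilde\epsilon_{-k}\tilde\Psi_{-k}\refst$, using Theorem~\ref{thm:path-product-expansion} for both copies. The main obstacle is the sign bookkeeping for $\tilde\Psi_{-k}$. Its coefficients use $u_k^{n}$ rather than $(-u_{-k})^n$ up to the product of $b$'s, so one must check that it is indeed the right-copy mode with energy $-\epsilon_k$. If so, $[\tilde H_G,\tilde\Psi_{-k}]=-2\epsilon_k\tilde\Psi_{-k}$, and the total becomes $4\epsilon_k$; the normalization of the stated eigenvalue would then need to be reconciled. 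I would settle this first, by carefully tracking $\hc=bh$ and the convention $H_A=H_G-\tilde H_G$.
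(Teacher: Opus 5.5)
Your derivation of Eq.~\eqref{eq:fewbody-claim} is correct and is essentially the paper's argument. The paper anticommutes $\chi$ past $h_{\ell_1}$ in the left copy and then uses $\chi\refst=\tilde\chi\refst$, while you expand $\chi$ into its word and do the whole reversal in the right copy. Either way, the $n$ edges of the rooted induced path give the uniform sign $(-1)^n$ that turns $(-u_k)^n$ into $u_k^n$.

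\textbf{The eigenvalue part has a false step, and it is the only source of your $4\epsilon_k$.}

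\emph{The failing step.} The claimed identity $\tilde\Psi_{-k}\tilde H_G\refst=\Psi_kH_G\refst$ does not hold. Using $\tilde H_G\refst=H_G\refst$, the commutation of the two copies, and Eq.~\eqref{eq:fewbody-claim}, you get $\tilde\Psi_{-k}\tilde H_G\refst=\tilde\Psi_{-k}H_G\refst=H_G\tilde\Psi_{-k}\refst=H_G\Psi_k\refst$. So the two terms differ by $[H_G,\Psi_k]\refst=2\epsilon_k\Psi_k\refst$, not by zero.

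\emph{Why your justification fails.} The word identity with one extra generator $h_{\jbm}$ appended does not give a uniform sign. In fact $\chi h_{\ell_1}\cdots h_{\ell_n}h_{\jbm}\refst=(-1)^{n+N_{\jbm}}\tilde\chi\tilde h_{\ell_1}\cdots\tilde h_{\ell_n}\tilde h_{\jbm}\refst$. Here $N_{\jbm}$ is the number of factors among $\chi,h_{\ell_1},\ldots,h_{\ell_n}$ that anticommute with $h_{\jbm}$. The terms with odd $N_{\jbm}$ are exactly the ones that build $[H_G,\Psi_k]$.

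\emph{Your other worry is fine.} The coefficients of $\tilde\Psi_{-k}$ are $(-u_{-k})^{|\pa|-1}P_{\complementset{\pa}}(u_{-k}^2)$ times the right-copy path products. So the theorem applied to the right copy does give $[\tilde H_G,\tilde\Psi_{-k}]=-2\epsilon_k\tilde\Psi_{-k}$. With the correct remainder $-2\epsilon_k\Psi_k\refst$, your bookkeeping yields $2\epsilon_k+2\epsilon_k-2\epsilon_k=2\epsilon_k$. The stated eigenvalue needs no reconciliation.

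\emph{The direct fix.} The detour through $\tilde\Psi_{-k}$ is unnecessary. You already observed that $\Psi_k$ lies in the left algebra and hence commutes with $\tilde H_G$. Therefore $\tilde H_G\Psi_k\refst=\Psi_k\tilde H_G\refst=\Psi_kH_G\refst$, which cancels the $\Psi_kH_G\refst$ term directly. This leaves $H_A\Psi_k\refst=[H_G,\Psi_k]\refst=2\epsilon_k\Psi_k\refst$ by Eq.~\eqref{eq:eigenoperator}. This is how the paper concludes, and that step does not even use Eq.~\eqref{eq:fewbody-claim}.
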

\begin{proof}
	Separating the minimal rooted path from the longer ones in Eq.~\eqref{eq:free-fermion-mode-path-expansion}, we obtain
	\begin{align}
		\Psi_k \refst
		 & =
		\frac{1}{\NN_k}
		P_{G \setminus K_s}(u_k^2)
		\chi \refst
		\nonumber                                                                              \\
		 & \quad +
		\frac{1}{\NN_k}
		\sum_{\substack{\pa = \chi \md \ell_1 \md \cdots \md \ell_n \in \pathset{G_\chi}{\chi} \\ n \geq 1}}
		(-u_k)^n
		P_{\complementset{\pa}}(u_k^2)
		\qty(\prod_{m=1}^{n} b_{\ell_m})
		\chi h_{\ell_1} \cdots h_{\ell_n} \refst.
		\label{eq:fewbody-proof-start}
	\end{align}
	For a nontrivial induced path, only $\ell_1$ can lie in the simplicial clique $K_s$: otherwise $(\chi,\ell_m)$ with $m\geq2$ would be a chord.
	Therefore $\chi$ anticommutes with $h_{\ell_1}$ and commutes with $h_{\ell_m}$ for all $m \geq 2$, so
	\begin{equation}
		\chi h_{\ell_1} \cdots h_{\ell_n} \refst
		=
		- h_{\ell_1} \cdots h_{\ell_n} \chi \refst
		=
		- \tilde\chi h_{\ell_1} \cdots h_{\ell_n} \refst,
	\end{equation}
	where we used Eq.~\eqref{eq:chi-tildechi-ref} and $[\tilde\chi,h_{\jbm}] = 0$.
	Equation~\eqref{eq:word-on-ref} gives
	\begin{equation}
		h_{\ell_1} \cdots h_{\ell_n} \refst
		=
		\tilde h_{\ell_n} \cdots \tilde h_{\ell_1} \refst.
	\end{equation}
	Since $\pa$ is induced, only consecutive generators anticommute, so reversing the order gives one minus sign per edge:
	\begin{equation}
		\tilde h_{\ell_n} \cdots \tilde h_{\ell_1}
		=
		(-1)^{n-1}
		\tilde h_{\ell_1} \cdots \tilde h_{\ell_n}.
	\end{equation}
	Hence
	\begin{equation}
		\chi h_{\ell_1} \cdots h_{\ell_n} \refst
		=
		(-1)^n
		\tilde\chi \tilde h_{\ell_1} \cdots \tilde h_{\ell_n} \refst.
	\end{equation}
	Substituting this relation into Eq.~\eqref{eq:fewbody-proof-start}, using $u_{-k}=-u_k$ so that $(-u_{-k})^n=u_k^n$, and using Eq.~\eqref{eq:chi-tildechi-ref} for the minimal rooted path, we obtain Eq.~\eqref{eq:tildepsi-minus-definition} acting on $\refst$, which is Eq.~\eqref{eq:fewbody-claim}.

	Finally, Eq.~\eqref{eq:eigenoperator} gives $[H_G,\Psi_k] = 2\epsilon_k \Psi_k$.
	Because $\chi$ is embedded in the left graph Clifford algebra, which commutes with the right copy, $[\tilde H_G,\Psi_k] = 0$.
	Together with $H_A \refst = 0$, this yields
	\begin{equation}
		H_A \Psi_k \refst
		=
		[H_A,\Psi_k]\refst
		=
		2\epsilon_k \Psi_k \refst\,.
	\end{equation}
\end{proof}
Multiparticle states can be generated by products of the fermionic modes, as in the companion paper.

\section{Example: the transverse-field Ising chain}
\label{sec:ising-example}

We illustrate Theorem~\ref{thm:path-product-expansion} with the transverse-field Ising chain~\cite{jordan-wigner,Schulz-Mattis-Lieb,pfeuty-transverse-ising}, the simplest example where our path-product expansion applies.

\subsection{Setup}

The Hamiltonian of the transverse-field Ising chain with open boundary conditions and $L$ sites is
\begin{equation}
	\label{eq:TFI-Hamiltonian}
	H_{\text{TFI}} = J \sum_{j=1}^{L-1} X_j X_{j+1} + g \sum_{j=1}^{L} Z_j,
\end{equation}
where $J$ is the ferromagnetic coupling and $g$ the transverse field strength.
We take the critical case $J = g = 1$.

We introduce $M = 2L - 1$ generators:
\begin{align}
	\hc_{2j-1} & = Z_j,         & j & = 1, 2, \ldots, L,   \\
	\hc_{2j}   & = X_j X_{j+1}, & j & = 1, 2, \ldots, L-1.
\end{align}
These generators satisfy $\{\hc_j, \hc_{j+1}\} = 0$ for $1 \le j \le M-1$, while all non-adjacent pairs commute.
The Hamiltonian~\eqref{eq:TFI-Hamiltonian} takes the standard form~\eqref{eq:Hamiltonian}:
\begin{equation}
	H_{\text{TFI}} = \sum_{j=1}^{M} \hc_j.
\end{equation}

The frustration graph $G$ is the path graph on $M$ vertices, denoted by $P_M$, and is therefore even-hole-free and claw-free (ECF).
We can choose the simplicial clique at the endpoint $K_s = \{1\}$.
One may choose the edge operator $\chi=X_1$, which satisfies $\{\chi, \hc_1\} = 0$ and $[\chi, \hc_j] = 0$ for $j \geq 2$.
The extended graph $\extendedgraph{G}{\chi}$ is shown in Fig.~\ref{fig:ising}.

\begin{figure}[t]
	\centering
	\begin{tikzpicture}[scale=1]
		\node[circle, draw, fill=black, minimum size=\vertexsize, inner sep=0pt, label=below:{$\chi$}] (v0) at (0,0) {};
		\foreach \i in {1,...,7} {
		\node[circle, draw, fill=white, minimum size=\vertexsize, inner sep=0pt, label=below:{$\i$}] (v\i) at (\i,0) {};
		}
		\draw[thick] (v0) -- (v1);
		\foreach \i in {1,...,6} {
				\pgfmathtruncatemacro{\j}{\i+1}
				\draw[thick] (v\i) -- (v\j);
			}
	\end{tikzpicture}
	\caption{The extended frustration graph $\extendedgraph{G}{\chi}$ for the Ising algebra at $M=7$.
		The white vertices form $G=P_7$, and the black edge vertex $\chi$ is attached to the simplicial clique $K_s=\{1\}$.}
	\label{fig:ising}
\end{figure}

\subsection{Independence polynomial}

Write $P_M(x)\equiv P_{G}(x)$ for the independence polynomial of $G = P_M$.
It satisfies the recursion~\eqref{eq:independence-polynomial-recursion} with respect to the endpoint clique:
\begin{equation}
	\label{eq:path-recursion}
	P_M(x) = P_{M-1}(x) - x \, P_{M-2}(x),
\end{equation}
with initial conditions $P_0(x) = P_{-1}(x) = 1$.

The characteristic equation for this recursion is $\alpha^2 - \alpha + x = 0$, with roots
\begin{equation}
	\alpha_\pm(x) = \frac{1 \pm \sqrt{1 - 4x}}{2}.
\end{equation}
Parametrizing $x = u^2$ and $\alpha_\pm = u e^{\pm ip}$ where $u = 1/(2\cos p)$, we obtain the general solution
\begin{equation}
	\label{eq:Pm-general}
	P_M(u^2) = \frac{\sin((M+2)p)}{(2\cos p)^{M+1} \sin p}.
\end{equation}
Equivalently, $P_M(u^2)=u^{M+1}U_{M+1}(\cos p)$, where $U_M$ is the Chebyshev polynomial of the second kind.

The positive roots of $P_M(u_k^2) = 0$ are determined by $\sin((M+2)p_k) = 0$, giving
\begin{equation}
	\label{eq:quantization}
	p_k = \frac{k\pi}{M+2}, \quad k = 1, 2, \ldots, \left\lfloor \frac{M+1}{2} \right\rfloor = L.
\end{equation}
The upper bound is the condition $p_k<\pi/2$, which is equivalent to $u_k=1/(2\cos p_k)>0$.
The single-particle energies are $\epsilon_k = 1/u_k = 2\cos p_k$.

\subsection{Application of Theorem~\ref{thm:path-product-expansion}}

For the path graph with simplicial clique $K_s = \{1\}$, the induced paths in $\pathsetG$ starting from the edge vertex $\chi$ are
\begin{equation}
	\pa_m = \chi \md 1 \md 2 \md \cdots \md m, \quad m = 0, 1, \ldots, M,
\end{equation}
where $\pa_0 = (\chi)$ is the minimal rooted path.

The path $\pa_m$ has size $|\pa_m| = m + 1$ and residual graph $\complementset{\pa_m}$, the path graph with $M - m - 1$ vertices; for $m = M$ this is empty, with $P_{-1}(x) = 1$.

The path products~\eqref{eq:path-product} are
\begin{equation}
	\pathprodH{\pa_m} = \chi \hc_1 \hc_2 \cdots \hc_m,
\end{equation}
with $\pathprodH{\pa_0} = \chi$.

Applying Theorem~\ref{thm:path-product-expansion}, the free-fermion mode is
\begin{equation}
	\label{eq:ising-psi-explicit}
	\Psi_k = \frac{1}{\NN_k} \sum_{m=0}^{M} (-u_k)^m \, P_{M-m-1}(u_k^2) \, \chi \hc_1 \hc_2 \cdots \hc_m.
\end{equation}
Using the explicit form~\eqref{eq:Pm-general} of the independence polynomial, the coefficient of each path product becomes
\begin{equation}
	(-u_k)^m P_{M-m-1}(u_k^2) = (-1)^m \frac{\sin((M-m+1)p_k)}{(2\cos p_k)^{M} \sin p_k}.
\end{equation}
Up to an overall normalization, the free-fermion mode is therefore
\begin{equation}
	\label{eq:ising-psi-sin}
	\Psi_k \propto \sum_{m=0}^{M} (-1)^m \sin((M-m+1)p_k) \, \chi \hc_1 \hc_2 \cdots \hc_m.
\end{equation}

\subsection{Verification via Jordan-Wigner transformation}

We now verify that $\Psi_k$ in Eq.~\eqref{eq:ising-psi-sin} is indeed a free-fermion mode by showing that it reduces to the standard Jordan-Wigner fermion~\cite{jordan-wigner}.

Define the Majorana operators
\begin{align}
	\gamma_0      & = X_1, \nonumber                                                          \\
	\gamma_{2j-1} & = \left(\prod_{r=1}^{j-1} Z_r\right) Y_j,   & j & = 1,\ldots,L, \nonumber \\
	\gamma_{2j}   & = \left(\prod_{r=1}^{j} Z_r\right) X_{j+1}, & j & = 1,\ldots,L-1.
\end{align}
These satisfy the Clifford algebra $\{\gamma_m, \gamma_n\} = 2\delta_{mn}$.
The generators can be expressed as Majorana bilinears:
\begin{equation}
	\hc_{2j-1} = i \gamma_{2j-1} \gamma_{2j-2}\quad (j=1,\ldots,L),
	\qquad
	\hc_{2j} = i \gamma_{2j} \gamma_{2j-1}\quad (j=1,\ldots,L-1).
\end{equation}

With the edge operator $\chi = \gamma_0 = X_1$, the path products simplify.
For example:
\begin{align}
	\chi \hc_1       & = \gamma_0 \cdot (i \gamma_1 \gamma_0) = -i \gamma_1,    \\
	\chi \hc_1 \hc_2 & = (-i\gamma_1) \cdot (i \gamma_2 \gamma_1) = - \gamma_2.
\end{align}
By induction, we obtain
\begin{equation}
	\label{eq:path-to-majorana}
	\chi \hc_1 \hc_2 \cdots \hc_m = (-i)^m \gamma_m.
\end{equation}
The path product telescopes to a single Majorana fermion because consecutive Majorana pairs in the Jordan-Wigner string cancel.

Substituting~\eqref{eq:path-to-majorana} into~\eqref{eq:ising-psi-sin}:
\begin{equation}
	\label{eq:ising-fourier-mode}
	\Psi_k \propto \sum_{m=0}^{M} i^m \sin((M-m+1)p_k) \, \gamma_m.
\end{equation}
This is precisely the Fourier mode of the standard Jordan-Wigner solution, with standing-wave coefficients determined by the open boundary conditions.
The orthogonality and normalization of these standing waves give the fermionic anticommutation relations~\eqref{eq:anticommutation}.

Writing $H_{\text{TFI}} = \frac{i}{2} \sum_{m,n=0}^{M} \gamma_m A_{mn} \gamma_n$, where $A$ is the antisymmetric tridiagonal matrix with $A_{m,m-1}=1=-A_{m-1,m}$ for $1\leq m\leq M$, an eigenoperator of the form
\begin{equation}
	\Psi_k = \sum_{m=0}^{M} i^m a_m \gamma_m
\end{equation}
obeys the discrete eigenvalue equation
\begin{equation}
	a_{m-1} + a_{m+1} = 2\cos p_k \, a_m,
	\qquad a_{-1} = a_{M+1} = 0\,.
\end{equation}
The solution is a standing wave
\begin{equation}
	a_m \propto \sin((m+1)p_k).
\end{equation}
The coefficient in Eq.~\eqref{eq:ising-psi-sin} is the same standing wave up to a $k$-dependent overall sign:
\begin{equation}
	\sin((M-m+1)p_k)=\sin(k\pi-(m+1)p_k)=(-1)^{k+1}\sin((m+1)p_k),
\end{equation}
because $(M+2)p_k=k\pi$.
Thus the path-product expression reproduces the conventional Majorana mode.

For ECF frustration graphs of higher connectivity, several induced paths can connect $\chi$ to a given vertex $\jbm$, so the modes of models beyond the Jordan-Wigner paradigm~\cite{fendley-fermions-in-disguise} are linear combinations of several path products rather than single Majorana operators, and the free-fermion structure is hidden.

\section{Example: the Fendley model}

\label{sec:ffd-example}
The Hamiltonian of the Fendley model reads
\begin{align}
	H_{\text{FFD}} & = \sum_{j=1}^M \hc_j, & \hc_j & = b_j Z_j Z_{j+1} X_{j+2}.
\end{align}
Here $X_j$ and $Z_j$ denote the Pauli matrices acting on spin site $j$.
In contrast to the transverse-field Ising chain, the next-nearest-neighbor generators also anticommute:
\begin{equation}
	\{ \hc_j, \hc_{j+1} \} = \{ \hc_j, \hc_{j+2} \} = 0,
\end{equation}
while $\hc_j$ and $\hc_{j'}$ commute for $\vert j-j'\vert \ge 3$.

On the open chain ($1\le j\le M$), the frustration graph is the graph square $P_M^2$ of the path graph on $M$ vertices: vertices are labeled $1,\dots,M$, and $j$ and $j'$ are adjacent if and only if $1\le \vert j-j'\vert\le 2$.
The nearest-neighbor and next-nearest-neighbor edges combine into the two-layered zigzag layout used in Figures~\ref{fig:ffdGchi} and~\ref{fig:ffd-open-local-five-path}.
With periodic boundary conditions, indices are read modulo $M$ and the path graph $P_M$ is replaced by the cycle graph $C_M$, giving the periodic frustration graph $C_M^2$ in the two-cycle layout of Figure~\ref{fig:ffd-periodic-frustration-graph}.
We assume $M>6$ throughout the periodic case.

Subsections~\ref{subsec:ffd-path-structure}--\ref{subsec:ffd-operator-weight} work out the path-product expansion~\eqref{eq:free-fermion-mode-path-expansion} of the modes $\Psi_k$ on the open chain, where multiple induced paths now connect $\chi$ to a given vertex, and estimate how the resulting operator weight is distributed far from the boundary vertex $\chi$.
Subsections~\ref{subsec:ffd-open-inhomogeneous} and~\ref{subsec:ffd-uniform-periodic} specialize the local conserved-charge formula~\eqref{eq:local-charge} to the open inhomogeneous chain, with frustration graph $P_M^2$, and to the periodic homogeneous chain, with frustration graph $C_M^2$, where translated path-product terms carry common coefficients and can be collected into the closed formula~\eqref{eq:ffd-uniform-general-charge-formula}.
Finally, Subsection~\ref{subsec:ffd-catalan-tree} reorganizes the homogeneous periodic charges into the same Catalan-tree pattern as the local conserved charges of the spin-$1/2$ XXX chain and its $SU(N)$-invariant generalizations~\cite{anshelevich-heisenberg-first-integrals,GM-higher-conserved-XXX,GM-higher-conserved-XXZ,GM-catalan-tree}.

\begin{figure}[t]
	\centering
	\begin{tikzpicture}[
			scale=1,
			line cap=round,
			line join=round,
			vertex/.style={circle, draw, fill=white, minimum size=\vertexsize, inner sep=0pt},
			pathclass/.style={opacity=.82, line width=1.9pt},
			pathone/.style={pathclass, draw=Red},
			pathtwo/.style={pathclass, draw=RoyalBlue, loosely dashed},
			paththree/.style={pathclass, draw=ForestGreen!70!black, dotted}
		]
		\def\N{12}
		\def\yodd{0.85}
		\def\yeven{-0.85}

		\pgfmathtruncatemacro{\NmOne}{\N-1}
		\pgfmathtruncatemacro{\NmTwo}{\N-2}

		\coordinate (c0) at (0,\yodd);
		\foreach \i in {1,...,\N} {
				\ifodd\i\relax
					\coordinate (c\i) at (\i,\yodd);
				\else
					\coordinate (c\i) at (\i,\yeven);
				\fi
			}

		\draw[thick, gray!45] (c0) -- (c1);

		\foreach \i in {1,...,\NmOne} {
				\pgfmathtruncatemacro{\j}{\i+1}
				\draw[thick, gray!45] (c\i) -- (c\j);
			}

		\foreach \i in {1,...,\NmTwo} {
				\pgfmathtruncatemacro{\j}{\i+2}
				\draw[thick, gray!45] (c\i) -- (c\j);
			}

		\coordinate (cc9) at ($(c9)+(-0.85,0.45)$);
		\coordinate (cc10) at ($(cc9)+(1.60,-2.72)$);
		\coordinate (cc11) at ($(c11)+(0.35,0.45)$);
		\coordinate (cc12) at ($(cc11)+(1.60,-2.72)$);

		\draw[opacity=.2, rounded corners=10pt, draw=Red,fill=Red]
		(cc9) -- (cc10) -- (cc12) -- (cc11) -- cycle;

		\draw[pathone]
		([yshift=2.4pt]c0) -- ([yshift=2.4pt]c1)
		-- ([yshift=2.4pt]c2) -- ([yshift=2.4pt]c4) -- ([yshift=2.4pt]c6);
		\draw[pathtwo]
		(c0) -- (c1) -- (c3) -- (c4) -- (c6);
		\draw[paththree]
		([yshift=-2.4pt]c0) -- ([yshift=-2.4pt]c1)
		-- ([yshift=-2.4pt]c3) -- ([yshift=-2.4pt]c5) -- ([yshift=-2.4pt]c6);

		\node[circle, draw, fill=black, minimum size=\vertexsize, inner sep=0pt,
		label=above:{$\chi$}] (v0) at (c0) {};

		\foreach \i in {1,...,\N} {
		\ifodd\i\relax
			\node[vertex, label=above:{$\i$}] (v\i) at (c\i) {};
		\else
			\node[vertex, label=below:{$\i$}] (v\i) at (c\i) {};
		\fi
		}
	\end{tikzpicture}
	\caption{The extended graph $\extendedgraph{G}{\chi}$ of the open Fendley model at $M=12$, with $G=P_{12}^2$.
		For $n=4$ and $r=6$, the solid red, dashed blue, and dotted green paths are $\chi\md1\md2\md4\md6$, $\chi\md1\md3\md4\md6$, and $\chi\md1\md3\md5\md6$, respectively.
		Their common coefficient $(-u)^4P_{9,12}(u^2)$ in Eq.~\eqref{eq:Xi-FFD} is determined by the shaded residual graph.}
	\label{fig:ffdGchi}
\end{figure}

\subsection{Path structure and Krylov basis elements}
\label{subsec:ffd-path-structure}

We choose the edge operator $\chi$ such that
\begin{equation}
	\{ \chi, \hc_1\} = 0.
\end{equation}
This corresponds to the simplicial clique $K_s=\{1\}$ in Figure~\ref{fig:ffdGchi}.
By Theorem~\ref{thm:Krylov-path-expansion} and Lemma~\ref{lem:single-path-commutator}, repeated application of the adjoint action $\frac{1}{2} [ H_{\text{FFD}}, \cdot ]$ maps the path products to linear combinations of operator products
\begin{align}
	\mathtt{H}[\chi]                   & \overset{\frac{1}{2} [ H_{\text{FFD}}, \cdot ]}{\qquad\to\qquad}- \mathtt{H}[\chi \md 1]                                                                                              \\
	\mathtt{H}[\chi \md 1]             & \qquad\to\qquad -b_1^2 \mathtt{H}[\chi] - \mathtt{H}[\chi \md 1 \md 2] - \mathtt{H}[\chi \md 1 \md 3]                                                                                 \\
	\mathtt{H}[\chi \md 1 \md 2]       & \qquad\to\qquad - b_2^2 \mathtt{H}[\chi \md 1] - \mathtt{H}[\chi \md 1 \md 2 \md 4]                                                                                                   \\
	\mathtt{H}[\chi \md 1 \md 3]       & \qquad\to\qquad - b_3^2 \mathtt{H}[\chi \md 1] - \mathtt{H}[\chi \md 1 \md 3 \md 4] - \mathtt{H}[\chi \md 1 \md 3 \md 5]                                                              \\
	\mathtt{H}[\chi \md 1 \md 2 \md 4] & \qquad\to\qquad \underline{+ \chi \hc_1 \hc_2 \hc_3 \hc_4} - b_4^2 \mathtt{H}[\chi \md 1 \md 2] - \mathtt{H}[\chi \md 1 \md 2 \md 4 \md 5] - \mathtt{H}[\chi \md 1 \md 2 \md 4 \md 6] \\
	\mathtt{H}[\chi \md 1 \md 3 \md 4] & \qquad\to\qquad \underline{- \chi \hc_1 \hc_2 \hc_3 \hc_4} - b_4^2 \mathtt{H}[\chi \md 1 \md 3]  - \mathtt{H}[\chi \md 1 \md 3 \md 4 \md 6]                                           \\
	\mathtt{H}[\chi \md 1 \md 3 \md 5] & \qquad\to\qquad -b_5^2 \mathtt{H}[\chi \md 1 \md 3] - \mathtt{H}[\chi \md 1 \md 3 \md 5 \md 6] - \mathtt{H}[\chi \md 1 \md 3 \md 5 \md 7]                                             \\
	                                   & \qquad\;\;\vdots\qquad \nonumber
\end{align}
The underscored terms are not path products: their vertex sequences are not induced paths.
However, in the Krylov recursion $\phi_{j+1}=\frac{1}{2}[H_{\text{FFD}},\phi_j]$, such non-path contributions occur in opposite-sign pairs and cancel, so the basis elements remain linear combinations of induced-path products:
\begin{align}
	\phi_0 & = \mathtt{H}[\chi]                                                                                                                                                                                                      \\
	\phi_1 & = -\mathtt{H}[\chi \md 1]                                                                                                                                                                                               \\
	\phi_2 & = b_1^2 \mathtt{H}[\chi] +  \mathtt{H}[\chi \md 1 \md 2] + \mathtt{H}[\chi \md 1 \md 3]                                                                                                                                 \\
	\phi_3 & = -(b_1^2 + b_2^2 + b_3^2) \mathtt{H}[\chi \md 1] - \mathtt{H}[\chi \md 1 \md 2 \md 4] - \mathtt{H}[\chi \md 1 \md 3 \md 4] - \mathtt{H}[\chi \md 1 \md 3 \md 5]                                                        \\
	\phi_4 & =   (b_1^2 + b_2^2 + b_3^2)  b_1^2 \mathtt{H}[\chi] +  (b_1^2 + b_2^2 + b_3^2 +b_4^2) \mathtt{H}[\chi \md 1 \md 2] +  (b_1^2 + b_2^2 + b_3^2 +b_4^2 + b_5^2)  \mathtt{H}[\chi \md 1 \md 3]  \nonumber                   \\
	       & + \mathtt{H}[\chi \md 1 \md 2 \md 4 \md 5] + \mathtt{H}[\chi \md 1 \md 2 \md 4 \md 6] + \mathtt{H}[\chi \md 1 \md 3 \md 4 \md 6] + \mathtt{H}[\chi \md 1 \md 3 \md 5 \md 6]  + \mathtt{H}[\chi \md 1 \md 3 \md 5 \md 7] \\
	       & \;\; \vdots \nonumber
\end{align}

For a chain of size $M$, write $\Phi_M(u)$ for the generating function~\eqref{eq:Phi-def}.
The coefficient of each induced-path product $\mathtt{H}[\mathcal{L}]$ in $\Phi_M(u)$ is a rational function of $u$.
The common denominator is the independence polynomial defined in \eqref{eq:independence-polynomial}, denoted by $P_M(u^2)$ in this subsection.
After factoring out this denominator as $\Phi_M(u) = \Xi_M(u)/P_M(u^2)$, we obtain the path-product operator \eqref{eq:gen-path-product-operator}:
\begin{align}\label{eq:Xi-FFD}
	\Xi_M(u)
	 & =
	P_{2,M}(u^2) \mathtt{H}[\chi]
	+
	\sum_{r=1}^M
	\sum_{n=\lceil (r+1)/2\rceil}^{\lfloor 2(r+1)/3 \rfloor}
	(-u)^{n} P_{r+3,M}(u^2)
	\sum_{\mathcal{L} : \, \vert\mathcal{L}\vert = n + 1, \ell_n = r}
	\mathtt{H}[\mathcal{L}].
\end{align}
For the Fendley model, an induced path $\mathcal{L}=\chi\md\ell_1\md\cdots\md\ell_n$ satisfies $\ell_{j+1}-\ell_j\in\{1,2\}$, and two consecutive increments cannot both equal $1$.

In \eqref{eq:Xi-FFD}, the polynomials $P_{i,j}(x)$ are the independence polynomials of the interval frustration graph induced by the vertices $i,i+1,\ldots,j$ when $1\le i\le j\le M$.
We set $P_{i,j}(x)=1$ whenever $i>j$.
They satisfy recursion relations starting from both ends of the chain:
\begin{align}
	P_{i,j}(x)
	 & =
	P_{i+1,j}(x)-x b_i^2 P_{i+3,j}(x),
	 &
	P_{i,j}(x)
	 & =
	P_{i,j-1}(x) - x b_j^2 P_{i,j-3}(x).
\end{align}
The independence polynomial of the total chain is then $P_M(u^2) \equiv P_{1,M}(u^2)$.
Then the fermion modes are defined as
\begin{align}
	\Psi_k & = \frac{\Xi_M(u_k)}{\NN_k} & \NN_k & = 2\sqrt{-u_k^2 P_{2,M}(u_k^2) P_M'(u_k^2)}.
\end{align}

For homogeneous couplings $b_k=1$, the interval polynomials depend only on the interval length.
In particular, $P_{m,M}(x)=P_{M-m+1}(x)$, the independence polynomial of a shorter open chain.
The first few terms in \eqref{eq:Xi-FFD} then read as
\begin{align}\label{eq:Xi-FFD-expanded}
	\Xi_M(u)
	 & = P_{M-1}(u^2)\, \mathtt{H}[\chi] - u\, P_{M-3}(u^2)\, \mathtt{H}[\chi \md 1]
	\nonumber                                                                        \\
	 & \quad
	+ u^2\, \big[
		P_{M-4}(u^2)\, \mathtt{H}[\chi \md 1 \md 2]
		+ P_{M-5}(u^2)\, \mathtt{H}[\chi \md 1 \md 3]
		\big]
	\nonumber                                                                        \\
	 & \quad
	- u^3\, \big[
		P_{M-6}(u^2)\, \big(\mathtt{H}[\chi \md 1 \md 2 \md 4]
		+ \mathtt{H}[\chi \md 1 \md 3 \md 4]\big)
		+ P_{M-7}(u^2)\, \mathtt{H}[\chi \md 1 \md 3 \md 5]
		\big]
	\nonumber                                                                        \\
	 & \quad
	+ u^4\, \big[
		P_{M-7}(u^2)\, \mathtt{H}[\chi \md 1 \md 2 \md 4 \md 5]
	\nonumber                                                                        \\
	 & \qquad
		+ P_{M-8}(u^2)\, \big(\mathtt{H}[\chi \md 1 \md 2 \md 4 \md 6]
		+ \mathtt{H}[\chi \md 1 \md 3 \md 4 \md 6]
		+ \mathtt{H}[\chi \md 1 \md 3 \md 5 \md 6]\big)
	\nonumber                                                                        \\
	 & \qquad
		+ P_{M-9}(u^2)\, \mathtt{H}[\chi \md 1 \md 3 \md 5 \md 7]
		\big] + \cdots\,.
\end{align}
No $P_{M-2}$ term appears because the only range-zero path is the minimal rooted path $(\chi)$.

\subsection{The coefficients}
\label{subsec:ffd-coefficients}

For the asymptotic estimate, we pass to the homogeneous half-infinite limit $M\to\infty$ with $b_j=1$, and consider long rooted paths of range $r$ with $n$ non-$\chi$ vertices, with $r/M\ll 1$ before the limit is taken.
In the rooted convention of Subsection~\ref{sec:main-result}, such a path is $\mathcal{L} = \chi \md \ell_1 \md \ell_2 \md \ldots \md \ell_n$, with $\vert \mathcal{L}\vert = n+1$ and $r = \ell_n$.

For a fermionic mode $\Psi_k$ with single-particle energy $\epsilon_k=1/u_k$, Eq.~\eqref{eq:Xi-FFD} gives the coefficient of a path with $n$ non-$\chi$ vertices and range $r$ as $(-u_k)^nP_{r+3,M}(u_k^2)/\NN_k=(-u_k)^nP_{M-r-2}(u_k^2)/\NN_k$, and only its absolute value enters the estimate below.
The energies of the fermion modes are determined by the roots $x_k$ of the independence polynomial $P_M$ of the frustration graph $P_M^2$, where $P_M(x_k) = 0$ and $x_k = u_k^2 = 1/\epsilon_k^2 > 0$.

For the Fendley model~\cite{fendley-fermions-in-disguise}, this polynomial satisfies the three-term recursion
\begin{equation}
	\label{Precursion2}
	P_m(x)=P_{m-1}(x)-x P_{m-3}(x),
\end{equation}
with $P_0(x)=P_{-1}(x)=P_{-2}(x)=1$.
Its characteristic polynomial is
\begin{equation}
	\label{alpharho}
	\alpha^{3}(x)-\alpha^{2}(x)+x=0.
\end{equation}
The discriminant of this cubic is
\begin{equation}
	\Delta = x (4 - 27 x).
\end{equation}
For $0 < x < 4/27$ we have $\Delta>0$ and three real roots.
Set $r_m(x)=P_m(x)/P_{m-1}(x)$, with $r_{-1}(x)=r_0(x)=1$.
If $r_{m-1}(x),r_{m-2}(x)>2/3$, then Eq.~\eqref{Precursion2} gives $r_m(x)=1-x/[r_{m-1}(x)r_{m-2}(x)]>1-(4/27)/(4/9)=2/3$.
Induction therefore gives $r_m(x)>2/3$ for all $m\geq1$, and hence $P_m(x)>0$ for all $m\geq0$.

Hence every positive zero of $P_m$, including the physical roots $x_k$ of $P_M$, lies in $x\geq4/27$.
For $x>4/27$, the characteristic cubic has a negative real root $\alpha_{0}(x)$ and a complex conjugate pair $\alpha_{+}(x)=\alpha_{-}^{*}(x)$, which we parametrize as
\begin{equation}
	\alpha_{\pm}= \vert \alpha \vert e^{\pm i p/3},
\end{equation}
where the phase angle corresponds to the momentum variable $p$ of \cite{fendley-fermions-in-disguise}.

Setting $B\equiv2\cos(p/3)$, comparison of the coefficients of the characteristic cubic gives
\begin{align}\label{eq:cubic-root-relations}
	\alpha_0 + \vert \alpha \vert B & = 1 & \vert \alpha \vert (\alpha_0  B  + \vert \alpha \vert) & = 0 & - \alpha_0 \vert \alpha \vert^2 & = x,
\end{align}
which in turn allows us to express the roots as
\begin{align}
	x & =\frac{B^{2}}{(B^{2}-1)^{3}}, & \vert \alpha \vert & =\frac{B}{B^{2}-1}, & \alpha_{0} & =-\frac{1}{B^{2}-1}.\label{eq:param}
\end{align}
The physical branch has $0\leq p<\pi$, equivalently $1<B\leq2$; the endpoint $p=0$ gives $x=4/27$, while $p\to\pi$ gives $x\to\infty$.

The recurrence and initial values give the generating function
\begin{equation}
	\sum_{m=0}^{\infty} P_m(x) t^m = \frac{1-x t -x t^2}{1-t+x t^3}.
\end{equation}
For $x>4/27$, the roots are distinct, and the factorization $1-t+x t^3=(1-\alpha_0 t)(1-\alpha_+t)(1-\alpha_-t)$ gives $P_m(x)=\sum_{a\in\{0,+,-\}}\alpha_a^{m+3}/(3\alpha_a-2)$ by partial fractions.
The value at $x=4/27$ is obtained by continuity.

Using the parametrization \eqref{eq:param}, this becomes
\begin{equation}
	P_{m}(x) = \frac{\alpha_{0}^{m+2} +B \vert \alpha \vert^{m+2} \bigg( 3\cos{\frac{p}{3}} \frac{\sin\left(\frac{m+3}{3}p \right)}{\sin(p/3)} + \cos\left(\frac{m+3}{3}p\right)\bigg)}{1+2 B^2}. \label{eq:polycheb}
\end{equation}
The quotient $\sin((m+3)p/3)/\sin(p/3)$ is understood by continuity at $p=0$, where it is $O(m)$.
Equation~\eqref{eq:param} gives $\vert\alpha\vert=B(-\alpha_0)\geq-\alpha_0$, so the $\vert\alpha\vert^{m+2}$ term in Eq.~\eqref{eq:polycheb} determines whether $P_m(x)$ grows or decays exponentially for fixed $x$.
The sequence $P_m(x)$ decays exponentially when $\vert\alpha\vert<1$ ($B>(1+\sqrt{5})/2$ or $p<3\pi/5$), is bounded and oscillatory when $\vert\alpha\vert=1$, and grows exponentially when $\vert\alpha\vert>1$.

Returning to $x=u^2$, write $\vert\alpha(u)\vert$ for $\vert\alpha\vert$ evaluated at $x=u^2$.
The dominant factor in Eq.~\eqref{eq:polycheb} is $\vert\alpha(u)\vert^{M-r}=\vert\alpha(u)\vert^M\vert\alpha(u)\vert^{-r}$.
We absorb the first factor into the normalization and retain the range-dependent factor $\vert\alpha(u)\vert^{-r}$.
The remaining trigonometric factor does not change the exponential growth or decay with $r$.

\subsection{Path counting and entropy}
\label{subsec:ffd-free-energy}

The coefficient estimate above is not the full operator weight, because the number of admissible paths also grows exponentially as the range is increased with $n/r$ fixed.
We therefore analyze the summation in Eq.~\eqref{eq:Xi-FFD} statistically, treating the logarithm of the path count as an entropy for this counting problem, which is possible because the coefficients of the paths depend only on $r$ and $n$.

For a path of range $r$ with $n$ non-$\chi$ vertices specified by $(\ell_1,\ell_2,\dots,\ell_n)$, set the path-vertex density $\rho=n/r$.
Its increment sequence is $(s_1,s_2,\dots,s_{n-1})$, where $s_k=\ell_{k+1}-\ell_k$.
Every increment is $1$ or $2$, but every internal $1$ is preceded and followed by a $2$.
Up to boundary corrections that are subexponential in the range, such an increment sequence can be constructed by concatenating the strings
\begin{equation}
	2\quad \text{ and } \quad 12.
\end{equation}
A trailing 1 is permitted; it shifts $n$ and $r$ by $O(1)$ and is neglected here.

If we have $ar$ insertions of 2 and $br$ insertions of 12, then the number of non-$\chi$ vertices and the range are expressed as
\begin{equation}
	n=(a+2b)r,\qquad 1=2a+3b.
\end{equation}
This yields
\begin{equation}
	a=2-3\rho, \qquad b=2\rho-1.
\end{equation}
The nonnegativity condition $a,b\geq 0$ therefore constrains the density to the interval $1/2\leq \rho\leq 2/3$.

For a fixed $a$ and $b$ the number of paths is
\begin{equation}
	\Omega=\binom{(a+b)r}{ar}.
\end{equation}
Taking the logarithm and using Stirling's formula, we obtain the leading term
\begin{equation}
	s=\log(\Omega)=  r\left[ (a+b)\log(a+b)-a\log(a)-b\log(b)\right].
\end{equation}
Substituting the relations above gives
\begin{equation}
	\begin{split}
		s= r [ & (1-\rho)\log(1-\rho) - (2-3\rho)\log(2-3\rho)-(2\rho-1)\log(2\rho-1)].
	\end{split}
\end{equation}

\subsection{Free energy and operator weight}
\label{subsec:ffd-operator-weight}

Combining the prefactor and the entropy, we obtain the ``free energy'' for the paths as
\begin{equation}\label{upower}
	e^{-f}= \left|u^{\rho r}\vert\alpha(u)\vert^{-r}\right|^2 e^s.
\end{equation}
The squared modulus appears because the operator weight is the sum of squared absolute coefficients of the path-product terms with fixed $(r,n)$.

Taking the logarithm of Eq.~\eqref{upower} gives
\begin{equation}
	\begin{split}
		-f= r\Big( & 2\rho\log\vert u\vert -2\log(|\alpha|) + [ (1-\rho)\log(1-\rho) - (2-3\rho)\log(2-3\rho)-(2\rho-1)\log(2\rho-1)]\Big).
	\end{split}
\end{equation}

Extremizing over $\rho$ determines the saddle-point value $\rho_*$ by
\begin{equation}
	\label{urho}
	u^2\frac{(2-3\rho_*)^3}{(1-\rho_*)(2\rho_*-1)^2}=1.
\end{equation}
For fixed spectral parameter $u$, this saddle-point density is independent of the range $r$ at the leading exponential order.
At range $r$, the dominant paths have $n\simeq \rho_* r$ non-$\chi$ vertices.

At the saddle point $\rho=\rho_*$, the exponent becomes
\begin{equation}
	\label{fujra}
	\begin{split}
		-f_*= r\Big( & -2\log(|\alpha|) + [ \log(1-\rho_*) - 2\log(2-3\rho_*)+\log(2\rho_*-1)]\Big).
	\end{split}
\end{equation}
Introduce $\beta=(1-2\rho_*)/(2-3\rho_*)$.
Since $1-\beta=(1-\rho_*)/(2-3\rho_*)$, the saddle-point equation \eqref{urho} gives $u^2=\beta^2(1-\beta)$.
Thus $\beta$ satisfies the characteristic equation \eqref{alpharho} with $x=u^2$.
In the relevant range $1/2<\rho_*<2/3$, this $\beta$ is negative, and hence it is the real negative root $\alpha_0$ introduced above:
\begin{equation}\label{eq:rhoalpha0}
	\alpha_0 = \frac{1-2\rho_*}{2-3\rho_*}.
\end{equation}
With this identification, the saddle-point equation reduces to $u^2 = \alpha_0^2 (1 - \alpha_0)$.
The logarithmic factors on the right-hand side of \eqref{fujra} may be collected into a single log, whose argument may be expressed in terms of $\alpha_0$ as
\begin{equation}\label{eq:rhoz}
	\frac{(1-\rho_*)(2\rho_*-1)}{(2-3\rho_*)^2 \vert\alpha\vert^2}  =  -\frac{\alpha_0 (1-\alpha_0)}{\vert\alpha\vert^2}
	= 1.
\end{equation}
Here the final equality follows by comparing the saddle relation $u^2 = \alpha_0^2(1-\alpha_0)$ with the root relation $u^2=-\alpha_0\vert\alpha\vert^2$ from Eq.~\eqref{eq:cubic-root-relations}.
Thus the optimized leading exponential rate is $f_*=0$.

Hence, after summing over the dominant path density at fixed range, the combined operator weight neither grows nor decays exponentially with the distance from the boundary.
Polynomial prefactors, oscillatory factors in $P_m(x)$, and finite-size normalization effects are not resolved by this saddle estimate.

Equation~\eqref{eq:rhoalpha0} gives the optimized density as a function of momentum $p$:
\begin{equation}
	\rho_*(p) = \frac{4 \cos^2(p/3)+1}{8 \cos^2(p/3)+1}.
\end{equation}
This function grows monotonically from $\rho_*(0) = 5/9$ to $\rho_*(\pi) = 2/3$.
Thus a fermion mode with energy $\epsilon$ has a preferred path-vertex density $\rho_*$ in the path products $\mathtt{H}[\mathcal{L}]$.

\subsection{Local conserved charges: open inhomogeneous Fendley model}
\label{subsec:ffd-low-order-charges}
\label{subsec:ffd-open-inhomogeneous}

We specialize the local conserved-charge formula~\eqref{eq:local-charge} of Section~\ref{sec:conserved-charges} to the open Fendley frustration graph $P_M^2$ with arbitrary couplings.
In this subsection we write $\localcharge{M}{2k+1}\equiv\localcharge{P_M^2}{2k+1}$.

The induced paths through a left endpoint $j$ have successive index increments $w_a\in\{1,2\}$ subject to no two consecutive unit jumps, since $(w_a,w_{a+1})=(1,1)$ would make $j$ and $j+2$ adjacent in $P_M^2$.
On $G=P_M^2$ with arbitrary couplings $b_j$, the lowest charge is the Hamiltonian
\begin{align}
	\localcharge{M}{1}=\sum_{j=1}^{M}\hc_j.
	\label{eq:ffd-H1-example}
\end{align}
The three induced three-vertex paths through left endpoint $j$ give
\begin{align}
	\localcharge{M}{3}
	=
	\sum_{j=1}^{M-3}
	\bigl(
	\pathprodH{j\md (j{+}1)\md (j{+}3)}
	+
	\pathprodH{j\md (j{+}2)\md (j{+}3)}
	\bigr)
	+
	\sum_{j=1}^{M-4}
	\pathprodH{j\md (j{+}2)\md (j{+}4)}
	+
	\sum_{j=1}^{M}
	\indepcoeff{\complementset{(j)}}{1}\,\hc_j,
	\label{eq:ffd-H3-example}
\end{align}
where $\complementset{(j)}=P_M^2\setminus\Gamma[j]$ is the residual graph~\eqref{eq:residual-graph} and
\begin{align}
	\indepcoeff{\complementset{(j)}}{1}
	=
	-
	\sum_{\substack{1\le\ell\le M \\ |\ell-j|>2}}
	b_\ell^2.
	\label{eq:ffd-open-singleton-residual-example}
\end{align}

For $\localcharge{M}{5}$ we obtain
\begin{align}
	\localcharge{M}{5}
	 & = \sum_{j=1}^{M-8}
	\pathprodH{j\md (j{+}2)\md (j{+}4)\md (j{+}6)\md (j{+}8)}
	\nonumber                   \\
	 & \quad + \sum_{j=1}^{M-7}
	\bigl(
	\pathprodH{j\md (j{+}2)\md (j{+}4)\md (j{+}6)\md (j{+}7)}
	+ \pathprodH{j\md (j{+}2)\md (j{+}4)\md (j{+}5)\md (j{+}7)}
	\nonumber                   \\
	 & \qquad\qquad
	+ \pathprodH{j\md (j{+}2)\md (j{+}3)\md (j{+}5)\md (j{+}7)}
	+ \pathprodH{j\md (j{+}1)\md (j{+}3)\md (j{+}5)\md (j{+}7)}
	\bigr)
	\nonumber                   \\
	 & \quad + \sum_{j=1}^{M-6}
	\bigl(
	\pathprodH{j\md (j{+}2)\md (j{+}3)\md (j{+}5)\md (j{+}6)}
	+ \pathprodH{j\md (j{+}1)\md (j{+}3)\md (j{+}5)\md (j{+}6)}
	\nonumber                   \\
	 & \qquad\qquad\qquad
	+ \pathprodH{j\md (j{+}1)\md (j{+}3)\md (j{+}4)\md (j{+}6)}
	\bigr)
	\nonumber                   \\
	 & \quad + \sum_{j=1}^{M-3}
	\bigl(
	\indepcoeff{\complementset{j\md (j{+}1)\md (j{+}3)}}{1}\,
	\pathprodH{j\md (j{+}1)\md (j{+}3)}
	+ \indepcoeff{\complementset{j\md (j{+}2)\md (j{+}3)}}{1}\,
	\pathprodH{j\md (j{+}2)\md (j{+}3)}
	\bigr)
	\nonumber                   \\
	 & \quad + \sum_{j=1}^{M-4}
	\indepcoeff{\complementset{j\md (j{+}2)\md (j{+}4)}}{1}\,
	\pathprodH{j\md (j{+}2)\md (j{+}4)}
	\nonumber                   \\
	 & \quad + \sum_{j=1}^{M}
	\indepcoeff{\complementset{(j)}}{2}\,\hc_j.
	\label{eq:ffd-H5-example}
\end{align}
The residual coefficients are
\begin{align}
	\indepcoeff{\complementset{j\md (j{+}1)\md (j{+}3)}}{1}
	=
	\indepcoeff{\complementset{j\md (j{+}2)\md (j{+}3)}}{1}
	 & =
	-
	\sum_{\substack{1\le\ell\le M   \\ \ell\le j-3\ \mathrm{or}\ \ell\ge j+6}}
	b_\ell^2,
	\nonumber                       \\
	\indepcoeff{\complementset{j\md (j{+}2)\md (j{+}4)}}{1}
	 & =
	-
	\sum_{\substack{1\le\ell\le M   \\ \ell\le j-3\ \mathrm{or}\ \ell\ge j+7}}
	b_\ell^2,
	\nonumber                       \\
	\indepcoeff{\complementset{(j)}}{2}
	 & =
	\sum_{\substack{1\le\ell<r\le M \\ |\ell-j|>2,\ |r-j|>2\\ |r-\ell|>2}}
	b_\ell^2 b_r^2.
	\label{eq:ffd-H5-residual-example}
\end{align}
These residual formulas are Eq.~\eqref{eq:indep-coeff} evaluated on the residual graph $P_M^2\setminus\Gamma[\pa]$.
To check the eight five-vertex families in the first three sums, list each in increasing order as $\pa=i_1\md i_2\md i_3\md i_4\md i_5$ with $i_1=j$, and set $d_a=i_{a+1}-i_a$ for $a=1,\ldots,4$.
In $P_M^2$ each $d_a$ is either $1$ or $2$, and two consecutive unit steps would make the first and third vertices adjacent, producing a chord.
Thus the allowed increment sequences are
\begin{align}
	(d_1,d_2,d_3,d_4)
	\in
	\{2222,2221,2212,2122,1222,2121,1221,1212\}.
\end{align}
The upper limit of the corresponding $j$-sum is the condition $j+d_1+\cdots+d_4\le M$.

\begin{figure}[t]
	\centering
	\begin{tikzpicture}[
			scale=1,
			line cap=round,
			line join=round,
			vertex/.style={circle, draw, thick, fill=white, minimum size=\vertexsize, inner sep=0pt},
			selected/.style={circle, draw=Red, thick, fill=Red!15, minimum size=9pt, inner sep=0pt},
			graphline/.style={thick, gray!45},
			pathedge/.style={opacity=1, line width=1pt, draw=Red}
		]
		\def\N{12}
		\def\xstep{0.62}
		\def\yodd{0.54}
		\def\yeven{-0.54}
		\pgfmathtruncatemacro{\NmOne}{\N-1}
		\pgfmathtruncatemacro{\NmTwo}{\N-2}
		\node at ({-\xstep},0) {$\cdots$};
		\node at ({\xstep*\N+\xstep},0) {$\cdots$};
		\foreach \x/\lab in {0/{$j{-}2$},1/{$j{-}1$},2/{$j$},3/{$j{+}1$},4/{$j{+}2$},5/{$j{+}3$},6/{$j{+}4$},7/{$j{+}5$},8/{$j{+}6$},9/{$j{+}7$},10/{$j{+}8$},11/{$j{+}9$},12/{$j{+}10$}} {
		\ifodd\x\relax
			\coordinate (v\x) at ({\xstep*\x},\yodd);
		\else
			\coordinate (v\x) at ({\xstep*\x},\yeven);
		\fi
		}
		\foreach \x in {0,...,\NmOne} {
				\pgfmathtruncatemacro{\y}{\x+1}
				\draw[graphline] (v\x) -- (v\y);
			}
		\foreach \x in {0,...,\NmTwo} {
				\pgfmathtruncatemacro{\y}{\x+2}
				\draw[graphline] (v\x) -- (v\y);
			}
		\draw[pathedge] (v2) -- (v3) -- (v5) -- (v7) -- (v8);
		\foreach \x/\lab in {0/{$j{-}2$},1/{$j{-}1$},4/{$j{+}2$},6/{$j{+}4$},9/{$j{+}7$},10/{$j{+}8$},11/{$j{+}9$},12/{$j{+}10$}} {
		\ifodd\x\relax
			\node[vertex, label=above:\lab] at (v\x) {};
		\else
			\node[vertex, label=below:\lab] at (v\x) {};
		\fi
		}
		\foreach \x/\lab in {2/{$j$},3/{$j{+}1$},5/{$j{+}3$},7/{$j{+}5$},8/{$j{+}6$}} {
		\ifodd\x\relax
			\node[selected, label=above:\lab] at (v\x) {};
		\else
			\node[selected, label=below:\lab] at (v\x) {};
		\fi
		}
	\end{tikzpicture}
	\caption{The five-vertex induced path $j\md(j{+}1)\md(j{+}3)\md(j{+}5)\md(j{+}6)$ in the open Fendley frustration graph $P_M^2$.}
	\label{fig:ffd-open-local-five-path}
\end{figure}

The open graph $P_M^2$ is ECF, so $\localcharge{M}{2k+1}$ commutes with the corresponding Hamiltonian of the Fendley model for arbitrary couplings by Corollary~\ref{cor:local-charge-conservation}.
If the open chain is further specialized to $b_j=1$, boundary paths still have shorter residual graphs and do not reduce to a single translation-invariant coefficient.

\paragraph{Open generalized conserved-charge example.}
The lowest-order open-chain mixed case, with $1<c<m$, in which both packed components can be non-singleton odd paths is $\packingcharge{P_M^2}{6}{2}$.
By Eq.~\eqref{eq:path-packing-charge},
\begin{align}
	\packingcharge{P_M^2}{6}{2}
	=
	\sum_{\PP\in\oddpathsetlengthcomp{P_M^2}{6}{2}}
	\pathprodH{\PP}
	+
	\sum_{\PP\in\oddpathsetlengthcomp{P_M^2}{4}{2}}
	\indepcoeff{\complementset{\PP}}{1}\,
	\pathprodH{\PP}
	+
	\sum_{\PP\in\oddpathsetlengthcomp{P_M^2}{2}{2}}
	\indepcoeff{\complementset{\PP}}{2}\,
	\pathprodH{\PP}.
	\label{eq:ffd-first-mixed-packing-example}
\end{align}
The first sum contains the packings with component sizes $(3,3)$ and $(1,5)$; the residual terms contain the smaller packings $(1,3)$ and $(1,1)$.
For example, for $1\le j\le M-10$ the $(3,3)$ packing
\begin{align}
	\pathprodH{(j\md (j{+}1)\md (j{+}3))\sqcup((j{+}6)\md (j{+}8)\md (j{+}10))}
	\label{eq:ffd-open-packing-representative}
\end{align}
appears in the first sum; see Figure~\ref{fig:ffd-open-packing-33}.
Both components are induced paths in $P_M^2$, and their nearest vertices are separated by distance three in the underlying path $P_M$, so no edge of $P_M^2$ lies between the two components.

\begin{figure}[t]
	\centering
	\begin{tikzpicture}[
			scale=1,
			line cap=round,
			line join=round,
			vertex/.style={circle, draw, thick, fill=white, minimum size=\vertexsize, inner sep=0pt},
			packa/.style={circle, draw=Red, thick, fill=Red!15, minimum size=9pt, inner sep=0pt},
			packb/.style={circle, draw=NavyBlue, thick, fill=NavyBlue!15, minimum size=9pt, inner sep=0pt},
			graphline/.style={thick, gray!45},
			patha/.style={opacity=1, line width=1pt, draw=Red},
			pathb/.style={opacity=1, line width=1pt, draw=NavyBlue}
		]
		\def\N{18}
		\def\xstep{0.62}
		\def\yodd{0.54}
		\def\yeven{-0.54}
		\pgfmathtruncatemacro{\NmOne}{\N-1}
		\pgfmathtruncatemacro{\NmTwo}{\N-2}
		\node at ({-\xstep},0) {$\cdots$};
		\node at ({\xstep*\N+\xstep},0) {$\cdots$};
		\foreach \x/\lab in {0/{$j{-}4$},1/{$j{-}3$},2/{$j{-}2$},3/{$j{-}1$},4/{$j$},5/{$j{+}1$},6/{$j{+}2$},7/{$j{+}3$},8/{$j{+}4$},9/{$j{+}5$},10/{$j{+}6$},11/{$j{+}7$},12/{$j{+}8$},13/{$j{+}9$},14/{$j{+}10$},15/{$j{+}11$},16/{$j{+}12$},17/{$j{+}13$},18/{$j{+}14$}} {
		\ifodd\x\relax
			\coordinate (p\x) at ({\xstep*\x},\yodd);
		\else
			\coordinate (p\x) at ({\xstep*\x},\yeven);
		\fi
		}
		\foreach \x in {0,...,\NmOne} {
				\pgfmathtruncatemacro{\y}{\x+1}
				\draw[graphline] (p\x) -- (p\y);
			}
		\foreach \x in {0,...,\NmTwo} {
				\pgfmathtruncatemacro{\y}{\x+2}
				\draw[graphline] (p\x) -- (p\y);
			}
		\draw[patha] (p4) -- (p5) -- (p7);
		\draw[pathb] (p10) -- (p12) -- (p14);
		\foreach \x/\lab in {0/{$j{-}4$},1/{$j{-}3$},2/{$j{-}2$},3/{$j{-}1$},6/{$j{+}2$},8/{$j{+}4$},9/{$j{+}5$},11/{$j{+}7$},13/{$j{+}9$},15/{$j{+}11$},16/{$j{+}12$},17/{$j{+}13$},18/{$j{+}14$}} {
		\ifodd\x\relax
			\node[vertex, label=above:\lab] at (p\x) {};
		\else
			\node[vertex, label=below:\lab] at (p\x) {};
		\fi
		}
		\foreach \x/\lab in {4/{$j$},5/{$j{+}1$},7/{$j{+}3$}} {
		\ifodd\x\relax
			\node[packa, label=above:\lab] at (p\x) {};
		\else
			\node[packa, label=below:\lab] at (p\x) {};
		\fi
		}
		\foreach \x/\lab in {10/{$j{+}6$},12/{$j{+}8$},14/{$j{+}10$}} {
		\node[packb, label=below:\lab] at (p\x) {};
		}
	\end{tikzpicture}
	\caption{A representative $(3,3)$ term in $\packingcharge{P_M^2}{6}{2}$; the red and blue paths are the two components in Eq.~\eqref{eq:ffd-open-packing-representative}.}
	\label{fig:ffd-open-packing-33}
\end{figure}

\subsection{Local conserved charges: homogeneous periodic Fendley model}
\label{subsec:ffd-uniform-periodic}

On the periodic graph $C_M^2$ with homogeneous couplings $b_j=1$, we write $H_{\mathrm{hom,per}}\equiv\left.H_{C_M^2}\right|_{b_j=1}$ for the Hamiltonian.
Translation invariance forces each path-product to enter the conserved charges with the same coefficient as its translates.

In the remainder of this subsection and in Subsection~\ref{subsec:ffd-catalan-tree}, we write $\pathprod{\ell_1\md\cdots\md\ell_w}\equiv h_{\ell_1}\cdots h_{\ell_w}$ for the corresponding bare homogeneous path product.

\begin{figure}[t]
	\centering
	\begin{tikzpicture}[
			scale=1,
			line cap=round,
			line join=round,
			vertex/.style={circle, draw, thick, fill=white, minimum size=\vertexsize, inner sep=0pt},
			ringline/.style={thick, gray!55},
			zigzagline/.style={thick, RoyalBlue!75!black}
		]
		\def\Router{2.35}
		\def\Rinner{1.48}
		\def\angstep{45}
		\def\angshift{22.5}
		\foreach \p in {0,...,7} {
				\coordinate (o\p) at ({90-\angstep*\p}:\Router);
				\coordinate (e\p) at ({90-\angstep*\p-\angshift}:\Rinner);
			}
		\draw[ringline] (0,0) circle [radius=\Router];
		\draw[ringline] (0,0) circle [radius=\Rinner];
		\draw[zigzagline] (o0) -- (e0) -- (o1) -- (e1) -- (o2) -- (e2) -- (o3) -- (e3) -- (o4) -- (e4) -- (o5) -- (e5) -- (o6) -- (e6) -- (o7) -- (e7) -- cycle;
		\foreach \p in {0,...,7} {
				\pgfmathtruncatemacro{\idxodd}{2*\p+1}
				\pgfmathtruncatemacro{\idxeven}{2*\p+2}
				\node[vertex] at (o\p) {};
				\node[vertex] at (e\p) {};
				\node[font=\scriptsize] at ($(0,0)!1.14!(o\p)$) {$\idxodd$};
				\node[font=\footnotesize] at ($(0,0)!0.78!(e\p)$) {$\idxeven$};
			}
	\end{tikzpicture}
	\caption{The periodic Fendley frustration graph $C_{16}^2$.
		Distance-two edges form the outer odd and inner even cycles, while nearest-neighbor edges form the blue zigzag between them.}
	\label{fig:ffd-periodic-frustration-graph}
\end{figure}

\begin{lem}\label{lem:CM2-geometry}
	For $M>6$, the smallest induced even hole in $C_M^2$ has size $2\Big\lceil\frac{M}{4}\Big\rceil$.
	Moreover, $C_M^2$ is even-bubble-wand-free.
	The cyclic orientation $i\to i+1$ and $i\to i+2$ of all edges (indices modulo $M$) is an induced-path orientation.
\end{lem}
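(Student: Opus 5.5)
The plan is to work directly with the cyclic structure of $C_M^2$: vertices are $\mathbb{Z}_M$, and $i\sim j$ if and only if the cyclic distance of $i$ and $j$ is $1$ or $2$. The assumption $M>6$ will be used repeatedly. It guarantees that two vertices written locally as $j-a$ and $j+b$ with $a,b\in\{1,2\}$ and $a+b\ge3$ are non-adjacent, because their cyclic distance is $\min(a+b,M-a-b)\ge3$. I would prove the orientation claim first, then classify induced cycles, and derive the other two claims from that classification.

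\emph{Orientation.} Let $\ibm\md\jbm\md\ellbm$ be an induced three-vertex path, and write $\ibm=\jbm+\delta$, $\ellbm=\jbm+\delta'$ with $\delta,\delta'\in\{\pm1,\pm2\}$. If $\delta$ and $\delta'$ had the same sign, then $\ibm$ and $\ellbm$ would be at distance at most $1$. They would then be equal or adjacent, which contradicts the path being induced. Hence, after possibly swapping the roles of $\ibm$ and $\ellbm$, we have $\ibm=\jbm-a$ and $\ellbm=\jbm+b$ with $a,b\in\{1,2\}$. The cyclic orientation then gives $\ibm\to\jbm\to\ellbm$, so $\eta_{\ibm\jbm}\eta_{\jbm\ellbm}=1$, which is Eq.~\eqref{eq:induced-path-orientation-sign}.

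\emph{Classification of holes.} The same local argument shows that along any induced path or hole, consecutive steps have the same sign. So after reorienting, a hole proceeds monotonically around $\mathbb{Z}_M$ with steps $d_a\in\{1,2\}$, and no two consecutive steps (read cyclically) are both $1$, since two consecutive unit steps would create a chord. A hole of length $\ge4$ that winds zero times would lie inside an arc, that is, inside a copy of $P_n^2$. That is impossible because $P_n^2$ is chordal: it is an interval graph, and in the monotone picture a zero-winding cycle cannot close up. A hole cannot wind twice either, since steps of size at most $2$ would then revisit a vertex, or create a chord between two passes. Hence every hole winds exactly once. With $a$ steps equal to $2$ and $b$ steps equal to $1$,
\begin{equation*}
  2a+b=M,\qquad b\le a,\qquad |C|=a+b=M-a .
\end{equation*}
Conversely, every choice with $b\le a$ can be realized by interleaving the two step types, and the resulting cycle is induced for $M>6$; checking chords across the wrap-around is the only delicate point. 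The attainable sizes are therefore all integers $n$ with $\lceil M/2\rceil\le n\le\lfloor 2M/3\rfloor$. The smallest even value in this range is $2\lceil M/4\rceil$, and it lies in the range for $M>6$. I expect the main care to be needed in this step: the winding-number argument and the realizability and inducedness of the extremal cycles.

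\emph{Even-bubble-wand-freeness.} Let $C$ be any hole. By the classification, every vertex $x\notin C$ sits in the gap of a $2$-step $\jbm\to\jbm+2$ of $C$, so $x\sim\jbm\in C$. Hence $\Gamma[C]=V$. An even bubble wand, however, requires a predecessor vertex $\ellbm$ that is non-adjacent to every vertex of the hole. No such vertex exists, so $C_M^2$ contains no even bubble wand.
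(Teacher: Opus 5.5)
Your proof is correct. The orientation argument and the gap analysis for holes agree with the paper's: steps in $\{1,2\}$, no two consecutive unit steps, the bound $|C|\ge M/2$, and an interleaved cycle attaining $2\lceil M/4\rceil$.

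The genuine difference is the even-bubble-wand part. The paper argues locally. It takes the path endpoint $w$ attached to consecutive hole vertices $a,b$ and splits on the gap between them:
\begin{itemize}
\item Gap $2$: then $w=a+1$ is the unique common neighbour of $a$ and $b$, and all neighbours of $w$ lie in $\{a,b,a-1,b+1\}$.
\item Gap $1$: then $w=a-1$, the hole vertex preceding $a$ is forced to be $a-2$, and the one remaining neighbour $a-3$ is adjacent to it.
\end{itemize}
In both cases the predecessor cannot avoid the hole.

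You instead observe that every hole is a dominating set, $\Gamma[C]=V$, because its vertex set has all cyclic gaps at most $2$. This is shorter and strictly stronger. It uses neither the parity of the hole nor the condition that the endpoint touches exactly two consecutive hole vertices, so it excludes bubble wands built on holes of either parity. The paper's argument, by contrast, only inspects the neighbourhood of the attachment point. You also justify the monotone traversal of a hole with the same induced-$P_3$ sign argument you use for the orientation. That is cleaner than the paper's one-line chord remark that the cyclic order on $C_M$ agrees with the order along the hole.

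Two small points.
\begin{itemize}
\item Your claims that a hole cannot wind zero times or twice are informal, and you do not need them. All steps are positive, so they sum to $wM$ with $w\ge 1$. That already gives $|C|\ge wM/2\ge M/2$, and the cyclic-gap-at-most-$2$ property you use for domination. Winding exactly once matters only for the upper end $\lfloor 2M/3\rfloor$ of your full classification, which the lemma does not require.
\item The inducedness check you flag as delicate is one line. Any two non-consecutive hole vertices have at least two steps on each arc between them, and two consecutive steps sum to at least $3$, so their cyclic distance is at least $3$. You should also verify that $2\lceil M/4\rceil\le\lfloor 2M/3\rfloor$. For $M\ge 9$ this follows from $2\lceil M/4\rceil\le (M+3)/2\le 2M/3$, and for $M=7,8$ it can be checked directly.
\end{itemize}
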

\begin{proof}
	Let $\mathcal{C}$ be an induced $q$-hole of $C_M^2$ with vertices $v_1,\dots,v_q$, listed in cyclic order on the underlying cycle $C_M$.
	This is also their order on $\mathcal{C}$, since otherwise $\mathcal{C}$ would have a chord.
	Let $d_i$ be the cyclic distance from $v_i$ to $v_{i+1}$ on $C_M$ (indices modulo $q$).
	Each $d_i$ lies in $\{1,2\}$, and two consecutive gaps cannot both equal $1$, since then $v_i$ and $v_{i+2}$ would form a chord.
	Hence $M=\sum_i d_i\le 2q$, so an even $q$ satisfies $q\ge 2\lceil M/4\rceil$.
	To attain this bound, set $q=2\lceil M/4\rceil$ and $t=2q-M$.
	For $M>6$, we have $0\le t\le\lfloor q/2\rfloor$, so there is a cyclic sequence of $q$ gaps containing $t$ isolated entries equal to $1$ and $q-t$ entries equal to $2$.
	The gaps sum to $M$ and therefore specify a $q$-cycle in $C_M^2$.
	Any two nonconsecutive vertices of this cycle have cyclic distance at least $3$, because each arc between them contains at least two gaps and no two gaps equal to $1$ are consecutive.
	The cycle is therefore induced and has size $q=2\lceil M/4\rceil$.

	Suppose that $C_M^2$ contains an even bubble wand, and let $w$ be the endpoint of its path part adjacent to two consecutive hole vertices $a,b$ and $u$ its predecessor.
	It suffices to show that every neighbor of $w$ is adjacent to a hole vertex, since Definition~\ref{def:even-bubble-wand} requires $u$ to be non-adjacent to the hole.
	The gap from $a$ to $b$ on $C_M$ is either $1$ or $2$.
	If it is $2$, the intermediate vertex $x$ is the unique common neighbor of $a$ and $b$, so $w=x$.
	Every neighbor of $x$ lies in $\{a,b,a-1,b+1\}$ and is therefore adjacent to $a$ or $b$.
	If the gap is $1$, the only common neighbors of $a$ and $b$ are $a-1$ and $b+1$; by symmetry take $w=a-1$.
	Let $c$ be the hole vertex preceding $a$ on $C_M$.
	Since $w$ is not on the hole, the gap from $c$ to $a$ is $2$, and hence $c=a-2$.
	The only neighbor of $w$ not already adjacent to $a$ or $b$ is $a-3$, which is adjacent to $c$.
	Thus every neighbor of $w$ is adjacent to the hole in either case, contradicting the defining condition on $u$.
	Therefore $C_M^2$ is even-bubble-wand-free.

	Finally, identify the vertices with $\mathbb{Z}/M\mathbb{Z}$ and orient $i\to i+1$ and $i\to i+2$.
	If $i\md j\md\ell$ is an induced path, then $i$ and $\ell$ lie on opposite sides of $j$; otherwise they would be adjacent in $C_M^2$.
	Hence either $j-i,\ell-j\in\{1,2\}$ or $i-j,j-\ell\in\{1,2\}$.
	In both cases the two edges point consistently along the path, which proves Eq.~\eqref{eq:induced-path-orientation-sign}.
\end{proof}

On $C_M^2$, the odd local conserved charges $\localcharge{C_M^2}{2r+1}$ are given by Eq.~\eqref{eq:local-charge} with indices read cyclically with period $M$ and $1\le j\le M$.
The graph $C_M^2$ is claw-free and, by Lemma~\ref{lem:CM2-geometry}, even-bubble-wand-free.
Corollary~\ref{cor:local-charge-ebwf} therefore shows that $\localcharge{C_M^2}{2r+1}$ commutes with $H_{C_M^2}$ for arbitrary $b_j$.

For $b_j=1$, orient every edge clockwise.
By Lemma~\ref{lem:CM2-geometry}, this is an induced-path orientation of $C_M^2$, so the even path operators $\mathcal{E}_{C_M^2,\omega}^{(2r)}$ of Definition~\ref{def:even-path-operator} are defined.
In Theorem~\ref{thm:even-path-commutator}, the parameter $K$ is half the smallest induced even-hole size, not the even-bubble-wand parameter used for the odd charges.
Lemma~\ref{lem:CM2-geometry} gives $K=\lceil M/4\rceil$, and $M\ge 4r+3$ implies $r<\lceil M/4\rceil=K$.

By the cyclic symmetry of $C_M^2$, the incoming and outgoing residual graphs at $j$ agree pairwise after a shift of vertex labels.
Since $b_i^2=1$, the corresponding singleton terms in Eq.~\eqref{eq:even-path-singleton-coeff} cancel, and hence $B_{j,\omega}^{(s)}=0$.
Hence $\mathcal{E}_{C_M^2,\omega}^{(2r)}$ commutes with $H_{\mathrm{hom,per}}$ in the range of Theorem~\ref{thm:even-path-commutator}.

We combine the odd family and these homogeneous even operators into the homogeneous periodic operator
\begin{align}
	\mathcal{H}_{\mathrm{hom,per}}^{(k)}
	 & \equiv
	\begin{cases}
		\left.\localcharge{C_M^2}{k-2}\right|_{b_j=1},           & k \text{ odd},  \\
		\left.\mathcal{E}_{C_M^2,\omega}^{(k-2)}\right|_{b_j=1}, & k \text{ even},
	\end{cases}
	\label{eq:ffd-homogeneous-periodic-charge-definition}
\end{align}
where in the even case $\omega$ is the clockwise orientation fixed above.

For $M\ge2k-1$, $\mathcal{H}_{\mathrm{hom,per}}^{(k)}$ is a local conserved charge with spin-chain support $2k-3$ in the defining representation $h_j=Z_jZ_{j+1}X_{j+2}$.
Indeed, each leading monomial is the product of $k-2$ generators $h_j$ along an induced path in $C_M^2$ all of whose steps have size two.
The leftmost and rightmost three-site supports are separated by $2(k-3)$ sites, so their union covers $2(k-3)+3=2k-3$ consecutive sites.
In particular, $\mathcal{H}_{\mathrm{hom,per}}^{(3)}=H_{\mathrm{hom,per}}$ and has three-site spin-chain support.
The shift by two aligns the superscript $k$ with the indexing of the Catalan basis of Subsection~\ref{subsec:ffd-catalan-tree}.

\begin{define}[Periodic building blocks]
	For any $s\ge 3$ and $m\ge 0$, define
	\begin{align}
		F_{s,m}
		 & \equiv
		\sum_{j=1}^{M}
		\sum_{\substack{
		d_a\in\{1,2\}\ (1\le a\le s+m-3) \\
		\#\{a:d_a=1\}=m                  \\
				(d_a,d_{a+1})\ne(1,1)\ (1\le a<s+m-3)
			}}
		\pathprod{j\md (j{+}d_1)\md \cdots \md (j{+}d_1+\cdots+d_{s+m-3})}.
		\label{eq:ffd-periodic-Fsm-definition}
	\end{align}
	All indices are read modulo $M$.
	Before reduction modulo $M$, each monomial in $F_{s,m}$ contains $s+m-2$ generators and occupies an interval of $2s+m-3$ sites in the defining representation $h_j=Z_jZ_{j+1}X_{j+2}$; if these sites remain distinct modulo $M$, this interval is its spin-chain support.
	For $s=3$ and $m=0$ the increment sequence is empty, so $F_{3,0}=\sum_{j=1}^{M}h_j$.
	We also set $F_{s,m}=0$ for $s<3$.
\end{define}

If $\pa$ is one of the paths summed in $F_{s,m}$ and its defining interval does not wrap around the periodic chain, then the residual graph $\complementset{\pa}$ is the square of a path on $M-2s-m+1$ vertices.
Therefore the degree-$n$ coefficient of the residual independence polynomial appearing in the homogeneous specializations of the odd local conserved charge~\eqref{eq:local-charge} and the oriented even path operator~\eqref{eq:even-path-charge} is
\begin{align}
	\indepcoeff{\complementset{\pa}}{n}
	 & =
	(-1)^n\binom{M-2s-m-2n+3}{n}.
	\label{eq:ffd-uniform-residual-binomial}
\end{align}

Combining the appropriate charge definition with Eq.~\eqref{eq:ffd-uniform-residual-binomial} gives the following closed expression for every $k\ge 3$ and $M\ge 2k-1$:
\begin{align}
	\mathcal{H}_{\mathrm{hom,per}}^{(k)}
	 & =
	\sum_{\substack{0 \le m+n < \lfloor k/2\rfloor \\ m,n\ge 0}}
	(-1)^n
	\binom{M-2k+2n+m+3}{n}
	F_{k-2n-m,m}.
	\label{eq:ffd-uniform-general-charge-formula}
\end{align}
In both parities the leading term is $F_{k,0}$.

The first odd cases of Eq.~\eqref{eq:ffd-uniform-general-charge-formula} are
\begin{align}
	\mathcal{H}_{\mathrm{hom,per}}^{(3)}
	 & =
	F_{3,0},
	\nonumber \\
	\mathcal{H}_{\mathrm{hom,per}}^{(5)}
	 & =
	F_{5,0}
	+
	F_{4,1}
	-
	(M-5)F_{3,0},
	\nonumber \\
	\mathcal{H}_{\mathrm{hom,per}}^{(7)}
	 & =
	F_{7,0}
	+
	F_{6,1}
	+
	F_{5,2}
	-
	(M-9)F_{5,0}
	-
	(M-8)F_{4,1}
	+
	\binom{M-7}{2}F_{3,0}.
	\label{eq:ffd-uniform-low-odd-examples}
\end{align}
The first even cases are
\begin{align}
	\mathcal{H}_{\mathrm{hom,per}}^{(4)}
	 & =
	F_{4,0}
	+
	F_{3,1},
	\nonumber \\
	\mathcal{H}_{\mathrm{hom,per}}^{(6)}
	 & =
	F_{6,0}
	+
	F_{5,1}
	+
	F_{4,2}
	-
	(M-7)F_{4,0}
	-
	(M-6)F_{3,1}.
	\label{eq:ffd-uniform-low-even-examples}
\end{align}

\subsection{Catalan-tree pattern in the homogeneous periodic Fendley model}
\label{subsec:ffd-catalan-tree}

A triangular recombination of the homogeneous periodic Fendley charges yields a conserved family with the same Catalan-tree coefficients as the local conserved charges of the spin-$1/2$ XXX chain~\cite{anshelevich-heisenberg-first-integrals,GM-higher-conserved-XXX,GM-higher-conserved-XXZ,GM-catalan-tree}.
Throughout this subsection we restrict to $G=C_M^2$ with $b_j=1$.

For $a,b\ge 0$, set
\begin{align}
	\mathcal{C}_{a,b}
	 & \equiv
	\binom{a+b}{b}
	-
	\binom{a+b}{b-1},
	\label{eq:ffd-generalized-catalan-number}
\end{align}
with the convention $\binom{a}{-1}=0$.
In the triangular range $0\le b\le a$, these are the Catalan-triangle numbers~\cite{hilton-pedersen-catalan}.
Equation~\eqref{eq:ffd-generalized-catalan-number} gives $\mathcal{C}_{a,a+1}=0$ for $a\ge 0$, and we set $\mathcal{C}_{a,-1}\equiv 0$ for $a\ge 0$ and $\mathcal{C}_{-1,0}\equiv 1$ so that the recursion below includes the corner $(a,b)=(0,0)$.
The coefficients satisfy the following recursion:
\begin{align}
	\mathcal{C}_{a,b}
	 & =
	\mathcal{C}_{a-1,b}
	+
	\mathcal{C}_{a,b-1}
	,
	\label{eq:ffd-catalan-triangle-recursion}
\end{align}
for every $a,b$ with $0\le b\le a$, including the boundaries.
The diagonal entries recover the ordinary Catalan numbers:
\begin{align}
	\mathcal{C}_{n,n}
	 & =
	\binom{2n}{n}
	-
	\binom{2n}{n-1}
	=
	\frac{1}{n+1}\binom{2n}{n}.
	\label{eq:ffd-catalan-diagonal}
\end{align}

\begin{define}[Catalan charge]
	For every $k\ge 3$, define the Catalan charge of the homogeneous periodic Fendley model by
	\begin{align}
		\catalancharge{k}
		 & \equiv
		F_{k,0}
		+
		\sum_{\substack{0<n+m<\lfloor k/2\rfloor \\ n\ge 0,\ m\ge 1}}
		(-1)^n
		\mathcal{C}_{n+m-1,n}
		F_{k-2n-m,m}.
		\label{eq:ffd-catalan-charge-definition}
	\end{align}
	For $M\ge2k-1$, the leading term $F_{k,0}$ has spin-chain support $2k-3$ in the defining representation $h_j=Z_jZ_{j+1}X_{j+2}$, and hence so does $\catalancharge{k}$.
\end{define}

Corollary~\ref{cor:ffd-catalan-charge-conservation} establishes their conservation for $M\ge2k-1$.

The Catalan charges up to $\catalancharge{8}$ are
\begin{alignat}{2}
	\catalancharge{3}
	       & =
	F_{3,0},
	\qquad &
	\catalancharge{4}
	       & =
	F_{4,0}
	+
	F_{3,1},
	\nonumber        \\
	\catalancharge{5}
	       & =
	F_{5,0}
	+
	F_{4,1},
	\qquad &
	\catalancharge{6}
	       & =
	F_{6,0}
	+
	F_{5,1}
	+
	F_{4,2}
	-
	F_{3,1},
	\nonumber        \\
	\catalancharge{7}
	       & =
	F_{7,0}
	+
	F_{6,1}
	+
	F_{5,2}
	-
	F_{4,1},
	\qquad &
	\catalancharge{8}
	       & =
	F_{8,0}
	+
	F_{7,1}
	+
	F_{6,2}
	+
	F_{5,3}
	\nonumber        \\
	       &       &
	       & \quad
	-
	F_{5,1}
	-
	2F_{4,2}
	+
	2F_{3,1}.
	\label{eq:ffd-catalan-low-order-examples}
\end{alignat}
The coefficients $\pm 2$ in $\catalancharge{8}$ are the first nontrivial Catalan-triangle coefficients in this basis.
Thus the Catalan basis removes the extensive $M$-dependent lower-support contributions in the homogeneous charge formula~\eqref{eq:ffd-uniform-general-charge-formula} in both parity sectors.

\begin{lem}
	\label{lem:ffd-catalan-recombination}
	Let $G=C_M^2$ with $b_j=1$ and every edge oriented clockwise as in Subsection~\ref{subsec:ffd-uniform-periodic}, let $k\ge 3$, and assume $M\ge 2k-1$.
	Then the homogeneous periodic charge~\eqref{eq:ffd-uniform-general-charge-formula} satisfies
	\begin{align}
		\mathcal{H}_{\mathrm{hom,per}}^{(k)}
		 & =
		\sum_{j=0}^{\lfloor (k-3)/2\rfloor}
		(-1)^j
		\binom{M-2k+2j+3}{j}
		\catalancharge{k-2j}.
		\label{eq:ffd-catalan-triangular-relation}
	\end{align}
\end{lem}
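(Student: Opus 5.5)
The plan is to compare the two sides of Eq.~\eqref{eq:ffd-catalan-triangular-relation} coefficient by coefficient in the building blocks $F_{s,m}$. I will not need linear independence of the $F_{s,m}$: matching the coefficients suffices for equality. Set $A\equiv M-2k+3$.

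By Eq.~\eqref{eq:ffd-uniform-general-charge-formula}, the left-hand side is
\begin{equation*}
\sum_{m+n<\lfloor k/2\rfloor}(-1)^n\binom{A+2n+m}{n}F_{k-2n-m,m}.
\end{equation*}
On the right-hand side, insert the definition~\eqref{eq:ffd-catalan-charge-definition} of $\catalancharge{k-2j}$. The summand indexed by $(n',m)$ is $F_{k-2j-2n'-m,m}$. Writing $n=j+n'$, this block is $F_{k-2n-m,m}$, with coefficient $(-1)^{j}(-1)^{n'}=(-1)^n$ times $\binom{A+2j}{j}$ times $\mathcal{C}_{n'+m-1,n'}$. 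For $(n',m)=(0,0)$ the last factor is replaced by $1$.

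The range conditions match. The Catalan constraint $n'+m<\lfloor (k-2j)/2\rfloor=\lfloor k/2\rfloor-j$ is exactly $n+m<\lfloor k/2\rfloor$. Any mismatch at the boundary of the $j$-range involves only blocks with $k-2n-m<3$, which vanish by the convention $F_{s,m}=0$ for $s<3$. The hypothesis $M\ge 2k-1$ is the one under which Eq.~\eqref{eq:ffd-uniform-general-charge-formula} holds, and I take that formula as given.

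The $m=0$ blocks are immediate. The Catalan charge contains no $m=0$ term other than its leading $F_{k-2j,0}$, so only $n'=0$, $j=n$ contributes, and both sides give $(-1)^n\binom{A+2n}{n}$.

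For $m\ge1$ the claim reduces to the scalar identity
\begin{equation*}
\binom{A+2n+m}{n}=\sum_{n'=0}^{n}\binom{A+2(n-n')}{n-n'}\,\mathcal{C}_{n'+m-1,n'} .
\end{equation*}
I will prove it with generating functions. First, $\mathcal{C}_{n'+m-1,n'}=\binom{2n'+m-1}{n'}-\binom{2n'+m-1}{n'-1}=\frac{m}{2n'+m}\binom{2n'+m}{n'}$, the ballot number. By Lagrange inversion this is $[x^{n'}]\,c(x)^m$, where $c(x)=(1-\sqrt{1-4x})/(2x)$ is the Catalan generating function. Second, the standard identity $\sum_{n\ge0}\binom{r+2n}{n}x^n=c(x)^r/\sqrt{1-4x}$ holds, also from Lagrange inversion. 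For fixed $n$ both sides are polynomials in $r$, so it is valid for every integer $r$, including negative $A$.

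The convolution on the right-hand side is therefore $[x^n]$ of $c(x)^A(1-4x)^{-1/2}\cdot c(x)^m=c(x)^{A+m}(1-4x)^{-1/2}$. This equals $\binom{A+m+2n}{n}$, as required.

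The main obstacle is the index bookkeeping rather than the identity itself. I must check that the $j$-range $0\le j\le\lfloor(k-3)/2\rfloor$ together with the Catalan ranges reproduces exactly the set of $(n,m)$ on the left, up to vanishing $F_{s<3,m}$. I also need that $A$ may be negative for small $M$, which is why I treat the binomial identity polynomially in $A$.
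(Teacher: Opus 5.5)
Your proposal is correct and follows essentially the same route as the paper: substitute the Catalan charges, reindex with $n=j+n'$ (the only boundary mismatch, for even $k$, being the vanishing block $F_{2,0}$), and reduce to the convolution identity $\binom{A+2n+m}{n}=\sum_{j}\binom{A+2j}{j}\mathcal{C}_{n-j+m-1,n-j}$, which both you and the paper prove from $\sum_{j\ge0}\binom{u+2j}{j}z^j=\mathrm{Cat}(z)^u/\sqrt{1-4z}$ together with $\mathrm{Cat}(z)^m=\sum_{r\ge0}\mathcal{C}_{r+m-1,r}z^r$. The paper folds your separate $m=0$ case into that identity via the conventions $\mathcal{C}_{-1,0}=1$ and $\mathcal{C}_{r-1,r}=0$, and your extension to negative $A$ is harmless but unnecessary, since $M\ge 2k-1$ already gives $A=M-2k+3\ge 2$.
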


\begin{proof}
	Set $A=M-2k+3$, so that Eq.~\eqref{eq:ffd-uniform-general-charge-formula} reads
	\begin{align}
		\mathcal{H}_{\mathrm{hom,per}}^{(k)}
		 & =
		\sum_{n=0}^{\lfloor (k-3)/2\rfloor}
		\sum_{\substack{m\ge 0 \\ m+n<\lfloor k/2\rfloor}}
		(-1)^n
		\binom{A+2n+m}{n}
		F_{k-2n-m,m}.
		\label{eq:ffd-hom-binomial-basis}
	\end{align}
	When $k$ is even, the only omitted boundary term has $F_{2,0}=0$ by convention.
	Let $\mathcal{R}_k$ denote the right-hand side of Eq.~\eqref{eq:ffd-catalan-triangular-relation}.
	Using $\mathcal{C}_{-1,0}=1$ and $\mathcal{C}_{r-1,r}=0$ for $r\ge1$, we include the leading term as the $r=m=0$ summand when substituting Eq.~\eqref{eq:ffd-catalan-charge-definition} into $\mathcal{R}_k$:
	\begin{align}
		\mathcal{R}_k
		 & =
		\sum_{j=0}^{\lfloor (k-3)/2\rfloor}
		\sum_{\substack{0\le r+m<\lfloor k/2\rfloor-j \\ r\ge 0,\ m\ge 0}}
		(-1)^{j+r}
		\binom{A+2j}{j}
		\mathcal{C}_{r+m-1,r}
		F_{k-2j-2r-m,m}
		\nonumber                                     \\
		 & =
		\sum_{n=0}^{\lfloor (k-3)/2\rfloor}
		\sum_{\substack{m\ge 0                        \\ m+n<\lfloor k/2\rfloor}}
		(-1)^n
		\left(
		\sum_{j=0}^{n}
		\binom{A+2j}{j}
		\mathcal{C}_{n-j+m-1,n-j}
		\right)
		F_{k-2n-m,m}.
		\label{eq:ffd-catalan-expanded-right-hand-side}
	\end{align}
	In the second equality we set $n=j+r$, so that $r+m<\lfloor k/2\rfloor-j$ becomes $n+m<\lfloor k/2\rfloor$.
	Comparing coefficients in Eqs.~\eqref{eq:ffd-hom-binomial-basis} and~\eqref{eq:ffd-catalan-expanded-right-hand-side}, it remains to prove
	\begin{align}
		\binom{A+2n+m}{n}
		 & =
		\sum_{j=0}^{n}
		\binom{A+2j}{j}
		\mathcal{C}_{n-j+m-1,n-j}.
		\label{eq:ffd-catalan-convolution}
	\end{align}
	To prove Eq.~\eqref{eq:ffd-catalan-convolution}, introduce the ordinary Catalan generating function
	\begin{align}
		\mathrm{Cat}(z)
		 & \equiv
		\frac{1-\sqrt{1-4z}}{2z}
		=
		\sum_{n\ge 0}
		\mathcal{C}_{n,n}z^n
		=
		1+z+2z^2+5z^3+14z^4+\cdots.
	\end{align}
	We also use the auxiliary series
	\begin{align}
		T_u(z)
		 & \equiv
		\sum_{j\ge 0}
		\binom{u+2j}{j}z^j
		=
		\frac{\mathrm{Cat}(z)^u}{\sqrt{1-4z}}.
		\label{eq:ffd-ballot-generating-function}
	\end{align}
	The second equality in Eq.~\eqref{eq:ffd-ballot-generating-function} is the $p=2$ case of the Fuss--Catalan identity obtained by Lagrange inversion~\cite{gessel-lagrange-inversion}; we verify it directly.
	Put $y(z)=\mathrm{Cat}(z)-1$.
	The Catalan equation gives $y=z(1+y)^2$, hence $z=y/(1+y)^2$ and $\sqrt{1-4z}=(1-y)/(1+y)$.
	Therefore
	\begin{align*}
		\frac{\mathrm{Cat}(z)^u}{\sqrt{1-4z}}
		 & =
		\frac{(1+y)^{u+1}}{1-y}.
	\end{align*}
	Then Cauchy's integral formula gives, for $j\ge 0$,
	\begin{align*}
		[z^j]\frac{\mathrm{Cat}(z)^u}{\sqrt{1-4z}}
		 & =
		\frac{1}{2\pi i}
		\oint_{\Gamma_z}
		\frac{(1+y(z))^{u+1}}{(1-y(z))\,z^{j+1}}
		\,\dd{z}
		\nonumber \\
		 & =
		\frac{1}{2\pi i}
		\oint_{\Gamma_y}
		\frac{(1+y)^{u+1}}{1-y}
		\cdot
		\frac{(1+y)^{2j+2}}{y^{j+1}}
		\cdot
		\frac{1-y}{(1+y)^3}
		\,\dd{y}
		\nonumber \\
		 & =
		[y^j](1+y)^{u+2j}
		=
		\binom{u+2j}{j},
	\end{align*}
	where $\Gamma_z$ is a small positively oriented loop around $z=0$ and $\Gamma_y$ is its preimage.

	We next use the identity
	\begin{align}
		\frac{T_{A+m}(z)}{T_A(z)}
		 & =
		\mathrm{Cat}(z)^m
		=
		\sum_{r\ge 0}
		\mathcal{C}_{r+m-1,r}z^r,
		\qquad m\ge0.
		\label{eq:ffd-catalan-ratio}
	\end{align}
	The first equality follows from Eq.~\eqref{eq:ffd-ballot-generating-function}, and the second is the Catalan-number convolution identity~\cite{heisenberg-mpo-charges} with the boundary values fixed above.

	Multiplying Eq.~\eqref{eq:ffd-catalan-ratio} by $T_A(z)$ and extracting the coefficient of $z^n$ gives Eq.~\eqref{eq:ffd-catalan-convolution}, and hence Eq.~\eqref{eq:ffd-catalan-triangular-relation}.
\end{proof}

\begin{cor}
	\label{cor:ffd-catalan-charge-conservation}
	For $G=C_M^2$ with $b_j=1$ and every $k\ge 3$, $\catalancharge{k}$ satisfies
	\begin{align}
		\bigl[H_{\mathrm{hom,per}},\catalancharge{k}\bigr]
		 & =
		0,
		\qquad
		M\ge 2k-1.
		\label{eq:ffd-catalan-charge-conservation}
	\end{align}
\end{cor}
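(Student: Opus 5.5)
The plan is to prove the corollary by induction on $k$, inverting the triangular relation of Lemma~\ref{lem:ffd-catalan-recombination}. That lemma expresses each $\mathcal{H}_{\mathrm{hom,per}}^{(k)}$ as $\catalancharge{k}$ plus a combination of the lower Catalan charges $\catalancharge{k-2j}$, $j\ge1$, with scalar coefficients $(-1)^j\binom{M-2k+2j+3}{j}$. The $j=0$ coefficient is $1$. Solving for $\catalancharge{k}$ therefore gives
\begin{align*}
	\catalancharge{k}
	=
	\mathcal{H}_{\mathrm{hom,per}}^{(k)}
	-
	\sum_{j=1}^{\lfloor (k-3)/2\rfloor}
	(-1)^j\binom{M-2k+2j+3}{j}\catalancharge{k-2j}.
\end{align*}
It then suffices to show two things: that $\mathcal{H}_{\mathrm{hom,per}}^{(k)}$ commutes with $H_{\mathrm{hom,per}}$ whenever $M\ge 2k-1$, and that the lower charges $\catalancharge{k-2j}$ are already conserved.

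For the induction, the base cases $k=3,4$ have an empty sum, so $\catalancharge{k}=\mathcal{H}_{\mathrm{hom,per}}^{(k)}$. For general $k$, the hypothesis $M\ge 2k-1$ implies $M\ge 2(k-2j)-1$ for every $j\ge1$. The induction hypothesis, applied within the same parity class, then gives conservation of every $\catalancharge{k-2j}$ with $k-2j\ge3$. Since the coefficients are scalars, the commutator of the right-hand side with $H_{\mathrm{hom,per}}$ reduces to $[H_{\mathrm{hom,per}},\mathcal{H}_{\mathrm{hom,per}}^{(k)}]$.

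Conservation of $\mathcal{H}_{\mathrm{hom,per}}^{(k)}$ is split by parity, following Eq.~\eqref{eq:ffd-homogeneous-periodic-charge-definition}.

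For odd $k$, $\mathcal{H}_{\mathrm{hom,per}}^{(k)}$ is the homogeneous specialization of $\localcharge{C_M^2}{k-2}$. By Lemma~\ref{lem:CM2-geometry}, $C_M^2$ is claw-free and even-bubble-wand-free, so Corollary~\ref{cor:local-charge-ebwf} gives conservation for all ranges. Specializing the couplings to $b_j=1$ preserves the commutation.

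For even $k$, write $k-2=2r$. Then $\mathcal{H}_{\mathrm{hom,per}}^{(k)}=\mathcal{E}_{C_M^2,\omega}^{(2r)}$ with the clockwise orientation, which Lemma~\ref{lem:CM2-geometry} confirms is an induced-path orientation. Theorem~\ref{thm:even-path-commutator} requires $r<K=\lceil M/4\rceil$. Since $M\ge 2k-1=4r+3$, we have $\lceil M/4\rceil\ge r+1$, so this holds. The singleton coefficients $B^{(s)}_{j,\omega}$ vanish by the translation-pairing argument already given in Subsection~\ref{subsec:ffd-uniform-periodic}: at each vertex $j$ the incoming edges $\{j-1,j\},\{j-2,j\}$ pair with the outgoing edges $\{j,j+1\},\{j,j+2\}$, and paired edges have isomorphic residual graphs.

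The main step needing care is to confirm that the closed formula \eqref{eq:ffd-uniform-general-charge-formula} used in the lemma really equals the charge defined via Eq.~\eqref{eq:local-charge} or Eq.~\eqref{eq:even-path-charge} under $M\ge 2k-1$. This requires that no path in those sums wraps around the cycle, so that its residual graph is the square of a path and the binomial coefficient \eqref{eq:ffd-uniform-residual-binomial} applies. I would check this by noting that the longest monomial occupies $2k-3\le M-2$ sites, so every residual interval is a genuine nonwrapping path square.
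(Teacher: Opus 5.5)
Your proposal is correct and follows the paper's proof. Both argue by strong induction on $k$, inverting the unitriangular relation of Lemma~\ref{lem:ffd-catalan-recombination} (whose $j=0$ coefficient is one). Conservation of $\mathcal{H}_{\mathrm{hom,per}}^{(k)}$ comes from Corollary~\ref{cor:local-charge-ebwf} in the odd sector, and from Theorem~\ref{thm:even-path-commutator} together with the translation pairing of the singleton coefficients in the even sector. The paper leaves two of your details to the text of Subsection~\ref{subsec:ffd-uniform-periodic}: that $M\ge 4r+3$ gives $r<\lceil M/4\rceil=K$, and that no path or residual interval wraps when $M\ge 2k-1$. Both checks are correct.
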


\begin{proof}
	Proceed by strong induction on $k\ge3$.
	Every $\catalancharge{k-2j}$ with $j\ge 1$ on the right-hand side of Eq.~\eqref{eq:ffd-catalan-triangular-relation} commutes with $H_{\mathrm{hom,per}}$ by the induction hypothesis.
	Since the $j=0$ term is $\catalancharge{k}$ with coefficient one and $\mathcal{H}_{\mathrm{hom,per}}^{(k)}$ is conserved, subtracting the $j\ge 1$ terms gives Eq.~\eqref{eq:ffd-catalan-charge-conservation}.
\end{proof}

For example, the first few triangular decompositions are
\begin{align}
	\mathcal{H}_{\mathrm{hom,per}}^{(3)}
	 & =
	\catalancharge{3},
	 &
	\mathcal{H}_{\mathrm{hom,per}}^{(4)}
	 & =
	\catalancharge{4},
	\nonumber \\
	\mathcal{H}_{\mathrm{hom,per}}^{(5)}
	 & =
	\catalancharge{5}
	-
	(M-5)\catalancharge{3},
	 &
	\mathcal{H}_{\mathrm{hom,per}}^{(6)}
	 & =
	\catalancharge{6}
	-
	(M-7)\catalancharge{4},
	\nonumber \\
	\mathcal{H}_{\mathrm{hom,per}}^{(7)}
	 & =
	\catalancharge{7}
	-
	(M-9)\catalancharge{5}
	+
	\binom{M-7}{2}\catalancharge{3}.
	\label{eq:ffd-catalan-low-triangular-examples}
\end{align}

Alternatively, following the original conservation proofs for local conserved charges of interacting integrable chains~\cite{anshelevich-heisenberg-first-integrals,GM-higher-conserved-XXX,GM-catalan-tree,xyz-all-charges,nienhuis-xxz,hubbard-charges,fukai-doctoral-thesis,hubbard-charges-completeness,xyz-mpo-charges}, one may compute $[H_{\mathrm{hom,per}},\catalancharge{k}]$ directly and exhibit its cancellation.
We only sketch how the cancellation reduces to the Catalan-triangle recursion~\eqref{eq:ffd-catalan-triangle-recursion}.
We call each path-product monomial appearing in the translation-invariant sum $F_{s,m}$ of Eq.~\eqref{eq:ffd-periodic-Fsm-definition} an $(s,m)$-operator, where the second index $m$ counts the unit increments.

Figure~\ref{fig:ffd-catalan-cancellation-k10} shows only cancellations between distinct source blocks; contributions that cancel within a single $F_{s,m}$ are omitted.
For each remaining target, the three sources $F_{s,m-1}$, $F_{s-1,m}$, and $F_{s,m+1}$ contribute with total coefficient $(-1)^n\bigl(\mathcal{C}_{n+m-2,n}-\mathcal{C}_{n+m-1,n}+\mathcal{C}_{n+m-1,n-1}\bigr)$, which vanishes by the Catalan-triangle recursion~\eqref{eq:ffd-catalan-triangle-recursion}.

\paragraph{XXX-chain realization of the Catalan-tree pattern.}
The same Catalan-tree pattern organizes the local conserved charges of the XXX chain, whose explicit forms were obtained independently by Anshelevich and by Grabowski and Mathieu~\cite{anshelevich-heisenberg-first-integrals,GM-higher-conserved-XXX}, and it persists for their $SU(N)$-invariant generalizations~\cite{GM-higher-conserved-XXZ,GM-catalan-tree}.
These charges also admit a compact matrix product operator representation~\cite{heisenberg-mpo-charges,xyz-mpo-charges}.

\begin{define}
	For the XXX chain of length $L$ in the same indexing convention, $s$ is the interval support and $s-m$ is the number of $\vec{\sigma}$ factors.
	For $2\le s\le L$ and $s-m\ge 2$, set
	\begin{align}
		F_{s,m}^{\mathrm{XXX}}
		 & \equiv
		(-1)^{\lfloor (s+m)/2\rfloor}
		\sum_{j=1}^{L}
		\sum_{j=i_1<i_2<\cdots<i_{s-m}=j+s-1}
		\nonumber\\
		 & \hspace{6em}
		\vec{\sigma}_{i_1}\cdot
		\Bigl(
		\vec{\sigma}_{i_2}\times
		\bigl(
			\vec{\sigma}_{i_3}\times
			\cdots\times
				(\vec{\sigma}_{i_{s-m-1}}\times\vec{\sigma}_{i_{s-m}})
			\cdots
			\bigr)
		\Bigr).
		\label{eq:xxx-common-building-block-definition}
	\end{align}
	Here $\vec{\sigma}_i=(\sigma_i^x,\sigma_i^y,\sigma_i^z)$ is the vector of Pauli matrices at site $i$, and all site indices are read modulo $L$.
	The sum over $j$ is the translation sum, as in Eq.~\eqref{eq:ffd-periodic-Fsm-definition}.
\end{define}

At fixed label $k$, the XXX-chain formula is obtained from Eq.~\eqref{eq:ffd-catalan-charge-definition} by the replacement $F_{s,m}\mapsto F_{s,m}^{\mathrm{XXX}}$~\cite{GM-higher-conserved-XXX,GM-higher-conserved-XXZ,GM-catalan-tree}.

\begin{figure}[t]
	\centering
	\newcommand{\ffdsh}[2]{({(#2+0.5)*0.425em},{(#1-1.5)*0.425em})}
	\newcommand{\ffdcirc}[2]{%
		\draw[
			fill=darkgray,
			fill opacity=0.7,
			draw=black,
			line width=0.6pt
		] \ffdsh{#1}{#2} circle [radius=0.17em];
	}
	\newcommand{\ffdtimes}[2]{%
		\draw[
			black,
			line width=1.1pt
		] \ffdsh{#1}{#2} ++(-0.12em,-0.12em) -- ++(0.24em,0.24em)
		\ffdsh{#1}{#2} ++(-0.12em,0.12em) -- ++(0.24em,-0.24em);
	}
	\newcommand{\ffdup}[2]{\node[scale=1.0] at \ffdsh{#1}{#2} {$\uparrow$};}
	\newcommand{\ffdright}[2]{\node[scale=1.0] at \ffdsh{#1}{#2} {$\rightarrow$};}
	\newcommand{\ffdleft}[2]{\node[scale=1.0] at \ffdsh{#1}{#2} {$\leftarrow$};}
	\begin{tikzpicture}[
			baseline=0,
			scale=3,
			line cap=round,
			line join=round,
			gridline/.style={gray, thin, dashed},
			axis/.style={line width=0.8pt},
			diagramlabel/.style={scale=1.0}
		]
		\begin{scope}
			\node[diagramlabel, anchor=west] at \ffdsh{21.10}{1.60} {(a) Structure of $\catalancharge{10}$};
			\fill[RoyalBlue, opacity=0.2] \ffdsh{18}{1.5} -- \ffdsh{19}{1.5} -- \ffdsh{12}{8.5} -- \ffdsh{11}{8.5} -- cycle;
			\fill[RoyalBlue, opacity=0.2] \ffdsh{14}{1.5} -- \ffdsh{15}{1.5} -- \ffdsh{10}{6.5} -- \ffdsh{9}{6.5} -- cycle;
			\fill[RoyalBlue, opacity=0.2] \ffdsh{10}{1.5} -- \ffdsh{11}{1.5} -- \ffdsh{8}{4.5} -- \ffdsh{7}{4.5} -- cycle;
			\fill[RoyalBlue, opacity=0.2] \ffdsh{6}{1.5} -- \ffdsh{7}{1.5} -- \ffdsh{6}{2.5} -- \ffdsh{5}{2.5} -- cycle;
			\node[diagramlabel, anchor=west] at \ffdsh{11.50}{8.80} {$n=0$};
			\node[diagramlabel, anchor=west] at \ffdsh{9.50}{6.80} {$n=1$};
			\node[diagramlabel, anchor=west] at \ffdsh{7.50}{4.80} {$n=2$};
			\node[diagramlabel, anchor=west] at \ffdsh{5.50}{2.80} {$n=3$};
			\foreach \m in {0,1,2,3,4} {
					\node[diagramlabel] at \ffdsh{4.20}{2*\m} {$\m$};
				}
			\foreach \s in {3,4,5,6,7,8,9,10} {
					\node[diagramlabel, anchor=east] at \ffdsh{2*\s}{-1.40} {$\s$};
				}
			\draw[axis] \ffdsh{5}{-1} -- \ffdsh{21}{-1};
			\draw[axis] \ffdsh{5}{-1} -- \ffdsh{5}{9};
			\node[diagramlabel, rotate=90, anchor=south] at \ffdsh{12.50}{-2.70} {$s$};
			\node[diagramlabel] at \ffdsh{5}{10.10} {$m$};

			\ffdright{20}{1}
			\ffdup{19}{2}
			\ffdright{18}{3}
			\ffdup{17}{4}
			\ffdright{16}{5}
			\ffdleft{16}{3}
			\ffdup{15}{6}
			\ffdright{14}{7}
			\ffdleft{14}{5}
			\ffdup{13}{8}
			\ffdleft{12}{7}
			\ffdup{15}{2}
			\ffdright{14}{3}
			\ffdup{13}{4}
			\ffdright{12}{5}
			\ffdleft{12}{3}
			\ffdup{11}{6}
			\ffdleft{10}{5}
			\ffdup{11}{2}
			\ffdright{10}{3}
			\ffdup{9}{4}
			\ffdleft{8}{3}
			\ffdup{7}{2}

			\ffdcirc{20}{0}
			\ffdcirc{18}{2}
			\ffdcirc{16}{4}
			\ffdcirc{14}{6}
			\ffdcirc{12}{8}
			\ffdcirc{14}{2}
			\ffdcirc{12}{4}
			\ffdcirc{10}{6}
			\ffdcirc{10}{2}
			\ffdcirc{8}{4}
			\ffdcirc{6}{2}

			\ffdtimes{20}{2}
			\ffdtimes{18}{4}
			\ffdtimes{16}{6}
			\ffdtimes{14}{8}
			\ffdtimes{16}{2}
			\ffdtimes{14}{4}
			\ffdtimes{12}{6}
			\ffdtimes{12}{2}
			\ffdtimes{10}{4}
			\ffdtimes{8}{2}
		\end{scope}

		\begin{scope}[xshift=1.95cm,yshift=0.95cm]
			\node[diagramlabel] at \ffdsh{7.20}{2} {(b) Local cancellation};
			\foreach \m/\lab in {0/{$m{-}1$},1/{$m$},2/{$m{+}1$}} {
			\draw[gridline] \ffdsh{1}{2*\m} -- \ffdsh{5}{2*\m};
			\node[diagramlabel] at \ffdsh{0.20}{2*\m} {\lab};
			}
			\foreach \s/\lab in {1/{$s{-}1$},2/{$s$}} {
			\draw[gridline] \ffdsh{2*\s}{-1} -- \ffdsh{2*\s}{5};
			\node[diagramlabel, anchor=east] at \ffdsh{2*\s}{-1.50} {\lab};
			}
			\draw[axis] \ffdsh{1}{-1} -- \ffdsh{5.50}{-1};
			\draw[axis] \ffdsh{1}{-1} -- \ffdsh{1}{5};

			\ffdcirc{2}{2}
			\ffdcirc{4}{0}
			\ffdcirc{4}{4}
			\ffdtimes{4}{2}

			\draw[->, line width=0.6pt] \ffdsh{2.55}{2} -- \ffdsh{3.55}{2};
			\draw[->, line width=0.6pt] \ffdsh{4}{0.55} -- \ffdsh{4}{1.55};
			\draw[->, line width=0.6pt] \ffdsh{4}{3.45} -- \ffdsh{4}{2.45};
		\end{scope}
	\end{tikzpicture}
	\caption{The commutator cancellation $\bigl[H_{\mathrm{hom,per}},\catalancharge{k}\bigr]=0$.
		Filled circles are source blocks $F_{s,m}$ from Eq.~\eqref{eq:ffd-periodic-Fsm-definition}, crosses are target $(s,m)$-operators, and arrows denote contributions.
		Panel (a) shows $k=10$, with bands indexed by $n$ as in Eq.~\eqref{eq:ffd-catalan-charge-definition} and the leading term $F_{10,0}$ unshaded; contributions that cancel within one source block are omitted.
		Panel (b) isolates the three source blocks whose coefficients cancel by the Catalan-triangle recursion~\eqref{eq:ffd-catalan-triangle-recursion}.
		For the target $(s,m)=(5,2)$ in panel (a), the cancellation is $\mathcal{C}_{2,2}-\mathcal{C}_{3,2}+\mathcal{C}_{3,1}=2-5+3=0$.}
	\label{fig:ffd-catalan-cancellation-k10}
\end{figure}

\begin{define}[Operators for local conserved charges of the spin-$1/2$ XXX chain]
	For $2\le k\le L$, define the support-$k$ XXX-chain operator by
	\begin{align}
		\catalanchargeXXX{k}
		 & \equiv
		F_{k,0}^{\mathrm{XXX}}
		+
		\sum_{\substack{0<n+m<\lfloor k/2\rfloor \\ n\ge 0,\ m\ge 1}}
		(-1)^n
		\mathcal{C}_{n+m-1,n}
		F_{k-2n-m,m}^{\mathrm{XXX}}.
		\label{eq:xxx-catalan-charge-definition}
	\end{align}
	For $k=2$, this normalization gives $\catalanchargeXXX{2}=F_{2,0}^{\mathrm{XXX}}\equiv H_{\mathrm{XXX}}=-\sum_j\vec{\sigma}_j\cdot\vec{\sigma}_{j+1}$.
\end{define}

\begin{thm}
	\label{thm:xxx-catalan-charge-conservation}
	For $2\le k\le L$, the operator $\catalanchargeXXX{k}$ is a local conserved charge of the periodic spin-$1/2$ XXX chain~\cite{anshelevich-heisenberg-first-integrals,GM-higher-conserved-XXX,GM-higher-conserved-XXZ,GM-catalan-tree}:
	\begin{align}
		\bigl[
			H_{\mathrm{XXX}},
			\catalanchargeXXX{k}
			\bigr]
		 & =
		0.
		\label{eq:xxx-catalan-charge-conservation}
	\end{align}
\end{thm}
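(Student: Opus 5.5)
The plan is to reduce conservation of $\catalanchargeXXX{k}$ to the same Catalan-triangle bookkeeping used in the sketched direct proof for the Fendley charges. The starting point is the commutator of $H_{\mathrm{XXX}}$ with a single building block $F_{s,m}^{\mathrm{XXX}}$. For this I would use the standard nested-cross-product identities. The relevant commutators are $[\vec\sigma_i\cdot\vec\sigma_{i+1},\,\cdot\,]$ acting on a nested product $\vec{\sigma}_{i_1}\cdot(\vec{\sigma}_{i_2}\times(\cdots))$, which follow from $[\sigma^a,\sigma^b]=2\imi\epsilon_{abc}\sigma^c$ together with $\vec a\times(\vec b\times\vec c)=\vec b(\vec a\cdot\vec c)-\vec c(\vec a\cdot\vec b)$. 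Acting at an interior pair of occupied sites, at a gap adjacent to an occupied site, or at the two ends, the commutator produces nested products with one more or one fewer $\vec\sigma$ factor and support changed by at most one. Organizing the output by the labels $(s,m)$, each source block $F_{s,m}^{\mathrm{XXX}}$ feeds target $(s',m')$-operators in a fixed finite pattern. This pattern is the analogue of Fig.~\ref{fig:ffd-catalan-cancellation-k10}: contributions go from $(s,m-1)$, $(s-1,m)$ and $(s,m+1)$ into the target $(s,m)$, while all remaining terms cancel within a single block after translation summation. The sign $(-1)^{\lfloor (s+m)/2\rfloor}$ in Eq.~\eqref{eq:xxx-common-building-block-definition} is chosen so that these three contributions enter with relative signs $+,-,+$ and unit magnitude.

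Granting this local pattern, I would finish exactly as in the Fendley case. The coefficient of a target $(s,m)$-operator in $[H_{\mathrm{XXX}},\catalanchargeXXX{k}]$ is
\begin{align*}
(-1)^n\bigl(\mathcal{C}_{n+m-2,n}-\mathcal{C}_{n+m-1,n}+\mathcal{C}_{n+m-1,n-1}\bigr).
\end{align*}
This vanishes by the recursion~\eqref{eq:ffd-catalan-triangle-recursion}. The boundary conventions $\mathcal{C}_{-1,0}=1$, $\mathcal{C}_{a,-1}=0$ and $\mathcal{C}_{a,a+1}=0$ handle the corner terms. The leading term $F_{k,0}^{\mathrm{XXX}}$ and the top support $k+1$ targets are dealt with separately: there the only sources are within the leading band, and their cancellation is the familiar Grabowski--Mathieu statement that the support-$(k+1)$ part of the commutator telescopes to zero under translation. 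The condition $k\le L$ guarantees that no support wraps around the ring, so that the translation sums are honest.

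An alternative route avoids the site-level computation and follows the logic of Lemma~\ref{lem:ffd-catalan-recombination}. One takes the known XXX charges from the boost operator or the logarithmic derivative of the transfer matrix, whose explicit expansion in the basis $F^{\mathrm{XXX}}_{s,m}$ is given in \cite{GM-higher-conserved-XXX,GM-catalan-tree}. One then checks that they agree with Eq.~\eqref{eq:xxx-catalan-charge-definition} up to a triangular recombination with lower charges, and concludes by strong induction as in Corollary~\ref{cor:ffd-catalan-charge-conservation}. I would use this route as a consistency check, citing the Catalan-tree form of \cite{GM-catalan-tree} directly if the paper intends the theorem as a restatement.

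The main obstacle is the local commutator pattern. The delicate points are signs and the within-block cancellations: terms where the new $\vec\sigma$ lands between two occupied sites, and terms where the commutator collapses two adjacent factors through $\vec\sigma_i\times\vec\sigma_i=2\imi\vec\sigma_i$. These must either cancel pairwise under translation or feed $(s-1,m)$ with exactly the right multiplicity. I would first verify the pattern by hand for $k=4,5,6$, comparing with the known $Q_3,Q_4$ of the Heisenberg chain. I would then prove the general sign rule by induction on the nesting depth, using the Jacobi identity to move the commutator through the nested cross products.
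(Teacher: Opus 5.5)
The paper gives no proof of this theorem. It quotes the result from Anshelevich and Grabowski--Mathieu~\cite{anshelevich-heisenberg-first-integrals,GM-higher-conserved-XXX,GM-catalan-tree}, and its only in-paper content is the remark that Eq.~\eqref{eq:xxx-catalan-charge-definition} comes from Eq.~\eqref{eq:ffd-catalan-charge-definition} by $F_{s,m}\mapsto F^{\mathrm{XXX}}_{s,m}$. Your second route is therefore the paper's route, and it suffices. That route identifies $\catalanchargeXXX{k}$ with the Grabowski--Mathieu charges up to a triangular recombination with lower charges and concludes by strong induction as in Corollary~\ref{cor:ffd-catalan-charge-conservation}, or simply cites the Catalan-tree form.

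Your first route, which you present as the main argument, is not a proof as it stands. The local commutator pattern you grant is the entire content of the theorem, and at block level it is stated too coarsely. The cancellation actually runs hole pattern by hole pattern, with pattern-dependent signs and weights.

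\emph{Pattern-dependent signs.} Already for $k=4$, the collapse terms of $\bigl[\sum_i\vec\sigma_i\cdot\vec\sigma_{i+1},F^{\mathrm{XXX}}_{4,0}\bigr]$ leave, besides three-site terms that telescope, $-2\imi\,\vec\sigma_j\cdot(\vec\sigma_{j+2}\times\vec\sigma_{j+3})+2\imi\,\vec\sigma_j\cdot(\vec\sigma_{j+1}\times\vec\sigma_{j+3})$ for each $j$. These are two patterns of the same $(4,1)$ block with opposite signs. They are removed separately, by the right and by the left extensions of $F^{\mathrm{XXX}}_{3,1}=\sum_j\vec\sigma_j\cdot\vec\sigma_{j+2}$.

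\emph{Non-uniform weights.} The two fillings of a single hole cancel in their site-ordered part. Hence $F^{\mathrm{XXX}}_{s,1}$ feeds the fully occupied targets $(s,0)$, $s<k$, with weight zero rather than one. Applied there, your three-term rule would leave an uncancelled $\pm\mathcal{C}_{n-1,n-1}$, because the block $F_{s,-1}$ does not exist. None of your boundary conventions covers this case.

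\emph{Terms outside the building blocks.} The within-block cancellations leave the span of the building blocks. For the pattern $\vec\sigma_j\cdot(\vec\sigma_{j+2}\times\vec\sigma_{j+3})$ of $F^{\mathrm{XXX}}_{4,1}$, the two fillings of the hole sum to $2\imi\,\vec\sigma_j\cdot(\vec\sigma_{j+2}\times(\vec\sigma_{j+1}\times\vec\sigma_{j+3}))$. This is not proportional to the site-ordered product on those four sites. It cancels only against the opposite remainder from the mirror pattern $\vec\sigma_j\cdot(\vec\sigma_{j+1}\times\vec\sigma_{j+3})$.

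A direct proof therefore has to show two things for all $(s,m)$ and all hole patterns. First, all such out-of-order remainders cancel. Second, the surviving site-ordered contributions of $F_{s,m-1}$, $F_{s-1,m}$ and $F_{s,m+1}$ to each pattern stand in the fixed ratio required by your three-term combination, up to an overall pattern-dependent factor that may be zero, so that the recursion~\eqref{eq:ffd-catalan-triangle-recursion} applies. ``Induction on the nesting depth using the Jacobi identity'' does not yet say how either is done.

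Separately, $k\le L$ does not prevent wrap-around. The commutator $\bigl[H_{\mathrm{XXX}},\catalanchargeXXX{k}\bigr]$ has support $k+1$. At $k=L$ the bond joining sites $j+L-1$ and $j$ touches both ends of a support-$L$ term, so your extension/telescoping step fails there, and $k=L$ needs separate treatment.
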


In this comparison, $s$ is a literal interval support for $F_{s,m}^{\mathrm{XXX}}$, whereas for Fendley's building block $F_{s,m}$ it is only the first index, the physical interval support in the defining spin representation being $2s+m-3$.
The second index $m$ also has different meanings: it counts holes within the support window in XXX and one-step increments along the induced path in the Fendley model.
The XXX-chain hierarchy begins at physical support $k=2$, whereas the hierarchy in the Fendley model begins at $k=3$.
Whether the homogeneous periodic Fendley model satisfies the algebraic sufficient conditions of the Grabowski--Mathieu Catalan-tree framework~\cite{GM-catalan-tree}, or instead requires a broader algebraic framework encompassing both the Fendley and XXX constructions, remains open.

\section{Summary and outlook}
\label{sec:outlook}

\subsection{Overview}
\label{subsec:summary-overview}

Induced paths are the elementary constituents in our construction.
For hidden free-fermion models with even-hole-free and claw-free frustration graphs, Theorem~\ref{thm:path-product-expansion} expresses each free-fermion mode $\Psi_k$ as a linear combination of path products along induced paths rooted at the auxiliary vertex representing the edge operator $\chi$, with coefficients given by independence polynomials of residual graphs.
For arbitrary claw-free frustration graphs, Theorem~\ref{thm:local-charge-conservation-pbc} and Theorem~\ref{thm:generalized-conserved-charges} construct local and generalized conserved charges from the same induced paths, in ranges controlled by the even-bubble-wand obstruction.
Neither construction uses the transfer matrix or the nonlocal conserved charges of previous approaches~\cite{fendley-fermions-in-disguise,fermions-behind-the-disguise,unified-graph-th}.

\subsection{Free-fermion modes}
\label{subsec:summary-free-fermion}

The path-product formula for $\Psi_k$ is proved through the Krylov generating function $\Phi_G(u)$ associated with the added vertex of the extended frustration graph, which represents the edge operator $\chi$.
Theorem~\ref{thm:Krylov-path-expansion} gives a path-product expansion of this generating function, and the mode decomposition in Theorem~\ref{thm:Phi-Psi-decomposition} then extracts the free-fermion modes as residues at its poles.
The resulting path-product operators are shown to satisfy directly the two defining properties of a free-fermion mode: the eigenoperator equation $[H_G,\Psi_k]=2\epsilon_k\Psi_k$ and the canonical anticommutation relation $\qty{\Psi_k,\Psi_l}=\delta_{k+l,0}$.
Equation~\eqref{eq:Krylov-diff-eq} and the Jacobi identity yield an identity for $\qty{\Phi_G(u),\Phi_G(v)}$ (Theorem~\ref{thm:Phi-anti-commu}); residue extraction then gives the anticommutation relation.
The same identity also implies Corollary~\ref{cor:path-product-identity}, an explicit independence-polynomial identity over induced paths.
In Section~\ref{sec:HA}, acting with the modes $\Psi_k$ on the all-up reference state $\refst=\ket{0\cdots 0}$ produces one-particle eigenstates of the antisymmetric Hamiltonian $H_A$ in the defining representation, giving the proof of the companion-paper theorem announced in Ref.~\cite{sajat-solving-ffd-1}.
For dynamics, Corollary~\ref{cor:edge-autocorrelation} extends the infinite-temperature autocorrelation formula for the edge operator of the Fendley model~\cite{sajat-ffd-corr} to arbitrary even-hole-free and claw-free frustration graphs, a result that may be useful for dissipative extensions of the FFD framework~\cite{sajat-dissipative-ffd}.
For the Fendley model, Subsection~\ref{subsec:ffd-operator-weight} carries out a saddle-point analysis of the path-product expansion of a free-fermion mode: at fixed range, the dominant induced paths have a preferred path-vertex density $\rho_*(p)$ tied to the momentum of the mode, and the optimized operator weight neither grows nor decays exponentially with the distance from the boundary.

The transverse-field Ising example in Section~\ref{sec:ising-example} shows explicitly that the path-product formula reproduces the Jordan-Wigner strings: when the frustration graph is a path, the induced path from $\chi$ to each vertex $\jbm$ in the extended graph is unique, and its path product coincides with the Jordan-Wigner string ending at $\jbm$.
For a general ECF frustration graph, several induced paths can connect $\chi$ to the same endpoint, so several path products can enter one mode, a feature largely invisible in the earlier transfer-matrix construction~\cite{fendley-fermions-in-disguise,fermions-behind-the-disguise,unified-graph-th}.

\subsection{Local conserved charges}
\label{subsec:summary-local-charges}

The path-product expansion also gives explicit higher-order local conserved charges.
Equation~\eqref{eq:local-charge} gives the odd local conserved charges $\localcharge{G}{2k+1}$ as weighted sums of products of graph Clifford generators along odd induced paths on at most $2k+1$ vertices, with independence-polynomial coefficients.
Theorem~\ref{thm:local-charge-conservation-pbc} proves conservation for every claw-free graph up to the range at which the smallest even bubble wand becomes relevant: if the smallest even bubble wand has size $2K$, then $\localcharge{G}{2k+1}$ is conserved for $k<K$.
The local conserved charges are thus assembled from induced-path products of the graph Clifford generators, paralleling the nonlocal conserved charges generated by the transfer matrix, which are assembled from products over independent sets of the same generators as in Eq.~\eqref{eq:nonlocal-charge-independent-set}~\cite{fendley-fermions-in-disguise,fermions-behind-the-disguise}.
The range bound of Theorem~\ref{thm:local-charge-conservation-pbc} yields an extensive family of odd local conserved charges whenever the smallest even bubble wand grows with the system size, and in particular whenever no even bubble wand exists, via Corollary~\ref{cor:local-charge-ebwf}.
When even bubble wands do appear at fixed size, extending the path-product construction beyond the first obstruction is left for future work; natural directions include coefficients incorporating the generalized cycle symmetries of Ref.~\cite{unified-graph-th}, or a reduction by gauging out cycle symmetries as in Ref.~\cite{sajat-claws}.
The Fendley model is the canonical example: its periodic frustration graph is $C_M^2$, which is even-bubble-wand-free, so the periodic chain with arbitrary couplings already carries the full odd family.

The homogeneous periodic Fendley model has, in addition, a family of oriented even path operators.
Translation symmetry makes the singleton coefficients of Theorem~\ref{thm:even-path-commutator} vanish, so these operators are conserved as well.
Equation~\eqref{eq:ffd-uniform-general-charge-formula} packages the odd and even families into a single translation-invariant path-product formula.
Thus the homogeneous periodic Fendley model has local conserved charges of both path-size parities, whereas for arbitrary inhomogeneous couplings the construction proved here retains only the odd local conserved charges.
This is reminiscent of the open spin-$1/2$ XXX chain, where only the local conserved charges involving an even number of spins remain, giving half as many charges as in the periodic chain~\cite{GM-open-spin-chains}.
Lemma~\ref{lem:ffd-catalan-recombination} and Corollary~\ref{cor:ffd-catalan-charge-conservation} show that triangular recombinations of the homogeneous periodic charges in Eq.~\eqref{eq:ffd-uniform-general-charge-formula} produce local conserved charges with a Catalan-tree coefficient pattern analogous to the XXX and $SU(N)$-invariant charges~\cite{anshelevich-heisenberg-first-integrals,GM-higher-conserved-XXX,GM-catalan-tree}.

The path-packing formula unifies the two constructions: a single family $\packingcharge{G}{m}{c}$ built from compatible packings of induced paths contains both the nonlocal independent-set charges and the local path-product conserved charges as special cases.
Theorem~\ref{thm:generalized-conserved-charges} proves conservation of this family for arbitrary claw-free graphs in the same range controlled by the smallest even bubble wand, and hence gives an extensive family of generalized conserved charges when that obstruction is absent.

\subsection{Outlook}
\label{subsec:summary-outlook}

It remains to derive the Hamiltonian reconstruction formula directly from the path-product expansion:
\begin{align}
	\label{eq:hamiltonian-reconstruction-outlook}
	H_G
	=
	\sum_{k = 1}^{\indepnum{G}}\epsilon_k\,\qty[\Psi_{k}, \Psi_{-k}].
\end{align}
For the Fendley model~\cite{fendley-fermions-in-disguise} and for ECF free-fermion models~\cite{fermions-behind-the-disguise,unified-graph-th}, Eq.~\eqref{eq:hamiltonian-reconstruction-outlook} is proved using the transfer matrix.
Substituting the path-product expansion of each $\Psi_k$ into the right-hand side of Eq.~\eqref{eq:hamiltonian-reconstruction-outlook} produces a large sum of products of induced-path operators, whereas the left-hand side is $H_G=\sum_{\jbm}\hc_{\jbm}$, indexed by single vertices.
A direct path-product proof must therefore show that the single-vertex part survives and that all remaining products supported on longer induced paths cancel.
The first requirement should follow by combining Bencs-type independence-polynomial identities~\cite{bencs1} with Eq.~\eqref{eq:path-square-identity}.
The second requirement appears to involve higher path-product analogues of Bencs-type identities for products associated with longer induced paths.
Hidden free-fermion models can have exponentially degenerate energy levels that are not resolved by the physical fermion modes.
For the Fendley model, Ref.~\cite{eric-lorenzo-ffd-1} identifies ancilla fermion modes that account for the nontrivial part of this degeneracy.
It remains to determine how these modes arise from the frustration graph and whether analogous modes exist in other hidden free-fermion models.
The mode decomposition of $\Phi_G(u)$ expresses every element of the Krylov operator space generated by $\chi$ in terms of the physical fermion modes.
A related inverse problem is to express local operators outside this Krylov space in terms of the physical fermion modes and the ancilla fermion modes.
Solving this problem would extend the computation in Corollary~\ref{cor:edge-autocorrelation} from the infinite-temperature autocorrelation of the edge operator to correlation functions of arbitrary local operators.

The induced paths are determined solely by the frustration graph, whereas the constituents used to construct local conserved charges in standard interacting integrable lattice models are model-dependent.
The spin-$1/2$ XXX chain~\cite{Bethe-XXX} and its $SU(N)$-invariant generalizations use nested cross products of Pauli vectors~\cite{anshelevich-heisenberg-first-integrals,GM-higher-conserved-XXX,GM-higher-conserved-XXZ,GM-catalan-tree}, whereas the XYZ chain~\cite{Baxter1971prl,baxter-eight-vertex,Baxter1972-heisenberg,baxter-8-vertex-1,baxter-8-vertex-2,baxter-8-vertex-3,Baxter-Book} admits descriptions in terms of doubling-product operators and related MPO constructions~\cite{xyz-all-charges,heisenberg-mpo-charges,xyz-easy-way,xyz-mpo-charges,yashin-two-parameter-mpo-heisenberg}.
In the one-dimensional Hubbard model~\cite{lieb-wu-hubbard,olmedilla-wadati-hubbard-charges}, the local conserved charges are built from products of XX-chain local conserved densities~\cite{fukai-doctoral-thesis,hubbard-charges,hubbard-charges-completeness}.
It remains to determine whether the local conserved charges of these models admit a model-independent graph-theoretic or algebraic description.
The Catalan-tree pattern shared by the XXX chain and the homogeneous periodic Fendley model raises a related concrete question: given the anisotropic XXZ and elliptic XYZ extensions of the XXX chain, does the periodic Fendley model admit analogous extensions?

Whether the first local conserved charge beyond the Hamiltonian exists is a standard test of spin-chain integrability~\cite{integrability-test}.
For a nearest-neighbor chain, $Q_2$ is the two-site Hamiltonian, while $Q_3$ is its first higher conserved charge and is $3$-local.
Recent results have established rigorous variants of this criterion for several classes of translationally invariant spin chains~\cite{hokkyo-integrability-test,dichotomy,resh-cond-proof}.
Here the oriented even path operators $\mathcal{E}_{G,\omega}^{(2k)}$ are the counterparts of the conventional $Q_3,Q_5,\ldots$ hierarchy.
These operators are conserved in the homogeneous periodic Fendley model but not for generic inhomogeneous couplings, whereas the odd local conserved charges $\localcharge{G}{2k+1}$ persist.
Thus the natural $Q_3$ counterpart can fail to be conserved even though an extensive family of higher local conserved charges survives.
Extending $Q_3$-based integrability tests to inhomogeneous graph-Clifford Hamiltonians remains an open problem.

Beyond the claw-free family, the path-product construction leads to the complementary problem of identifying frustration graphs whose graph Clifford algebras admit no nontrivial local conserved charge beyond $H_G$.
Here locality is naturally measured by induced-path size.
Since Theorem~\ref{thm:local-charge-conservation-pbc} no longer guarantees conservation of the odd local charges in this setting, methods for proving the absence of local conserved charges in spin-chain and lattice models~\cite{xyz-not-int,chiba-mixed-field-ising,shiraishi-nnn-heisenberg,hokkyo-integrability-test,chiba-ising-higher-dim,yamaguchi-chiba-shiraishi-nn-absence,dichotomy,shiraishi-tasaki-higher-d-xy-xyz} could yield sufficient graph-theoretic conditions for their absence.

\section*{Statements and Declarations}

\paragraph{Funding}
K.F. was supported by MEXT KAKENHI Grant-in-Aid for Transformative Research Areas A ``Extreme Universe'' (KAKENHI Grant No.\ JP21H05191).
K.F. was also supported by JSPS KAKENHI Grant-in-Aid for Research Activity Start-up (Grant No.\ JP25K23354), JSPS KAKENHI Grant-in-Aid for Early-Career Scientists (Grant No.\ JP26K17045), JSPS KAKENHI Grant-in-Aid for JSPS Fellows (Grant No.\ JP26KJ0404), and the JSPS Bilateral Program with MTA (Project No.\ 120263802).
B.P. and I.V. were supported by the Hungarian National Research, Development and Innovation Office, NKFIH Grant No.\ K-145904.
B.P. was also supported by the NKFIH excellence grant TKP2021\_NKTA\_64.

\paragraph{Competing interests}
The authors have no relevant financial or non-financial interests to disclose.

\paragraph{Data availability}
No datasets were generated or analyzed during the current study.

\bibliographystyle{unsrtnat}
\bibliography{fukai-ref}

\end{document}